\theoremstyle{remark}
\def\hT[#1]{H_{#1}{^{(T)}}}
\def\hR[#1]{H_{#1}{^{(R)}}}
\def\hm1[#1]{H_{#1}{^{-1}}}
\def\bk[#1]{{^{(#1)}}}
\def\mbx{\mathbf{x}}
\def\mbc{\mathbf{c}}
\def\mby{\mathbf{y}}
\def\mbw{\mathbf{w}}
\def\mbz{\mathbf{z}}
\def\mbu{\mathbf{u}}
\def\mbv{\mathbf{v}}
\def\mbf{\mathbf{f}}
\def\mbg{\mathbf{g}}
\def\mbt{\mathbf{t}}
\def\mbh{\mathbf{h}}
\def\mbo{\mathbf{o}}
\def\mbz{\mathbf{z}}
\newcommand{\mbR}{\mathbb{R}}
\newcommand{\mbP}{\mathbb{P}}
\newcommand{\mcN}{\mathcal{N}}
\newcommand{\mcS}{\mathcal{S}}
\newcommand{\norm}[1]{\left\lVert #1 \right\rVert}
\newtheorem{theorem}{Theorem}
\def\hT[#1]{H_{#1}{^{(T)}}}
\def\hR[#1]{H_{#1}{^{(R)}}}
\def\hm1[#1]{H_{#1}{^{-1}}}
\def\bk[#1]{{^{(#1)}}}
\def\mbx{\mathbf{x}}
\def\mbc{\mathbf{c}}
\def\mby{\mathbf{y}}
\def\mbw{\mathbf{w}}
\def\mbz{\mathbf{z}}
\def\mbu{\mathbf{u}}
\def\mbv{\mathbf{v}}
\def\mbf{\mathbf{f}}
\def\mbg{\mathbf{g}}
\def\mbt{\mathbf{t}}
\def\mbh{\mathbf{h}}
\def\mbo{\mathbf{o}}
\def\mbz{\mathbf{z}}
\newtheorem{lemma}[theorem]{Lemma}
\begin{document}

\title{Trust but Verify: An Information-Theoretic Explanation for the Adversarial Fragility of Machine Learning Systems, and a General Defense against Adversarial Attacks}

\author{
Jirong Yi \quad
Hui Xie \quad
Leixin Zhou \quad
Xiaodong Wu \quad
Weiyu Xu\textsuperscript{}\thanks{\noindent\textsuperscript{} Corresponding author email: \texttt{weiyu-xu@uiowa.edu}} \quad
Raghuraman Mudumbai\textsuperscript{}{\thanks{\textsuperscript{}
Corresponding author email: \texttt{rmudumbai@engineering.uiowa.edu} }}
}
\affil{Department of Electrical and Computer Engineering, University of Iowa}

\maketitle

\begin{abstract}
Deep-learning based classification algorithms have been shown to be susceptible to adversarial attacks: minor changes to the input of classifiers can dramatically change their outputs, while being imperceptible to humans. In this paper, we present a simple hypothesis about a feature compression property of artificial intelligence (AI) classifiers and present theoretical arguments to show that this hypothesis successfully accounts for the observed fragility of AI classifiers to small adversarial perturbations. Drawing on ideas from information and coding theory, we propose a general class of defenses for detecting classifier errors caused by abnormally small input perturbations. We further show theoretical guarantees for the performance of this detection method. We present experimental results with (a) a voice recognition system, and (b) a digit recognition system using the MNIST database, to demonstrate the effectiveness of the proposed defense methods. The ideas in this paper are motivated by a simple analogy between AI classifiers and the standard Shannon model of a communication system.
\end{abstract}

\indent {\bf Keywords}: Adversarial attacks, anomaly detection, machine learning, deep learning, robustness analysis, nonlinear classifiers, generative model

\tableofcontents

\section{Introduction}\label{Sec:Introduction}

Recent advances in machine learning have led to the invention of complex classification systems that are very successful in
detecting features in datasets such as images, hand-written texts, or audio recordings. However, recent works have also discovered
what appears to be a universal property of AI classifiers: vulnerability to small adversarial perturbations. Specifically, we know
that it is possible to design ``adversarial attacks'' that manipulate the output of AI classifiers arbitrarily by making small
carefully-chosen modifications to the input. Many such successful attacks only require imperceptibly small perturbations of
the inputs, which makes these attacks almost undetectable. Thus AI classifiers exhibit two seemingly contradictory properties:
(a) high classification accuracy even in very noisy conditions, and (b) high sensitivity to very small adversarial perturbations.
In this paper, we will use the term ``adversarial fragility'' to refer to this property (b).

The importance of the adversarial fragility problem is widely recognized in the AI community and there now exists a vast and
growing literature studying this property, see e.g. \cite{akhtar_threat_2018,yuan_adversarial_2017,huang_safety_2018} for a comprehensive survey. These studies, however, have not yet resulted in a consensus on two important questions: (a) a theoretical explanation for adversarial fragility, and (b)
a general and systematic defense against adversarial attacks. Instead, we currently have multiple competing theoretical
explanations, multiple defense strategies based on both theoretical and heuristic ideas and many methods for generating
adversarial examples for AI classifiers. Theoretical hypotheses from the literature include (a) quasi-linearity/smoothness
of the decision function in AI classifiers \cite{goodfellow_explaining_2014}, (b) high curvature of the decision boundary
\cite{fawzi_robustness_2016} and (c) closeness of the classification boundary to the data sub-manifold \cite{tanay_boundary_2016}.
Defenses against adversarial attacks have also evolved from early methods using gradient masking \cite{papernot_practical_2017},
to more sophisticated recent methods such as adversarial training where an AI system is specifically subjected to adversarial
attacks as part of its training process \cite{tramer_ensemble_2017}, and defensive distillation \cite{papernot_distillation_2016}.
These new defenses in turn motivate the development of more sophisticated attacks \cite{carlini_defensive_2016} in an
ongoing arms race.

In this paper, we show that property ``adversarial fragility'' is an unavoidable consequence of a simple ``feature compression''
hypothesis about AI classifiers. This hypothesis is illustrated in Fig. \ref{Fig2}: we assume that the output of AI classifiers
is a function of a highly compressed version of the input. More precisely, we assume that the output of AI classifiers is a
function of an intermediate set of variables of much smaller dimension than the input. The intuition behind this hypothesis
is as follows. AI classifiers typically take high-dimensional inputs e.g. image pixels, audio samples, and produce a discrete
{\it label} as output. The input signals (a) contain a great deal of redundancy, and (b) depend on a large number of irrelevant
variables that are unrelated to the output labels. Efficient classifiers, therefore, often remove a large amount of redundant
and/or irrelevant information from the inputs before making a classification decision. We show in this paper that
adversarial fragility is an immediate and necessary consequence of this ``feature compression'' property.

Certain types of AI systems can be shown to satisfy the feature compression property simply as a consequence of their structure.
For instance, AI classifiers for the MNIST dataset \cite{lecun_gradient-based_1998} typically feature a final layer
in the neural network architecture that consists of softmax over a ${10 \times 1}$ real-numbered vector corresponding to the $10$ different
label values; this amounts to a substantial dimension reduction from the $28 \times 28 = 784$ dimensional pixel vector
at the inputs. More generally, there is some empirical evidence showing that AI classifiers actively compress their inputs during
their training process \cite{shwartz-ziv_opening_2017}. The distance between two data samples in high dimensional space will always be larger than that between their low dimensional representations. This can allow small perturbations in high dimensional space to cause low dimensional representations in one decision region move to another decision region. Please see Figure \ref{Fig:ProjectionOperation} for an illustration.

Our proposed explanation of adversarial fragility also immediately leads to an obvious and very powerful
defense: if we enhance a classifier with a generative model that at least partially ``decompresses'' the classifier's output,
and compare it with the raw input signal, it becomes easy to check when adversarial attacks produce classifier outputs that are
inconsistent with their inputs. 
Interestingly, while our theory is novel, other researchers
have recently developed defenses for AI classifiers against adversarial attacks that are consistent with our proposed approach
\cite{frosst_darccc:_2018,Schott2018}.

\begin{figure}
	\centering
	\begin{subfigure}[b]{0.45\linewidth}
		\includegraphics[width=0.9\textwidth]{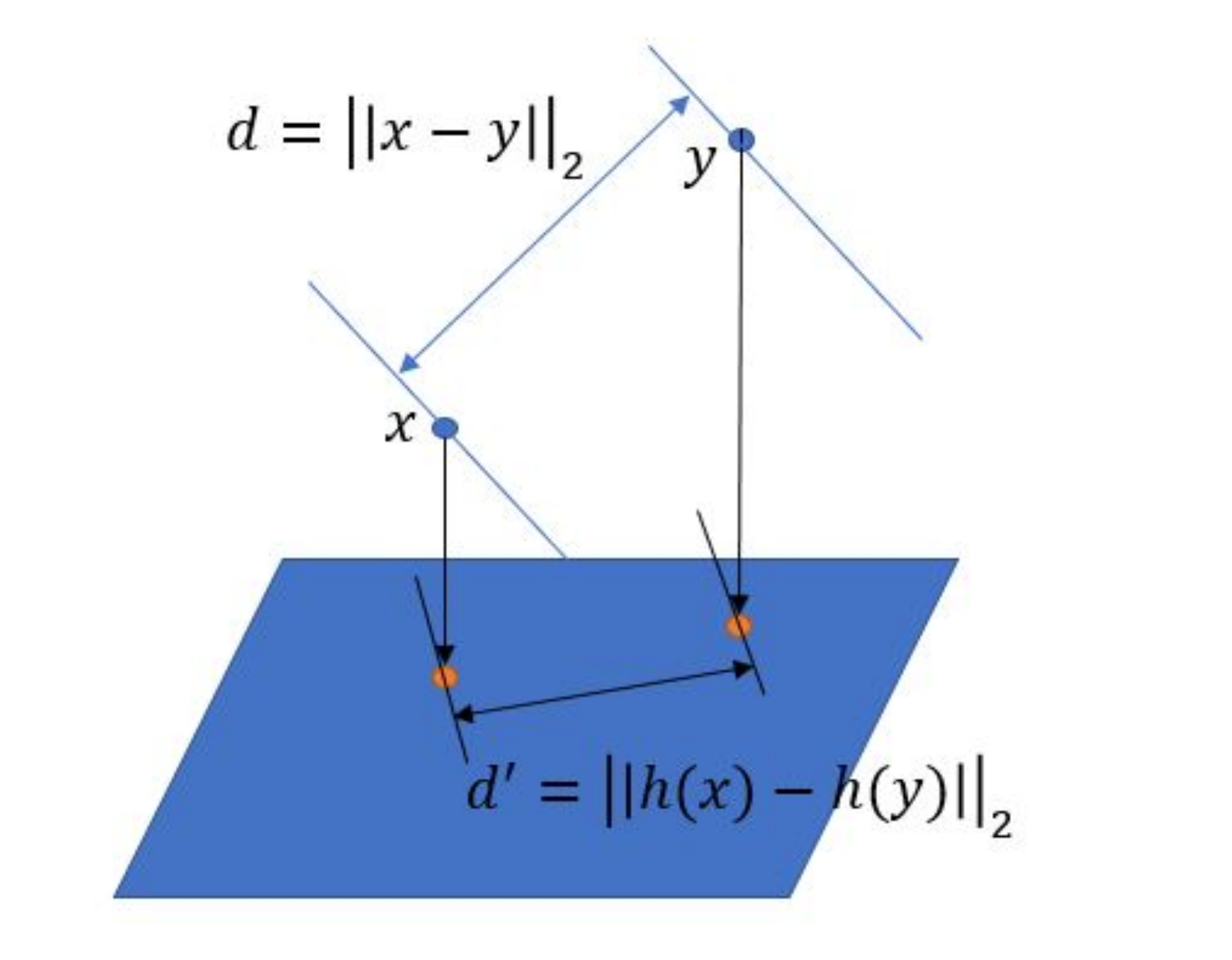}
		\caption{Projection of points in 3D space onto a 2D plane.}
	\end{subfigure}
	\begin{subfigure}[b]{0.45\linewidth}
		\includegraphics[width=0.9\textwidth]{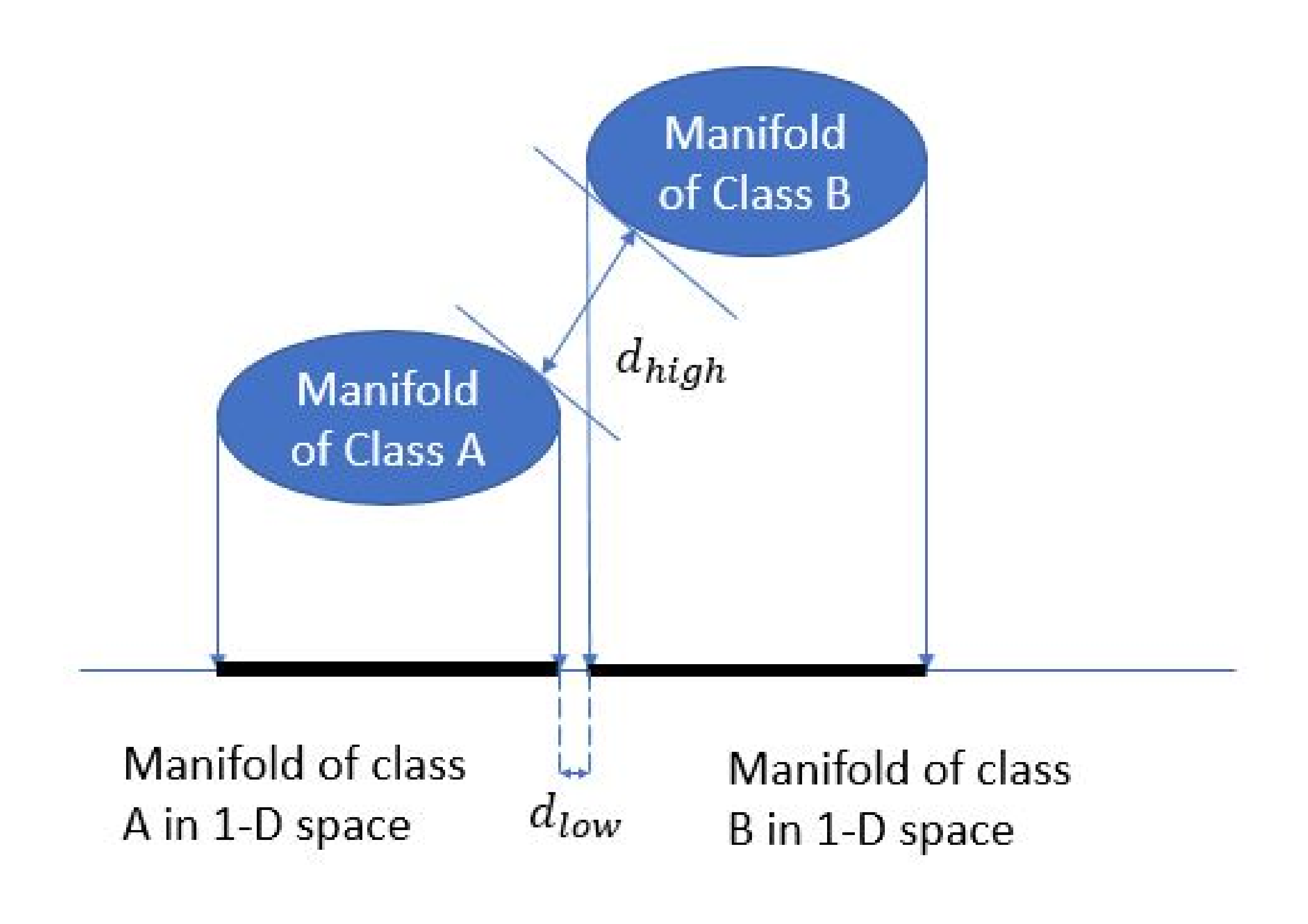}
		\caption{Projection of manifolds onto 1D segment.}
	\end{subfigure}
	\caption{Illustration of feature compression. Here the compression operator is a projection operator.}\label{Fig:ProjectionOperation}
\end{figure}

\subsection{Related Works}

Ever since Szegedy et al. pointed out the vulnerability of deep learning models in \cite{szegedy_intriguing_2013}, the community has witnessed a large volume of works on this topic, from the angle of either attackers or defenders. From the attackers' side, various types of attacking methods have been proposed  in different scenarios, ranging from white-box attack where the attackers know everything about the deep learning system such as system structure and weights, hyperparameters and training data \cite{szegedy_intriguing_2013,goodfellow_explaining_2014,sabour_adversarial_2015,kurakin_adversarial_2016,moosavi-dezfooli_deepfool:_2016,papernot_limitations_2016,carlini_towards_2017,chen_frank-wolfe_2018} to black-box attacks where the attackers know nothing about the system structure or parameters but only access to the queries of the system \cite{papernot_practical_2017,brunner_guessing_2018,chen_frank-wolfe_2018,dong_efficient_2019,guo_simple_2019,moon_parsimonious_2019}. Although the existence of adversarial samples was originally pointed out in image classification tasks, the attacking methods for generating adversarial samples have been applied to various applications such as text classification \cite{lei_discrete_2018}, object detection \cite{xie_adversarial_2017,zhao_practical_2018}, speech recognition \cite{carlini_audio_2018}, and autonomous driving \cite{boloor_simple_2019}.

From the defenders' side, recently proposed methods for improving the safety of deep learning systems include \cite{papernot_distillation_2016,carlini_adversarial_2017,ilyas_robust_2017,metzen_detecting_2017,tramer_ensemble_2017,athalye_obfuscated_2018,frosst_darccc:_2018,jafarnia-jahromi_ppd:_2018,samangouei_defense-gan:_2018,shafahi_universal_2018,xie_feature_2018,zadeh_deep-rbf_2018,amer_weight_2019,araujo_robust_2019,chen_improving_2019,duan_disentangled_2019,erichson_jumprelu:_2019,liu_gandef:_2019,liu_zk-gandef:_2019,panda_discretization_2019,sun_enhancing_2019,zhang_defending_2019}. Most of these methods fall broadly into the following several classes: (1) adversarial training where the adversarial samples are used for retraining the deep learning systems \cite{szegedy_intriguing_2013,goodfellow_explaining_2014,tramer_ensemble_2017,shafahi_universal_2018,araujo_robust_2019}; (2) gradient masking where the deep learning system is designed to have an extremely flat loss function landscape with respect to the perturbations in input samples \cite{papernot_distillation_2016,athalye_obfuscated_2018}; (3) feature discretization where we simply discretize the features of samples (both benign samples and adversarial samples) before we feed it to the deep learning systems \cite{panda_discretization_2019,zhang_defending_2019}; (4) generative model based approach where we find a sample from the distribution of benign samples to approximate an arbitrary given sample, and then use the approximation as input for the deep learning systems \cite{ilyas_robust_2017,sun_enhancing_2019,frosst_darccc:_2018,samangouei_defense-gan:_2018,liu_gandef:_2019,liu_zk-gandef:_2019,duan_disentangled_2019}.

The vulnerability of deep learning systems and its ubiquitousness raised the security concerns about such systems, and the community has been making attempts to explain the vulnerability phenomena \cite{szegedy_intriguing_2013,goodfellow_explaining_2014,moosavi-dezfooli_deepfool:_2016,ma_characterizing_2018,fawzi_robustness_2016,tanay_boundary_2016,dou_mathematical_2018,romano_adversarial_2018,yin_rademacher_2018} either informally or rigorously. In \cite{szegedy_intriguing_2013}, Szegedy et al. argued that the adversarial samples are low-probability elements within the whole sample space, and less likely to be sampled to form a training or testing data set when compared with those from training or testing data set. This results in the fact that the deep learning classifiers cannot learn these adversarial samples and can easily make wrong decisions over these samples. Besides, since these low-probability samples are scattered around the training or testing samples, the samples in training or testing data set can be slightly perturbed to get these adversarial samples. In \cite{goodfellow_explaining_2014}, Goodfellow et al. proposed a linearity argument for explaining the existence of adversarial samples, which motivated them to develop a fast gradient sign method (FGSM) for generating adversarial samples. Later on, some first attempts from the theoretical side are made in \cite{tanay_boundary_2016,fawzi_robustness_2016,romano_adversarial_2018}. A boundary tilting argument was proposed by Tanay et al. in \cite{tanay_boundary_2016} to explain the fragility of linear classifiers, and they established conditions under which the fragility of classifiers can be avoided. Later on in \cite{fawzi_robustness_2016}, Fawzi et al. investigated the adversarial attacking problem by analyzing the curvature properties of the classifiers' decision boundary. The most recent work on explaining the vulnerability of deep learning classifiers was done by Romano et al. in \cite{romano_adversarial_2018} where they assumed a sparse representation model for the input of a deep learning classifier.

In this paper, based on the feature compression properties of deep learning systems, we propose a new rigorous theoretical understanding of the adversarial phenomena. Our explanation is distinct from previous work. Compared with \cite{szegedy_intriguing_2013,goodfellow_explaining_2014} which are empirical, our results are more rigorous. The results in \cite{tanay_boundary_2016} are applicable for linear classifiers, while our explanation holds for both linear and nonlinear classifiers. In \cite{fawzi_robustness_2016}, the authors exploited the curvature condition of the decision boundary of the classifiers, while we only utilize the fact that the classifiers will always compress high dimensional inputs to low dimensional latent codes before they make any decisions. Our results are also different from \cite{romano_adversarial_2018} where they required the inputs to satisfy a sparse representation model, while we do not need this assumption. Our theoretical explanation applies to both targeted and untargeted attacks, and is based on an very intuitive and ubiquitous assumption, i.e., feature compression property.

In \cite{frosst_darccc:_2018}, the authors used class-dependent image reconstructions based on capsule networks to detect the presence of adversarial attacks. The method in \cite{frosst_darccc:_2018} is in spirit similar to our work: both approaches try to ``decompress'' from classifier output (or from outputs of hidden layers), to reconstruct higher-dimensional signals, in order to detect whether there exists adversarial attacks. Compared with \cite{frosst_darccc:_2018}, our trust-but-verify framework is inspired by information and coding theory, and comes with theoretical performance guarantees. After we independently worked on experiments of our trust-but-verify adversarial attack methods for MNIST dataset, we learned of the work \cite{Schott2018} which proposed an optimization-based image reconstruction approach via generative models, to perform robust classification for MNIST dataset. The approach in  \cite{Schott2018} is close to one of our trust-but-verify approaches (see Section \ref{Sec:GenerativeDetection})  for MNIST dataset. Compared with \cite{Schott2018}, this paper has several differences: a) the trust-but-verify approaches were inspired by information and coding theory and comes with corresponding theoretical performance guarantees; b) the trust-but-verify approaches which are based optimizations can be more general, and can be used to reconstruct functions of the higher-dimensional signals, rather than the full signals themselves (please see Section \ref{Sec:TrustButVerify}); c) the trust-but-verify approach is more computationally efficient than the method  \cite{Schott2018}, without requiring solving an optimization problem for every class (10 optimization problems for MNIST); and d) the trust-but-verify approaches do not have to solve optimization problems to perform signal reconstructions, for example, the pixel regeneration network (Section \ref{Sec:PixelPredictionDetection}) for MNIST.

{\bf Notations:} Within this paper, we denote the set $\{1,2,\cdots,N\}$ by $[N]$, and the cardinality of a set $S$ by $|S|$. For a vector $x\in\mbR^N$, we use $x_{S}$ to refer to the sub-vector of $x$ with entries specified by set $S$. 


\section{Problem Statement}\label{Sec:ProblemStatement}
\begin{figure}[htb]
	\begin{center}
		\includegraphics[scale=0.45]{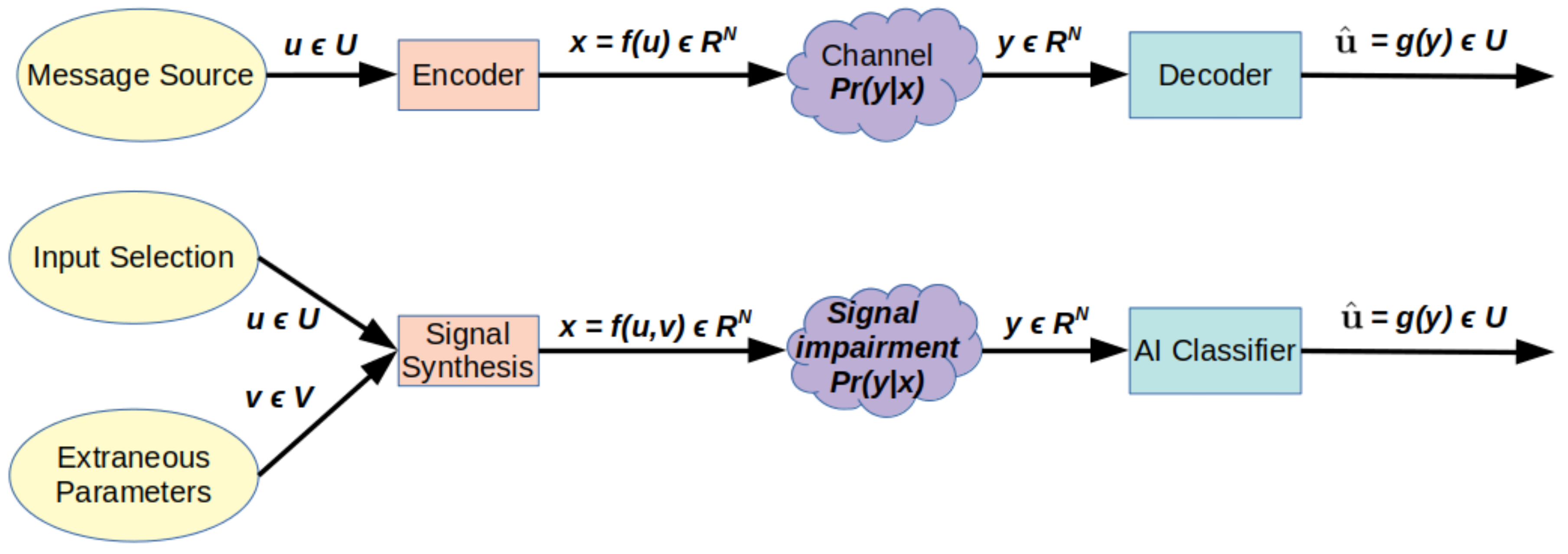}
	\end{center}
	\caption{Top: standard abstract model of a communication system; Bottom: abstract model of an AI classifier system.}
	\label{Fig1}
\end{figure}

An AI classifier can be defined as a system that takes a high-dimensional vector as input and maps it to a discrete set of labels.
As an example, a voice-recognition AI takes as input a {{time series}} containing the samples of an audio signal and outputs a string
representing a sentence in English (or other spoken language). More concretely, consider Fig. \ref{Fig1} which explores a simple
analogy between an AI classification system and a digital communication system.

The purpose of the AI system in Fig. \ref{Fig1} is to estimate the state of the world $\mathbf{u} \in \mathcal{U}$ where the set
of all possible world states $\mathcal{U}$ is assumed to be finite and are enumerated as $\mbu_1,~\mbu_2 \dots,~ \mbu_{N_u}$,
where $N_u$ is the size of $\mathcal{U}$. The input $\mathbf{y} \in \mathbb{R}^{N}$ to the AI classifier is a noisy version of signals $\mathbf{x}
\in \mathbb{R}^N$, and $\mbx$ depend on $\mathbf{u}$ and on a number of {\it extraneous parameters} $\mathbf{v} \in \mathcal{V}$.
Note that the state $\mbu_i$ is uniquely determined by its index or ``label'' $i$. The output of the AI classifier is a state estimate
$\hat{\mathbf{u}} \in \mathcal{U}$, or equivalently, its label.

The AI classifier in Fig. \ref{Fig1} is clearly analogous to a communication decoder: it looks at a set of noisy observations and
attempts to decide which out of a set of possible input signals $\mathbf{x}$ was originally ``transmitted'' over the ``channel'',
which in the AI system models all signal impairments such as distortion, random noise and hostile attackers.

The ``Signal Synthesis'' block in the AI system maps input features into an observable signal $\mbx$. In the abstract model of Fig.
\ref{Fig1}, the synthesis function $\mbf(\cdot)$ is deterministic with all random effects being absorbed into the ``channel'' without loss
of generality. Note that while the encoder in the communication system is under the control of its designers, the
signal synthesis in an AI system is determined by physical laws and is not in our control. However, the most important difference
between communication and AI systems is the presence of the nuisance parameters $\mbv$. For instance, in a voice recognition
system, the input features consist of the text being spoken ($\mbu$) and also a very large number of other characteristics ($\mbv$)
of the speaker's voice such as pitch, accent, dialect, loudness, emotion etc. which together determine the mapping from a text to
an audio signal. Thus there are a very large number of different ``codewords'' $\mbc_1 = \mbf(\mbu_1,\mbv_1),~\mbc_2 =
\mbf(\mbu_1,\mbv_2), \dots$ that encode the same label $\mbu_1$. Let us define the ``codeword set'' for label ${{i}},~i=1 \dots N_u$:
\begin{align}
\mathcal{X}_i &\doteq \{ \mbc \in \mathbb{R}^N : \exists \mbv,~\mbc = \mbf(\mbu_i, \mbv) \} \label{eq:sdef1}
\end{align}
We assume that the codeword sets $\mathcal{X}_i$ satisfy:
\begin{align}
\min_{ i,j,~i \neq j} \min_{\mbc_i \in \mathcal{X}_i,~\mbc_j \in \mathcal{X}_j } \norm{\mbc_i - \mbc_j} \geq 2r_0 \label{eq:sep}
\end{align}
for some $r_0>0$, where $\|\cdot\|$ represents $\ell_{2}$ norm. In other words, all valid codewords corresponding to different labels
$i \neq j$ are separated by at least a distance $2r_0$. In the voice recognition example, under this assumption audio signals corresponding to
two different sentences must sound different. This guarantees the existence of the ideal classifier defined as the function
$q^*(\mby): \mathbb{R}^N \rightarrow \mathcal{U}$ that satisfies $q^* \left( \mbf(\mbu_i, \mbv) \right) = \mbu_i,~\forall i,~\mbv
\in \mathcal{V}$. By definition, the ideal classifier maps any valid input signal to the correct label in the absence of noise. 

\begin{figure}
	\begin{center}
		\includegraphics[scale=0.43]{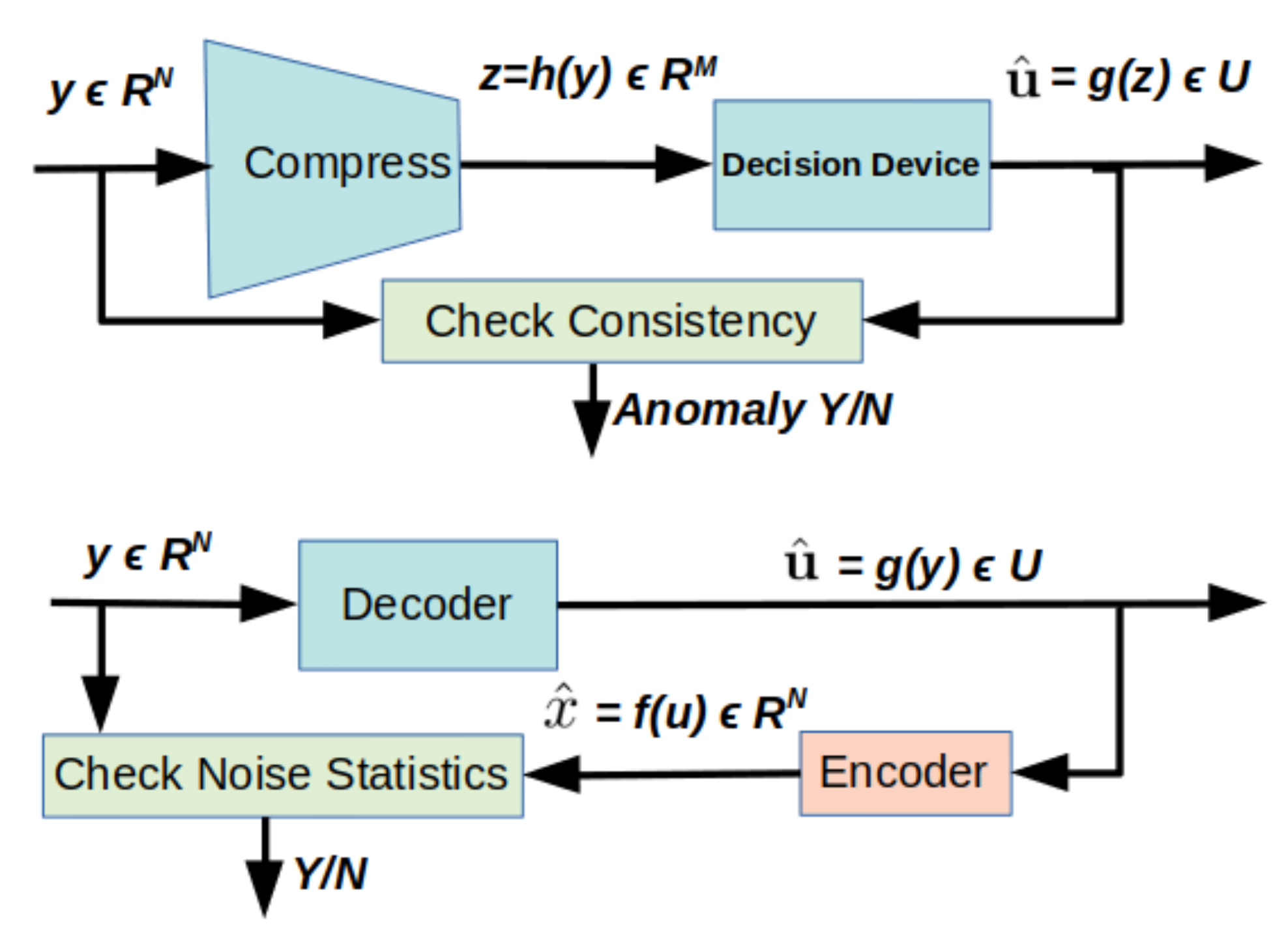}
	\end{center}
	\caption{AI classifier with feature compression and its analogy with a communication decoder}
	\label{Fig2}
\end{figure}

Fig. \ref{Fig2} shows an abstract model of a classifier that is constrained to make final classification decisions based on only  a compressed version $\mbz$ of $\mby$. Specifically, we assume that there exists a compression function $\mbh: \mathbb{R}^N \rightarrow \mathbb{R}^M$, where $M \ll N$ such that the classifier output $q(\mby): \mathbb{R}^N \rightarrow \mathcal{U}$ can be written as $q(\mby) = g(\mbh(\mby))$, where $g: \mathbb{R}^M \rightarrow \mathcal{U}$ is a decision function. We define the ``compressed codeword sets'' as $$
\mathcal{Z}_i \doteq \{ \mbz \in \mathbb{R}^M:\mbv \in \mathcal{V},~\mbh(\mbf(\mbu_i,\mbv))=\mbz \}.$$
We will assume that the sets $\mathcal{Z}_i$ are disjoint so that the compression map $\mbh(\mby)$ preserves information in $\mby$ about the label $i$.

We will show that a classifier constrained to use only $\mbh(\mby)$ for decoding, even if designed optimally, can retain its robustness to random noises, but is necessarily vulnerable to adversarial attacks that are significantly smaller in magnitude. By contrast, uncompressed classifiers can be robust to both random and worst-case noise. In other words, we show that adversarial fragility can be explained as an artifact of feature compression in decoders. We will detail our analysis in Section \ref{Sec:TheoreticalAnalysis}.

\subsection{Trust but Verify: an Information Theory Inspired Defense against Adversarial Attacks}\label{Sec:TrustButVerify}

We propose a general class of defenses, inspired by methods in information and coding theory, to detect the presence of adversarial attacks. To motivate our proposed defense methods, consider a communication system, where a decoder decodes the received signal $\mby$ to message label ``$j$''. To test the correctness of the decoded label $j$, the receiver can ``decompress'' label $j$ by re-encoding it to its corresponding codeword $\mbc_j$, and check whether the pair input $\mbx=\mbc_j$ and output $\mby$ are consistent, under the communication channel model $p(\mby|\mbx)$. For example, in decoding using typical sequences \cite{cover2012elements}, the decoder checks whether the codeword $\mbc_j$ is jointly typical with the received signal $\mby$, namely whether the pair $(\mbc_j, \mby)$ follows the typical statistical behavior of the channel model. Sphere decoding algorithms for multiple-input multiple-output (MIMO) wireless communications check the statistics of the residual noise (specifically, to check whether the noise vector is bounded within a sphere), to provide the maximum-likelihood certificate property for the decoded result (or the decoded label) \cite{spheredecoding,spheredecodingexpected}.

Similarly for AI classifiers, we propose to check whether the classification result label $j$ is consistent with input signal $\mby$, in order to detect adversarial attacks. Generally, we compute a consistency score $sc(j, \mby)$ between label $j$ and $\mby$: the lower the score is, $\mby$ and label $j$ are more consistent with each other. Specifically, suppose the classifier outputs label $j$ for input signal $\mby$.
Define $\mbc_j(\mby)$ as:
\begin{align}
\mbc_j(\mby) &\doteq \arg \min_{\mbc \in \mathcal{X}_j} \norm{\mby - \mbc}, \label{eq:cidef}
\end{align}
and define
\begin{align}
d_j(\mby) &\doteq \norm{\mby - \mbc_j(\mby)}. \label{eq:metricdef}
\end{align}
If we observe that $d_j(\mby)=\norm{\mby - \mbc_j(\mby)}$  (namely the score function $sc(j, \mby)$ ) is abnormally large, this means that the observed signal $\mby$ is far from any valid codeword $\mbf(\mbu_j,\mbv)$ with label $j$ and we conclude that label $j$ is inconsistent with observations $\mby$. This, however, requires a feasible method for calculating $\mbc_j(\mby)$ for a label $j$ and signal $\mby$. When, for a label $j$, there is one unique codeword corresponding to $j$, we can easily evaluate (\ref{eq:metricdef}) and thus determine whether label $j$ is consistent with input $\mby$. However, as noted earlier, in AI classification problems, a label $j$ does not uniquely correspond to a single codeword; instead there is a large codeword set $\mathcal{X}_j$ associated with each label $j$ corresponding to different values of the nuisance parameters $\mbv$. In this case, evaluating (\ref{eq:metricdef}) will need a {\it conditional generative model} mapping label $j$ to all the possible corresponding codewords in $\mathcal{X}_i$, using which we perform optimization (\ref{eq:cidef}) to obtain (\ref{eq:metricdef}). Under mild assumptions on the encoding function $\mbf(\cdot)$, we can provide theoretical guarantees on a detector {\it assuming a well-functioning generative model}.

Solving (\ref{eq:cidef}), however, can be computationally expensive since there can be a vast number of codewords corresponding to label $j$. To resolve the issue of high computational complexity of the former approach, we further propose a more general, and sometimes more computationally efficient, approach for checking the consistency between label $j$ and input signal $\mby$. We consider two functions: $p(\mby): \mathbb{R}^N \rightarrow \mathbb{R}^P$ and  $ta(\mby): \mathbb{R}^N \rightarrow \mathbb{R}^Q$, where $P$ and $Q$ are two positive integers. Our approach checks the consistency between label $j$, $p(\mby)$ and $ta(\mby)$. Here $j$ and $p(\mby)$ serve as prior information for the codewords, and, conditioning on them,  we try to ``predict''  $ta(\mby)$.

We compute a consistency score $sc(j, p(\mby), ta(\mby))$ between label $j$ and $\mby$: the lower the score is, label $j$, $p(\mby)$ and $ta(\mby)$ are more consistent with each other. One example of such a score is given by the following optimization problem. We define $\mbc_j(\mby, p(\mby), ta(\mby))$ as:
\begin{align}
\mbc_j(\mby, p(\mby), ta(\mby)) &\doteq \arg \min_{\mbc \in \mathcal{X}_j, p(\mbc)\in \mathcal{N}(p(\mby))  } \norm{ta(\mby) - ta(\mbc)}, \label{eq:pqcidef}
\end{align}
where $\mathcal{N}(p(\mby))$ means a neighborhood of $p(\mby)$. We further define
\begin{align}
d_j(\mby, p(\mby), ta(\mby)) \doteq \norm{ta(\mby) - ta(\mbc_j(\mby, p(\mby), ta(\mby)))}. \label{eq:pqmetricdef}
\end{align}

Similarly, if we observe that $d_j(\mby, p(\mby), ta(\mby)) \doteq \norm{ta(\mby) - ta(\mbc_j(\mby, p(\mby), ta(\mby)))}$ (namely $sc(j, p(\mby), ta(\mby))$ ) is abnormally large,  the observed signal $\mby$ is far from any valid codeword $\mbf(\mbu_j,\mbv)$ with label $j$ and we conclude that label $j$ is inconsistent with observations $\mby$.  Compared with (\ref{eq:metricdef}), the upshot of this approach is that there can be a unique or a much smaller set of codewords $\mbc$ satisfying $p(\mbc)$ being in the neighborhood of  $p(\mby)$.  Namely, assuming label $j$ is correct, there is often a sufficiently accurate prediction of $ta(\mby)$, based on function $p(\mby)$.  Suppose that the original signal $\mbx$ belongs to label $i$. Then we would pick functions $p(\mby)$ and $ta(\mby)$ such that,  for different labels $i$ and $j$,
\begin{align}
\min_{ i, j,  i \neq j} \min_{\mbc_i \in \mathcal{X}_i,~\mbc_j \in \mathcal{X}_j ,    p(\mbc_{j})\in \mathcal{N}(p(\mbc_{i}) ) } \norm{ta(\mbc_i) - ta(\mbc_j)} \geq 2r_1, \label{eq:sep1}
\end{align}
where $r_{1}$ is a constant.  The criterion (\ref{eq:sep1}) means that, even though a classier can be fooled into classifying $\mby$ to label $j$, a prediction $ta(\mbc_{j})$, conditioned on $p(\mbc_{j})\in \mathcal{N}(p(\mby) )$, will be dramatically different from $ta(\mby)$, thus leading to the detection of the adversarial attacks.


\section{Theoretical Analysis}\label{Sec:TheoreticalAnalysis}
In this section, we perform theoretical analysis of the effects of adversarial attacks and random noises on AI classifiers.
We assume that, for a signal $\mby \in \mathbb{R}^{N}$, an ideal classifier will classify $\mby$ to label $i$, if there exits a codeword $\mbc =\mbf(\mbu_i,\mbv_i) \in \mathcal{X}_i$ for $\mbu_{i}$ and a certain $\mbv_{i}$, such that $\|\mby- \mbc\|\leq r$, where $r$ is a constant. 

Without loss of generality, we consider a signal  $\mbx \in \mathbb{R}^{N} $ which an ideal classifier will classify to label $1$. We further assume that the closest codeword to $\mbx$ is $\mbc_{1}=\mbf(\mbu_1,\mbv_1)$ for $\mbu_{1}$ and a certain $\mbv_{1}$. For any $i\neq 1$, we also define $\mbc_{i}$ as the codeword with label $i$ that is closest to $\mbx$: 
$$\mbc_i \doteq \arg \min_{\mbc \in \mathcal{X}_i} \norm{\mbx-\mbc}.$$



We define the sets $\mathcal{S}_1$ and $\mathcal{S}_i$ as the spheres of radius $r$ around $\mbc_{1}$ and $\mbc_i$ respectively, namely 
$$\mathcal{S}_1 \doteq \{ \mathbf{b} \in \mathbb{R}^N: \norm{\mathbf{b}-\mbc_{1}} < r \}$$ 
and 
$$\mathcal{S}_i \doteq \{ \mathbf{b} \in \mathbb{R}^N: \norm{\mathbf{b}-\mbc_i} < r \}.$$ 
We assume that $\mbx \in \mathcal{S}_{1}$.  For simplicity of analysis,  we assume that, for a vector $\mby \in \mathbb{R}^{N}$, the classifier $q(\mathbf{y})$ outputs label $i$ if and only if $\mbh(\mathbf{y})=\mbh(\mathbf{b})$ for  a certain  $\mathbf{b}\in \mathcal{S}_{i}$. 

We consider the problem of finding the smallest targeted perturbation $\mbw$ in magnitude which fools the decoder $q(\mbx+\mbw)$ into outputting label $i \neq 1$. Formally, for any $\mbx \in \mathbb{R}^N$, we define the minimum perturbation size $d_i(\mbx)$ needed for target label $i$ as:
\begin{align}
d_i(\mbx) \doteq \min_{\mbw\in\mathbb{R}^N, \mbt \in \mathcal{S}_i}~\norm{\mbw},~\text{s.t.}~\mbh(\mbx+\mbw)=\mbh(\mbt).
\label{eq:optimizationknownparameter1}
\end{align}

Let us define a quantity $d(\mbx, \mbt)$, which we term as ``effective distance between $\mbx$ and $\mbt$ with respect to function $\mbh(\cdot)$'' as 
$$d(\mbx, \mbt) = \min_{\mbw \in \mathbb{R}^{N},~\mbh(\mbx+\mbw)=\mbh(\mbt)}  \|\mbw\|,$$ 
Then for any vector $\mbx \in \mathbb{R}^N$, 
we can use (\ref{eq:optimizationknownparameter1}) to upper bound the smallest required perturbation size 
$$d_i(\mbx) \leq \min_{\mbt \in \mathcal{S}_{i}} d(\mbx, \mbt).$$

For an $\epsilon>0$ and $l>0$, we say a classifier has $(\epsilon, l)$-robustness at signal $\mbx$, if 
$$\mathbb{P}( g(\mbh(\mbx+\mbw))=g(\mbh(\mbx)))  \geq 1-\epsilon,$$ 
where $\mbw \in \mathbb{R}^{N}$ is randomly sampled uniformly on a sphere\footnote{Defined for some given norm, which we will take to be $\ell_2$ norm throughout this paper.} of radius $l$, and $\mathbb{P}$ means probability. In the following, we will show that for a small $\epsilon$, compressed classifiers can still have $(\epsilon, l)$-robustness for $l \gg d_i(\mbx)$, namely the classifier can tolerate large random perturbations while being vulnerable to much smaller adversarial attacks.

\subsection{Classifiers with Linear Compression Functions}\label{Sec:LinearCompression}

We first consider the special case where the compression function $\mbh(\cdot)$ is linear, namely $\mbh(\mby)=A \mby$ with $A \in
\mathbb{R}^{M \times N},~M \ll N$. While this may not be a reasonable model for practical AI systems, analysis of linear compression
functions will yield analytical insights that generalize to nonlinear $\mbh(\cdot)$ as we show later.

\begin{theorem}
	Let $\mby \in \mathbb{R}^{N}$ be the input to a classifier, which makes decisions based on the compression function $\mbz= \mbh(\mby)= A \mby$, where the elements of $A \in \mathbb{R}^{M \times N}$  ($M\ll N$)
	are i.i.d. following the standard Gaussian distribution $\mathcal{N}(0,1)$. Let $B_i=\{\mbz~:~ \mbz=A\mathbf{b}, \mathbf{b}\in \mathcal{S}_i \}$ be the
	compressed image of $\mathcal{S}_i$. Then the following statements hold for arbitrary $\epsilon>0$, $i\neq 1$, and a big enough $M$.\\
	1) With high probability (over the distribution of $A$), an attacker can design a targeted adversarial attack $\mbw$ with
	$$\|\mbw\|_2 \leq \sqrt{1+\epsilon}\sqrt{\frac{M}{N}} \|\mbc_{i}-\mbx\|_{2} -r$$ 
	such that the classifier is fooled into classifying the signal $\mby=\mbx+\mbw$ into label $i$. Moreover, with high probability (over the distribution of $A$), an attacker can design an (untargeted) adversarial perturbation $\mbw$ with  
	$$\|\mbw\|_2 \leq r- \sqrt{1-\epsilon} \sqrt{\frac{M}{N}} \|\mbx-\mbc_1\|$$ 
	such that the classifier will not classify $\mby=\mbx+\mbw$ into label $1$.\\
	2) Suppose that $\mbw$ is randomly uniformly sampled from a sphere of radius $l$ in $\mathbb{R}^N$. With high probability (over the distribution of $A$ and $\mbw$), if $$l<\sqrt{\frac{1-\epsilon}{1+\epsilon}}  \|\mbc_{i}-\mbx\|_{2}-\frac{r}{\sqrt{1+\epsilon} \sqrt{\frac{M}{N}}},$$ 
	the classifier will not classify $\mby=\mbx+\mbw$ into label $i$.
	Moreover, with high probability (over the distribution of $A$ and $\mbw$),  if 
	$$l<(1-\epsilon)\sqrt{\frac{N}{M}} \sqrt{r^2-\frac{M}{N} \|\mbx-\mbc_1\|^2 },$$ 
	the classifier still classifies the $\mby=\mbx+\mbw$ into label $1$ correctly.\\
	3) Let $\mbw$ represent a successful adversarial perturbation i.e. the classifier outputs target label $i\neq 1$ for the input $\mby=\mbx+\mbw$. Then as long as $$\|\mbw\|_{2} < \min_{\mbc_i \in \mathcal{X}_i}\|\mbc_{i} -\mbx\|-r,$$ 
	our adversarial detection approach will be able to detect the attack.
\end{theorem}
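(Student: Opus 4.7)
The overall strategy is to reduce everything to Johnson--Lindenstrauss-type concentration for Gaussian random matrices: since the rows of $A$ are i.i.d.\ standard Gaussian, the row space of $A$ is a uniformly random $M$-dimensional subspace of $\mathbb{R}^N$, and the orthogonal projection $P$ onto this subspace satisfies $(1-\epsilon)\frac{M}{N}\|\mbv\|^2 \le \|P\mbv\|^2 \le (1+\epsilon)\frac{M}{N}\|\mbv\|^2$ with high probability for any fixed $\mbv$ when $M$ is sufficiently large. The key geometric reduction is that the classifier rule $q(\mby)=i$ is equivalent to $A\mby \in A\mathcal{S}_i$, which in turn is equivalent to $\|P(\mby - \mbc_i)\| < r$; equivalently, each decision region is the infinite cylinder $\mathcal{S}_i + \ker(A)$ of row-space radius $r$.

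For Part 1, the minimum-norm targeted perturbation from $\mbx$ into the label-$i$ cylinder equals $\|P(\mbx - \mbc_i)\| - r$, and applying the JL upper bound yields the advertised $\sqrt{(1+\epsilon)M/N}\|\mbc_i - \mbx\| - r$. The untargeted case is dual: the minimum-norm perturbation escaping the label-$1$ cylinder containing $\mbx$ has length $r - \|P(\mbx - \mbc_1)\|$, and the JL lower bound delivers $r - \sqrt{(1-\epsilon)M/N}\|\mbx - \mbc_1\|$.

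For Part 2 I would condition on $A$ and exploit independence of $\mbw$. The ``not classified as $i$'' claim reduces to $\|P(\mby - \mbc_i)\| > r$, which follows from the triangle inequality $\|P(\mby - \mbc_i)\| \ge \|P(\mbx - \mbc_i)\| - \|P\mbw\|$ combined with a JL lower bound on the first term and a JL upper bound on $\|P\mbw\|$ (valid since $\mbw$ is independent of $A$). The ``still classified as $1$'' claim requires the sharper estimate $\|P(\mby - \mbc_1)\|^2 \le r^2$, for which the triangle inequality is too loose; one instead invokes the near-Pythagorean identity $\|P(\mbx - \mbc_1) + P\mbw\|^2 \approx \|P(\mbx - \mbc_1)\|^2 + \|P\mbw\|^2$, where the cross term $\langle P(\mbx - \mbc_1), \mbw\rangle$ is controlled by the standard fact that a fixed vector and an independent uniform-on-sphere vector in $\mathbb{R}^N$ have squared inner product of order $l^2\|\mbx - \mbc_1\|^2/N$. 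Replacing each squared norm by its JL estimate $\tfrac{M}{N}\|\cdot\|^2$ and solving for $l$ then yields, up to a relabeling of $\epsilon$, the stated $(1-\epsilon)\sqrt{N/M}\sqrt{r^2 - \tfrac{M}{N}\|\mbx - \mbc_1\|^2}$ bound.

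Part 3 is purely deterministic: if $\mby = \mbx + \mbw$ is classified as label $i$, the triangle inequality gives $d_i(\mby) = \min_{\mbc \in \mathcal{X}_i}\|\mby - \mbc\| \ge \min_{\mbc \in \mathcal{X}_i}\|\mbx - \mbc\| - \|\mbw\|$; the hypothesis $\|\mbw\| < \min_{\mbc \in \mathcal{X}_i}\|\mbc - \mbx\| - r$ then forces $d_i(\mby) > r$, which exceeds the natural benign threshold (since any legitimate input with label $i$ lies within distance $r$ of $\mathcal{X}_i$), so the attack is flagged by the trust-but-verify detector. The main obstacle I anticipate is the Pythagorean step in Part 2: a naive triangle inequality leaves behind a linear-in-$l$ residue that cannot be absorbed into the $(1-\epsilon)$ factor, so one must carefully invoke joint concentration over $A$ and $\mbw$, and verify that the fluctuations in the cross term are dominated by the JL deviations in the squared-norm terms, in order to recover the sharper form stated in the theorem.
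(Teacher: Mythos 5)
Your proposal is correct and follows essentially the same route as the paper: the reduction of the decision rule to $\|P(\mby-\mbc_i)\|<r$ for the row-space projection $P$, the Johnson--Lindenstrauss concentration $\|P\mbv\|\approx\sqrt{M/N}\,\|\mbv\|$ for parts 1 and 2, the near-Pythagorean estimate $\|P(\mbx-\mbc_1+\mbw)\|^2\approx\frac{M}{N}(\|\mbx-\mbc_1\|^2+l^2)$ for the second half of part 2 (which the paper invokes as ``a large deviation analysis''), and the deterministic triangle inequality for part 3 are all exactly the paper's steps. Your explicit flagging of why the triangle inequality is too loose for the ``still classified as $1$'' claim is, if anything, a more candid account of the one step the paper leaves underspecified.
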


\begin{proof}
	1) We first look at the targeted attack case. For linear decision statistics
	$$
	d(\mbx, \mbt) = \min_{\mbw \in \mathbb{R}^{N}, A(\mbx+\mbw)=A(\mbt)}  \|\mbw\|,
	$$
	by solving {{this optimization problem}},
	we know the optimal $\mbw$ is given by
	$\mbw=A^{\dagger}  A(\mbt-\mbx),$
	where $A^{\dagger}$ is the Moore-Penrose inverse of $A$.
	We can see that $\mbw$ is nothing but the projection of $(\mbt-\mbx)$ onto the row space of $A$.  We denote the projection matrix as $P=A^{\dagger}  A$.
	Then the smallest magnitude of an effective adversarial perturbation is upper bounded by
	$$ \min_{\mbt \in \mathcal{S}_{i}} d(\mbx, \mbt)
	=   \min_{\mbt \in \mathcal{S}_{i}} \|   A^{\dagger}  A(\mbt-\mbx) \|
	= \min_{\mbt \in \mathcal{S}_{i}} \|   P(\mbt-\mbx) \|     .$$
	For $\mbt \in \mathcal{S}_{i}$, we have
	$$
	\|   P(\mbt-\mbx) \|=  \|   P (\mbc_{i}-\mbx)+ P (\mbt-\mbc_{i}) \| \geq   \|   P (\mbc_{i}-\mbx)\|-    \|P (\mbt-\mbc_{i}) \|.
	$$ 
	One can show that, when  $\mathcal{S}_{i}=\{\mbt~|~\|\mbt-\mbc_{i}\|\leq r\}$, we can always achieve the equality, namely 
	$$ \min_{\mbt \in \mathcal{S}_{i}} \|   P(\mbt-\mbx) \| =  \|   P (\mbc_{i}-\mbx)\|-r.$$
	
	Now we evaluate $\|P (\mbc_{i}-\mbx)\|$. Suppose that $A$'s elements are i.i.d., and follow the standard zero-mean Gaussian distribution $\mathcal{N} (0,1)$, then the random projection $P$ is uniformly sampled from the Grassmannian $Gr(M, \mathbb{R}^{N})$.  We can see that the distribution of  $\|   P (\mbc_{i}-\mbx)\|$ is the same as the distribution of the magnitude of the first $M$ elements of $\|   (\mbc_{i}-\mbx)\| \mathbf{o}/\|\mathbf{o}\| $, where $\mathbf{o} \in \mathbb{R}^N$ is a vector with its elements being i.i.d.  following the standard Gaussian distribution  $\mathcal{N}(0,1)$.  From the concentration of measure \cite{dasgupta_elementary_2003}, for any positive $\epsilon<1$,
	\begin{align*}
	& \mathbb{P} \left(\|P(\mbc_{i}-\mbx)\|
	\leq \sqrt{1-\epsilon} \|\mbc_{i}-\mbx\| \sqrt{\frac{M}{N}}\right) \leq e^{- \frac{M\epsilon^2}{4}},\\
	& \mathbb{P} \left(\|P(\mbc_{i}-\mbx)\|
	\geq \sqrt{1+\epsilon} \|\mbc_{i}-\mbx\| \sqrt{\frac{M}{N}}\right) \leq e^{- \frac{M\epsilon^2}{12}}  .
	\end{align*}
	Then when $M$ is big enough, 
	$$ \min_{\mbt \in \mathcal{S}_{i}} \|   P(\mbt-\mbx) \| \leq  \sqrt{1+\epsilon}\sqrt{\frac{M}{N}} \|\mbc_{i}-\mbx\| -r $$ 
	with high probability, for arbitrary $\epsilon>0$.
	
	Now let us look at what perturbation $\mbw$ we need such that $A(\mbx+\mbw)$ is not in $B_1$. One can show that $A(\mbx+\mbw)$ is outside $B_1$ if and only if, $\|P(\mbx-\mbc_1+\mbw)\| >r$.  Then by the triangular inequality, the attacker can take an attack $\mbw$ with $\|\mbw\| > r- \|  P(\mbx-\mbc_1) \|$, which is no bigger than
	$r- \sqrt{1-\epsilon} \sqrt{\frac{M}{N}} \|\mbx-\mbc_1\|_2$ with high probability, for arbitrary $\epsilon>0$ and big enough $M$.

	2) If and only if $\mbh(\mbx +\mbw )\neq \mbh(\mbt)$, $\forall  \mbt \in \mathcal{S}_{i} $,  $\mbw$ will not fool the classifier into label $i$.
	If $\mbh(\mby)=A \mby$, ``$\mbh(\mbx +\mbw )\neq \mbh(\mbt)$, $\forall  \mbt \in \mathcal{S}_{i} $'' is equivalent to
	``$ \|A(\mbx+\mbw-\mbt)\| \neq 0  $, $\forall  \mbt \in \mathcal{S}_{i} $'', which is in turn equivalent to
	``$ \|P(\mbx+\mbw-\mbt)\| \neq 0  $, $\forall  \mbt \in \mathcal{S}_{i} $'', where $P$ is the projection onto the row space of $A$. Assuming that $\mbw$ is uniformly randomly sampled from a sphere in $\mathbb{R}^{N}$ of radius $l<\sqrt{\frac{1-\epsilon}{1+\epsilon}}  \|\mbc_{i}-\mbx\|-\frac{r}{\sqrt{1+\epsilon} \sqrt{\frac{M}{N}}}$, then
	\begin{align*}
	\|P(\mbx+\mbw-\mbt )\|
	& =\|P(\mathbf{c}_{i}-\mbx)+P(\mbt-\mbc_{i}) -P\mbw \|\\
	&\geq \| P(\mathbf{c}_{i}-\mbx)   \|-\| P(\mbt-\mathbf{c}_{i})  \|-\|P\mbw\|.
	\end{align*}
	
	From the concentration inequality,
	$$\mathbb{P} \left(\|P\mbw\| \geq \sqrt{1+\epsilon} \|\mbw\| \sqrt{\frac{M}{N}}\right)\leq e^{- \frac{M(\epsilon^2/2-\epsilon^3/3 )   }{2}}.$$
	Thus if $M$ is big enough, with high probability,
	$$\|P(\mbx+\mbw-\mbt )\| \geq \sqrt{1-\epsilon} \|\mbc_{i}-\mbx\|_{2} \sqrt{\frac{M}{N}} -r
	-\sqrt{1+\epsilon} \|\mbw\| \sqrt{\frac{M}{N}}.$$
	If $\|\mbw\| = l$, $\|P(\mbx+\mbw-\mbt )\|>0.$
	
	
	Now let us look at what magnitude we need for a random perturbation $\mbw$ such that $A(\mbx+\mbw)$ is in $B_1$ with high probability. We know $A(\mbx+\mbw)$ is in $B_1$ if and only if, $\|P(\mbx-\mbc_1+\mbw)\| <r$.  Through a large deviation analysis, one can show that, for any $\delta>0$ and big enough $M$,  $\|P(\mbx-\mbc_1+\mbw)\|$ is smaller than $(1+\delta)\sqrt{\frac{M}{N} \|\mbx-\mbc_1\|^2+ \frac{M}{N} l^2}$  and bigger than $(1-\delta)\sqrt{\frac{M}{N} \|\mbx-\mbc_i\|^2+ \frac{M}{N} l^2}$ with high probability. Thus, for an arbitrary $\epsilon>0$, if $l<(1-\epsilon)\sqrt{\frac{N}{M}} \sqrt{r^2-\frac{M}{N} \|\mbx-\mbc_1\|^2 }$, $\|P(\mbx-\mbc_1+\mbw)\|<r$ with high probability, implying the AI classifier still classifies the $\mby=\mbx+\mbw$ into Class $1$ correctly.


	3) Suppose that an AI classifier classifies the input signal $\mby=\mbx+\mbw$ into label $i$.  We propose to check whether $\mby$ belongs to $\mathcal{S}_{i}$. In our model,  the signal  $\mby$ belongs to $\mathcal{S}_{i}$ only if $ \min_{\mbc_{i} \in \mathcal{X}_{i}   }\|\mby-\mbc_{i}\|\leq r$. Let us take any codeword $\mbc_{i} \in \mathcal{X}_{i}$.   We show that when $\|\mbw\| < \|\mbc_{i} -\mbx\|-r$, we can always detect the adversarial attack if the AI classifier misclassifies $\mby$ to that codeword corresponding to label $i$. In fact, $\|\mby-\mbc_{i}\|
	=\|(\mbx  +\mbw)-\mbc_{i}  \|_{2}$, which is no smaller than $\|  \mbc_{i}- \mbx  \| -\|\mbw\|\geq \|  \mbc_{i}- \mbx  \|-(\|\mbc_{i} -\mbx\|-r)
	>r$.
	
	We note that $ \|\mbw\| \leq  \min_{\mbc_{i} \in \mathcal{X}_i   }\|\mbc_{i}-\mbx\|-r$ means $\|\mbw\|_{2} < \|\mbc_{i} -\mbx\|-r$ for every codeword $\mbc_{i}$, thus implying that the adversary attack detection technique can detect that $\mby$ is at more than distance $r$ from every codeword from $\mathcal{X}_{i}$.
	
\end{proof}

\subsection{Nonlinear Decision Statistics in AI Classifiers}\label{Sec:NonlinearCompression}
In this subsection, we show that an AI classifier using nonlinear compressed decision statistics $\mbh(\mby) \in \mathbb{R}^{M}$is significantly more vulnerable to adversarial attacks than to random perturbations. We will quantify the gap between how much a random perturbation and a well-designed adversarial attack affect  $\mbh(\mby)$.




\begin{theorem}\label{Thm:Nonlinear}
	Let us assume that the nonlinear function $\mbh(\mbx):\mbR^N\to \mbR^M$ is differentiable at $\mbx$.  For $\epsilon>0$, we define 
	$$\alpha (\epsilon)=\max_{\|\mbw\| \leq  \epsilon} (\|\mbh(\mbx+\mbw)-  \mbh(\mbx)\| ),$$ 
	and 
	$$\beta (\mathbf{o}, \epsilon)=  \|\mbh(\mbx+{{\epsilon}}\mathbf{o})-  \mbh(\mbx)\|,$$ 
	where $\mathbf{o}$ is  uniformly randomly sampled from a unit sphere.  Then 
	$$\lim_{\epsilon \rightarrow 0}   \frac{ \alpha (\epsilon) }{ E_{\mathbf{o}} \{ \beta (\mathbf{o}, \epsilon) \} }  \geq   \sqrt{\frac{N}{M}} ,$$ 
	where $E_{\mathbf{o}} $ means expectation over the distribution of $\mathbf{o}$. If we  assume that the entries of the Jacobian matrix $\nabla \mbh(\mbx)\in\mbR^{M\times N}$ are i.i.d. distributed following the standard Gaussian distribution $\mcN(0,1)$, then, when $N$ is big enough, with high probability, $$\lim_{\epsilon \rightarrow 0}   \frac{ \alpha (\epsilon) }{ E_{\mathbf{o}} \{ \beta (\mathbf{o}, \epsilon) \} }  \geq  (1-\delta) \sqrt{\frac{N+M}{M}} $$ 
	for any $\delta>0$.
\end{theorem}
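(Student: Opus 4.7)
The plan is to reduce Theorem~\ref{Thm:Nonlinear} to a linear-algebra inequality about the Jacobian $J \doteq \nabla \mbh(\mbx) \in \mbR^{M \times N}$, and then to apply standard spectral bounds (generic in part~1, random-matrix in part~2). Since $\mbh$ is differentiable at $\mbx$, for small $\mbw$ we have $\mbh(\mbx+\mbw) - \mbh(\mbx) = J\mbw + o(\|\mbw\|)$ uniformly on $\|\mbw\| \leq \epsilon$. Dividing by $\epsilon$ and letting $\epsilon \to 0$ gives
\begin{equation*}
\lim_{\epsilon \to 0} \frac{\alpha(\epsilon)}{\epsilon} = \max_{\|\mbw\|\leq 1}\|J\mbw\| = \sigma_{\max}(J), \qquad \lim_{\epsilon \to 0} \frac{\beta(\mbo,\epsilon)}{\epsilon} = \|J\mbo\|,
\end{equation*}
and, after checking that the $\mbo$-integrand is uniformly bounded (which follows from $\|J\mbo\| \leq \sigma_{\max}(J)$), dominated convergence lets me commute the limit with $E_{\mbo}$. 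So the target ratio equals $\sigma_{\max}(J)/E_{\mbo}\|J\mbo\|$.

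For part~1, I would upper bound the denominator by Jensen: $E_{\mbo}\|J\mbo\| \leq \bigl(E_{\mbo}\|J\mbo\|^2\bigr)^{1/2}$. Using $E[\mbo\mbo^T] = I_N/N$ for $\mbo$ uniform on the sphere, $E_{\mbo}\|J\mbo\|^2 = \mathrm{tr}(J^T J)/N = \sum_{i=1}^{\min(M,N)} \sigma_i(J)^2 / N$. Since $J$ has rank at most $M$, this sum is at most $M\,\sigma_{\max}(J)^2$, giving $E_{\mbo}\|J\mbo\| \leq \sigma_{\max}(J)\sqrt{M/N}$, which rearranges to the claimed $\sqrt{N/M}$ lower bound. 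This step is purely deterministic and uses only the dimension mismatch $M \ll N$, mirroring the linear-compression analysis in Section~\ref{Sec:LinearCompression}.

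For part~2, I need to tighten both the numerator and denominator using the Gaussian assumption on $J$. For the numerator, the extreme singular value of an $M \times N$ matrix with i.i.d.\ $\mcN(0,1)$ entries concentrates around $\sqrt{N}+\sqrt{M}$; specifically, $\sigma_{\max}(J) \geq (1-\delta')(\sqrt{N}+\sqrt{M})$ with probability $\to 1$ as $N$ grows, by the standard Davidson--Szarek/Gordon bound. For the denominator, I would first observe that for each fixed unit vector $\mbo$, $J\mbo \sim \mcN(0, I_M)$, so $E_J\|J\mbo\|^2 = M$ and by Jensen $E_J\|J\mbo\| \leq \sqrt{M}$; integrating over $\mbo$ via Fubini gives the same bound in expectation over $J$. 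To turn this into a high-probability statement over $J$, I would use the representation $E_{\mbo}\|J\mbo\|^2 = \|J\|_F^2/N$ together with the $\chi^2_{MN}$ concentration of $\|J\|_F^2$ around $MN$, which yields $E_{\mbo}\|J\mbo\| \leq \sqrt{M}(1+\delta'')$ with high probability. Combining the two bounds,
\begin{equation*}
\frac{\sigma_{\max}(J)}{E_{\mbo}\|J\mbo\|} \geq \frac{(1-\delta')(\sqrt{N}+\sqrt{M})}{(1+\delta'')\sqrt{M}} \geq (1-\delta)\left(\sqrt{\tfrac{N}{M}}+1\right) \geq (1-\delta)\sqrt{\tfrac{N+M}{M}},
\end{equation*}
where the last inequality uses $(\sqrt{a}+1)^2 \geq a+1$.

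The main obstacle is the concentration step for $E_{\mbo}\|J\mbo\|$ with high probability in $J$: the clean Jensen computation gives only an in-expectation bound, so one must either route through $\|J\|_F^2$ as above or use a uniform concentration argument over the sphere. Everything else is routine once the problem has been linearized via the Jacobian; the interchange of $\lim_{\epsilon\to 0}$ and $E_{\mbo}$ is the only analytic subtlety, and it is handled by the $\sigma_{\max}(J)$-uniform envelope that differentiability provides.
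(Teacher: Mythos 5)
Your proposal is correct and follows essentially the same route as the paper's proof: linearize via the Jacobian so that the limiting ratio becomes $\sigma_{\max}(J)/E_{\mbo}\{\|J\mbo\|\}$, bound the denominator by Jensen together with the rank-$M$ structure for part~1, and combine the singular-value concentration bound for Gaussian matrices with concentration of the denominator for part~2. Your two technical refinements --- using the trace identity $E_{\mbo}\{\|J\mbo\|^2\}=\mathrm{tr}(J^TJ)/N$ in place of the paper's explicit SVD/rotation computation, and obtaining the high-probability denominator bound via $\chi^2_{MN}$ concentration of $\|J\|_F^2$ rather than fixing $\mbo$ by rotational invariance and concentrating $\|J\mbo\|$ over $J$ --- are minor variations that, if anything, make rigorous two steps the paper treats informally.
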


Before we proceed, we introduce some technical lemmas which are used to establish the gap quantification in Theorem \ref{Thm:Nonlinear}.

\begin{lemma}\label{Lem:SingularValueConcentration}
	(Section III in \cite{candes_decoding_2005}) For a random matrix $F\in\mbR^{M\times N}, M>N$ with every entry being i.i.d. random variable distributed accord to Gaussian distribution $\mcN(0,1/M)$, we can have
	$$
	\mbP\left(\sigma_{max}(F) > 1 + \sqrt{\frac{N}{M}} + o(1) + t\right) \leq e^{-Mt^2/2}, \forall t >0,
	$$
	and
	$$
	\mbP\left(\sigma_{min}(F) < 1 - \sqrt{\frac{N}{M}} + o(1) - t\right) \leq e^{-Mt^2/2}, \forall t >0,
	$$
	where $\sigma_{max}(F)$ is the maximal singular value of $F$, the $\sigma_{min}(F)$ is the smallest singular value of $F$, and the $o(1)$ is a small term tending to zero as $M\to\infty$.
\end{lemma}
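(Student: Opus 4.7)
The plan is to reduce to a standardized Gaussian matrix and then invoke two classical tools from high-dimensional probability: a Gordon-type Gaussian comparison for the mean of the extreme singular values, and the Borell--TIS inequality for concentration around the mean. First I would rescale by writing $F = G/\sqrt{M}$ where $G \in \mathbb{R}^{M \times N}$ has i.i.d.\ $\mathcal{N}(0,1)$ entries. Since $\sigma_{\max}(F) = \sigma_{\max}(G)/\sqrt{M}$ and $\sigma_{\min}(F) = \sigma_{\min}(G)/\sqrt{M}$, the lemma is equivalent, via the substitution $s = t\sqrt{M}$, to
$$\mathbb{P}\bigl(\sigma_{\max}(G) > \sqrt{M} + \sqrt{N} + o(\sqrt{M}) + s\bigr) \leq e^{-s^2/2}$$
and the mirror lower-tail inequality for $\sigma_{\min}(G)$ around $\sqrt{M} - \sqrt{N}$.

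Second, I would establish the Davidson--Szarek mean bounds $\mathbb{E}[\sigma_{\max}(G)] \leq \sqrt{M} + \sqrt{N}$ and $\mathbb{E}[\sigma_{\min}(G)] \geq \sqrt{M} - \sqrt{N} - o(1)$. Using the variational representations $\sigma_{\max}(G) = \sup_u \sup_v v^\top G u$ and $-\sigma_{\min}(G) = \sup_u \inf_v v^\top G u$ over the unit spheres in $\mathbb{R}^N$ and $\mathbb{R}^M$, I compare the bilinear process $X_{u,v} = v^\top G u$ to the decoupled process $Y_{u,v} = \langle g, u\rangle + \langle h, v\rangle$, where $g \in \mathbb{R}^N$ and $h \in \mathbb{R}^M$ are independent standard Gaussian vectors. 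The key Gordon increment condition reduces to $(1 - u^\top u')(1 - v^\top v') \geq 0$ on the product of unit spheres, which is automatic. Fernique's extension of Slepian then gives the upper bound on $\mathbb{E}\sigma_{\max}(G)$, while the $\sup\inf$ form of Gordon's inequality gives the lower bound on $\mathbb{E}\sigma_{\min}(G)$, using $\mathbb{E}\|g\| \leq \sqrt{N}$ and $\mathbb{E}\|h\| \geq \sqrt{M} - O(1/\sqrt{M})$.

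Third, I would concentrate around these means. Weyl's inequality yields $|\sigma_k(G_1) - \sigma_k(G_2)| \leq \|G_1 - G_2\|_{\mathrm{op}} \leq \|G_1 - G_2\|_F$, so $G \mapsto \sigma_{\max}(G)$ and $G \mapsto -\sigma_{\min}(G)$ are $1$-Lipschitz on $(\mathbb{R}^{M \times N}, \|\cdot\|_F)$. The one-sided Borell--TIS concentration inequality for $1$-Lipschitz functions of standard Gaussians then gives
$$\mathbb{P}\bigl(\sigma_{\max}(G) > \mathbb{E}\sigma_{\max}(G) + s\bigr) \leq e^{-s^2/2}, \qquad \mathbb{P}\bigl(\sigma_{\min}(G) < \mathbb{E}\sigma_{\min}(G) - s\bigr) \leq e^{-s^2/2}.$$
Combining with the mean bounds, rescaling by $1/\sqrt{M}$, and absorbing the $O(1/\sqrt{M})$ slack into the $o(1)$ term finishes the claim.

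The main obstacle is the direction issue in applying Gordon's inequality to obtain the lower bound on $\mathbb{E}\sigma_{\min}(G)$: one must cast it as $-\mathbb{E}\sigma_{\min}(G) = \mathbb{E}[\sup_u \inf_v X_{u,v}]$ and verify the correct sign of the Gordon hypotheses against $Y_{u,v} = \langle g, u\rangle + \langle h, v\rangle$. Once this setup is in place, both the mean bound and the concentration step are completely standard, and they combine to yield the stated $e^{-Mt^2/2}$ tail with exact constant $1/2$ in the exponent.
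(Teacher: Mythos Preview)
Your argument is correct and is exactly the classical Davidson--Szarek proof: Gordon/Sudakov--Fernique comparison against the decoupled process $Y_{u,v}=\langle g,u\rangle+\langle h,v\rangle$ to control the means of the extreme singular values, followed by Gaussian concentration for $1$-Lipschitz functions of $G$ to get the $e^{-s^2/2}$ tail, and a rescaling $s=t\sqrt{M}$ to obtain the stated $e^{-Mt^2/2}$. The increment check $(1-u^\top u')(1-v^\top v')\ge 0$ is the right one, and your handling of the $\min_u\max_v$ case via Gordon's inequality (with equality in the ``same $u$'' covariance comparison) is the standard way to get the lower bound on $\mathbb{E}\,\sigma_{\min}(G)$.

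Note, however, that the paper does \emph{not} prove this lemma at all: it is quoted verbatim as a known concentration result, with attribution to Section~III of Cand\`es--Tao (``Decoding by Linear Programming''), and is then simply applied in the proof of Theorem~\ref{Thm:Nonlinear}. So there is no ``paper's own proof'' to compare against; you have supplied the underlying argument that the cited reference (ultimately Davidson--Szarek) relies on. In that sense your approach is not different from the paper's---it is the proof the paper is implicitly invoking by citation.
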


From Lemma \ref{Lem:SingularValueConcentration}, we see that for a random matrix $F\in\mbR^{M\times N} (M<N)$ with all entries i.i.d. distributed according to standard Gaussian distribution $\mcN(0,1)$, the scaled matrix $\frac{1}{\sqrt{N}} F^T$ will satisfy for all $t>0$
$$
\mbP\left(\frac{1}{\sqrt{N}}\sigma_{max}(F) > 1 + \sqrt{\frac{M}{N}} + o(1) + t\right)
\leq e^{-Nt^2},
$$
and
\begin{align}\label{Eq:SmallestSingularValueLowerBound}
\mbP\left(\frac{1}{\sqrt{N}}\sigma_{min}(F) < 1 -
\sqrt{\frac{M}{N}} + o(1) - t\right)
\leq e^{-Nt^2/2},
\end{align}
since
$$
\sigma_{i}\left(\frac{1}{\sqrt{N}} F^T\right)
= \frac{1}{\sqrt{N}} \sigma_{i}(F),
$$
where $\sigma_{i}(F)$ is the $i$-th largest singular value of $F$.

\begin{proof} (of Theorem \ref{Thm:Nonlinear}) From the Taylor expansion, we know
	\begin{align}\label{Eq:TaylorEXPw}
	\mbh(\mbx + \mbw)
	= \mbh(\mbx) + \nabla \mbh(\mbx) \mbw +
	\left[\begin{matrix}
	o(\|\mbw\|_2^2) \\
	\vdots \\
	o(\|\mbw\|_2^2)
	\end{matrix}\right]
	\end{align}
	and
	\begin{align}\label{Eq:TaylorEXPo}
	\mbh(\mbx+\epsilon \mbo)
	= \mbh(\mbx) + \epsilon \nabla \mbh(\mbx) \mbo +
	\left[\begin{matrix}
	o(\epsilon^2\|\mbo\|_2^2) \\
	\vdots \\
	o(\epsilon^2\|\mbo\|_2^2)
	\end{matrix}\right],
	\end{align}
	where $ o(\|\mbw\|_2^2)\to0$ and $o(\epsilon^2\|\mbo\|_2^2)\to0$ as $\epsilon\to0$. Thus
	\begin{align}\label{Eq:GeneralRatio}
	\lim_{\epsilon\to0} \frac{\alpha(\epsilon)}{E_\mbo\{\beta(\mbo,\epsilon)\}}
	& = \lim_{\epsilon\to0}
	\frac{\max_{\|\mbw\|\leq \epsilon}\left\|\nabla \mbh(\mbx) \mbw +
		\left[\begin{matrix}
		o(\|\mbw\|_2^2) \\
		\vdots \\
		o(\|\mbw\|_2^2)
		\end{matrix}\right]\right\|}
	{E_\mbo\left\{
		\left\|\epsilon \nabla \mbh(\mbx) \mbo +
		\left[\begin{matrix}
		o(\epsilon^2\|\mbo\|_2^2) \\
		\vdots \\
		o(\epsilon^2\|\mbo\|_2^2)
		\end{matrix}\right]\right\|
		\right\}} \nonumber \\
	& = \frac{\sigma_{max}(\nabla \mbh(\mbx))}
	{E_\mbo\{\|\nabla \mbh(\mbx)\mbo\|\}},
	\end{align}
	where $\sigma_{max}(\nabla \mbh(\mbx))$ is the maximal singular value of $\nabla \mbh(\mbx)$. Here the random vector $\mbo$ is obtained by first sampling each entry i.i.d. from the standard Gaussian distribution, and then normalizing the magnitude, i.e.,
	\begin{align}\label{Defn:UniformDistributionOnSphere}
	\mbo = \frac{1}{\|\mbg\|} \mbg,
	\end{align}
	where all entries of $\mbg\in\mbR^N$ are i.i.d. distributed according to the standard Gaussian distribution.
	
	We first consider the deterministic $\nabla \mbh(\mbx)$. Let the SVD of $\nabla \mbh(\mbx)$ be $\nabla \mbh(\mbx) = U\Sigma V^*$ where $U\in\mbR^{M\times M}, \Sigma\in\mbR^{M\times M}$, and $V\in\mbR^{N\times M}$. Then from the convexity of $x^2$ and Jensen's inequality, we have
	\begin{align*}
	E_\mbo\{\|\nabla \mbh(\mbx) \mbo\|\}
	& \leq \sqrt{E_\mbo\{\|\nabla \mbh(\mbx)\mbo\|^2\}} \\
	& = \sqrt{E_\mbo\{\|U\Sigma V^*\mbo\|^2\}} \\
	& = \sqrt{E_\mbo\{\|\Sigma V^*\mbo\|^2\}} \\
	& \leq \sqrt{\sigma_{max}^2(\nabla \mbh(\mbx)) E_\mbo\{\|V^*\mbo\|^2\}} \\
	& = \sigma_{max}(\nabla \mbh(\mbx)) \sqrt{E_\mbo\left\{\frac{\sum_{i=1}^M (\mbg_i')^2}{\sum_{j=1}^N (\mbg_j)^2}\right\}} \\
	& = \sigma_{max}(\nabla \mbh(\mbx)) \sqrt{\frac{M}{N}},
	\end{align*}
	where $\mbg' = [\mbg_1'\ \mbg_2'\ \cdots\ \mbg_N']^T$ is a Gaussian random vector after rotating the Gaussian random vector $\mbg$ by $V^*$.
	
	{{Actually, each element of $V^*\mbo$ is $\frac{1}{\|\mbg\|_2}\sum_{k=1}^N ({V}_{ki})^* \mbg_k$ which is a standard Gaussian random variable where $({V}_{ki})^*$ is the complex conjugate of $V_{ki}$. We have $\mbg_i' =\sum_{k=1}^N \bar{V}_{ki} \mbg_k$.
			We can find a matrix $Q\in\mbR^{N\times (N-M)}$ such that $[V\ Q]\in\mbR^{N\times N}$ is unitary. When $[V\ Q]^*$ acts on a standard Gaussian random vector $g$, we will get
			$$
			\mbg':=[V\ Q]^* \mbg =
			\left[\begin{matrix}
			\mbg_1' \\
			\mbg_2' \\
			\vdots \\
			\mbg_N'
			\end{matrix}\right],
			$$
			and
			$$
			\|\mbg\| = \|\mbg'\|.
			$$
			Then
			\begin{align*}
			E_{\mbo}
			\left\{
			\frac{\sum_{i=1}^M (\mbg_i')^2}{\sum_{j=1}^N (\mbg_j)^2}
			\right\}
			= E_\mbo\left\{\frac{\sum_{i=1}^M (\mbg_i')^2}{\sum_{j=1}^N (\mbg_j')^2}\right\} = \sum_{i=1}^M E_\mbo\{r_i\},
			\end{align*}
			where $r_i = \frac{ (\mbg_i')^2}{\sum_{j=1}^N (\mbg_j')^2}$.
			
			Since $\sum_{i=1}^N E_\mbo\{r_i\}=1$, then from the symmetry of $g_i'$, we have
			$$
			E_\mbo\{r_i\} = \frac{1}{N}.
			$$
			This gives
			$$
			E_{\mbo}
			\left\{
			\frac{\sum_{i=1}^M (\mbg_i')^2}{\sum_{j=1}^N (\mbg_j)^2}
			\right\}
			= \frac{M}{N}.
			$$
	}}
	
	Thus, combining (\ref{Eq:GeneralRatio}), we get
	$$
	\lim_{\epsilon\to0} \frac{\alpha(\epsilon)}{E_\mbo\{\beta(\mbo,\epsilon)\}}
	\geq \sqrt{\frac{N}{M}}.
	$$
	
	
	We now consider the case where the entries of $\nabla \mbh(\mbx)\in\mbR^{M\times N}$ are i.i.d. distributed according to standard Gaussian distribution $\mcN(0,1)$. From Lemma \ref{Lem:SingularValueConcentration}, we have with high probability that for $\delta>0$
	\begin{align}\label{Eq:LargestSingularValueUpperBound}
	\sigma_{max}(\nabla \mbh(\mbx))
	\geq (1-\delta) (\sqrt{N} + \sqrt{M}).
	\end{align}
	
	Since the Gaussian random vector is rotationally invariant, without loss of generality, we take the $\mbo$ as
	\begin{align}\label{Defn:DeterministicSpereVector}
	\mbo = [1\ 0\ \cdots\ 0]^T\in\mbR^N.
	\end{align}
	Then the $\nabla \mbh(\mbx)$ is a vector with all entries being i.i.d. distributed according to standard Gaussian distribution. From the convexity of $x^2$ and Jensen's inequality, we have
	$$
	E_{\nabla \mbh(\mbx)} \{\|\nabla \mbh(\mbx)\mbo\|\}
	\leq \sqrt{
		E_{\nabla \mbh(\mbx)} \{\|\nabla \mbh(\mbx)\mbo\|^2\}
	}
	= \sqrt{M}.
	$$
	
	For a norm function $f(x) = \|x\|:\mbR^n\to\mbR$, since
	$$
	|f(x) - f(y)|
	= |\|x\| - \|y\||
	\leq \|x-y\|,
	$$
	then the function $f(x)$ is Lipschitz continuous with Lipschitz constant $1$. Then for $\|\nabla \mbh(\mbx)\mbo\|$,  we have for every $t\geq 0$,
	\begin{align*}
	\mbP(\left| \|\nabla \mbh(\mbx) \mbo\| - E_{\nabla \mbh(\mbx)} [\|\nabla \mbh(\mbx)\mbo\|]\right| \geq t)
	\leq 2e^{-\frac{t^2}{2}}.
	\end{align*}
	This means the $\|\nabla \mbh(\mbx) \mbo\|$ is concentrated at $E_{\nabla \mbh(\mbx)} \{\|\nabla \mbh(\mbx) \mbo\| \}$ which is less than $\sqrt{M}$.
	
	Thus, combining  (\ref{Eq:GeneralRatio}) and (\ref{Eq:LargestSingularValueUpperBound}), we have with high probability
	\begin{align*}
	\lim_{\epsilon\to0} \frac{\alpha(\epsilon)}{E_{\mbo}\{\nabla \mbh(\mbx)\}}
	& \geq (1-\delta)\frac{\sqrt{N} + \sqrt{M}}{\sqrt{M}} \\
	& \geq (1-\delta) \frac{\sqrt{M+N}}{\sqrt{N}} \\
	&  = (1-\delta) \sqrt{\frac{M+N}{M}}.
	\end{align*}
	
\end{proof}

\section{Experimental Results}\label{Sec:ExperimentalResults}

We now present two sets of experimental results to demonstrate the efficacy of our proposed defense against adversarial attacks.

\subsection{Speech Recognition}\label{Sec:SpeechRecognition}

Our first set of experiments were based on a popular voice recognition AI classifier DeepSpeech\footnote{https://github.com/mozilla/DeepSpeech}. The experimental setup is illustrated in Fig. \ref{Fig3}; a visual comparison with the abstract model in Fig.\ref{Fig1} shows how the various functional blocks are implemented.

\begin{figure}[htb!]
	\begin{center}
		\includegraphics[scale=0.4]{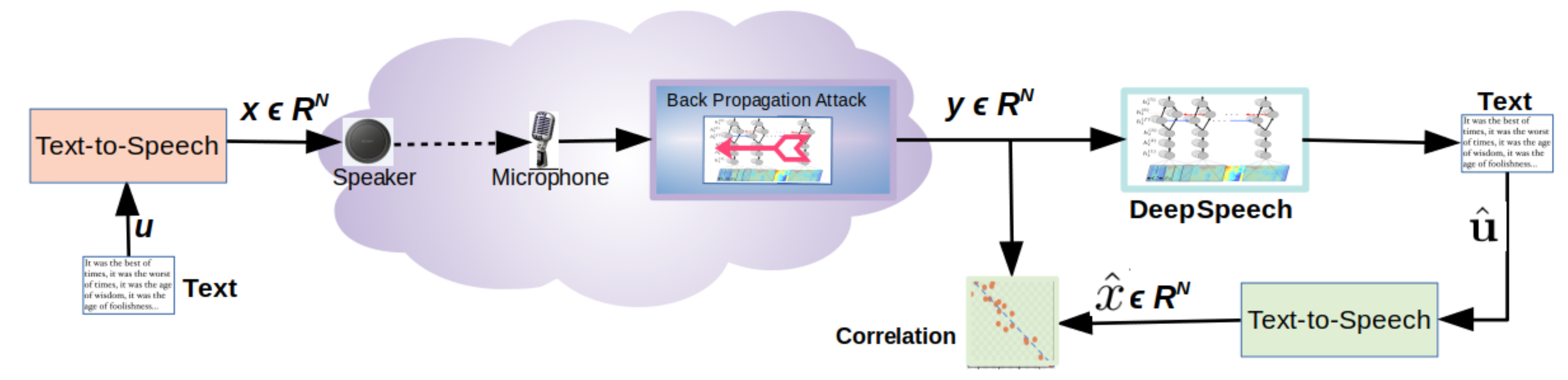}
	\end{center}
	\caption{System Modules of Using Correlation Coefficients to Detect Adversarial Attacks}
	\label{Fig3}
\end{figure}


The experiment \footnote{https://github.com/Hui-Xie/AdversarialDefense} consisted of choosing sentences randomly from the classic 19-th century novel ``A Tale of Two Cities.''  A Linux text-to-speech (T2S) software, Pico2wave, converted a chosen sentence e.g. $\mbu_1$ into a female voice wave file. The use of a T2S system for generating the source audio signal (instead of human-spoken audio) effectively allows us to hold the all ``irrelevant'' variables $\mbv$ constant, and thus renders the signal synthesis block in Fig. \ref{Fig1} as a deterministic function of just the input label $\mbu_1$. 


Let $\mbx$ denote the samples of this source audio signal. This audio signal is played over a PC speaker and recorded by a USB microphone on another PC. Let $\mby_1$ denote the samples of this recorded wave file. The audio playback and recording was performed in a quiet room with no audible echoes or distortions, so this ``channel'' can be approximately modeled as a simple AWGN channel: $\mby_1= \alpha \mbx +  \mbw_1$, where $\alpha$ is a scalar representing audio signal attenuation and $\mbw_1$ is random background noise. In our experiment, the $\mathrm{SNR} \doteq \frac{\alpha^{2}\norm{\mbx}^2} {\norm{\mbw_1}^2}$ was approximately $28$ dB.

We input $\mby_1$ into a voice recognition system, specifically, the Mozilla implementation DeepSpeech V0.1.1 based on TensorFlow. The 10 detailed sentences are demonstrated in Table \ref{Tab:TextUsedForSpeechRecognition}. We then used Nicholas Carlini's adversarial attack Python script\footnote {https://nicholas.carlini.com/code} with Deep Speech (V0.1.1) through gradient
back-propagation to generate a targeted adversarial audio signal $\mby_2=\mby_1+\mbw_2$ where $\mbw_2$ is a small adversarial perturbation that causes the DeepSpeech voice recognition system to predict a completely different
sentence $\mbu_2$. Thus, we have a ``clean'' audio signal $\mby_1$, and a ``targeted corrupted'' adversarial audio signal $\mby_2$ that upon playback is effectively indistinguishable from $\mby_1$, but successfully fools DeepSpeech into outputting a different target sentence. In our experiment, the power of $\mby_2$ over  the adversarial perturbation $\mbw_2$ was approximately $35$ dB.

\begin{table}
	\centering
	\begin{tabular}{|l|l|l|}
		\hline
		Original Text & Original Text Recognized & Adversarial Text Recognized\\
		\hline
		it was the best of times & it was the best of times & he travels the fastest who travels alone\\
		\hline
		it was the worst of times & it was the worst of times & he travels the fastest who travels alone\\
		\hline
		it was the age of wisdom & it was the age of witdom & he travels the fastest who ravels alone \\
		\hline
		it was the age of foolishness & it was the age of foolishness & he travels the fastest who travels alone\\
		\hline
		it was the epoch of belief & it was the eot of belief & he travels the fastest who travels alone\\
		\hline
		it was the epoch of incredulity & it was the epoth of imfidulity & he travels the fastest who travels alone\\
		\hline
		it was the season of Light & it was the season of light & he travels the fastest who travels alone\\
		\hline
		it was the season of Darkness & it was the season of darkness & he travels the fastest who travels alone\\
		\hline
		it was the spring of hope & it was the spring of hope & he travels the fatest who travels alone\\
		\hline
		it was the winter of despair & it was the winter of this care & he traves the fastest who travels alone\\
		\hline
	\end{tabular}
\caption{Text examples used for speech recognition. First column: original text (or benign samples). Second column: text recognized by DeepSpeech from the original text. Third column: text recognized by DeepSpeech from the adversarial text (adversarial samples). All the the original texts are perturbed to be a target sentence ``he travels the fastest who travels alone''. }\label{Tab:TextUsedForSpeechRecognition}
\end{table}

\begin{table}
	\centering
	\begin{tabular}{|l|l|l|l|}
		\hline
		Original and record & Record and adversarial & Record and reconstruction & Adversarial and reconstruction\\
		\hline
		0.8746 & 0.9999 & 0.03 & 0.04 \\
		\hline
		0.8971 & 0.9999 & 0.00 & 0.01 \\
		\hline
		0.8898 & 0.9997 & -0.01 & 0.01\\
		\hline
		0.9533 & 0.9999 & 0.00 & -0.03 \\
		\hline
		0.8389 & 0.9999 & 0.00 & 0.04\\
		\hline
		0.8028 & 0.9999 & 0.03 & -0.01\\
		\hline
		0.8858 & 0.9999 & 0.00 & -0.02 \\
		\hline
		0.9390 & 0.9998 & 0.01 & 0.02\\
		\hline
	    0.8684 & 0.9999 & 0.00 & 0.01 \\
	    \hline
	    0.9114 & 0.9999 & -0.05 & 0.06\\
	    \hline 	
	\end{tabular}
\caption{Correlations between texts. Each row corresponds to the corresponding text in Table \ref{Tab:TextUsedForSpeechRecognition}. }\label{Tab:TextCorrelations}
\end{table}

We then implemented a version of our proposed defense to detect whether the output of the DeepSpeech is wrong, whether due to noises or adversarial attacks. For this purpose, we fed the decoded text output of the DeepSpeech system into the same T2S software Pico2Wave, to generate a reconstructed female voice wave file,  denoted by $\hat{\mbx}$. We then performed a simple cross-correlation of a portion of the reconstructed signal (representing approximately $10\%$ reconstruction of the original number of samples in $\mbx$) with the input signal $\mby$ to the DeepSpeech classifier:
\begin{align}
\rho_{max}(\hat{\mbx},\mby) = \max_m \left| \sum_n \hat{\mbx}[n] \mby[n-m] \right|,
\end{align}
where $\mbx[n]$ denotes the $n$-th entry of $\mbx$. If $\rho_{max}$ is smaller than a threshold (0.4), we declare that the speech recognition classification is wrong. The logic behind this test is as follows. When the input signal is $\mby_1$ i.e. the non-adversarial-perturbed signal, the DeepSpeech successfully outputs the correct label $\hat{\mbu} \equiv \mbu_1$,which results in $\hat{\mbx} \equiv \mbx$. Since $\mby_1$ is just a noisy version of $\mbx$, it will be highly correlated with $\hat{\mbx}$. On the other hand, for the adversarial-perturbed input $\mby_2$, the reconstructed signal $\hat{\mbx}$ is completely  different from $\mbx$ and therefore can be expected to be practically uncorrelated with $\mby_2$.

\begin{figure}
	\centering
	\includegraphics[width=\linewidth]{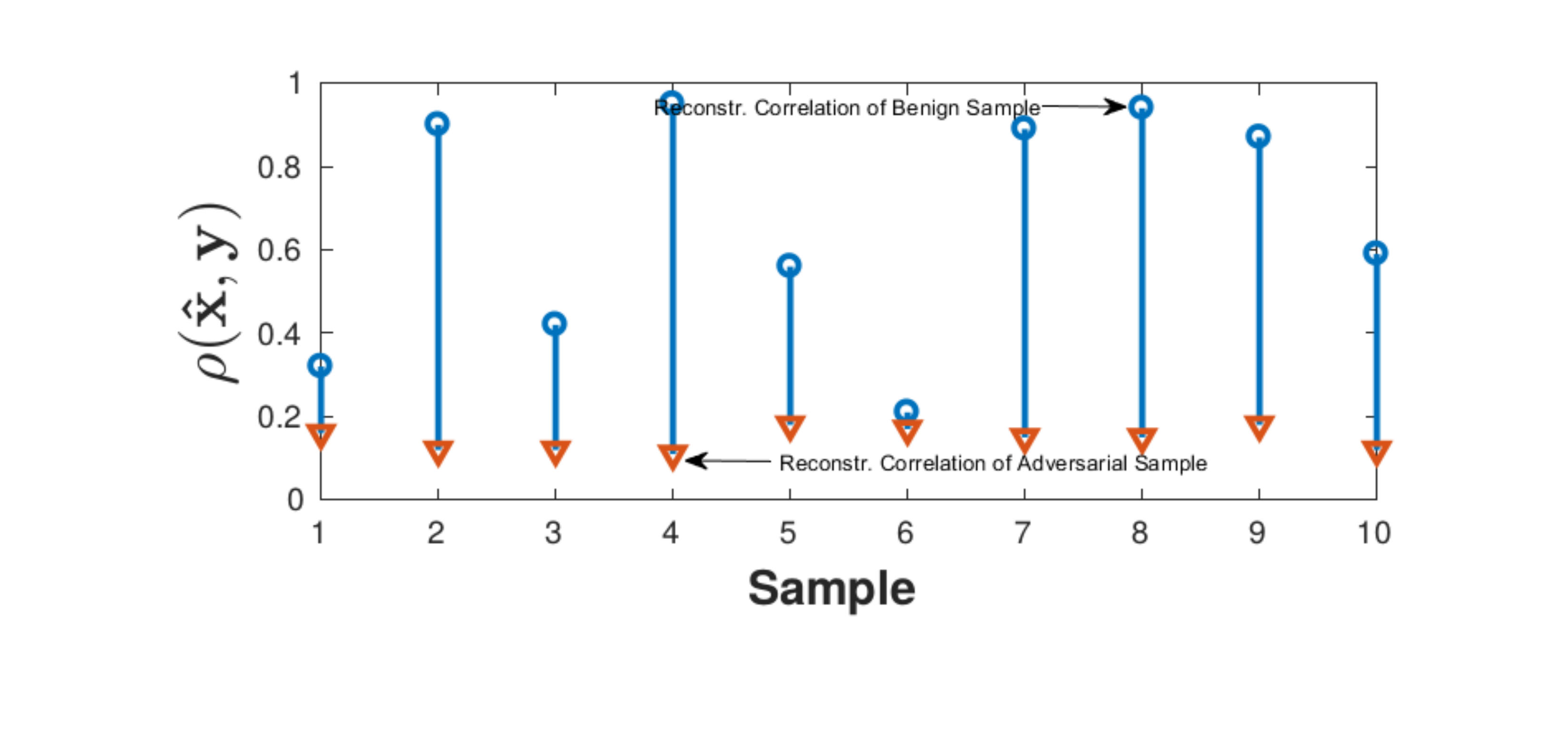}
	\caption{The change of cross correlation coefficients $\rho(\hat{\mbx},\mby)$.  Blue circles indicate $\rho$ between input signals $\mby_1$ without adversarial attack  and their corresponding reconstructed signals $\hat{\mbx}$ from decoded labels,  and red triangles indicate $\rho$ between input signals $\mby_2$ with adversarial attacks and its coresponding reconstructed signals $\hat{\mbx}$. Low ``blue circles'' mean DeepSpeech runs into recognition failure in several error characters, even if no adversarial attacks are present. }
	\label{Fig4}
\end{figure}

Fig.\ref{Fig4} shows the cross-correlation $\rho_{max}(\hat{\mbx},\mby)$ for $10$ sets of recorded signals (a) with and (b) without adversarial perturbations
(red triangles and blue circles respectively in Fig. \ref{Fig4}). More details about the quantitative results can be found in Table \ref{Tab:TextCorrelations}. The adversarial perturbations all successfully fool the DeepSpeech AI to output the target text $\mbu_2=$``he travels the fastest who travels alone''. We see that the observed correlations for the adversarial signals are always very small, and are therefore successfully detected by our correlation test. Interestingly, some of the non-adversarial signals yield low correlations as well, but {{this is because the DeepSpeech cannot decode perfectly}} even when there are no adversarial attacks present. 

\subsection{Image Classification}\label{Sec:ImageClassification}

Our second set of experimental results are for image classification on MNIST data set \cite{lecun_gradient-based_1998}. For MNIST data set \cite{lecun_gradient-based_1998} which is a collection of hand-written digits, there are $60,000$ samples in its training data set, and 10,000 samples in its testing data set. Each of the digit image sample is a gray-scale image of size $28\times28\times1$ and has totally 784 pixels.

We borrow a trained neural network and adversarial example generation implementation from Siddhartha Rao Kamalakara \footnote{https://github.com/srk97/targeted-adversarial-mnist}. A convolutional neural network (CNN) is used for MNIST digit classification. This CNN classifier is trained to achieve a testing classification accuracy of 99.10\%. Adversarial samples for this classifier were generated using the FGSM method \cite{goodfellow_explaining_2014}. Let us denote by $\mbx_i$ the benign samples in class $i$, and denote by $\mbx_{i\to j}$ an adversarial sample which is in class $i$ but misclassified as in class $j\neq i$. In the experiments, we generate 9 adversarial images for each testing sample, for each targeted class. For example, there are about 1,000 images in the testing data set which correspond to digital 1, and we generate, for each of them, 9 corresponding adversarial images, $\mbx_{1\to0}, \mbx_{1\to2}, \cdots, \mbx_{1\to9}$. Adversarial samples generated by fast gradient sign method (FGSM) \cite{goodfellow_explaining_2014} for fooling the CNN classifier into making wrong decisions are shown in Figure \ref{Fig:BenAdvPertComparison}.

In the following, we present experimental results on proposed method for detecting adversarial samples, i.e., the pixel prediction method, and the generative model based optimization approach. Both these approaches originate from the same idea presented at the end of Section \ref{Sec:ProblemStatement}, i.e., solving
\begin{align}
\mbc_j(\mby,p(\mby),ta(\mby))
= \arg\min_{\mbc\in \mathcal{X}_j,p(c)\in\mathcal{N}(p(\mby))} \|ta(\mby) - ta(\mbc)\|,
\end{align}
and then evaluating
$$
d_j(\mby, p(\mby), ta(\mby)) =\norm{ta(\mby) - ta(\mbc_j(\mby, p(\mby), ta(\mby)))}.
$$
The pixel prediction method uses unmasked pixels of an image to predict the behavior of the masked pixels, and then evaluate $\norm{ta(\mby) - ta(\mbc_j(\mby, p(\mby), ta(\mby)))}$ over these predicted pixels. Thus the function $ta(\mby)$ for the prediction network is the projection mapping to the masked pixels, and $p(\mby)$ is the projection mapping to the unmasked pixels. In the optimization based generative model method, we use latent variables to generate full images (codewords $\mbc_j$) for label $j$,  and then minimize $\|\mby-\mbc_j(\mby)\|$, where $ta(\mby)$ is the identity mapping, namely $ta(\mby)=\mby$.

\begin{figure}
	\centering
	\includegraphics[width=0.6\textwidth]{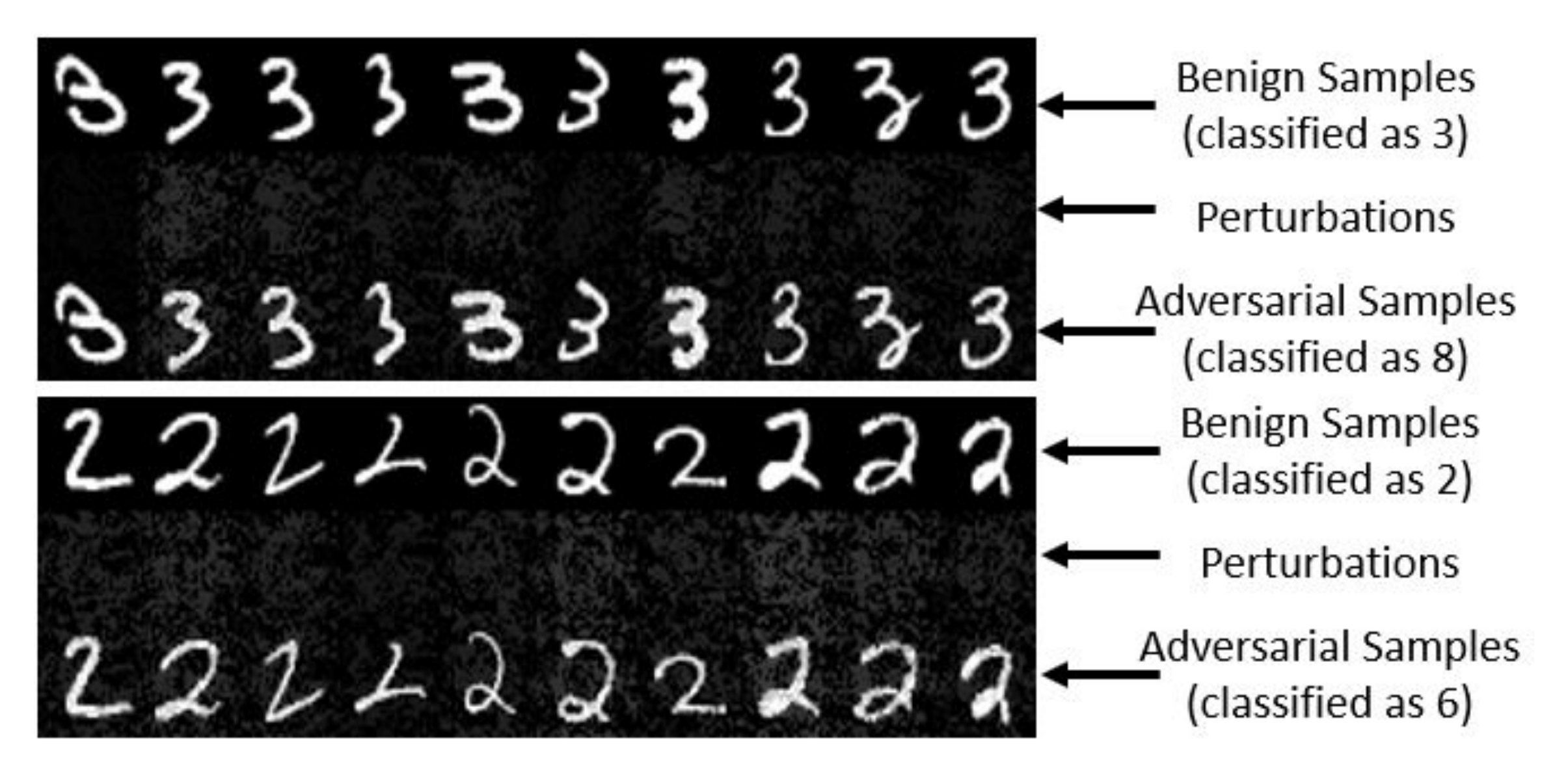}
	\caption{Examples of benign samples, perturbations, and adversarial samples.}\label{Fig:BenAdvPertComparison}
\end{figure}

\noindent {\bf Evaluation Metric of Detection Performance} Let us denote by $N_{ben}$ and $N_{adv}$ respectively the total number of benign samples, and the total number of adversarial samples. We denote by $N_{ben}^c$ the number of benign samples which are correctly detected as benign samples. Similarly, we denote by $N_{adv}^c$ the number of adversarial samples which are correctly detected as adversarial samples. We define the false alarm rate and missing rate as
\begin{align}\label{Defn:FalseAlarmAndMissingRate}
\rho_{alarm} = \frac{N_{ben} - N_{ben}^c}{N_{ben}},
\rho_{missing} = \frac{N_{adv} - N_{adv}^c}{N_{adv}},
\end{align}
and define the accuracy of correct detection of benign sample and the accuracy of correct detection of adversarial samples respectively as 
\begin{align}\label{Defn:BenAndAdvAccuracy}
\rho_{ben}^c = \frac{N_{ben}^c}{N_{ben}},
\rho_{adv}^c = \frac{N_{adv}^c}{N_{adv}}.
\end{align}

\subsubsection{Defense using the Pixel Prediction Network}\label{Sec:PixelPredictionDetection}

We design a pixel prediction neural network which uses part of pixels in an image to predict the rest pixels of it, and we refer to it as the pixel predictor. For example, given an image $\mbx\in\mbR^{N}$ with $N=784$ pixels, we can sample a subset $\mcS_{in}\subset[N]$ of all its pixels such that $|\mcS_{in}|:=N_{in}=0.9*N\approx 705$ pixels and use it as input sample for the pixel predictor. The rest $N_{out}=N - N_{in} = 79$ pixels are used as output sample of the pixel predictor, and are specified by $\mcS_{out} = [N]\setminus\mcS_{in}$ where $|\mcS_{out}|=N_{out}$. The set of such new input-output pairs is used to train and test the pixel predictor.

For a given sampling percentage $pctg\in(0,1)$, we consider different sampling patterns, i.e., central square sampling (CSS), central row sampling (CRS). For an image sample $x$ of width $W$ and height $H$, the CSS procedure will first compute $L = \lceil \sqrt{(1-pctg)*N} \rceil$, and then take the central square with side length $L$ of the image $\mbx$ as the output of pixel predictors, which means $N_{out} = L^2$. The CRS procedure will first compute $L = \lceil (1-pctg)*N \rceil$, and then take the central $L$ rows of the sample $\mbx$ as the output of the pixel predictors, which means $N_{out} = WL$.

After we get the predictions for masked pixel values, we can compute the mean squared error (MSE) for predicted values and the ground truth values, i.e.,
\begin{align}\label{Defn:MSEEvaluation}
{\rm MSE} = \frac{1}{N_{out}} \|\mbx_{\mcS_{out}} - \hat{\mbx}_{\mcS_{out}}\|^2,
\end{align}
where $\hat{\mbx}_{\mcS_{out}}$ is predicted by the pixel predictor.

Instead of training a universal pixel predictor, we train class-dependent pixel predictors for different classes. Let us denote by $MSE_{i\to j}$ the MSE computed from adversarial samples in class $i$ but missclassified as in class $j$, $j\neq i$, and denote by $MSE_i$ the MSE computed from benign samples in class $i$. For example, we can take all the samples with digit 3 from the MNIST training data (totally 5,638 out of 60,000) and testing data (totally 1,010 out of 10,000). Once the pixel predictors are well-trained, we can use them for detecting adversarial attacks. More specifically, suppose we have an adversarial sample which is digit 3 but misclassified by the CNN classifier as digit 8. Then we will test the pixel predictor (trained on benign samples from class 8) on this adversarial sample, and compute the corresponding MSE. If MSE for this adversarial image $MSE_{3\to8}$ is much larger than the MSE of the pixel predictor over the MNIST testing data $MSE_{8}$, i.e.,
\begin{align}\label{Defn:DecisionCriterion}
MSE_{3\to8} > c MSE_{8},
\end{align}
where $c$ is a preset constant, then we claim the given image is an attacking attempt.

\begin{figure}[htb!]
	\centering
	\begin{subfigure}[b]{0.48\linewidth}
		\includegraphics[width=\linewidth]{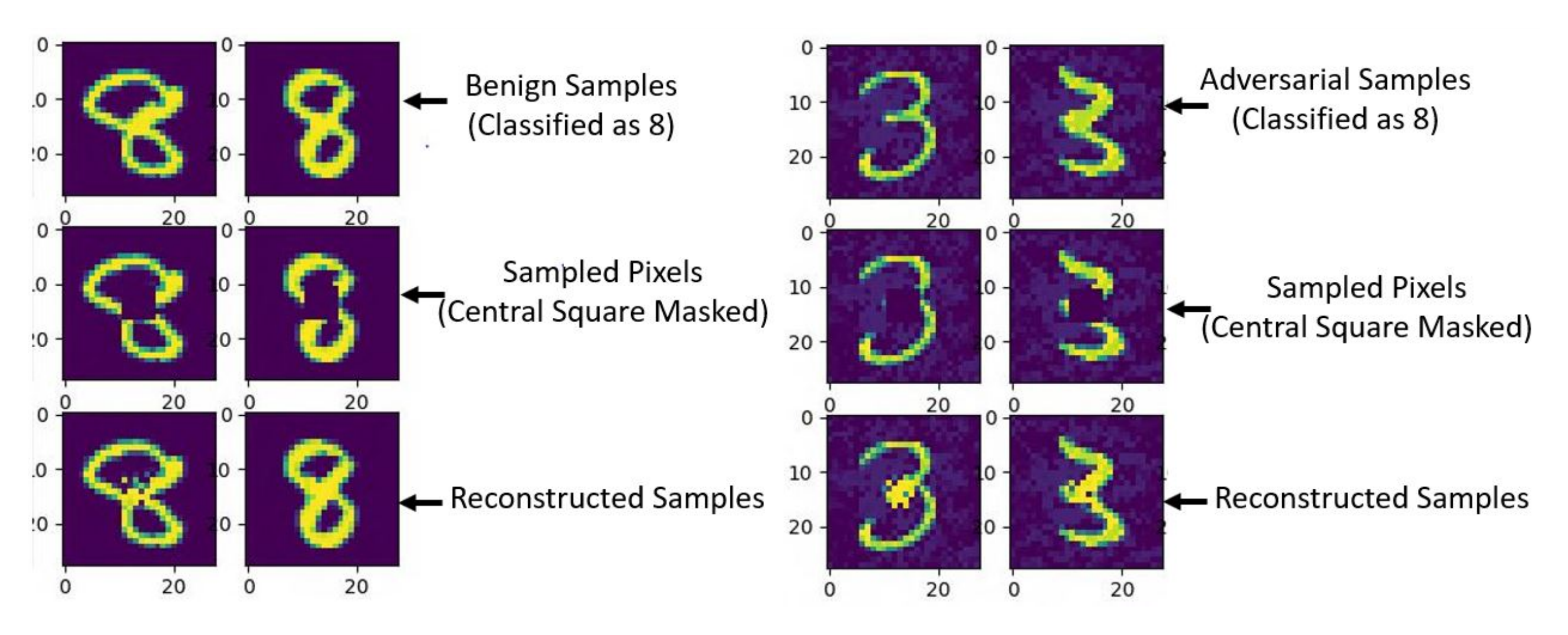}
		\caption{Central square sampling}
	\end{subfigure}
	\begin{subfigure}[b]{0.45\linewidth}
		\includegraphics[width=\linewidth]{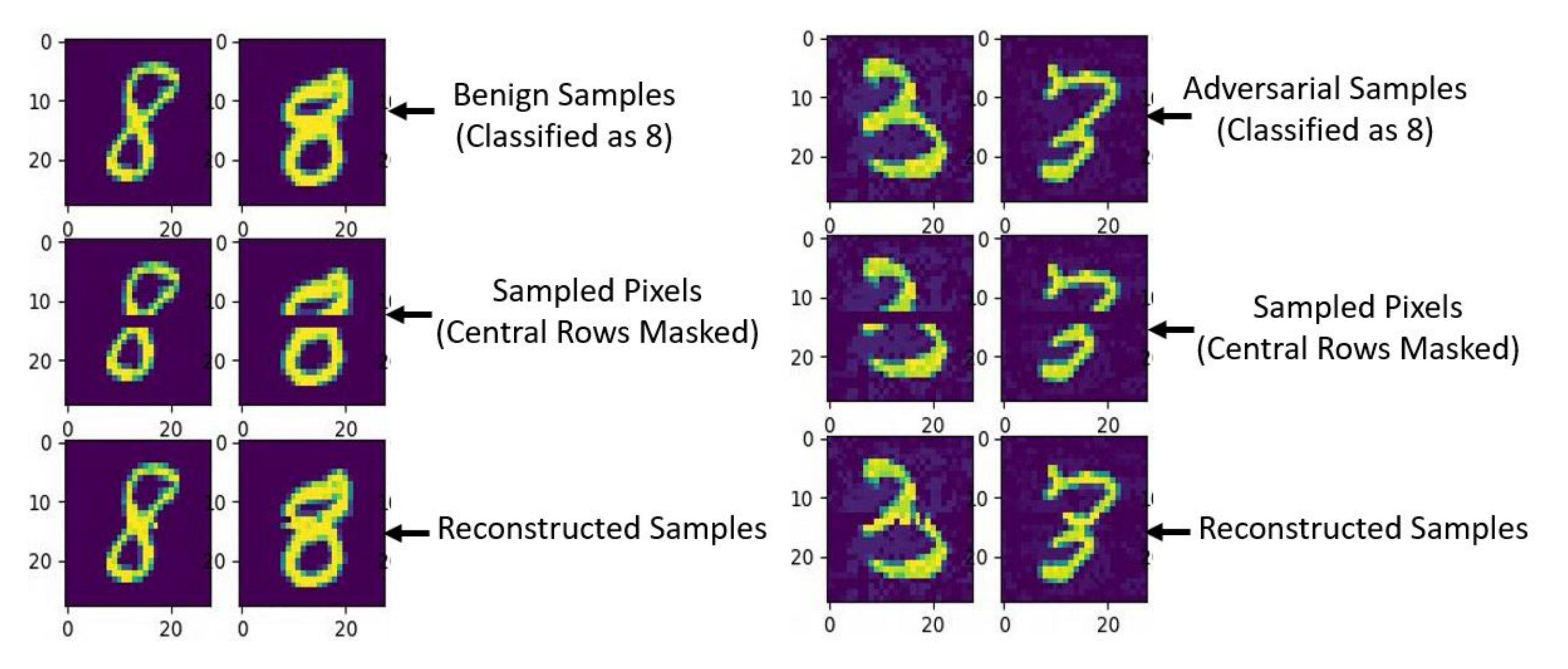}
		\caption{Central row sampling}
	\end{subfigure}
	\caption{Different sampling patters for pixel predictors with sampling percentage $0.95$. First row: original images. Second row: sampled images. Third row: reconstructed images.}
	\label{Fig:SamplingPatternIllustration}
\end{figure}

\noindent{\bf Structure of Pixel Prediction Neural Network} For MNIST data set, the pixel predictors are fully connected neural networks (FNN). The FNN has totally three layers, with 300, 300, $N_{out}$ neurons in the first, second, and third layer, respectively. The activation functions used in each layer are sigmoid functions. We denote the FNN pixel predictors trained on digit $0,1,\cdots$, and $9$ by $FNN_0, FNN_1,\cdots$, and $FNN_9$, respectively. All the $FNN_i$ are trained with learning rate $0.002$ and batch size $128$, and the training process terminates after 20 epochs. In Figure \ref{Fig:SamplingPatternIllustration}, we give illustrations for pixel prediction performance of $FNN_8$ over benign samples in class 8 and adversarial samples when different sampling patterns are used.

\noindent{\bf Effects of Sampling Percentage} For CSS pattern, we now consider a given target class, and generate adversarial samples from all the rest 9 classes which are misclassified by the classifier as this target class. We train a FNN pixel predictor using benign samples from the target class, and then evaluate the trained FNN pixel predictor over the adversarial samples from the other 9 classes. Note that there are about 1,000 benign samples in the testing data set for each class. When we evaluate the performance of $FNN_0$, we will compute the MSE it gives for the 1,000 benign samples $\mbx_{0}$, the 1,000 adversarial samples $\mbx_{1\to0}$,..., the 1,000 adversarial samples $\mbx_{9\to0}$.

\begin{table}
	\centering
	\begin{tabular}{|l|l|l|l|l|l|l|l|l|}
		\hline
		Sampling percentage& 30\% & 40\% & 50\% & 60\% & 70\% & 80\% & 90\% & 95\% \\
		\hline
		$x_0$ &{\color{red}{0.15}} &{\color{red}{0.15}} & {\color{red}{0.13}} & {\color{red}{0.11}}  & {\color{red}{0.16}} & {\color{red}{0.10}} &  {\color{red}{0.05}} &  {\color{red}{0.05}} \\
		\hline
		$x_{1\to0}$ & 0.26 & 0.23 & 0.22  & 0.23  &  0.30 & 0.30  & 0.34  &  0.40 \\
		\hline
		$x_{2\to0}$ & 0.24 & 0.25 & 0.22  & 0.24 & 0.29 & 0.27  & 0.28  &  0.29 \\
		\hline
		$x_{3\to0}$ & 0.24 & 0.24 & 0.21  & 0.22  & 0.29 & 0.24  & 0.27  & 0.32   \\
		\hline
		$x_{4\to0}$ & 0.25 & 0.25 & 0.22  & 0.23 & 0.30  & 0.27  & 0.27 &  0.32 \\
		\hline
		$x_{5\to0}$ & 0.24 & 0.23 & 0.19  & 0.19  &  0.25 & 0.21 & 0.21  & 0.25  \\
		\hline
		$x_{6\to0}$ & 0.24 & 0.24 & 0.20  & 0.21  & 0.27  & 0.22  & 0.23 & 0.22 \\
		\hline
		$x_{7\to0}$ & 0.25 & 0.25 & 0.22  & 0.21  & 0.30  & 0.24  & 0.22  &  0.25 \\
		\hline
		$x_{8\to0}$ & 0.24 & 0.25 &  0.22 & 0.23  &  0.29  & 0.29 & 0.36  & 0.41 \\
		\hline
		$x_{9\to0}$ & 0.24 & 0.24 &  0.22 & 0.21 &  0.29 &  0.26 & 0.28  & 0.32  \\
		\hline
	\end{tabular}
	\caption{Evaluation of reconstruction of MSE using the pixel prediction FNN over adversarial samples from different classes. We show the inference MSE for each situation. The network is trained with good samples from class 0. All the adversarial samples from other classes are misclassified as class 0. Each $x_i$ means benign or good samples from class $i$, and each $x_{j\to i}$ means adversarial samples from class $j\neq i$ but misclassified in class $i$. }\label{Tab:ComprehensiveClass0}
\end{table}

The readers are invited to see Table \ref{Tab:ComprehensiveClass0} and Figure \ref{Fig:CsTableVisualization0} for performance of FNN pixel predictor trained on benign samples in class 0, and evaluated over adversarial samples from other classes which are misclassified as class 0. See Table \ref{Tab:ComprehensiveClass1}, \ref{Tab:ComprehensiveClass2},  \ref{Tab:ComprehensiveClass3}, \ref{Tab:ComprehensiveClass4}, \ref{Tab:ComprehensiveClass5}, \ref{Tab:ComprehensiveClass6}, \ref{Tab:ComprehensiveClass7}, \ref{Tab:ComprehensiveClass8}, \ref{Tab:ComprehensiveClass9} and Figure \ref{Fig:CsTableVisualization1},\ref{Fig:CsTableVisualization2},\ref{Fig:CsTableVisualization3},\ref{Fig:CsTableVisualization4},\ref{Fig:CsTableVisualization5},\ref{Fig:CsTableVisualization6},\ref{Fig:CsTableVisualization7},\ref{Fig:CsTableVisualization8},\ref{Fig:CsTableVisualization9} in Appendix for other cases. Similarly for CRS pattern, we consider a given target class, and generate adversarial samples from all the rest 9 classes which are misclassified by the classifier as this target class. We train a FNN pixel predictor using benign samples from the target class, and then evaluate the trained FNN pixel predictor over the adversarial samples from the other 9 classes. See Table \ref{Tab:MaskRowsComprehensiveClass0} and Figure and Figure \ref{Fig:CrTableVisualization0} for performance of FNN pixel predictor trained on benign samples in class 0, and evaluated over adversarial samples from other classes which are misclassified as class 0. See Table \ref{Tab:MaskRowsComprehensiveClass1}, \ref{Tab:MaskRowsComprehensiveClass2},   \ref{Tab:MaskRowsComprehensiveClass3}, \ref{Tab:MaskRowsComprehensiveClass4},
\ref{Tab:MaskRowsComprehensiveClass5},
\ref{Tab:MaskRowsComprehensiveClass6},\ref{Tab:MaskRowsComprehensiveClass7},\ref{Tab:MaskRowsComprehensiveClass8}, \ref{Tab:MaskRowsComprehensiveClass9} and Figure \ref{Fig:CrTableVisualization1},\ref{Fig:CrTableVisualization2},\ref{Fig:CrTableVisualization3},\ref{Fig:CrTableVisualization4},\ref{Fig:CrTableVisualization5},\ref{Fig:CrTableVisualization6},\ref{Fig:CrTableVisualization7},\ref{Fig:CrTableVisualization8},\ref{Fig:CrTableVisualization9} in Appendix for more results.

\begin{table}
	\centering
	\begin{tabular}{|l|l|l|l|l|l|l|l|l|}
		\hline
		Sampling percentage& 30\% & 40\% & 50\% & 60\% & 70\% & 80\% & 90\% & 95\% \\
		\hline
		$x_{0}$ & {\color{red}{0.12}} & {\color{red}{0.09}} &  {\color{red}{0.14}} &{\color{red}{0.09}}  & {\color{red}{0.10}} & {\color{red}{0.07}} & {\color{red}{0.06}}   & {\color{red}{0.04}}  \\
		\hline
		$x_{1\to0}$ & 0.18 & 0.16 &  0.19 &  0.16 &  0.19 &  0.14 &  0.14 &  0.11 \\
		\hline
		$x_{2\to0}$ & 0.19 & 0.16 &  0.19  & 0.16 &  0.18 & 0.16  & 0.13 &  0.12 \\
		\hline
		$x_{3\to0}$ & 0.19 & 0.15 &  0.18 & 0.15 &  0.18 & 0.16  & 0.17  & 0.15    \\
		\hline
		$x_{4\to0}$ & 0.20  & 0.16 & 0.19  & 0.16 &  0.19 & 0.17 & 0.17 &  0.13  \\
		\hline
		$x_{5\to0}$ & 0.17 & 0.13 &  0.16  & 0.13  & 0.16 &  0.15 &  0.15 & 0.13  \\
		\hline
		$x_{6\to0}$ & 0.17 & 0.14 &  0.18   & 0.13  &   0.16 & 0.14 & 0.13 &  0.12 \\
		\hline
		$x_{7\to0}$ & 0.19 & 0.15 &  0.19 &  0.14  & 0.19  &  0.13 &  0.13 & 0.10 \\
		\hline
		$x_{8\to0}$ & 0.19 & 0.15 &   0.19 &  0.16    & 0.20 & 0.18  & 0.18 & 0.16 \\
		\hline
		$x_{9\to0}$ &  0.19 & 0.15 &  0.19 &  0.15 &  0.20 &  0.16 & 0.17 &  0.14 \\
		\hline
	\end{tabular}
	\caption{Evaluation of reconstruction MSE using pixel prediction FNN over adversarial samples from different classes. We show the inference MSE for each situation. The network is trained with good samples from class 0. All the adversarial samples from other classes are misclassified as class 0. Each $x_i$ means benign or good samples from class $i$, and each $x_{j\to i}$ means adversarial samples from class $j\neq i$ but misclassified as class $i$.}\label{Tab:MaskRowsComprehensiveClass0}
\end{table}

\begin{figure}
	\begin{subfigure}[b]{0.48\linewidth}
		\includegraphics[width=0.95\linewidth]{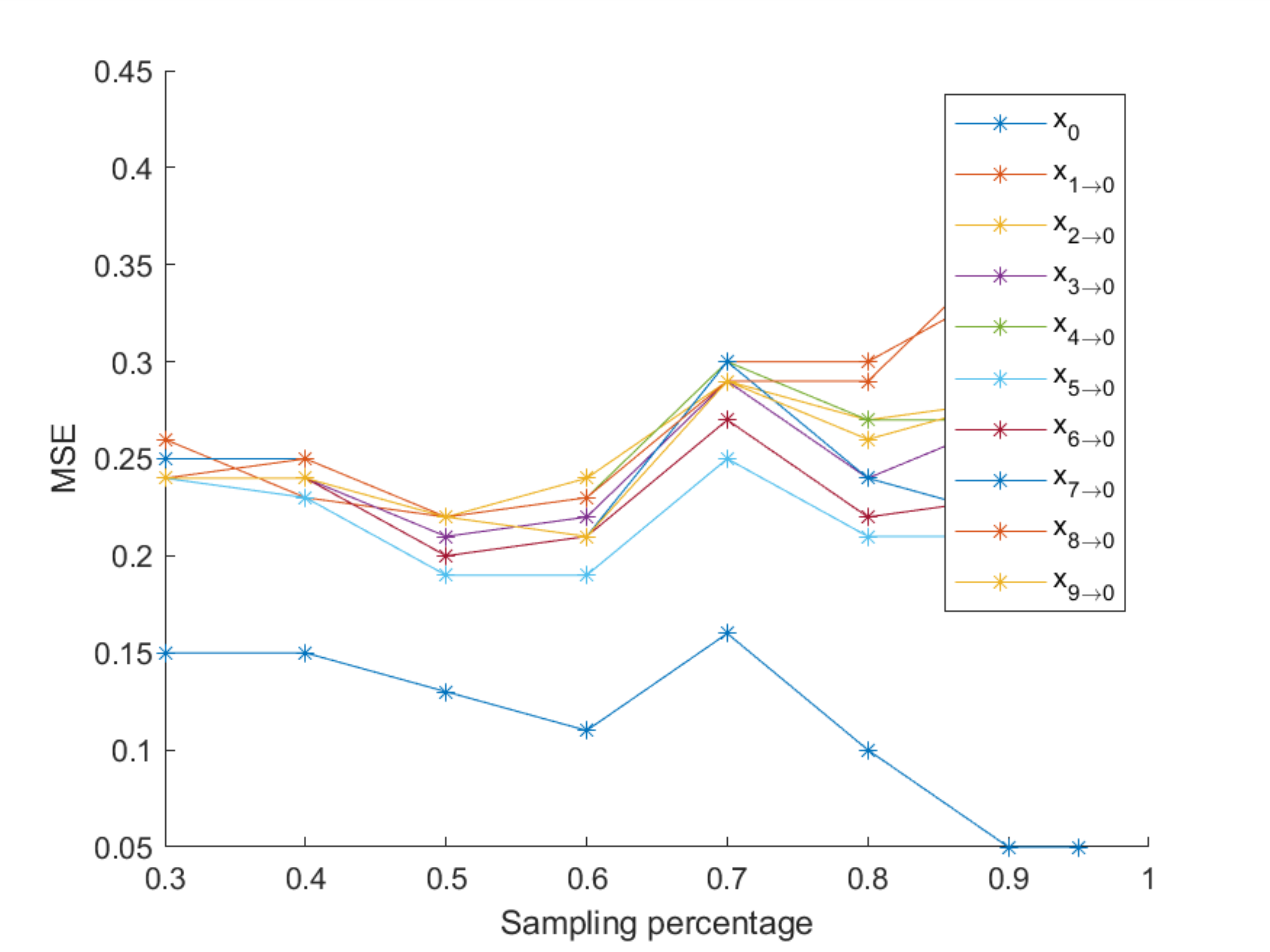}
		\caption{$FNN_0$ is used.}\label{Fig:CsTableVisualization0}
	\end{subfigure}
	\begin{subfigure}[b]{0.48\linewidth}
		\includegraphics[width=0.95\linewidth]{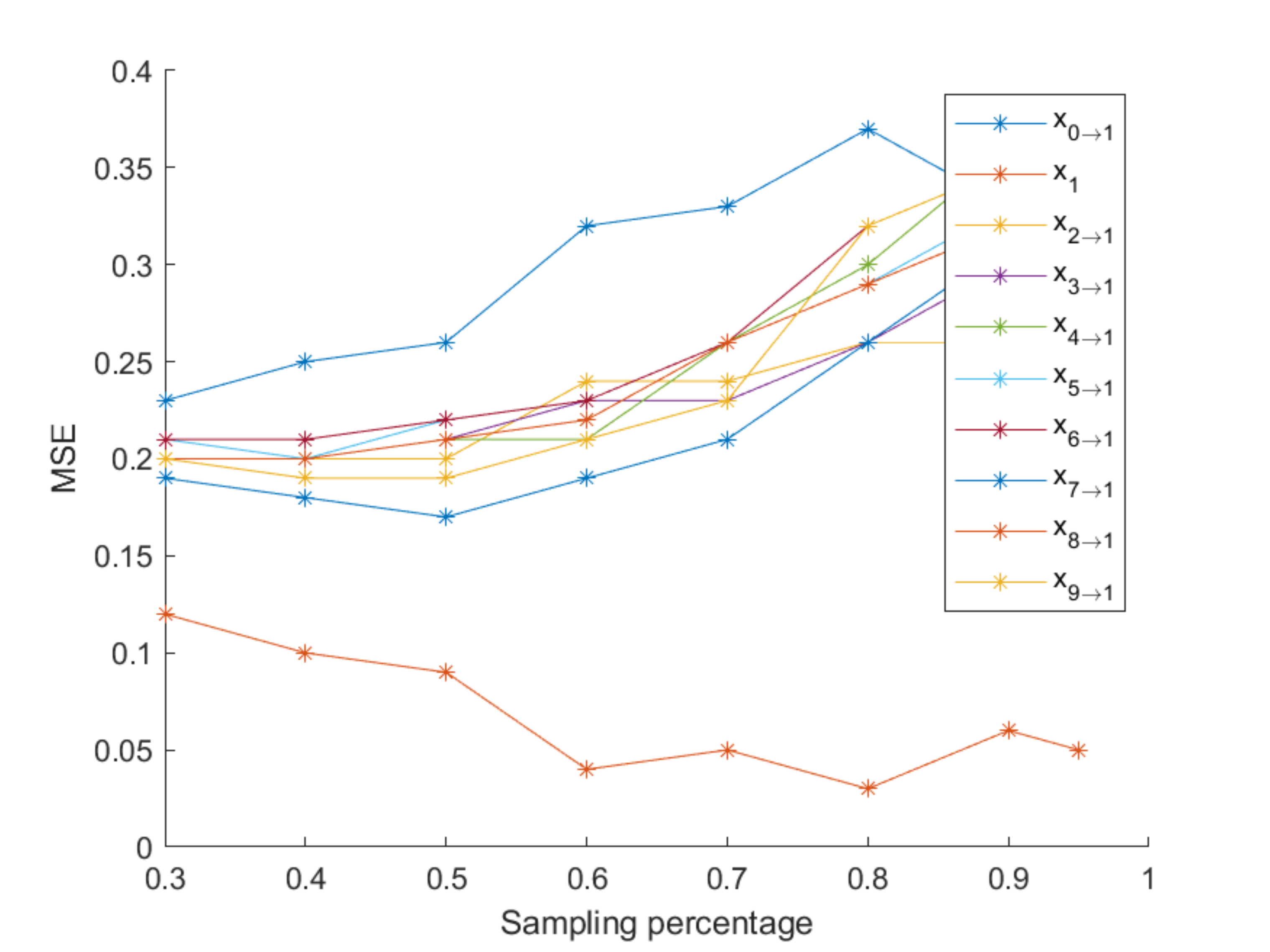}
		\caption{$FNN_1$ is used.}\label{Fig:CsTableVisualization1}
	\end{subfigure}
	\caption{Effects of sampling percentages.}
\end{figure}

\begin{figure}
	\begin{subfigure}[b]{0.48\linewidth}
		\includegraphics[width=0.95\linewidth]{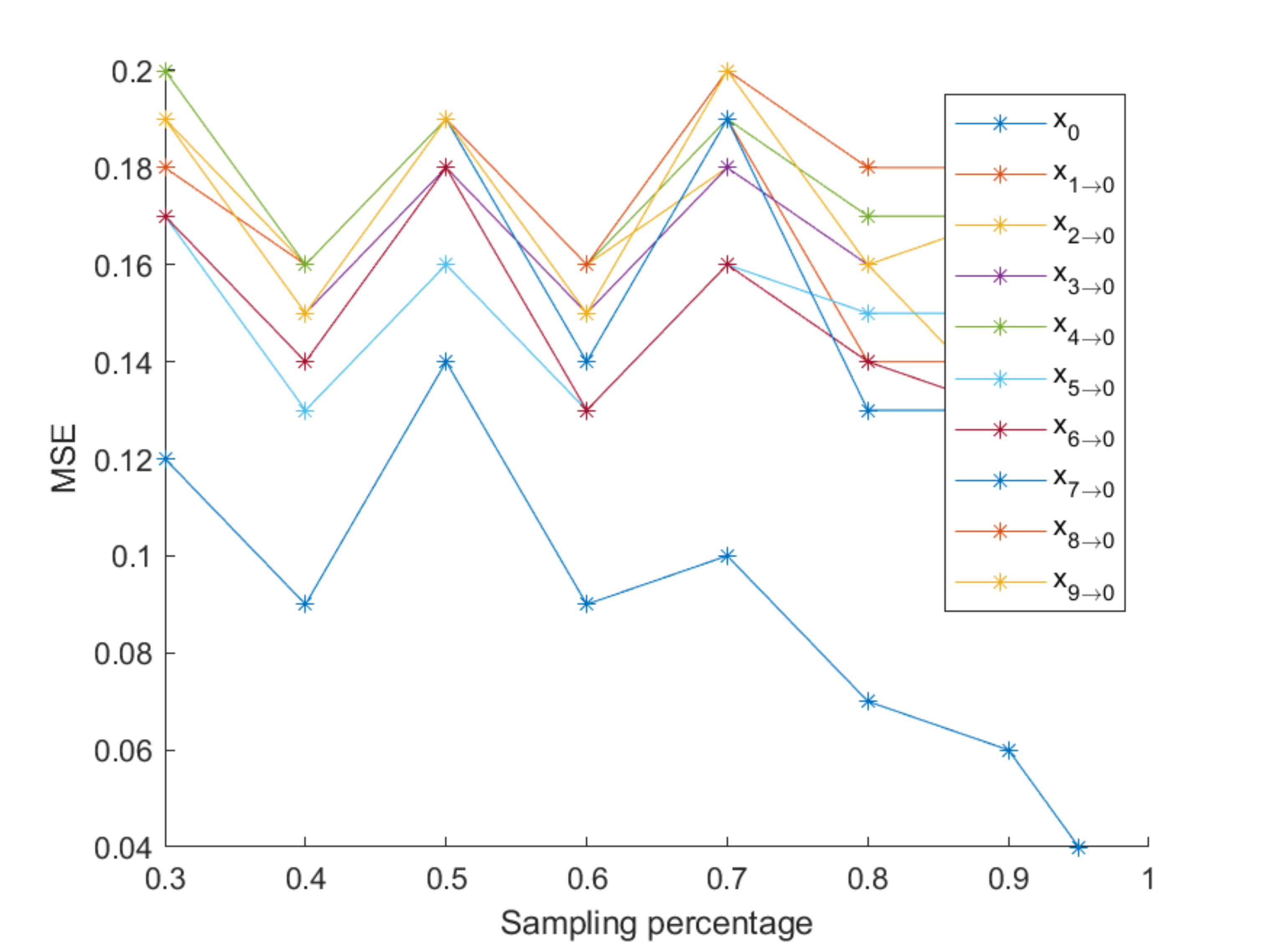}
		\caption{$FNN_0$ is used.}\label{Fig:CrTableVisualization0}
	\end{subfigure}
	\begin{subfigure}[b]{0.48\linewidth}
		\includegraphics[width=0.95\linewidth]{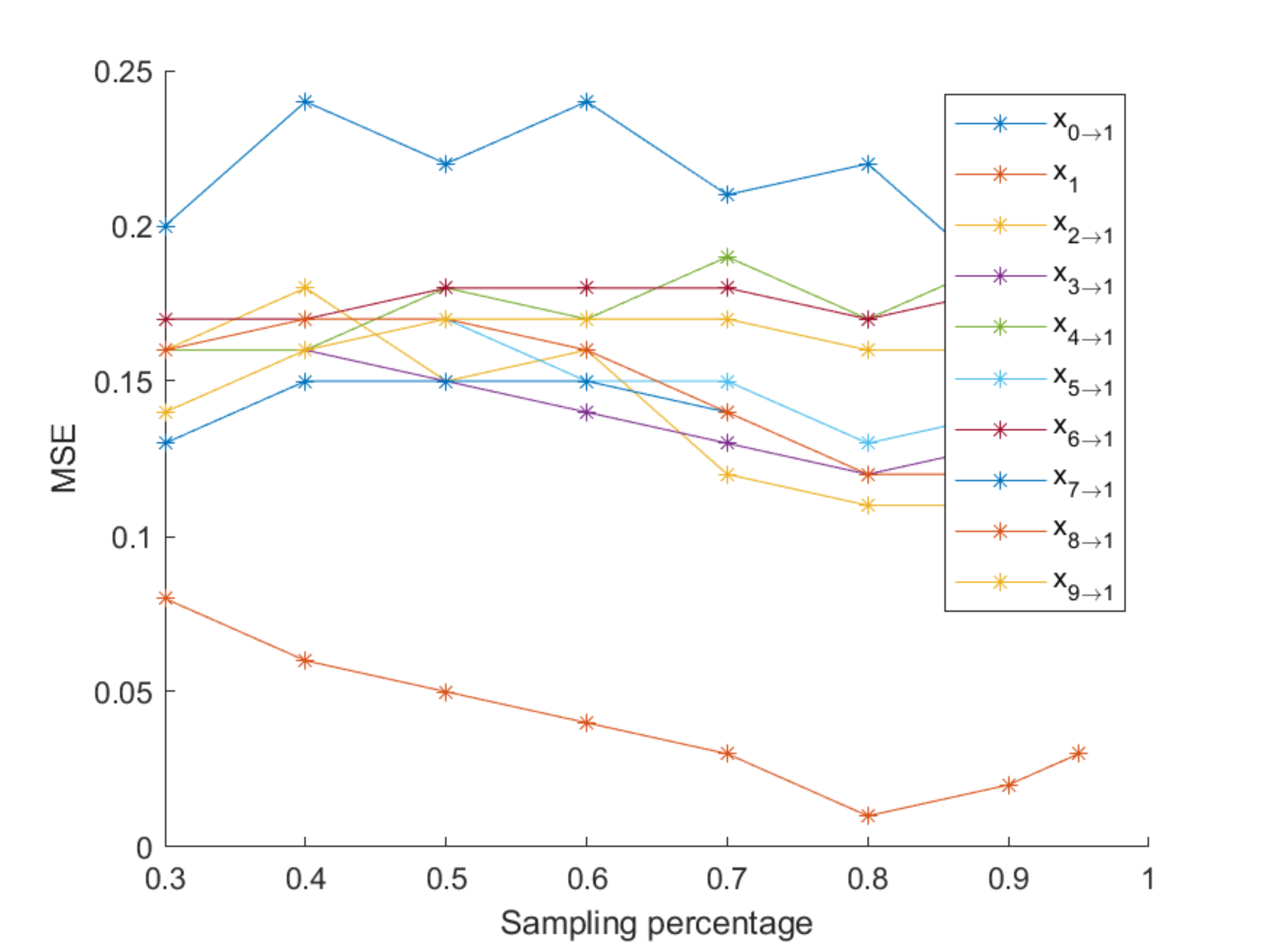}
		\caption{$FNN_1$ is used.}\label{Fig:CrTableVisualization1}
	\end{subfigure}
	\caption{Effects of sampling percentages.}
\end{figure}

\noindent {\bf Distribution of the Reconstruction Mean Squared Errors} We now show how the MSE will be distributed for benign samples and adversarial samples. According to the results obtained under different sampling percentages, we notice that $pctg=0.95$ achieves relatively better results than other choices, i.e., the gap between MSE for benign samples and that for adversarial samples is bigger. Now we use a fixed sampling percentage $pctg=0.95$, and all the other parameters are the same as in previous experiments. For each trained pixel prediction network $FNN_i$, we will use it to compute the MSE for each benign sample in class $i$ in the testing data set (about 1,000 images), and we then plot the histogram for them. Similarly, for all the adversarial samples from all other classes (about 1,000 images for each class, and totally about 9,000 images), we do the same.

We first present results for the central square sampling pattern. The MSE distributions for $FNN_0$ on benign samples and adversarial samples are shown in Figure \ref{Fig:CS_MseDistributionPixelPredictor0}. When central row sampling pattern is used, the MSE distributions for $FNN_0$ on benign samples and adversarial samples are shown in Figure \ref{Fig:CR_MseDistributionPixelPredictor0}. {\bf \emph{Note that the vertical axis represents the probability density}}. More results can be found in Figure \ref{Fig:CS_MseDistributionPixelPredictor1},
\ref{Fig:CS_MseDistributionPixelPredictor2},\ref{Fig:CS_MseDistributionPixelPredictor3},\ref{Fig:CS_MseDistributionPixelPredictor4},\ref{Fig:CS_MseDistributionPixelPredictor5},\ref{Fig:CS_MseDistributionPixelPredictor6},\ref{Fig:CS_MseDistributionPixelPredictor7},\ref{Fig:CS_MseDistributionPixelPredictor8},\ref{Fig:CS_MseDistributionPixelPredictor9}, \ref{Fig:CR_MseDistributionPixelPredictor1},
\ref{Fig:CR_MseDistributionPixelPredictor2},\ref{Fig:CR_MseDistributionPixelPredictor3},\ref{Fig:CR_MseDistributionPixelPredictor4},\ref{Fig:CR_MseDistributionPixelPredictor5},\ref{Fig:CR_MseDistributionPixelPredictor6},\ref{Fig:CR_MseDistributionPixelPredictor7},\ref{Fig:CR_MseDistributionPixelPredictor8},\ref{Fig:CR_MseDistributionPixelPredictor9}.

\begin{figure}[htb!]
	\centering
	\begin{subfigure}[b]{0.45\linewidth}
		\includegraphics[width=0.95\linewidth]{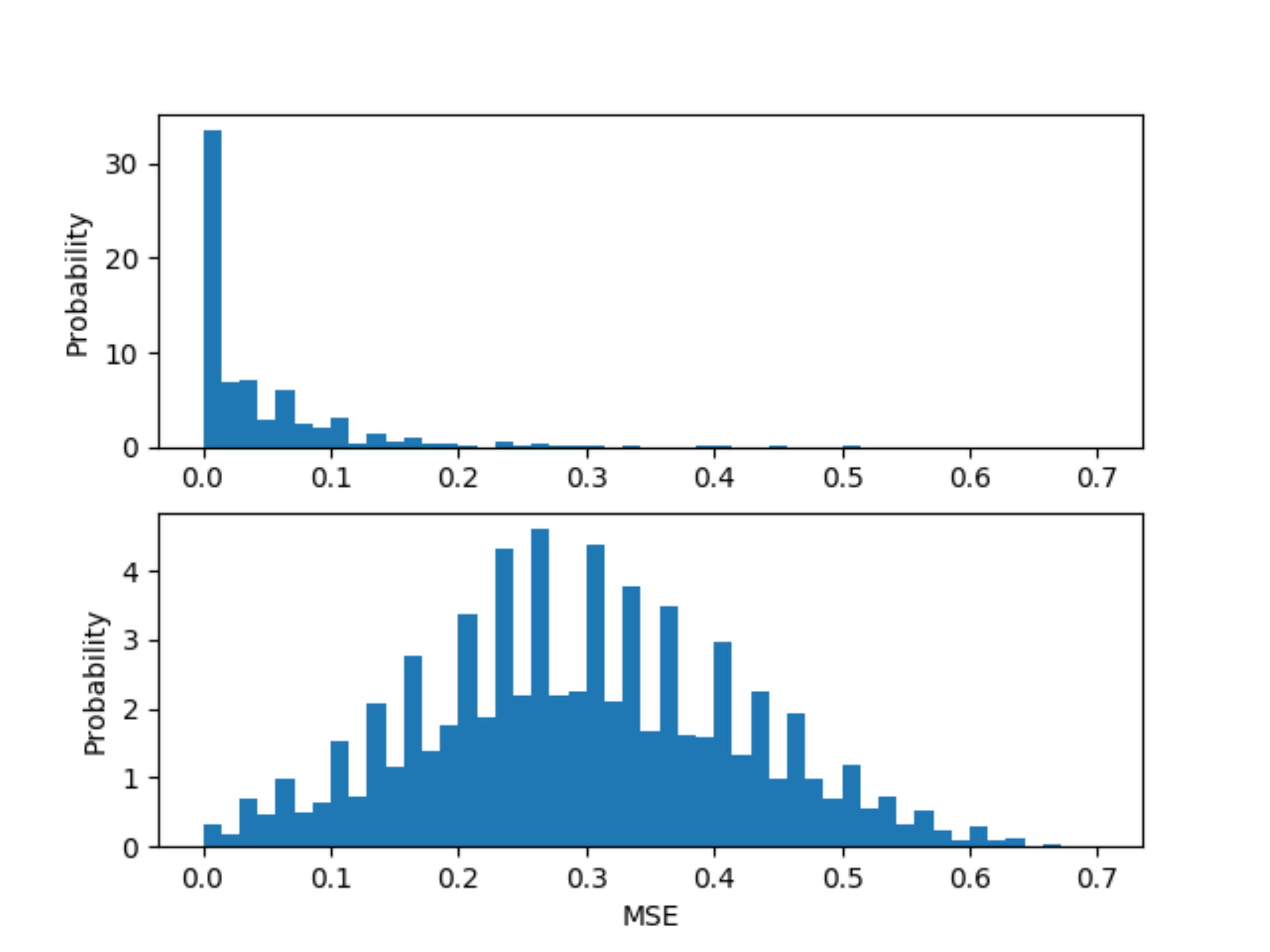}
		\caption{MSE distributions when $FNN_0$ is used.}\label{Fig:CS_MseDistributionPixelPredictor0}
	\end{subfigure}
	\begin{subfigure}[b]{0.45\linewidth}
	\includegraphics[width=0.95\linewidth]{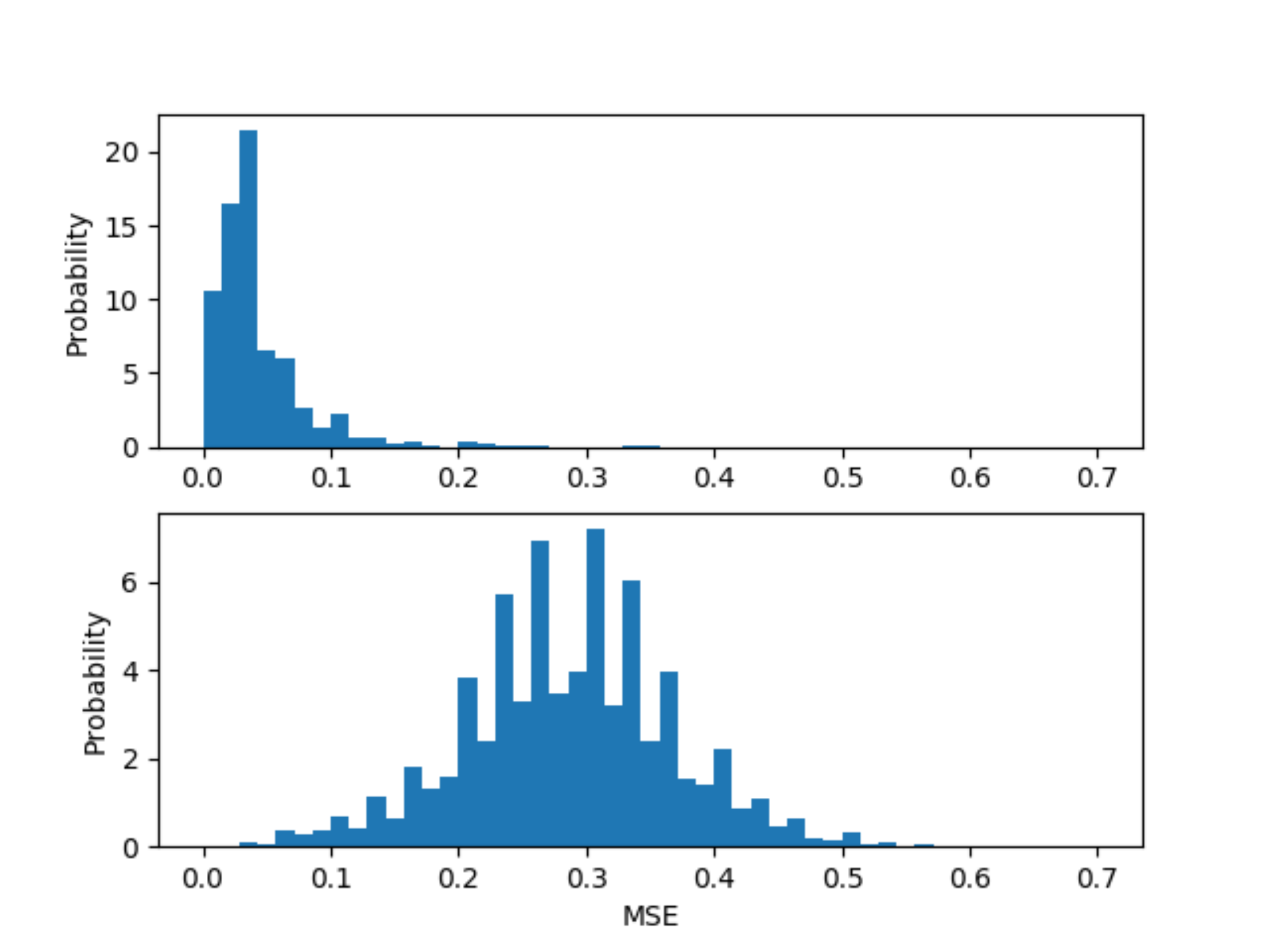}
	\caption{MSE distributions when $FNN_1$ is used.}\label{Fig:CS_MseDistributionPixelPredictor1}
    \end{subfigure}
\caption{ Top row: benign samples. Bottom row: adversarial samples.}
\end{figure}

\begin{figure}[htb!]
	\centering
	\begin{subfigure}[b]{0.45\linewidth}
		\includegraphics[width=0.95\linewidth]{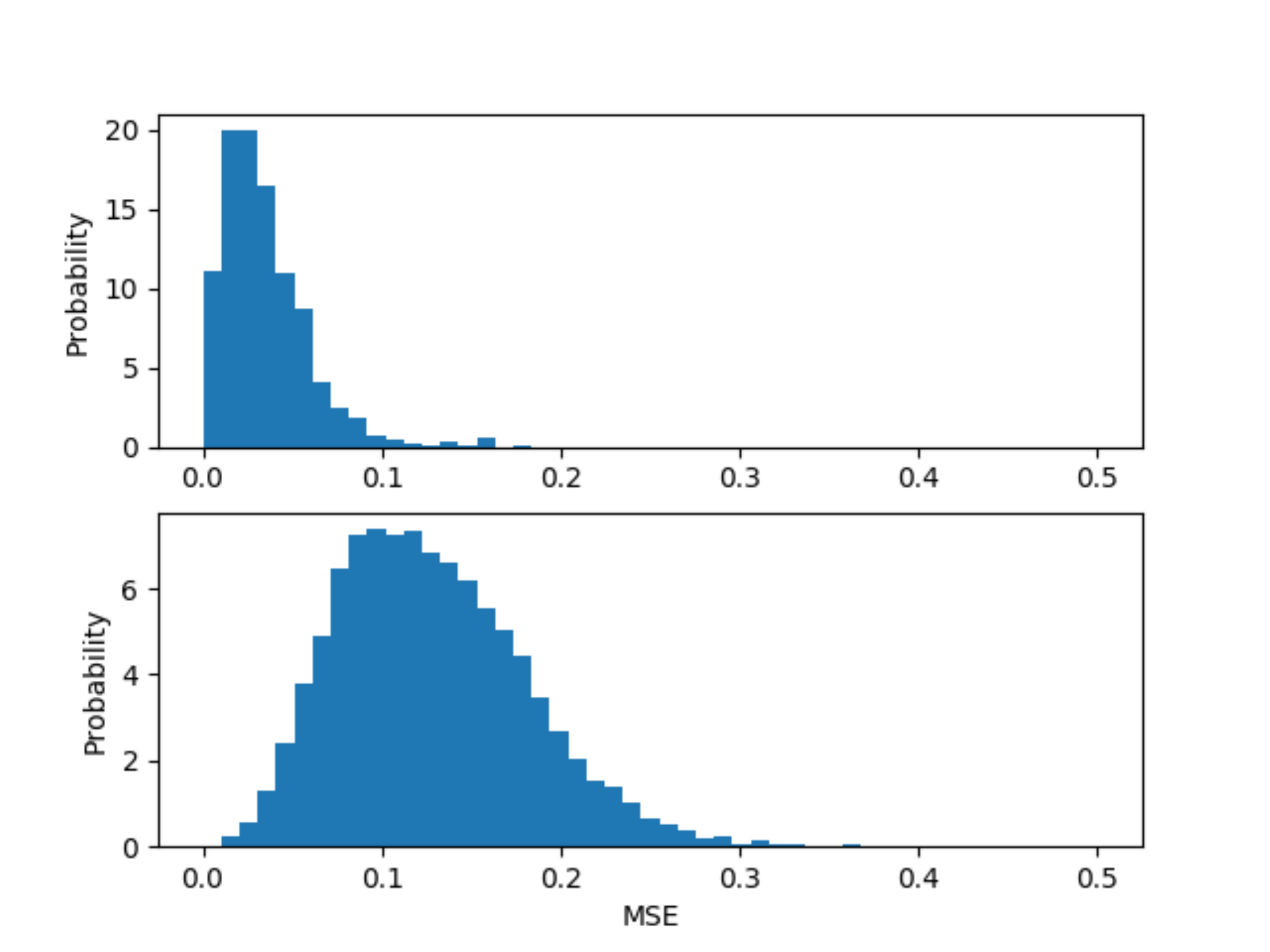}
		\caption{MSE distributions when $FNN_0$ is used.}\label{Fig:CR_MseDistributionPixelPredictor0}
	\end{subfigure}
	\begin{subfigure}[b]{0.45\linewidth}
		\includegraphics[width=0.95\linewidth]{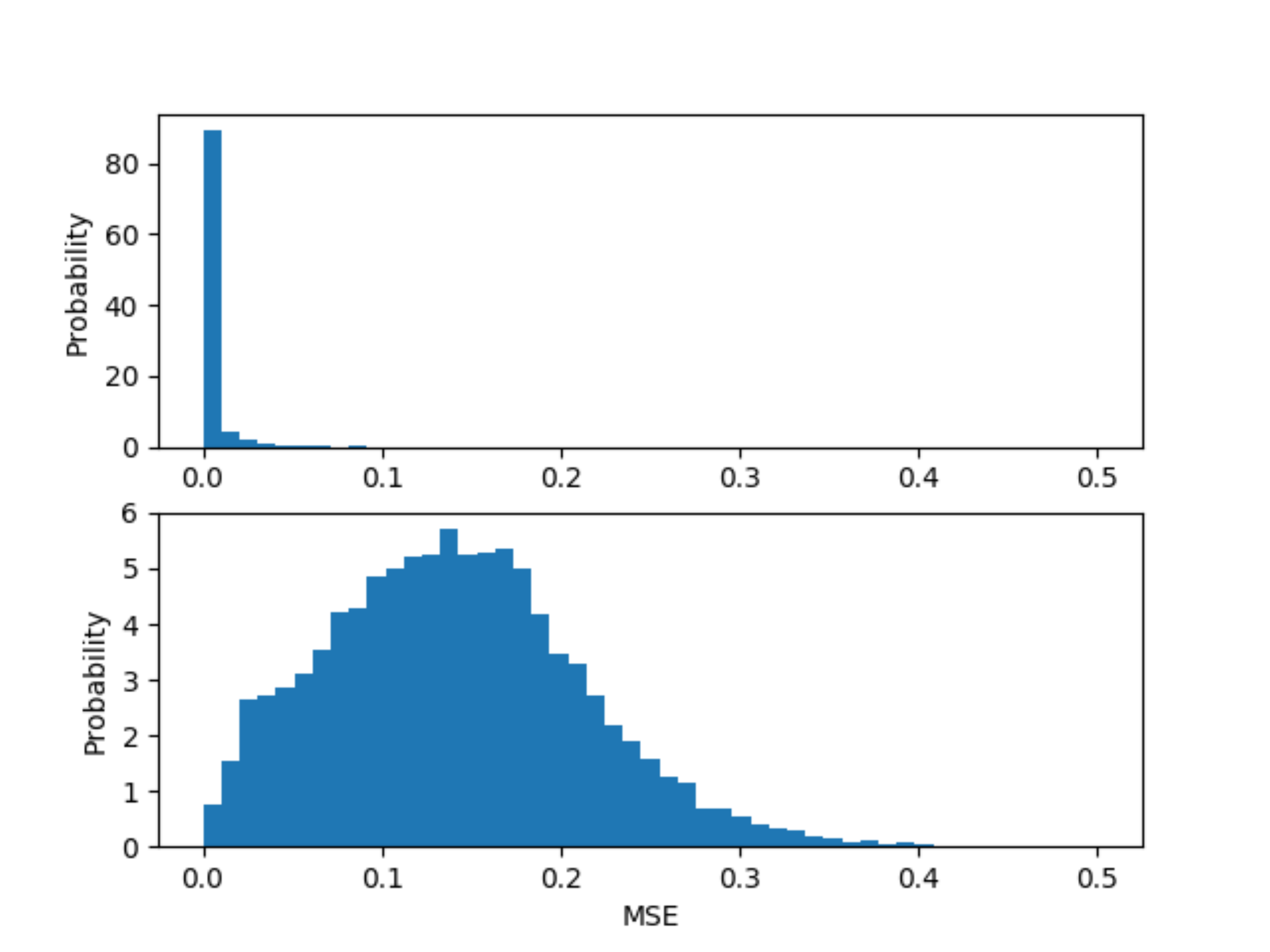}
		\caption{MSE distributions when $FNN_1$ is used.}\label{Fig:CR_MseDistributionPixelPredictor1}
	\end{subfigure}
	\caption{Top row: benign samples. Bottom row: adversarial samples.}
\end{figure}

From the tables and results for other cases in the appendix, we can have some interesting statistical observations.  First of all, when $FNN_i$ is given, then no matter what sampling percentage is used, the $\mbx_{i}$ can achieve the lowest MSE when compared with $\mbx_{j\to i}, j\neq i$ for most of the time. Though for some $FNN_i$ (i.e., $i=2,8$), the pixel prediction network can assign larger MSE for $\mbx_i$ than the one assigned to $\mbx_{j\to i}, j\neq i$ for some choices of sampling percentages, the $FNN_i$ can always assign  a larger MSE to $\mbx_{j\to i},j\neq i$ when the sampling percentage is big enough (i.e., 0.95). With a big enough sampling percentage, the MSE of $\mbx_{j\to i}, j\neq i$ can be several times as big as that of $\mbx_i$. Thus, it can be statistically easy to separate adversarial samples and benign samples due to the big gap between their MSEs. For example, for $FNN_0$ with a sampling percentage of $95\%$, we can choose a threshold of $0.10$, all the adversarial samples from other classes achieve higher MSE than 0.10, while the benign samples achieve a MSE of 0.05. Then we can guarantee $100.00\%$ accuracy of detecting adversarial samples and benign samples. These observations also hold for the central row sampling pattern, although less obvious.


\noindent{\bf Missing Rate and False Alarm Rate} We now show the tradeoff of missing rate and false alarm rate. In the central square sampling case, for each prediction neural network, we take different thresholds from 0.05 to 0.70 for determining whether a sample is benign or adversarial. For each particular threshold, we compute the corresponding missing rate defined in (\ref{Defn:FalseAlarmAndMissingRate}). Thus, for each prediction network, we can have a sequence of such missing rate and false alarm rate pairs. We plot relation between missing rate and false alarm, and the result is shown in Figure \ref{Fig:CS_MissingRateVsFalseAlarm}. Similarly for central row sampling case, we take the thresholds from 0.05 to 0.50. The result is shown in Figure \ref{Fig:CR_MissingRateVsFalseAlarm}. From the results in Figure \ref{Fig:CS_MissingRateVsFalseAlarm} and \ref{Fig:CR_MissingRateVsFalseAlarm}, no matter which prediction network and which sampling pattern are considered, we can see that when we increase the threshold from 0.05 to 0.5 or 0.7, the false alarm decreases and the missing rate increases.

\begin{figure}[htb!]
	\centering
	\begin{subfigure}[b]{0.45\linewidth}
		\includegraphics[width=0.95\linewidth]{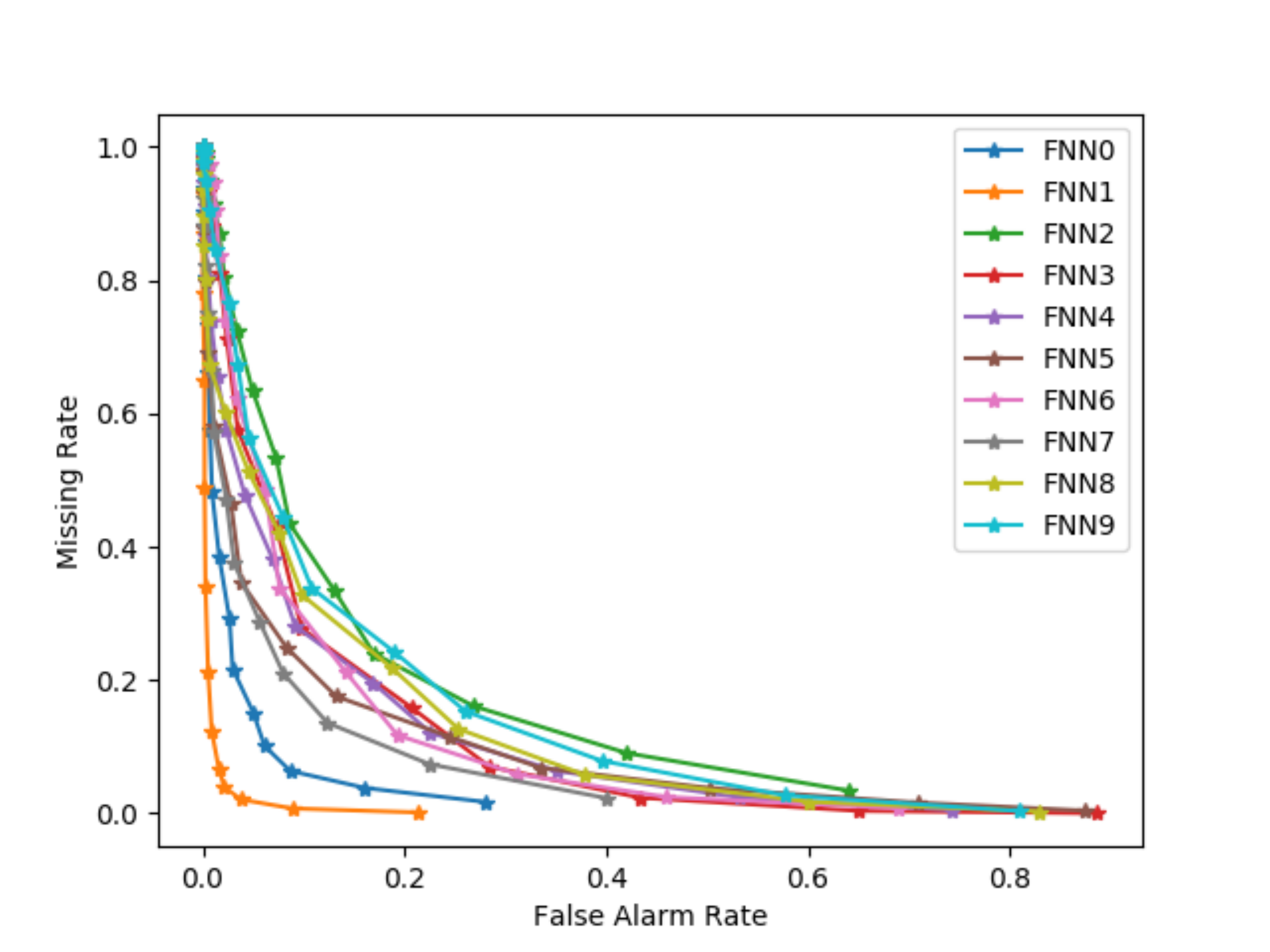}
		\caption{Central square sampling pattern.}\label{Fig:CS_MissingRateVsFalseAlarm}
	\end{subfigure}
	\begin{subfigure}[b]{0.45\linewidth}
		\includegraphics[width=0.95\linewidth]{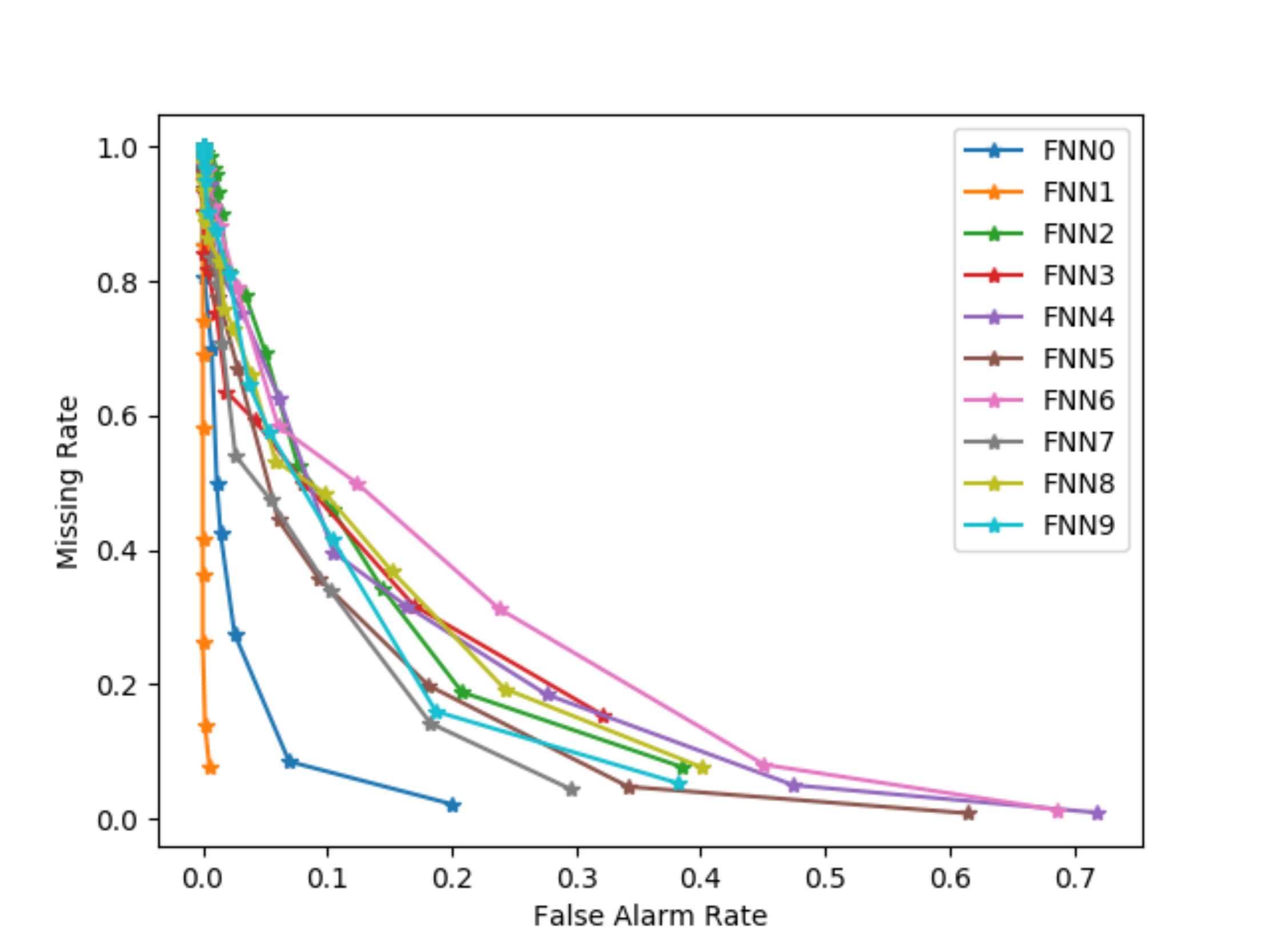}
		\caption{Central row sampling pattern.}\label{Fig:CR_MissingRateVsFalseAlarm}
	\end{subfigure}
	\caption{Tradeoff between missing rate and false alarm.}
\end{figure}

\begin{table}
	\centering
	\begin{tabular}{|l|l|l|l|l|}
		\hline
		Prediction network & $FNN_0$ & $FNN_1$ & $FNN_2$ & $FNN_7$ \\
		\hline
		Threshold & 0.11 & 0.14 & 0.14 & 0.11 \\
		\hline
		Correct detection of benign sample & 91.33\% &97.89\% &83.04\% &87.65\% \\
		\hline
		False alarm rate & 8.67\% & 2.11\% & 16.96\% & 12.35\%\\
		\hline
		Correct detection of adversarial sample & 93.63\% &96.03\% & 76.01\% &86.41\% \\
		\hline
		Missing rate & 6.37\% & 3.97\% & 23.99\% & 13.59\% \\
		\hline
	\end{tabular}
\caption{Detection performance of prediction neural network. Central square sampling pattern is used with a sampling percentage of 0.95.}\label{Tab:CSQuatitativeOneThreshold}
\end{table}

We assume that the two types of mistakes have the same risk, and use a threshold achieving a balance between missing rate and false alarm rate. For a given prediction neural network and a given threshold, we also give quantitative detection performance metrics for some digit cases for illustration of the Figure \ref{Fig:CS_MissingRateVsFalseAlarm} and \ref{Fig:CR_MissingRateVsFalseAlarm} in Table \ref{Tab:CSQuatitativeOneThreshold} and \ref{Tab:CRQuatitativeOneThreshold}. From the results, we can see that the pixel prediction detection (PPD) method, though simple, can achieve reasonably good detection performance. For some cases, it can achieve even more than 95.00\% detection accuracy for both benign samples and adversarial samples (i.e., $FNN_1$). However, for some cases, it can only achieve about 80.00\% detection accuracy.



\begin{table}
	\centering
	\begin{tabular}{|l|l|l|l|l|}
		\hline
		Prediction network & $FNN_0$ & $FNN_1$ & $FNN_2$ & $FNN_7$ \\
		\hline
		Threshold & 0.07 & 0.05 & 0.07 & 0.07 \\
		\hline
		Correct detection of benign sample & 93.06\% & 99.47\% & 79.13\% & 81.71\%  \\
		\hline
		False alarm rate & 6.94\% & 0.53\%  & 20.83\% & 18.29\% \\
		\hline
		Correct detection of adversarial sample &91.50\% & 92.16\% & 81.18\% & 85.79\%\\
		\hline
		Missing rate & 8.50\% & 7.84\% & 18.82\% & 14.21\% \\
		\hline
	\end{tabular}
	\caption{Detection performance of prediction neural network. Central row sampling pattern is used with a sampling percentage of 0.95.}\label{Tab:CRQuatitativeOneThreshold}
\end{table}






\subsubsection{Adversarial Detection Using Generative Models}\label{Sec:GenerativeDetection}

In this section, we present a general method for detecting adversarial attacks which follows directly from our theoretical results. Specifically, according to our feature compression hypothesis, adversarial fragility is a direct consequence of a compression process that maps high dimensional inputs to a set of low-dimensional latent variables; our proposed defense uses a reverse process i.e. a {\it feature decompression} to regenerate a rich set of ``denoised features" conditioned on the classifier output. These regenerated features are then compared against the raw input signal for statistical consistency.

Different from the pixel prediction approach where we use partial input pixels to estimate the rest, the detection approach based on generative model will use latent codes to estimate the whole image. We call models as generators $G:\mbR^{N_{low}}\to\mbR^{N_{high}}$ which can map latent codes in $\mbR^{N_{low}}$ to inputs in $\mbR^{N_{high}}$. By training generators for each data class using the benign samples from this class, the output of generators can approximate input samples in $\mbR^{N_{high}}$. Once the generators are well-trained, for a given input image $\mbx$ in $\mbR^{N_{high}}$ which is classified by a classifier as $i$, we then find a latent code $\mbz\in\mbR^{N_{low}}$ such that $G_i(\mbz)$ is the best approximation of $x$ where $G_i:\mbR^{N_{low}}\to\mbR^{N_{high}}$ is a generator trained using benign samples from class $i$. This is achieved by solving
\begin{align}\label{Defn:GeneralGenerativeModelDetection}
\min_{z} \|\mbx - G_i(\mbz)\|_2^2.
\end{align}
Once $\mbz$ is found, we can get the approximation error $\|\mbx - G_i(\mbz)\|_2^2$ which can be used as a criterion for deciding whether the given input is an adversarial sample or not.

For MNIST data set, we will use generators from  deep convolutional generative adversarial networks (DCGANs). Firstly, we scale all the image samples from $28\times 28$ to $64\times 64$. Secondly, we train 10 DCGANs for the 10 digit classes, i.e., $DCGAN_i$ is trained using all the samples from digit class $i$. Once the DCGAN is well-trained, the generator of it will map low dimensional vectors $\mbz\in\mbR^{100}$ to digit images in high dimensional space $\mbR^{64\times 64}$. Thirdly, for an image $\mbx$, we will try to find a vector $\mbz \in \mbR^{100}$ such that the output of the $DCGAN_i$ will be the optimal class-$i$ approximation of $\mbx$. This can be modeled as the following optimization
\begin{align}\label{Defn:DCGANbasedApproximation}
\min_{\mbz\in\mbR^{100}} \| \mbx - DCGAN_i(\mbz)\|_2^2,
\end{align}
where $\mbx_i\in\mbR^{64\times 64}$ is a digital image sample from class $i$. Fourthly, once we find the solution $z$ using gradient descent algorithm, we will calculate the mean squared error (MSE) $\frac{1}{64^2}\| \mbx_i - DCGAN_i(\mbz)\|_2^2$.


We say a sample is a benign sample if it can be correctly classified by a classifier, and denote it by $\mbx_i$. We say a sample is an adversarial sample if it is from class $j$, but is misclassified as in class $i$, and we denote it by $\mbx_{j\to i}$ where $j\neq i$.  In the following, we will investigate for a given well-trained generator $DCGAN_i$, how the MSE will behave when we solve (\ref{Defn:DCGANbasedApproximation}) with different types of digit images.

The experimental set up is: (1) We add a regularization term to restrict the searching space for $\mbz$, i.e.,
$$
\min_{z\in\mbR^{100}} \|\mbx_ i - DCGAN_i(\mbz)\|^2 + \lambda \|\mbz\|^2.
$$
(2) We normalize each $\mbx_i\in\mbR^{784}$ over its entries, i.e., we obtain $\frac{\mbx_i - {\rm mean}(\mbx_i)}{{\rm std}(\mbx_i)}$, where
$$
{\rm mean}(\mbx_i) = \frac{1}{784} \sum_{j=1}^{784} [\mbx_i]_j,
$$
and
$$
{\rm std}(\mbx_i) = \sqrt{\frac{1}{784} \sum_{j=1}^{784} ([\mbx_i]_j - {\rm mean}(\mbx_i))^2},
$$
where $[\mbx_i]_j$ is the $j$-th element of sample $\mbx_i$. The normalization step will make all the pixels concentrated at 0 with a standard deviation $1$, i.e., concentrated within $[-1,1]$. This will coincide greatly with the range of the pixels generated by the generator which uses a $tanh$ activation function. (3) We record the $\mbz$ which achieves the smallest MSE in numerically solving the regularized optimization problem.

{\bf Remarks:} (1) the process of resizing image sample from $\mbR^{28\times 28}$ to $\mbR^{64\times 64}$ is not necessary, and the $28\times 28$ can be directly used. We use this extra step just to make the image size the same as that accepted by the DCGAN we have trained. The readers can train their own DCGANs using different image sizes. (2) The motivation for adding regularization term is to regularize the capability of the generators for approximating images. Without regularization, the generators can be powerful enough to approximate images out of its class, i.e., the generator of $DCGAN_9$ can give a good approximation for adversarial samples from class 0 but misclassified as 9, by finding some $\mbz$ in $\mbR^{100}$, thus achieving low MSE for adversarial samples.  Regularization can force the generator to generate only images from its corresponding class. (3) The normalization process is not necessary either. We introduce this step because we use DCGAN producing pixel values within $[-1,1]$.

\noindent {\bf Single Sample Detection} In the experiments, we take $DCGAN_9$, and the $\lambda=100$. The learning rate is 0.05, and the Adam optimizer is used for searching the optimal $\mbz$ using gradient descent. The search algorithm will terminate after 2500 iterations. Every time, only single image is fed to the trained generator. The MSEs for adversarial from different classes but misclassified as class 9 are presented in Table \ref{Tab:EnlargedMSE9}. From the results, we can see that the MSE for adversarial sample can be tens times as big as that for benign sample, which can guarantee very accurate and robust detection performance. The visualized results are shown in Figure \ref{Fig:EnlargeMSE9_5}, \ref{Fig:EnlargeMSE9_6}. For other DCGANs, the MSEs are recorded in Table \ref{Tab:EnlargedMSE8}, \ref{Tab:EnlargedMSE7}, \ref{Tab:EnlargedMSE6}, \ref{Tab:EnlargedMSE5}, \ref{Tab:EnlargedMSE4}, \ref{Tab:EnlargedMSE3}, \ref{Tab:EnlargedMSE2}, \ref{Tab:EnlargedMSE1}, \ref{Tab:EnlargedMSE0} in Appendix.

\begin{table*}[htb!]
	\centering
	\begin{tabular}{|l|l|l|l|l|l|l|l|l|l|l|}
		\hline
		MSE  & $x_{0\to9}$ & $x_{1\to9}$ & $x_{2\to9}$ & $x_{3\to9}$ &$x_{4\to9}$ & $x_{5\to9}$ & $x_{6\to9}$ & $x_{7\to9}$ & $x_{8\to9}$ & $x_{9}$ \\
		\hline
		\# & 0.98 & 1.44 & 0.98 & 1.00 & 0.81 & 1.00 & 1.00 & 0.89 & 0.99 & {\color{red}{0.08}} \\
		\hline
	\end{tabular}
	\caption{Enlarged MSE gap for $DCGAN_9$. Single image sample is used.}\label{Tab:EnlargedMSE9}
\end{table*}

\begin{figure}
	\centering
	\begin{subfigure}{0.45\linewidth}
			\includegraphics[width=\textwidth]{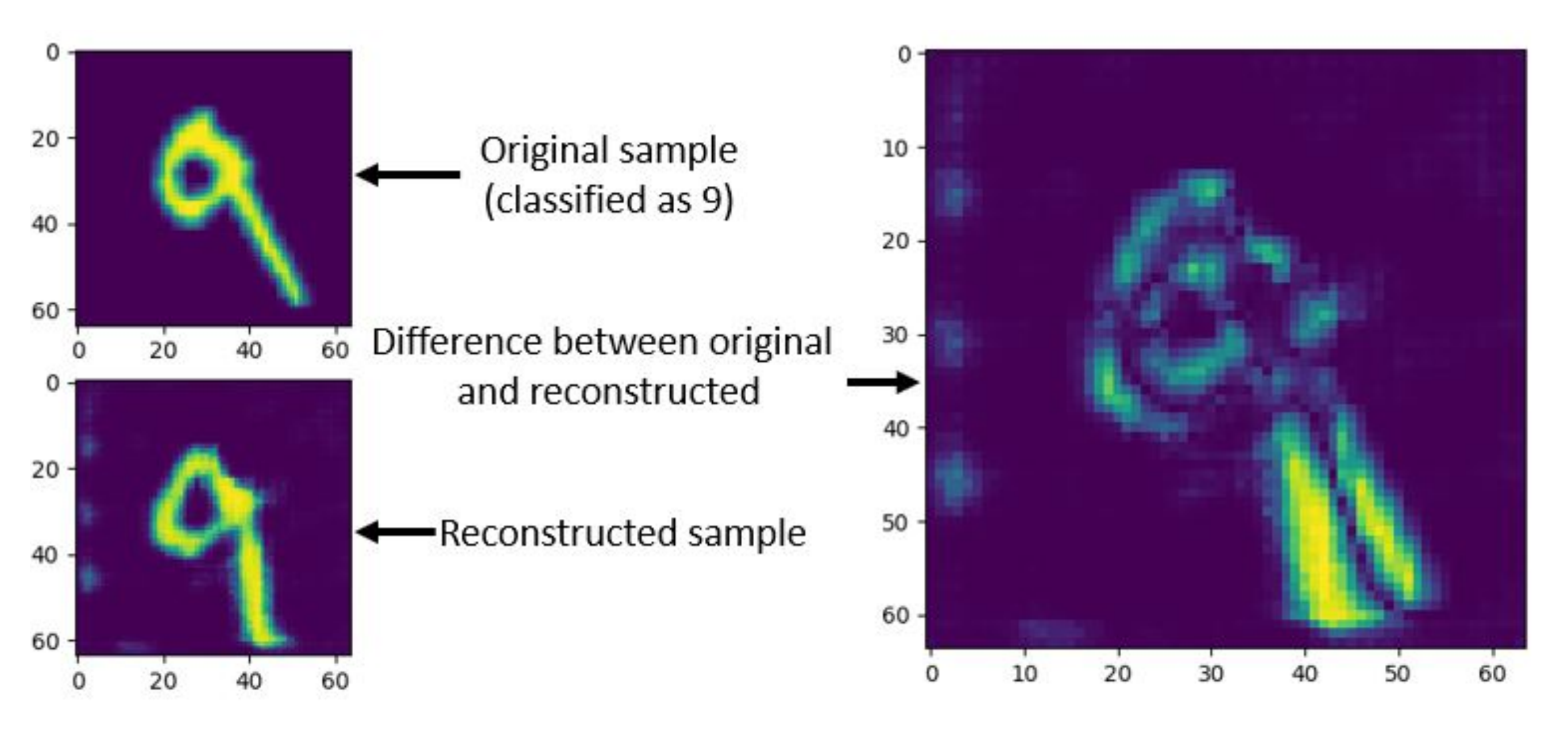}
		\caption{Benign sample.}\label{Fig:EnlargeMSE9_5}
	\end{subfigure}
\begin{subfigure}{0.45\linewidth}
	\includegraphics[width=\textwidth]{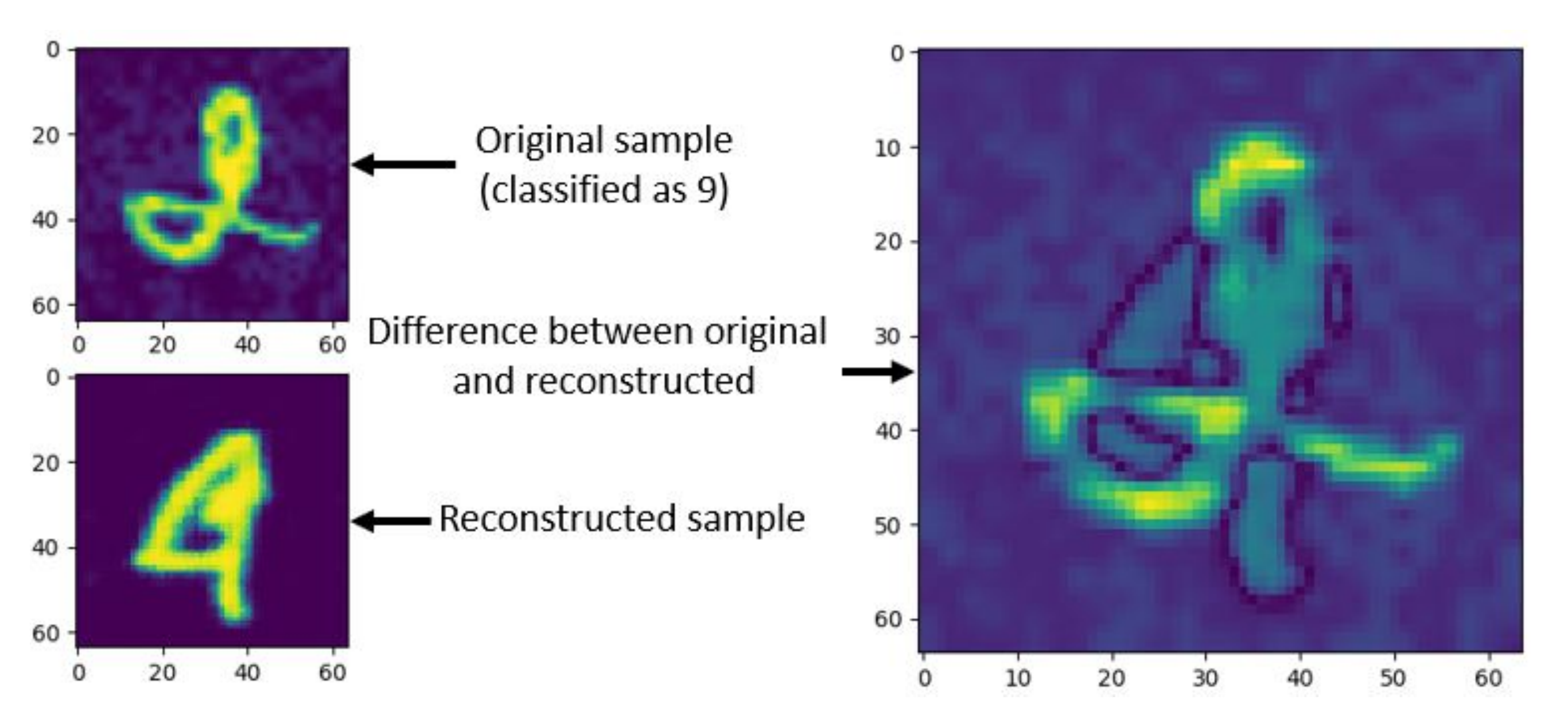}
	\caption{Adversarial sample.}\label{Fig:EnlargeMSE9_6}
\end{subfigure}
\caption{Approximation performance of $DCGAN_9$.}\label{Fig:EnlargeMSE9_59}
\end{figure}

\noindent {\bf Effect of Regularization Parameter} All the setup is the same as previous sections unless specified explicitly. We fix the learning rate to be 0.05, and take regularization parameter $\lambda$ to be 0, 1, 10, 100, 1,000, and 10,000 respectively. We use $DCGAN_9$ and adversarial sample from class 0 to do the experiments. The results are recorded in Table \ref{Tab:EnlargedMSE9_DiffLambda}. From the results, we can see that when there is no regularization term, the generator can find good approximation for adversarial samples. But the MSE for adversarial sample is much larger than that for benign samples, i.e., a gap of approximately 50 times the MSE of benign samples. As we increase $\lambda$, the MSEs for both benign sample and adversarial sample will increase first and then remain around a particular value. In the following, we will use a fixed regularization parameter $\lambda=100$.

\begin{table}
	\centering
	\begin{tabular}{|l|l|l|l|l|l|l|l|}
		\hline
		MSE  & adv. 0  & adv. 3  & adv. 5  & ben. 9 \\
		\hline
		$\lambda=0$  & 0.53 & 0.71  & 0.75 & 0.01 \\
		\hline
		$\lambda=1$  & 0.65 & 0.84  & 0.88 & 0.04 \\
		\hline
		$\lambda=10$  & 0.89  & 1.00  & 0.98  & 0.06 \\
		\hline
		$\lambda=100$ & 0.98  & 1.00  & 1.00 & 0.08 \\
		\hline
		$\lambda=1000$  & 1.00 & 1.00 & 1.00 & 0.08 \\
		\hline
		$\lambda=10000$  & 1.00 & 1.00  & 1.00 & 0.09 \\
		\hline
	\end{tabular}
	\caption{Enlarged MSE gap for $DCGAN_9$ under different regularization parameter $\lambda$.}\label{Tab:EnlargedMSE9_DiffLambda}
\end{table}

\noindent {\bf MSE Distribution and Detection Accuracy} In this section, we will evaluate the detection accuracy of the proposed method statistically. We use $DCGAN_0$ as an example to illustrate the idea. The $DCGAN_0$ will produce an MSE for each given image, we compare it with a threshold $\tau$. If the MSE is much greater than $\tau$, then it will be reported as an adversarial sample. Otherwise, it is reported as a benign sample. For all the benign samples from class 0, we compute the detection accuracy of benign samples $\rho_{ben}^c$ defined in (\ref{Defn:BenAndAdvAccuracy}). For adversarial samples from class $j\neq0$, we compute the detection accuracy of adversarial samples $\rho_{adv}^c$ defined in (\ref{Defn:BenAndAdvAccuracy}).
The missing rate and the false alarm rate can be computed according to (\ref{Defn:FalseAlarmAndMissingRate}).

\begin{figure}
	\centering
	\includegraphics[width=0.8\linewidth]{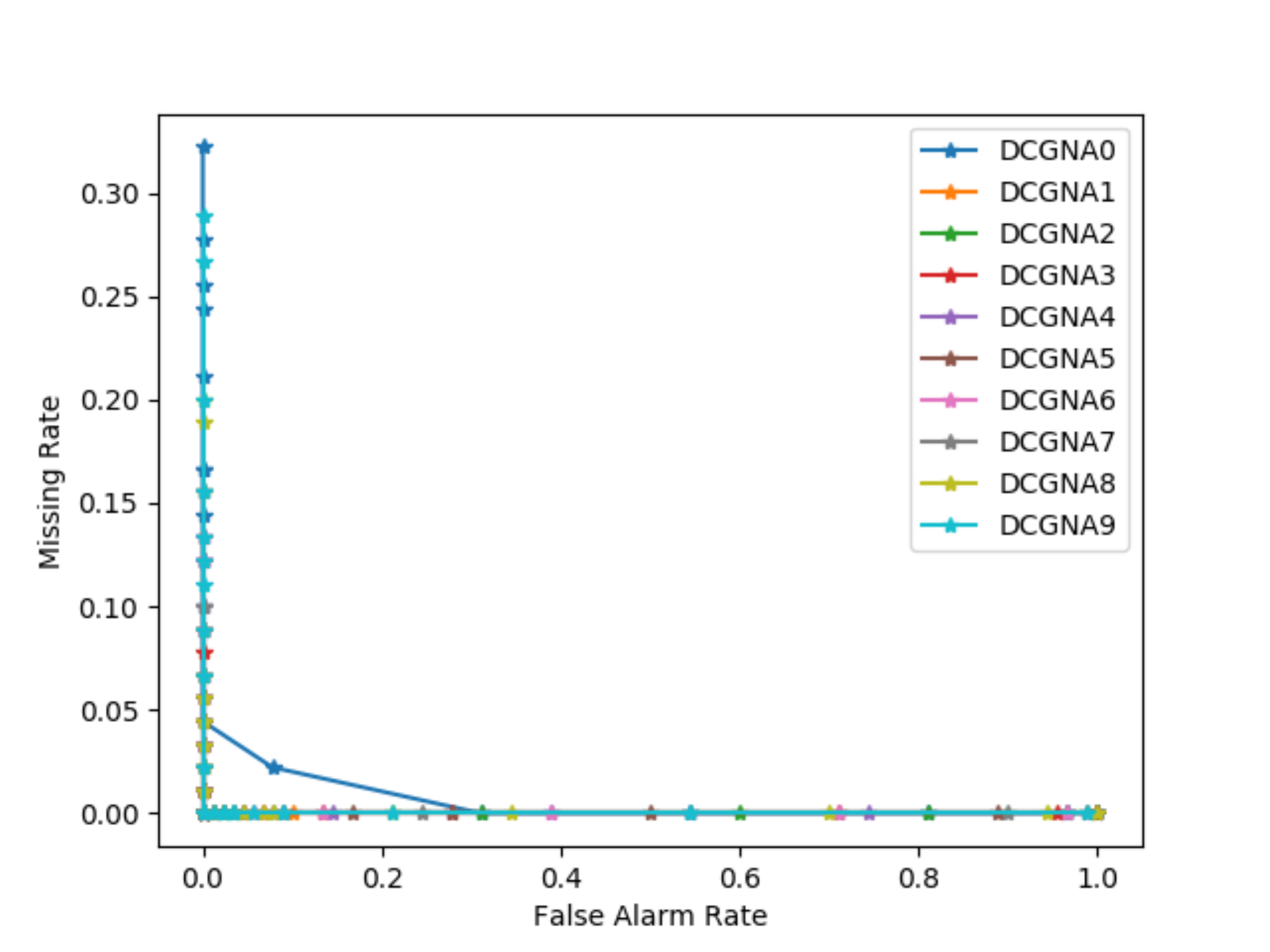}
	\caption{Missing rate and false alarm tradeoff.}\label{Fig:MissingRateFalseAlarmGenerativeModel}
\end{figure}

For each generator $DCGAN_i$, we randomly take 90 benign samples from class $i$, and another 90 adversarial samples from classes $j\neq i$. For example, we can randomly take 90 benign samples from class $1$. We randomly take adversarial samples from class $0,2,3,4,5,6,7,8,9$ which are misclassified as class $1$, and the total number is 90. We present the quantitative results in Table \ref{Tab:DetectionPerformce0-4} and \ref{Tab:DetectionPerformce5-9}. Similar to pixel prediction detection method, we choose different thresholds, and the plot the relation between missing rate and false alarm in Figure \ref{Fig:MissingRateFalseAlarmGenerativeModel}. The distributions of MSEs for each generator are shown in Figure \ref{Fig:MseDistributionDcgan0},\ref{Fig:MseDistributionDcgan1},\ref{Fig:MseDistributionDcgan2},\ref{Fig:MseDistributionDcgan3},\ref{Fig:MseDistributionDcgan4},\ref{Fig:MseDistributionDcgan5},\ref{Fig:MseDistributionDcgan6},\ref{Fig:MseDistributionDcgan7},
\ref{Fig:MseDistributionDcgan8},\ref{Fig:MseDistributionDcgan9}.

\begin{table}[htb!]
	\centering
	\begin{tabular}{|l|l|l|l|l|l|}
		\hline
		Generators & $DCGAN_0$ & $DCGAN_1$ & $DCGAN_2$ & $DCGAN_3$ & $DCGAN_4$ \\
		\hline
		Correct detection of benign sample & 95.56\% &100.00\% & 100.00\% & 100.00\% & 100.00\% \\
		\hline
		False alarm rate & 4.44\% & 0.00\% & 0.00\% & 0.00\% & 0.00\% \\
		\hline
		Correct detection of adversarial sample & 96.67\% &100.00\% & 100.00\% & 100.00\% & 100.00\%\\
		\hline
		Missing rate & 3.33\% & 0.00\% &0.00\% & 0.00\% & 0.00\% \\
		\hline
	\end{tabular}
	\caption{Detection performance of generative models. Threshold 0.30 is used for all generators.}\label{Tab:DetectionPerformce0-4}
\end{table}

\begin{table}[htb!]
	\centering
	\begin{tabular}{|l|l|l|l|l|l|}
		\hline
		Generators & $DCGAN_5$ & $DCGAN_6$ & $DCGAN_7$ & $DCGAN_8$ & $DCGAN_9$ \\
		\hline
		Correct detection of benign sample & 98.89\% & 100.00\% & 100.00\% & 96.67\% & 98.89\%\\
		\hline
		False alarm rate & 1.11\% & 0.00\% & 0.00\%&3.33\% & 1.11\% \\
		\hline
		Correct detection of adversarial sample & 100.00\% &100.00\% &100.00\% &100.00\% & 100.00\% \\
		\hline
		Missing rate &0.00\% &0.00\% & 0.00\% & 0.00\% & 0.00\% \\
		\hline
	\end{tabular}
	\caption{Detection performance of generative models. Threshold 0.30 is used for all generators.}\label{Tab:DetectionPerformce5-9}
\end{table}

\begin{figure}[htb!]
	\centering
	\begin{subfigure}[b]{0.45\linewidth}
		\includegraphics[width=0.95\linewidth]{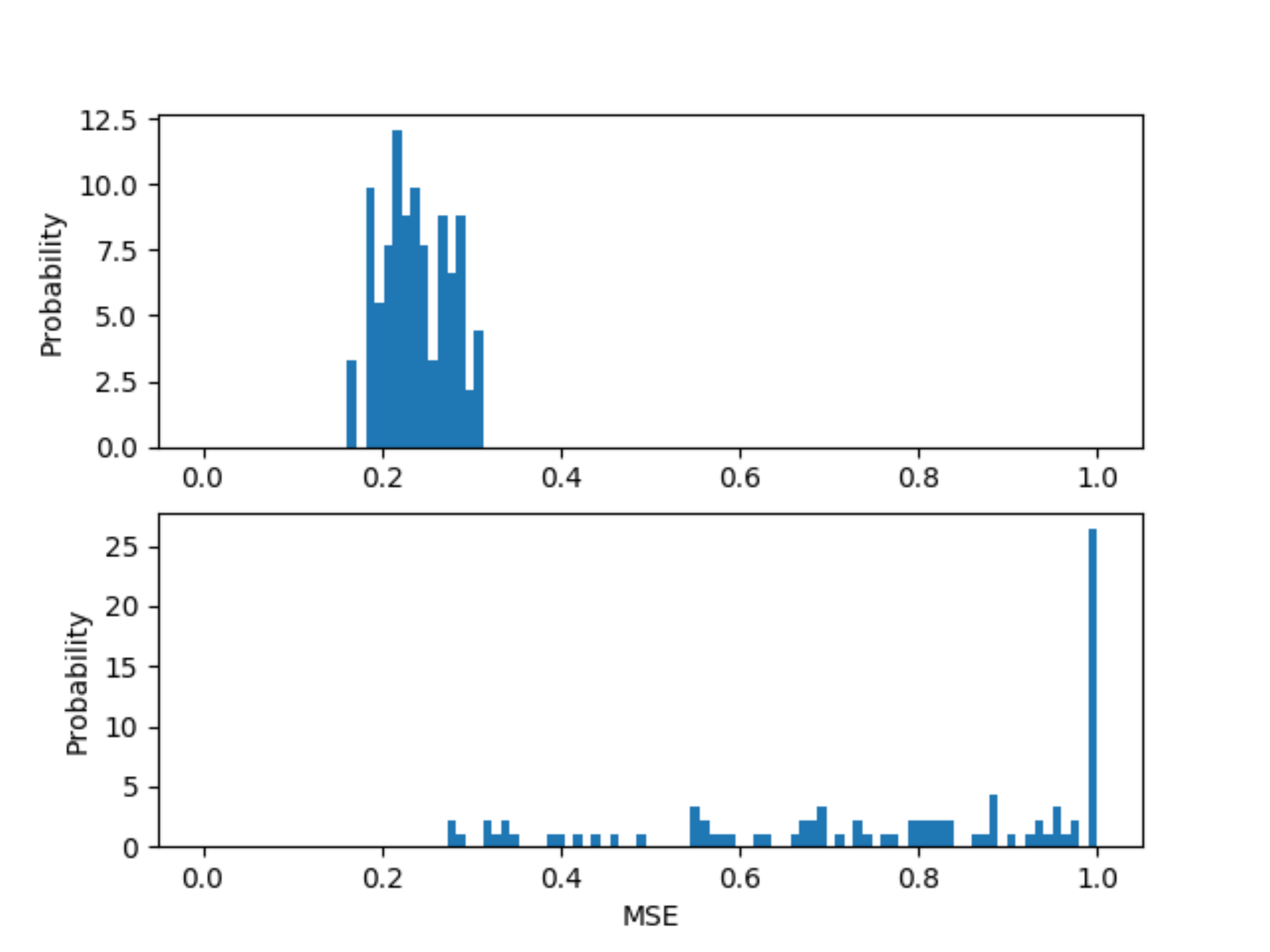}
		\caption{MSE distributions of samples when processed by $DCGAN_0$.}\label{Fig:MseDistributionDcgan0}
	\end{subfigure}
	\begin{subfigure}[b]{0.45\linewidth}
		\includegraphics[width=0.95\linewidth]{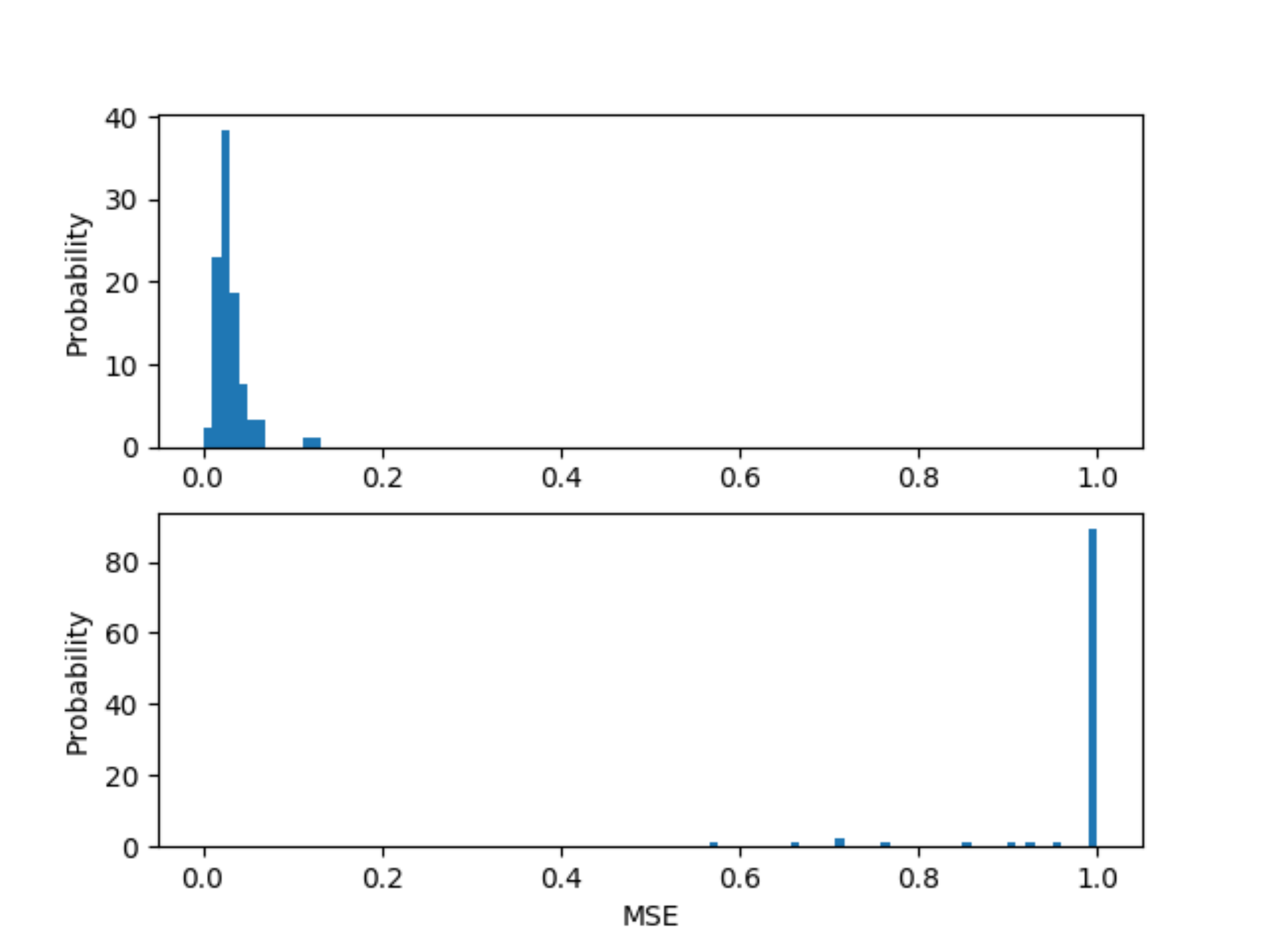}
		\caption{MSE distributions of samples when processed by $DCGAN_1$.}\label{Fig:MseDistributionDcgan1}
	\end{subfigure}
	\caption{Top row: MSE distribution of benign samples. Bottom row: MSE distribution of adversarial samples.}
\end{figure}

From the Table \ref{Tab:DetectionPerformce0-4}, \ref{Tab:DetectionPerformce5-9}, Figure \ref {Fig:MseDistributionDcgan0} and results in the appendix, we can see that the detection method based on generative model can achieve high performance, i.e., achieving more than 95.00\% detection accuracy for all cases and well separated MSEs for samples from different classes.



\section{Conclusions}

Based on the feature compression property, we explained the fragility of AI classifiers from the point view of information theory. The theoretical results apply to all classifiers which learn low dimensional representations before making classification decisions. We proposed two methods for detecting adversarial samples based on the idea of "decompression", i.e., pixel prediction detection (PPD) method and optimization-based generative detection (OGD) approach. The former method can achieve as high as 95.00\% detection accuracy for both benign samples and adversarial samples for some digit classes, but as low as about 80.00\% for some other cases. The latter method can perform well both statistically and for individual sample, and can achieve more than 95.00\% detection accuracy for all the experiments we conducted. For some cases, it can even achieve 100.00\% detection accuracy for both the benign samples and adversarial samples. These experimental results supports our theoretical claims and show the effectiveness of the proposed idea for adversarial detection. 

\bibliography{Ref_adversarial}

\begin{thebibliography}{10}

\bibitem{akhtar_threat_2018}
N.~Akhtar and A.~Mian.
\newblock Threat of adversarial attacks on deep learning in computer vision: a
  survey.
\newblock {\em arXiv:1801.00553}, 2018.

\bibitem{amer_weight_2019}
M.~Amer and T.~Maul.
\newblock Weight map layer for noise and adversarial attack robustness.
\newblock May 2019.

\bibitem{araujo_robust_2019}
A.~Araujo, R.~Pinot, B.~Negrevergne, L.~Meunier, Y.~Chevaleyre, F.~Yger, and
  J.~Atif.
\newblock Robust neural networks using randomized adversarial training.
\newblock {\em arXiv:1903.10219 [cs, stat]}, March 2019.
\newblock arXiv: 1903.10219.

\bibitem{athalye_obfuscated_2018}
A.~Athalye, N.~Carlini, and D.~Wagner.
\newblock Obfuscated gradients give a false sense of security: circumventing
  defenses to adversarial examples.
\newblock {\em arXiv:1802.00420 [cs]}, February 2018.
\newblock arXiv: 1802.00420.

\bibitem{boloor_simple_2019}
A.~Boloor, X.~He, C.~Gill, Y.~Vorobeychik, and X.~Zhang.
\newblock Simple physical adversarial examples against end-to-end autonomous
  driving models.
\newblock {\em arXiv:1903.05157 [cs, stat]}, March 2019.
\newblock arXiv: 1903.05157.

\bibitem{brunner_guessing_2018}
T.~Brunner, F.~Diehl, M.~Le, and A.~Knoll.
\newblock Guessing smart: biased sampling for efficient black-box adversarial
  attacks.
\newblock {\em arXiv:1812.09803 [cs, stat]}, December 2018.
\newblock arXiv: 1812.09803.

\bibitem{candes_decoding_2005}
E.~J. Candes and T.~Tao.
\newblock Decoding by linear programming.
\newblock {\em IEEE Transactions on Information Theory}, 51(12):4203--4215,
  December 2005.

\bibitem{carlini_defensive_2016}
N.~Carlini and D.~Wagner.
\newblock Defensive distillation is not robust to adversarial examples.
\newblock {\em arXiv preprint arXiv:1607.04311}, 2016.

\bibitem{carlini_adversarial_2017}
N.~Carlini and D.~Wagner.
\newblock Adversarial examples are not easily detected: bypassing ten detection
  methods.
\newblock In {\em Proceedings of the 10th {ACM} {Workshop} on {Artificial}
  {Intelligence} and {Security}}, {AISec} '17, pages 3--14, New York, NY, USA,
  2017. ACM.

\bibitem{carlini_towards_2017}
N.~Carlini and D.~Wagner.
\newblock Towards evaluating the robustness of neural networks.
\newblock In {\em 2017 {IEEE} {Symposium} on {Security} and {Privacy} ({SP})},
  pages 39--57, May 2017.

\bibitem{carlini_audio_2018}
N.~Carlini and D.~Wagner.
\newblock Audio adversarial examples: targeted attacks on speech-to-text.
\newblock {\em arXiv:1801.01944 [cs]}, January 2018.
\newblock arXiv: 1801.01944.

\bibitem{chen_improving_2019}
H.~Chen, J.~Liang, S.~Chang, J.~Pan, Y.~Chen, W.~Wei, and D.~Juan.
\newblock Improving adversarial robustness via guided complement entropy.
\newblock {\em arXiv:1903.09799 [cs, stat]}, March 2019.
\newblock arXiv: 1903.09799.

\bibitem{chen_frank-wolfe_2018}
J.~Chen, J.~Yi, and Q.~Gu.
\newblock A {Frank}-{Wolfe} framework for efficient and effective adversarial
  attacks.
\newblock {\em arXiv:1811.10828 [cs, stat]}, November 2018.
\newblock arXiv: 1811.10828.

\bibitem{cover2012elements}
T.~Cover and J.~Thomas.
\newblock {\em {Elements of information theory}}.
\newblock John Wiley \& Sons, 2012.

\bibitem{dasgupta_elementary_2003}
S.~Dasgupta and A.~Gupta.
\newblock An elementary proof of a theorem of {Johnson} and {Lindenstrauss}.
\newblock {\em Random Structures \& Algorithms}, 22(1):60--65, 2003.

\bibitem{dong_efficient_2019}
Y.~Dong, H.~Su, B.~Wu, Z.~Li, W.~Liu, T.~Zhang, and J.~Zhu.
\newblock Efficient decision-based black-box adversarial attacks on face
  recognition.
\newblock {\em arXiv:1904.04433 [cs]}, April 2019.
\newblock arXiv: 1904.04433.

\bibitem{dou_mathematical_2018}
Z.~Dou, S.~Osher, and B.~Wang.
\newblock Mathematical analysis of adversarial attacks.
\newblock {\em arXiv:1811.06492 [cs.LG]}, November 2018.

\bibitem{duan_disentangled_2019}
Z.~Duan, M.~Min, L.~Li, M.~Cai, Y.~Xu, and B.~Ni.
\newblock Disentangled deep autoencoding regularization for robust image
  classification.
\newblock {\em arXiv:1902.11134 [cs, stat]}, February 2019.
\newblock arXiv: 1902.11134.

\bibitem{erichson_jumprelu:_2019}
N.~Erichson, Z.~Yao, and M.~Mahoney.
\newblock {JumpReLU}: a retrofit defense strategy for adversarial attacks.
\newblock April 2019.

\bibitem{fawzi_robustness_2016}
A.~Fawzi, S.~Moosavi-Dezfooli, and P.~Frossard.
\newblock Robustness of classifiers: from adversarial to random noise.
\newblock In {\em Proceedings of the 30th {International} {Conference} on
  {Neural} {Information} {Processing} {Systems}}, {NIPS}'16, pages 1632--1640,
  USA, 2016. Curran Associates Inc.

\bibitem{frosst_darccc:_2018}
N.~Frosst, S.~Sabour, and G.~Hinton.
\newblock {DARCCC}: detecting adversaries by reconstruction from class
  conditional capsules.
\newblock {\em arXiv:1811.06969 [cs, stat]}, November 2018.

\bibitem{goodfellow_explaining_2014}
I.~Goodfellow, J.~Shlens, and C.~Szegedy.
\newblock Explaining and harnessing adversarial examples.
\newblock {\em arXiv:1412.6572 [cs, stat]}, December 2014.

\bibitem{guo_simple_2019}
C.~Guo, J.~Gardner, Y.~You, A.~Wilson, and K.~Weinberger.
\newblock Simple black-box adversarial attacks.
\newblock {\em arXiv:1905.07121 [cs, stat]}, May 2019.
\newblock arXiv: 1905.07121.

\bibitem{spheredecodingexpected}
B.~{Hassibi} and H.~{Vikalo}.
\newblock On the sphere-decoding algorithm i. expected complexity.
\newblock {\em IEEE Transactions on Signal Processing}, 53(8):2806--2818, Aug
  2005.

\bibitem{huang_safety_2018}
X.~Huang, D.~Kroening, M.~Kwiatkowska, W.~Ruan, Y.~Sun, E.~Thamo, M.~Wu, and
  X.~Yi.
\newblock Safety and trustworthiness of deep neural networks: a survey.
\newblock {\em arXiv:1812.08342 [cs]}, December 2018.
\newblock arXiv: 1812.08342.

\bibitem{ilyas_robust_2017}
A.~Ilyas, A.~Jalal, E.~Asteri, C.~Daskalakis, and A.~Dimakis.
\newblock The robust manifold defense: adversarial training using generative
  models.
\newblock {\em arXiv:1712.09196 [cs, stat]}, December 2017.
\newblock arXiv: 1712.09196.

\bibitem{jafarnia-jahromi_ppd:_2018}
M.~Jafarnia-Jahromi, T.~Chowdhury, H.~Wu, and S.~Mukherjee.
\newblock {PPD}: permutation phase defense against adversarial examples in deep
  learning.
\newblock {\em arXiv:1812.10049 [cs, stat]}, December 2018.
\newblock arXiv: 1812.10049.

\bibitem{kurakin_adversarial_2016}
A.~Kurakin, I.~Goodfellow, and S.~Bengio.
\newblock Adversarial examples in the physical world.
\newblock {\em arXiv:1607.02533 [cs.CV]}, July 2016.

\bibitem{lecun_gradient-based_1998}
Y.~LeCun, L.~eon Bottou, Y.~Bengio, and P.~Haffner.
\newblock Gradient-based learning applied to document recognition.
\newblock {\em PROC. OF THE IEEE}, 86(11):2278--2324, 1998.

\bibitem{lei_discrete_2018}
Q.~Lei, L.~Wu, P.~Chen, A.~Dimakis, I.~Dhillon, and M.~Witbrock.
\newblock Discrete attacks and submodular optimization with applications to
  text classification.
\newblock {\em arXiv:1812.00151 [cs, math, stat]}, December 2018.
\newblock arXiv: 1812.00151.

\bibitem{liu_gandef:_2019}
G.~Liu, I.~Khalil, and A.~Khreishah.
\newblock {GanDef}: a {GAN} based adversarial training defense for neural
  network classifier.
\newblock {\em arXiv:1903.02585 [cs, stat]}, March 2019.
\newblock arXiv: 1903.02585.

\bibitem{liu_zk-gandef:_2019}
G.~Liu, I.~Khalil, and A.~Khreishah.
\newblock {ZK}-{GanDef}: a {GAN} based zero knowledge adversarial training
  defense for neural networks.
\newblock April 2019.

\bibitem{ma_characterizing_2018}
X.~Ma, B.~Li, Y.~Wang, S.~Erfani, S.~Wijewickrema, G.~Schoenebeck, D.~Song,
  M.~Houle, and J.~Bailey.
\newblock Characterizing adversarial subspaces using local intrinsic
  dimensionality.
\newblock {\em arXiv:1801.02613 [cs]}, January 2018.
\newblock arXiv: 1801.02613.

\bibitem{metzen_detecting_2017}
J.~Metzen, T.~Genewein, V.~Fischer, and B.~Bischoff.
\newblock On detecting adversarial perturbations.
\newblock {\em arXiv:1702.04267 [cs, stat]}, February 2017.
\newblock arXiv: 1702.04267.

\bibitem{moon_parsimonious_2019}
S.~Moon, G.~An, and H.~Song.
\newblock Parsimonious black-box adversarial attacks via efficient
  combinatorial optimization.
\newblock {\em arXiv:1905.06635 [cs, stat]}, May 2019.
\newblock arXiv: 1905.06635.

\bibitem{moosavi-dezfooli_deepfool:_2016}
S.~Moosavi-Dezfooli, A.~Fawzi, and P.~Frossard.
\newblock {DeepFool}: a simple and accurate method to fool deep neural
  networks.
\newblock pages 2574--2582, 2016.

\bibitem{panda_discretization_2019}
P.~Panda, I.~Chakraborty, and K.~Roy.
\newblock Discretization based solutions for secure machine learning against
  adversarial attacks.
\newblock {\em arXiv:1902.03151 [cs, stat]}, February 2019.
\newblock arXiv: 1902.03151.

\bibitem{papernot_practical_2017}
N.~Papernot, P.~McDaniel, I.~Goodfellow, S.~Jha, Z.~Celik, and A.~Swami.
\newblock Practical black-box attacks against machine learning.
\newblock In {\em Proceedings of the 2017 {ACM} on {Asia} {Conference} on
  {Computer} and {Communications} {Security}}, {ASIA} {CCS} '17, pages
  506--519, New York, NY, USA, 2017. ACM.

\bibitem{papernot_limitations_2016}
N.~Papernot, P.~McDaniel, S.~Jha, M.~Fredrikson, Z.~B. Celik, and A.~Swami.
\newblock The limitations of deep learning in adversarial settings.
\newblock In {\em 2016 {IEEE} {European} {Symposium} on {Security} and
  {Privacy} ({EuroS} {P})}, pages 372--387, March 2016.

\bibitem{papernot_distillation_2016}
N.~Papernot, P.~McDaniel, X.~Wu, S.~Jha, and A.~Swami.
\newblock Distillation as a defense to adversarial perturbations against deep
  neural networks.
\newblock In {\em 2016 {IEEE} {Symposium} on {Security} and {Privacy} ({SP})},
  pages 582--597, May 2016.

\bibitem{romano_adversarial_2018}
Y.~Romano, A.~Aberdam, J.~Sulam, and M.~Elad.
\newblock Adversarial noise attacks of deep learning architectures--stability
  analysis via aparse modeled signals.
\newblock 2018.

\bibitem{sabour_adversarial_2015}
S.~Sabour, Y.~Cao, F.~Faghri, and D.~Fleet.
\newblock Adversarial manipulation of deep representations.
\newblock {\em arXiv:1511.05122 [cs]}, November 2015.
\newblock arXiv: 1511.05122.

\bibitem{samangouei_defense-gan:_2018}
P.~Samangouei, M.~Kabkab, and R.~Chellappa.
\newblock Defense-{GAN}: protecting classifiers against adversarial attacks
  using generative models.
\newblock {\em arXiv:1805.06605 [cs, stat]}, May 2018.
\newblock arXiv: 1805.06605.

\bibitem{Schott2018}
Lukas Schott, Jonas Rauber, Wieland Brendel, and Matthias Bethge.
\newblock Robust perception through analysis by synthesis.
\newblock In {\em arXiv preprint arXiv:1805.09190}, 2018.

\bibitem{shafahi_universal_2018}
A.~Shafahi, M.~Najibi, Z.~Xu, J.~Dickerson, L.~Davis, and T.~Goldstein.
\newblock Universal adversarial training.
\newblock {\em arXiv:1811.11304 [cs]}, November 2018.
\newblock arXiv: 1811.11304.

\bibitem{shwartz-ziv_opening_2017}
R.~Shwartz-Ziv and N.~Tishby.
\newblock Opening the black box of deep neural networks via information.
\newblock {\em arXiv:1703.00810 [cs]}, March 2017.
\newblock arXiv: 1703.00810.

\bibitem{sun_enhancing_2019}
K.~Sun, Z.~Zhu, and Z.~Lin.
\newblock Enhancing the robustness of deep neural networks by boundary
  conditional {GAN}.
\newblock {\em arXiv:1902.11029 [cs, stat]}, February 2019.
\newblock arXiv: 1902.11029.

\bibitem{szegedy_intriguing_2013}
C.~Szegedy, W.~Zaremba, I.~Sutskever, J.~Bruna, D.~Erhan, I.~Goodfellow, and
  R.~Fergus.
\newblock Intriguing properties of neural networks.
\newblock {\em arXiv:1312.6199 [cs]}, December 2013.
\newblock arXiv: 1312.6199.

\bibitem{tanay_boundary_2016}
T.~Tanay and L.~Griffin.
\newblock A boundary tilting persepective on the phenomenon of adversarial
  examples.
\newblock {\em arXiv:1608.07690 [cs, stat]}, August 2016.
\newblock arXiv: 1608.07690.

\bibitem{tramer_ensemble_2017}
F.~Tram{\`e}r, A.~Kurakin, N.~Papernot, I.~Goodfellow, D.~Boneh, and
  P.~McDaniel.
\newblock Ensemble adversarial training: attacks and defenses.
\newblock {\em arXiv:1705.07204 [cs, stat]}, May 2017.

\bibitem{spheredecoding}
{W. Xu}, {Y. Wang}, {Z. Zhou}, and {J. Wang}.
\newblock A computationally efficient exact ml sphere decoder.
\newblock In {\em IEEE Global Telecommunications Conference, 2004. GLOBECOM
  '04.}, volume~4, pages 2594--2598 Vol.4, Nov 2004.

\bibitem{xie_adversarial_2017}
C.~Xie, J.~Wang, Z.~Zhang, Y.~Zhou, L.~Xie, and A.~Yuille.
\newblock Adversarial examples for semantic segmentation and object detection.
\newblock {\em arXiv:1703.08603 [cs]}, March 2017.
\newblock arXiv: 1703.08603.

\bibitem{xie_feature_2018}
C.~Xie, Y.~Wu, L.~van~der Maaten, A.~Yuille, and K.~He.
\newblock Feature denoising for improving adversarial robustness.
\newblock December 2018.

\bibitem{yin_rademacher_2018}
D.~Yin, K.~Ramchandran, and P.~Bartlett.
\newblock Rademacher complexity for adversarially robust generalization.
\newblock {\em arXiv:1810.11914 [cs, stat]}, October 2018.
\newblock arXiv: 1810.11914.

\bibitem{yuan_adversarial_2017}
X.~Yuan, P.~He, Q.~Zhu, and X.~Li.
\newblock Adversarial examples: attacks and defenses for deep learning.
\newblock {\em arXiv:1712.07107 [cs, stat]}, December 2017.
\newblock arXiv: 1712.07107.

\bibitem{zadeh_deep-rbf_2018}
P.~Zadeh, R.~Hosseini, and S.~Sra.
\newblock Deep-{RBF} networks revisited: robust classification with rejection.
\newblock {\em arXiv:1812.03190 [cs, stat]}, December 2018.
\newblock arXiv: 1812.03190.

\bibitem{zhang_defending_2019}
Y.~Zhang and P.~Liang.
\newblock Defending against whitebox adversarial attacks via randomized
  discretization.
\newblock {\em arXiv:1903.10586 [cs, stat]}, March 2019.
\newblock arXiv: 1903.10586.

\bibitem{zhao_practical_2018}
Y.~Zhao, H.~Zhu, Q~Shen, R.~Liang, K.~Chen, and S.~Zhang.
\newblock Practical adversarial attack against object detector.
\newblock {\em arXiv:1812.10217 [cs]}, December 2018.
\newblock arXiv: 1812.10217.

\end{thebibliography}

\clearpage
\appendix

\section{Effects of Sampling Percentage}\label{AppSec:SamplingPercentage}

\subsection{Central Square Sampling Pattern}\label{AppSec:CentralSquare}

\begin{table*}[htb!]
	\centering
	\begin{tabular}{|l|l|l|l|l|l|l|l|l|}
		\hline
		Sampling percentage& 30\% & 40\% & 50\% & 60\% & 70\% & 80\% & 90\% & 95\% \\
		\hline
		$x_{0\to1}$ & 0.23 & 0.25 & 0.26  & 0.32  & 0.33  & 0.37  & 0.33  & 0.30  \\
		\hline
		$x_{1}$ & {\color{red}{0.12}} & {\color{red}{0.10}} & {\color{red}{0.09}} & {\color{red}{0.04}} & {\color{red}{0.05}}  &{\color{red}{0.03}} & {\color{red}{0.06}} & {\color{red}{0.05}}\\
		\hline
		$x_{2\to1}$ & 0.20 & 0.20 & 0.20  & 0.24 & 0.24  & 0.26  & 0.26  &  0.25 \\
		\hline
		$x_{3\to1}$ & 0.20  & 0.20 &  0.21 & 0.23  & 0.23 & 0.26  &  0.30 &  0.27  \\
		\hline
		$x_{4\to1}$ & 0.20 & 0.20 & 0.21  & 0.21 &  0.26 &  0.30 & 0.36 & 0.31 \\
		\hline
		$x_{5\to1}$ & 0.21 & 0.20 & 0.22  & 0.23  & 0.26   & 0.29 & 0.33  & 0.30  \\
		\hline
		$x_{6\to1}$ & 0.21 & 0.21 & 0.22  & 0.23  & 0.26  &  0.32 &  0.35 &  0.31 \\
		\hline
		$x_{7\to1}$ & 0.19 & 0.18 &  0.17 & 0.19  & 0.21  &  0.26 & 0.31  & 0.27  \\
		\hline
		$x_{8\to1}$ &0.20 & 0.20 & 0.21  & 0.22  &  0.26  & 0.29 & 0.32  & 0.27 \\
		\hline
		$x_{9\to1}$ & 0.20 & 0.19 &  0.19 & 0.21 & 0.23  & 0.32  &  0.35 &  0.31 \\
		\hline
	\end{tabular}
	\caption{Evaluation of pixel prediction FNN over adversarial samples from different classes. We show the inference MSE for each situation. The network is trained with good samples from class 1.}\label{Tab:ComprehensiveClass1}
\end{table*}

\begin{table*}[htb!]
	\centering
	\begin{tabular}{|l|l|l|l|l|l|l|l|l|}
		\hline
		Sampling percentage& 30\% & 40\% & 50\% & 60\% & 70\% & 80\% & 90\% & 95\% \\
		\hline
		$x_{0\to2}$ & 0.22 & 0.20 & 0.21  & 0.23 & 0.23  & 0.26  &  0.25  & 0.27  \\
		\hline
		$x_{1\to2}$ & 0.19 & {\color{red}{0.12}}  &  {\color{red}{0.13}}  &  {\color{red}{0.12}}  & {\color{red}{0.13}}   & 0.14  & 0.25  & 0.18  \\
		\hline
		$x_{2}$ & {\color{red}{0.15}}  & 0.13 & 0.15 & 0.14 & 0.17 & {\color{red}{0.13}}  & {\color{red}{0.17}}  & {\color{red}{0.09}}  \\
		\hline
		$x_{3\to2}$ & 0.20 & 0.17 & 0.17  & 0.17  & 0.17 &  0.23 & 0.30  & 0.30   \\
		\hline
		$x_{4\to2}$ & 0.21 & 0.16 & 0.17  & 0.17 & 0.19  &  0.22 & 0.24 & 0.18 \\
		\hline
		$x_{5\to2}$ & 0.22 & 0.17 &  0.19  & 0.19  & 0.19  & 0.26 & 0.30  & 0.29  \\
		\hline
		$x_{6\to2}$ & 0.21 & 0.16 & 0.16  & 0.17  & 0.18 & 0.23  & 0.28 & 0.26 \\
		\hline
		$x_{7\to2}$ & 0.20 & 0.16 &  0.18 & 0.16  &  0.17 &  0.19 & 0.21  & 0.12  \\
		\hline
		$x_{8\to2}$ & 0.20 & 0.17 &  0.17 & 0.17  &  0.18  & 0.22 & 0.28  & 0.25\\
		\hline
		$x_{9\to2}$ & 0.20 & 0.15 &  0.17 & 0.17 & 0.18  & 0.24  & 0.23  & 0.19  \\
		\hline
	\end{tabular}
	\caption{Evaluation of pixel prediction FNN over adversarial samples from different classes. We show the inference MSE for each situation. The network is trained with good samples from class 2.}\label{Tab:ComprehensiveClass2}
\end{table*}

\begin{table*}[htb!]
	\centering
	\begin{tabular}{|l|l|l|l|l|l|l|l|l|}
		\hline
		Sampling percentage& 30\% & 40\% & 50\% & 60\% & 70\% & 80\% & 90\% & 95\% \\
		\hline
		$x_{0\to3}$ &  0.20 & 0.20 & 0.22  & 0.22  & 0.26 & 0.27  &  0.32 &  0.28  \\
		\hline
		$x_{1\to3}$ & 0.18 & 0.13 & 0.16  & 0.13  & 0.17  & 0.16  &  0.23& 0.24  \\
		\hline
		$x_{2\to3}$ & 0.19 & 0.17  &  0.20 & 0.20 & 0.23 & 0.26  & 0.32  & 0.26  \\
		\hline
		$x_{3}$ & {\color{red}{0.12}} & {\color{red}{0.12}} &{\color{red}{0.14}} & {\color{red}{0.12}}& {\color{red}{0.16}} & {\color{red}{0.12}}  & {\color{red}{0.17}} & {\color{red}{0.12}} \\
		\hline
		$x_{4\to3}$ &  0.20 & 0.16 & 0.21  & 0.19 &  0.25 &  0.27 & 0.33  & 0.29 \\
		\hline
		$x_{5\to3}$ &  0.18 & 0.16 &  0.18 & 0.17  &  0.21 & 0.23 & 0.28  & 0.24  \\
		\hline
		$x_{6\to3}$ &  0.20 & 0.18 & 0.22  & 0.20  & 0.25 &  0.29 &  0.34& 0.31 \\
		\hline
		$x_{7\to3}$ &  0.19 & 0.14 & 0.17  & 0.16  &  0.23 & 0.23  & 0.30  & 0.26  \\
		\hline
		$x_{8\to3}$ & 0.19 & 0.16 & 0.18  & 0.17  &   0.22 & 0.23 & 0.29  &0.25 \\
		\hline
		$x_{9\to3}$ &  0.19 & 0.14 & 0.18  & 0.16 &  0.22 & 0.25  & 0.31   &  0.25 \\
		\hline
	\end{tabular}
	\caption{Evaluation of pixel prediction FNN over adversarial samples from different classes. We show the inference MSE for each situation. The network is trained with good samples from class 3.}\label{Tab:ComprehensiveClass3}
\end{table*}

\begin{table*}[htb!]
	\centering
	\begin{tabular}{|l|l|l|l|l|l|l|l|l|}
		\hline
		Sampling percentage& 30\% & 40\% & 50\% & 60\% & 70\% & 80\% & 90\% & 95\% \\
		\hline
		$x_{0\to4}$ & 0.20 & 0.21 & 0.24  & 0.28 & 0.33  &  0.35 & 0.34 & 0.34 \\
		\hline
		$x_{1\to4}$ & 0.14 & 0.14 & 0.14  & 0.17  & 0.17  & 0.23  & 0.29  &  0.31 \\
		\hline
		$x_{2\to4}$ & 0.18 & 0.18 &  0.20 & 0.24 & 0.26  & 0.26  & 0.23  &  0.21 \\
		\hline
		$x_{3\to4}$ & 0.17 & 0.19 &  0.19 & 0.25  & 0.27 & 0.31  & 0.30  &   0.35 \\
		\hline
		$x_{4}$ & {\color{red}{0.13}} & {\color{red}{0.12}} & {\color{red}{0.12}} & {\color{red}{0.12}} & {\color{red}{0.12}} & {\color{red}{0.13}} & {\color{red}{0.11}} & {\color{red}{0.11}} \\
		\hline
		$x_{5\to4}$ & 0.17 & 0.17 &  0.19 & 0.22  & 0.26  & 0.30 & 0.29  &  0.30 \\
		\hline
		$x_{6\to4}$ & 0.17 & 0.16 &  0.18 & 0.21  & 0.24  &  0.27 & 0.24 & 0.21 \\
		\hline
		$x_{7\to4}$ & 0.15 & 0.14 &  0.14 & 0.16  & 0.18  &  0.23 & 0.25  & 0.22  \\
		\hline
		$x_{8\to4}$ & 0.17 & 0.17 & 0.18  & 0.22  &  0.24  & 0.24 &  0.25 & 0.26 \\
		\hline
		$x_{9\to4}$ & 0.14 & 0.12 & 0.12  & 0.14 & 0.15   & 0.19  & 0.16  &  0.15 \\
		\hline
	\end{tabular}
	\caption{Evaluation of pixel prediction FNN over adversarial samples from different classes. We show the inference MSE for each situation. The network is trained with good samples from class 4.}\label{Tab:ComprehensiveClass4}
\end{table*}

\begin{table*}[htb!]
	\centering
	\begin{tabular}{|l|l|l|l|l|l|l|l|l|}
		\hline
		Sampling percentage& 30\% & 40\% & 50\% & 60\% & 70\% & 80\% & 90\% & 95\% \\
		\hline
		$x_{0\to5}$ & 0.18 & 0.20 & 0.21  & 0.21 & 0.21   & 0.22  & 0.18  & 0.22 \\
		\hline
		$x_{1\to5}$ & 0.14 & 0.13 & 0.14  & 0.16  &  0.17 & 0.19  & 0.27  & 0.34  \\
		\hline
		$x_{2\to5}$ & 0.18 & 0.20 &  0.20 & 0.21 & 0.23  & 0.28  &  0.34 & 0.35  \\
		\hline
		$x_{3\to5}$ & 0.16 & 0.17 &  0.17 & 0.16  & 0.17 &  0.19 & 0.20  &  0.19  \\
		\hline
		$x_{4\to5}$ & 0.16 & 0.16 & 0.16 & 0.19 & 0.20 & 0.25 & 0.32 & 0.32 \\
		\hline
		$x_{5}$ & {\color{red}{0.13}} & {\color{red}{0.13}} & {\color{red}{0.13}}  &{\color{red}{0.15}}  &  {\color{red}{0.14}} & {\color{red}{0.14}} & {\color{red}{0.14}}  & {\color{red}{0.13}}  \\
		\hline
		$x_{6\to5}$ & 0.17 & 0.19 &  0.18 &  0.20 & 0.22  & 0.26  & 0.26 & 0.27 \\
		\hline
		$x_{7\to5}$ & 0.15 & 0.14 & 0.15  &  0.17 & 0.19  &  0.22 &  0.27 & 0.30  \\
		\hline
		$x_{8\to5}$ & 0.16 & 0.18 & 0.17  & 0.18  &   0.19 & 0.23 & 0.30  & 0.33 \\
		\hline
		$x_{9\to5}$ & 0.14 & 0.14 & 0.15  & 0.17 & 0.17   & 0.22  &  0.30 & 0.32   \\
		\hline
	\end{tabular}
	\caption{Evaluation of pixel prediction FNN over adversarial samples from different classes. We show the inference MSE for each situation. The network is trained with good samples from class 5.}\label{Tab:ComprehensiveClass5}
\end{table*}

\begin{table*}[htb!]
	\centering
	\begin{tabular}{|l|l|l|l|l|l|l|l|l|}
		\hline
		Sampling percentage& 30\% & 40\% & 50\% & 60\% & 70\% & 80\% & 90\% & 95\% \\
		\hline
		$x_{0\to6}$ & 0.19 & 0.21 & 0.22  & 0.24 &   0.25& 0.23   & 0.20 & 0.18 \\
		\hline
		$x_{1\to6}$ & 0.16 & 0.16 & 0.17  & 0.16  & 0.19  & 0.23  &  0.25 & 0.29  \\
		\hline
		$x_{2\to6}$ & 0.17 & 0.18 & 0.20  & 0.22 & 0.24  &  0.23 & 0.25  &  0.23 \\
		\hline
		$x_{3\to6}$ & 0.18 & 0.21 & 0.21  & 0.24  & 0.26 &  0.24 &  0.28 &  0.25  \\
		\hline
		$x_{4\to6}$ & 0.16 & 0.16  & 0.18 & 0.19 & 0.22 & 0.24 & 0.23 & 0.22 \\
		\hline
		$x_{5\to6}$ & 0.18 & 0.19  &  0.20 & 0.21  &  0.22 & 0.23 & 0.25  & 0.25  \\
		\hline
		$x_{6}$ & {\color{red}{0.13}} & {\color{red}{0.12}} & {\color{red}{0.11}}  & {\color{red}{0.11}}  &  {\color{red}{0.13}} &  {\color{red}{0.13}} & {\color{red}{0.14}} & {\color{red}{0.11}} \\
		\hline
		$x_{7\to6}$ & 0.17 & 0.18 &  0.20 &  0.20 & 0.23  & 0.25  &  0.27 & 0.25  \\
		\hline
		$x_{8\to6}$ & 0.18 & 0.19 & 0.20  & 0.22  &  0.23  & 0.24 &  0.26 & 0.26 \\
		\hline
		$x_{9\to6}$ & 0.16 & 0.17 & 0.18  & 0.19 &   0.22 & 0.26  & 0.25  &  0.23 \\
		\hline
	\end{tabular}
	\caption{Evaluation of pixel prediction FNN over adversarial samples from different classes. We show the inference MSE for each situation. The network is trained with good samples from class 6.}\label{Tab:ComprehensiveClass6}
\end{table*}

\begin{table*}[htb!]
	\centering
	\begin{tabular}{|l|l|l|l|l|l|l|l|l|}
		\hline
		Sampling percentage& 30\% & 40\% & 50\% & 60\% & 70\% & 80\% & 90\% & 95\% \\
		\hline
		$x_{0\to7}$ & 0.19 & 0.20 & 0.26  & 0.28  &  0.31 & 0.27  & 0.22 & 0.14 \\
		\hline
		$x_{1\to7}$ & 0.14 & 0.12 &  0.15 &  0.15 &  0.14 & 0.20  & 0.22  & 0.21  \\
		\hline
		$x_{2\to7}$ & 0.18 & 0.18 & 0.23  & 0.26 & 0.26 & 0.25  & 0.21  &  0.18 \\
		\hline
		$x_{3\to7}$ & 0.17 & 0.17 &  0.21 & 0.26  & 0.25 &  0.26 & 0.28  &  0.31  \\
		\hline
		$x_{4\to7}$ & 0.14 & 0.14 & 0.16 & 0.18 & 0.19 & 0.24 & 0.24 & 0.21 \\
		\hline
		$x_{5\to7}$ & 0.16 & 0.16 & 0.20  & 0.23  &  0.24 & 0.26 &  0.27 &  0.27 \\
		\hline
	$x_{6\to7}$ & 0.17 & 0.17 & 0.21  & 0.23  & 0.20  & 0.31  & 0.30 &  0.26 \\
		\hline
		$x_{7}$ & {\color{red}{0.10}} & {\color{red}{0.08}} & {\color{red}{0.10}}  & {\color{red}{0.10}}  & {\color{red}{0.10}}  & {\color{red}{0.10}}  & {\color{red}{0.07}}  &  {\color{red}{0.05}} \\
		\hline
		$x_{8\to7}$ & 0.18 & 0.17 & 0.21  &  0.24 &  0.25  & 0.27 &  0.30 & 0.32 \\
		\hline
	$x_{9\to7}$ & 0.13 & 0.11 &  0.13 & 0.15 & 0.17   & 0.19  & 0.21  &  0.22 \\
		\hline
	\end{tabular}
	\caption{Evaluation of pixel prediction FNN over adversarial samples from different classes. We show the inference MSE for each situation. The network is trained with good samples from class 7.}\label{Tab:ComprehensiveClass7}
\end{table*}

\begin{table*}[htb!]
	\centering
	\begin{tabular}{|l|l|l|l|l|l|l|l|l|}
		\hline
		Sampling percentage& 30\% & 40\% & 50\% & 60\% & 70\% & 80\% & 90\% & 95\% \\
		\hline
		$x_{0\to8}$ & 0.19 & 0.20 & 0.21  & 0.22 & 0.25  & 0.27  & 0.35 & 0.41 \\
		\hline
		$x_{1\to8}$ & {\color{red}{0.12}} & {\color{red}{0.12}} & 0.12  & {\color{red}{0.12}}  & {\color{red}{0.11}}  &  {\color{red}{0.12}} &  0.16 &  0.15 \\
		\hline
		$x_{2\to8}$ & 0.16 & 0.16 & 0.18  & 0.18 & 0.20  & 0.20  &  0.24 &  0.26 \\
		\hline
		$x_{3\to8}$ & 0.16 & 0.16 & 0.16  & 0.17  & 0.17 &  0.19 &  0.23 &   0.22 \\
		\hline
		$x_{4\to8}$ & 0.15 & 0.14 &0.15  & 0.16 & 0.18 & 0.19 & 0.23 & 0.22  \\
		\hline
		$x_{5\to8}$& 0.16 & 0.15 & 0.15  & 0.16  & 0.17  & 0.20 &  0.27  &  0.26 \\
		\hline
		$x_{6\to8}$ & 0.16 & 0.16 & 0.17  &  0.17 & 0.20  &  0.22 & 0.26 & 0.26 \\
		\hline
		$x_{7\to8}$ & 0.15 & 0.14 & 0.14  & 0.15  &  0.17 &  0.19 &  0.28 &  0.33 \\
		\hline
		$x_{8}$ & 0.13 & 0.13 & {\color{red}{0.12}}  & 0.13  &  0.14  & 0.14 & {\color{red}{0.15}}  & {\color{red}{0.11}}\\
		\hline
	$x_{9\to8}$ & 0.14 & 0.13 &  0.13 & 0.14 & 0.16   & 0.19  & 0.22  &  0.21 \\
		\hline
	\end{tabular}
	\caption{Evaluation of pixel prediction FNN over adversarial samples from different classes. We show the inference MSE for each situation. The network is trained with good samples from class 8.}\label{Tab:ComprehensiveClass8}
\end{table*}

\begin{table*}[htb!]
	\centering
	\begin{tabular}{|l|l|l|l|l|l|l|l|l|}
		\hline
		Sampling percentage& 30\% & 40\% & 50\% & 60\% & 70\% & 80\% & 90\% & 95\% \\
		\hline
		$x_{0\to9}$ & 0.18 & 0.21 & 0.24  & 0.27 & 0.28   & 0.26  & 0.25 & 0.28 \\
		\hline
		$x_{1\to9}$ & 0.13 & 0.13 &  0.14 & 0.17  & 0.20  & 0.22  &  0.25 & 0.30  \\
		\hline
		$x_{2\to9}$ & 0.17  & 0.18  &  0.20 & 0.24  & 0.26 & 0.25  &  0.20 & 0.20  \\
		\hline
	$x_{3\to9}$ & 0.16 & 0.17 &  0.19 & 0.23  & 0.23 & 0.23  & 0.24  &   0.26 \\
		\hline
	$x_{4\to9}$ & 0.12 & 0.12 & 0.13 & 0.15 & 0.17 & 0.19 & 0.15 &  0.14 \\
		\hline
$x_{5\to9}$ & 0.15 & 0.16 & 0.18  & 0.21  & 0.23  & 0.24 &  0.24 &  0.26 \\
		\hline
	$x_{6\to9}$ & 0.15 & 0.17 &  0.19 & 0.23  & 0.25  &  0.28 & 0.22 & 0.23  \\
		\hline
	$x_{7\to9}$ & 0.11 &0.11  & 0.11  & 0.14  & 0.16  &  0.17 &  0.19 &  0.19 \\
		\hline
	$x_{8\to9}$ & 0.15 & 0.17 &  0.18 & 0.20  &  0.23  &  0.23 & 0.25  & 0.27 \\
		\hline
$x_{9}$ & {\color{red}{0.10}} & {\color{red}{0.08}} & {\color{red}{0.09}}  & {\color{red}{0.11}} &{\color{red}{0.12}}  & {\color{red}{0.12}}  & {\color{red}{0.12}} &{\color{red}{0.11}}  \\
		\hline
	\end{tabular}
	\caption{Evaluation of pixel prediction FNN over adversarial samples from different classes. We show the inference MSE for each situation. The network is trained with good samples from class 9.}\label{Tab:ComprehensiveClass9}
\end{table*}

\begin{figure}
	\begin{subfigure}[b]{0.48\linewidth}
		\includegraphics[width=0.95\linewidth]{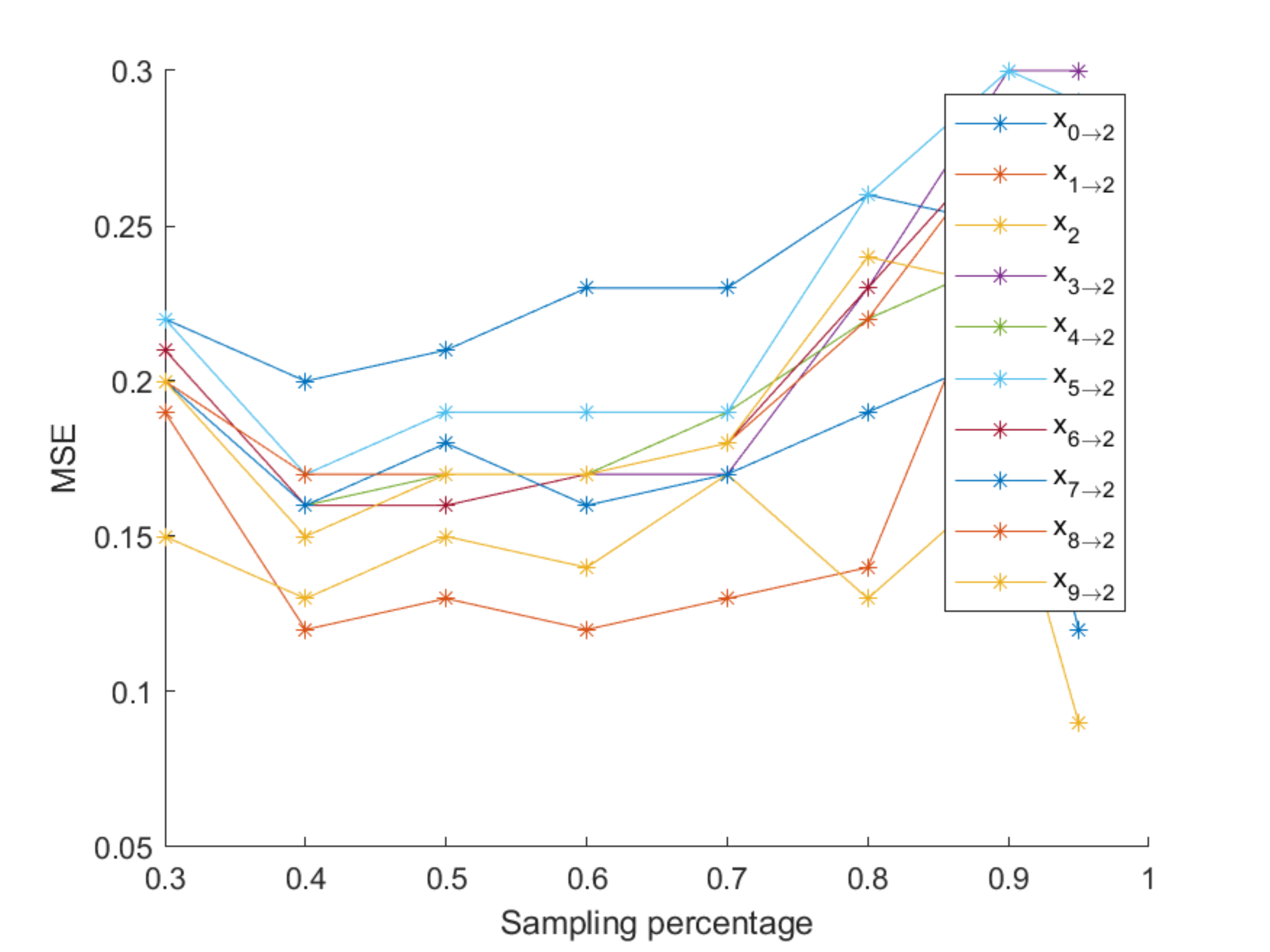}
		\caption{$FNN_2$ is used.}\label{Fig:CsTableVisualization2}
	\end{subfigure}
	\begin{subfigure}[b]{0.48\linewidth}
		\includegraphics[width=0.95\linewidth]{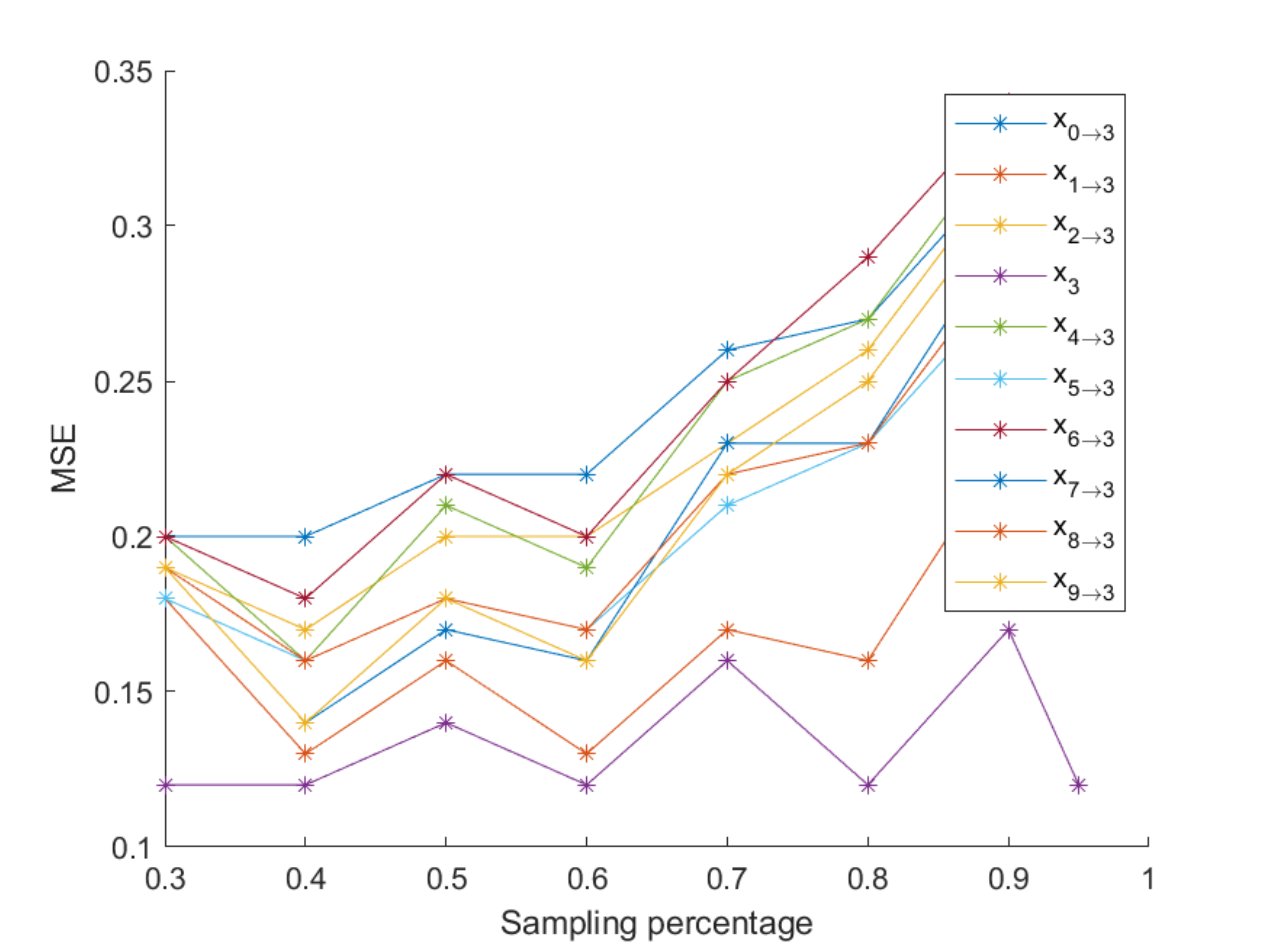}
		\caption{$FNN_3$ is used.}\label{Fig:CsTableVisualization3}
	\end{subfigure}
	\caption{Effects of sampling percentages.}
\end{figure}

\begin{figure}
	\begin{subfigure}[b]{0.48\linewidth}
		\includegraphics[width=0.95\linewidth]{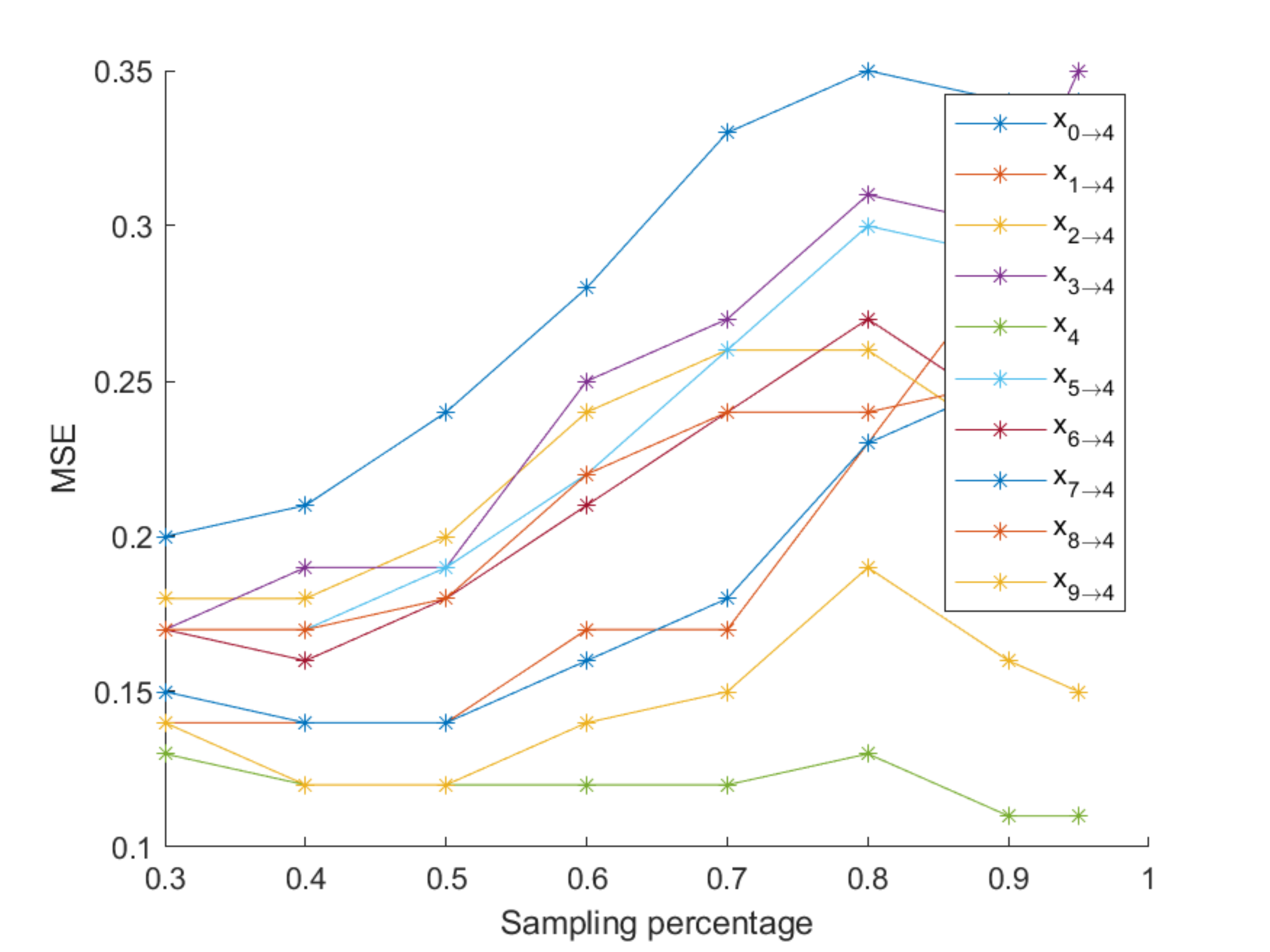}
		\caption{$FNN_4$ is used.}\label{Fig:CsTableVisualization4}
	\end{subfigure}
	\begin{subfigure}[b]{0.48\linewidth}
		\includegraphics[width=0.95\linewidth]{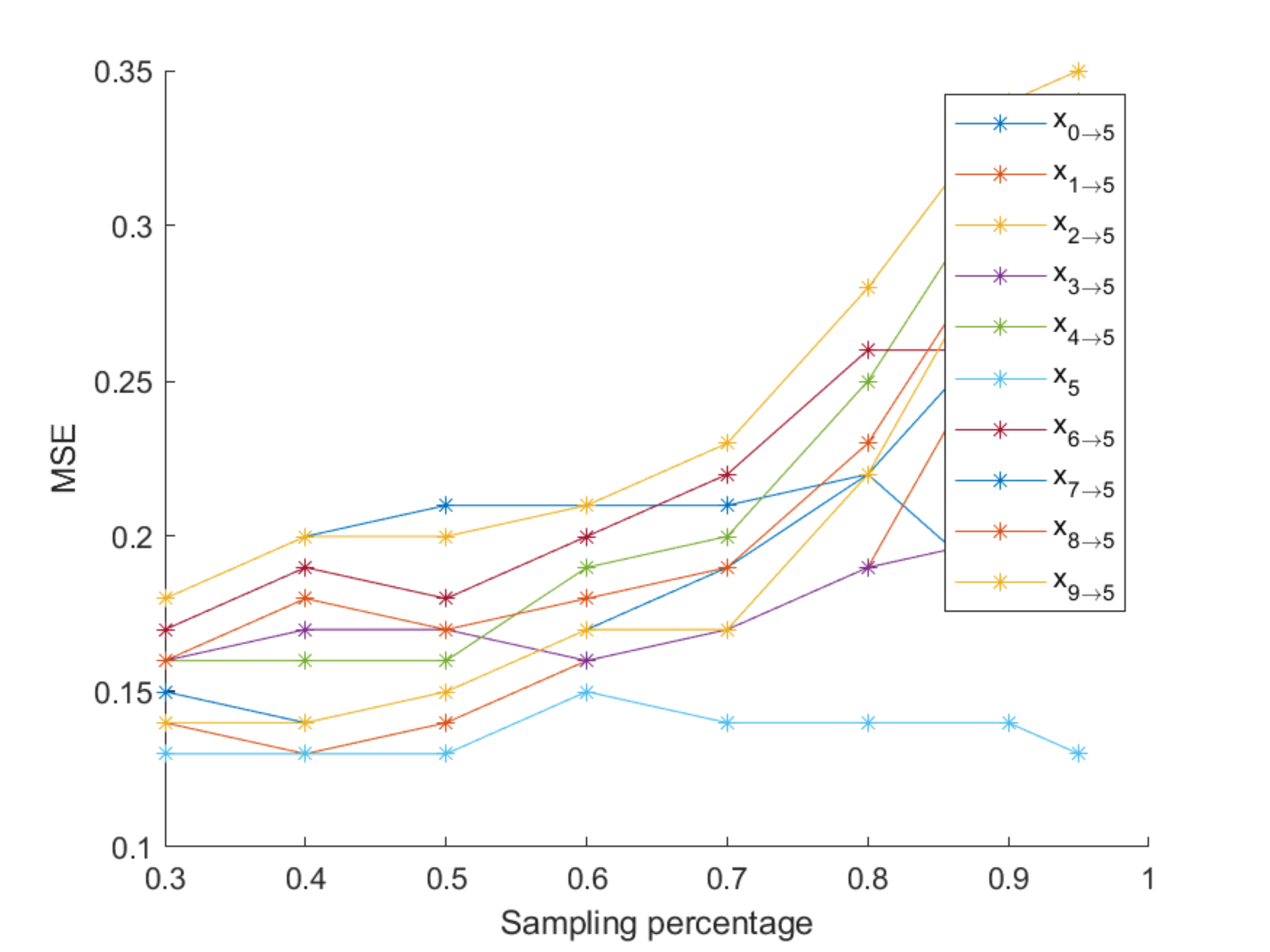}
		\caption{$FNN_5$ is used.}\label{Fig:CsTableVisualization5}
	\end{subfigure}
	\caption{Effects of sampling percentages.}
\end{figure}

\begin{figure}
	\begin{subfigure}[b]{0.48\linewidth}
		\includegraphics[width=0.95\linewidth]{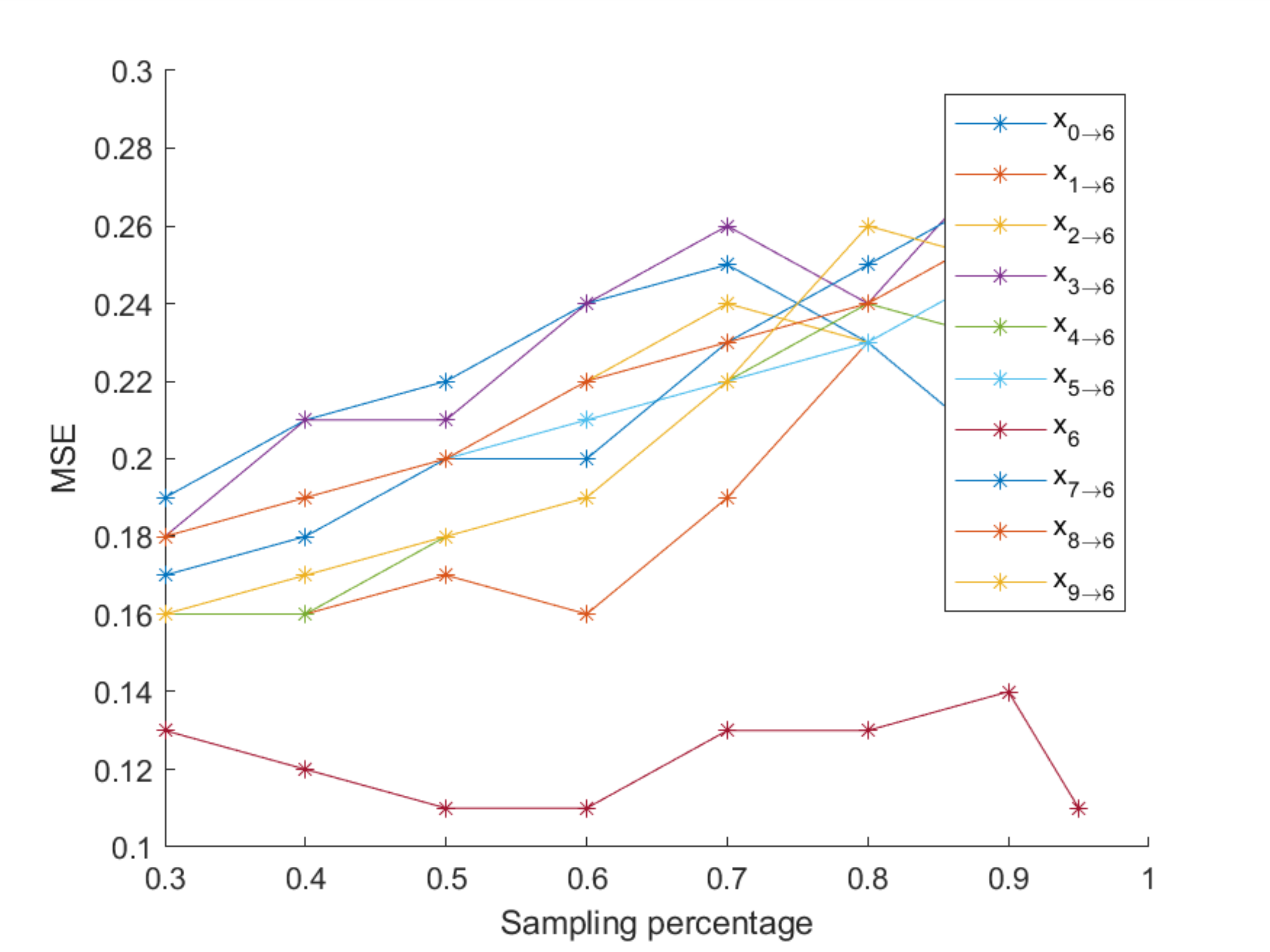}
		\caption{$FNN_6$ is used.}\label{Fig:CsTableVisualization6}
	\end{subfigure}
	\begin{subfigure}[b]{0.48\linewidth}
		\includegraphics[width=0.95\linewidth]{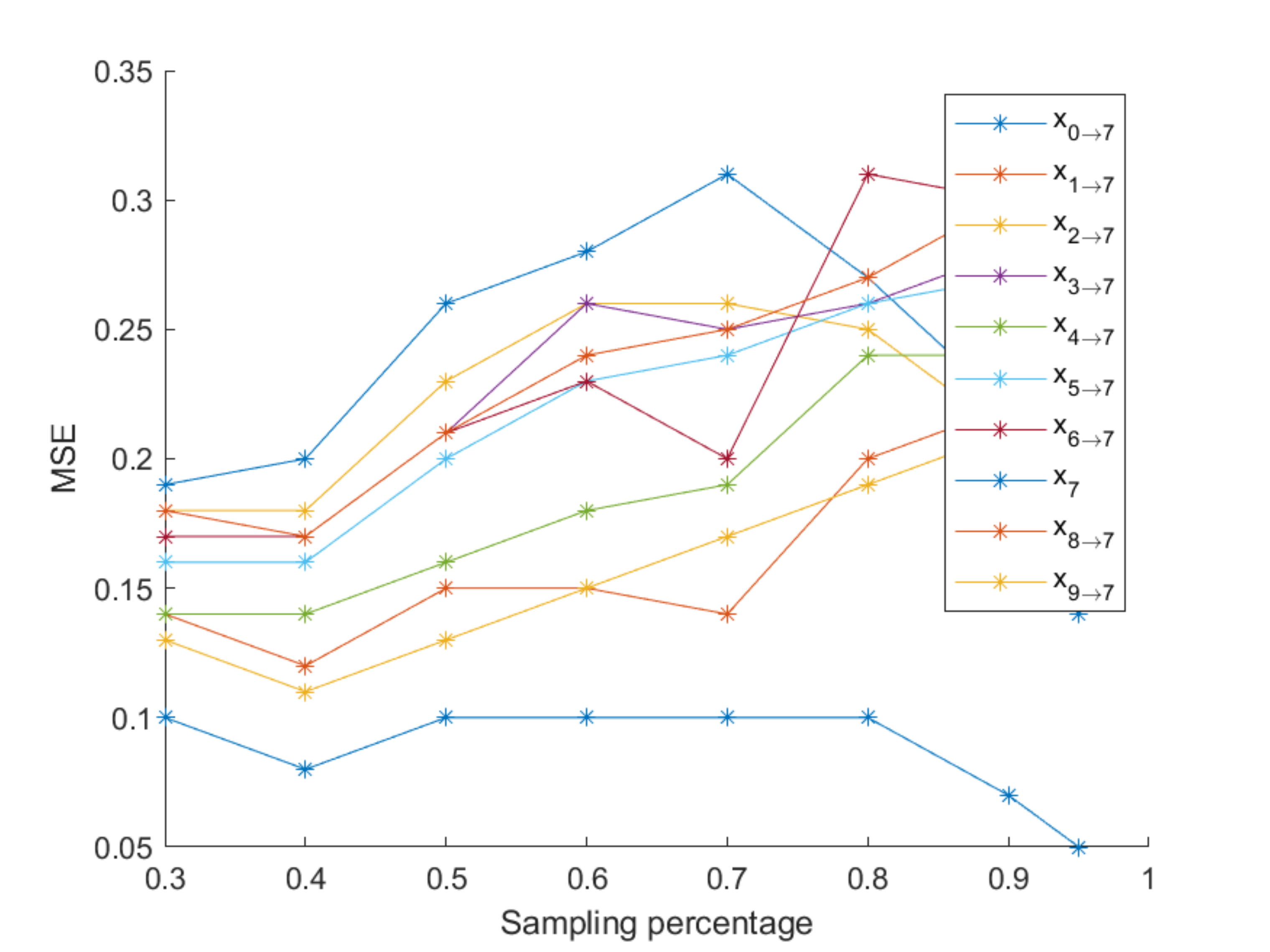}
		\caption{$FNN_7$ is used.}\label{Fig:CsTableVisualization7}
	\end{subfigure}
	\caption{Effects of sampling percentages.}
\end{figure}

\begin{figure}
	\begin{subfigure}[b]{0.48\linewidth}
		\includegraphics[width=0.95\linewidth]{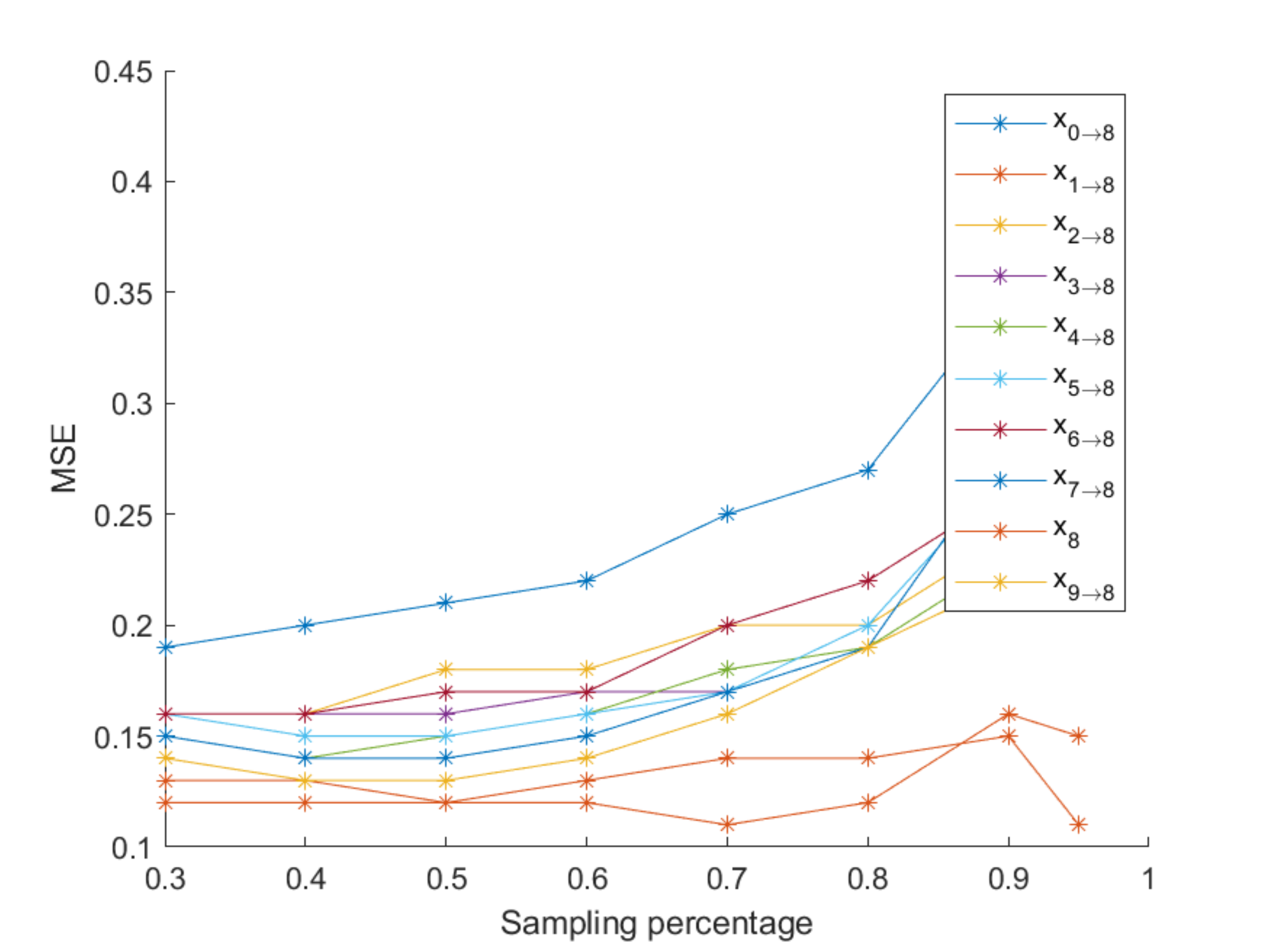}
		\caption{$FNN_8$ is used.}\label{Fig:CsTableVisualization8}
	\end{subfigure}
	\begin{subfigure}[b]{0.48\linewidth}
		\includegraphics[width=0.95\linewidth]{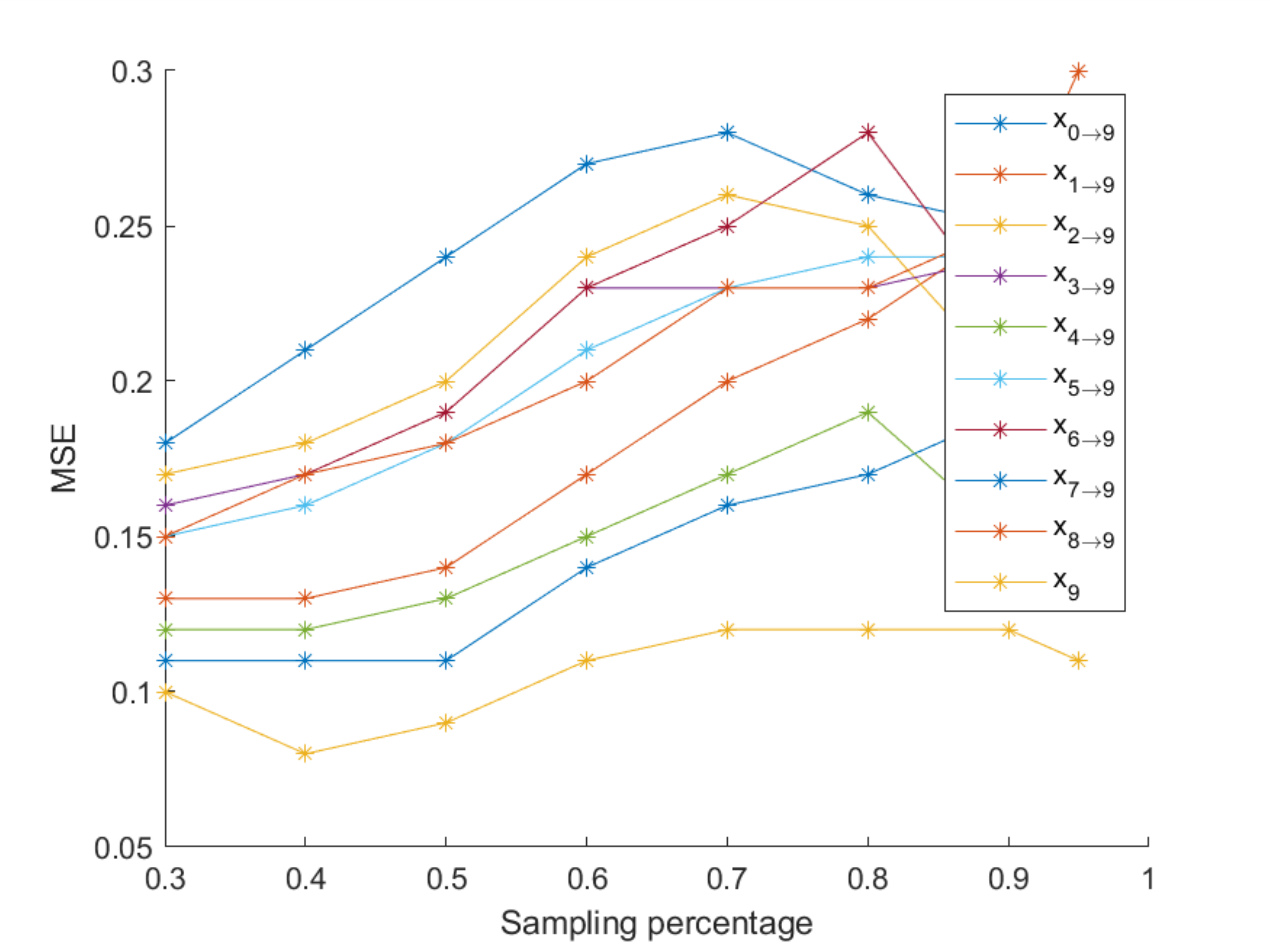}
		\caption{$FNN_9$ is used.}\label{Fig:CsTableVisualization9}
	\end{subfigure}
	\caption{Effects of sampling percentages.}
\end{figure}

\clearpage
\subsection{Central Rows Sampling Pattern}\label{AppSec:CentralRows}

\begin{table*}[htb!]
	\centering
	\begin{tabular}{|l|l|l|l|l|l|l|l|l|}
		\hline
		Sampling percentage& 30\% & 40\% & 50\% & 60\% & 70\% & 80\% & 90\% & 95\% \\
		\hline
		$x_{0\to1}$ & 0.20 & 0.24  & 0.22  & 0.24  & 0.21  &  0.22 & 0.18  & 0.19  \\
		\hline
	$x_{1}$ & {\color{red}{0.08}} & {\color{red}{0.06}} &{\color{red}{0.05}} & {\color{red}{0.04}} & {\color{red}{0.03}} & {\color{red}{0.01}}& {\color{red}{0.02}} & {\color{red}{0.03}}\\
		\hline
$x_{2\to1}$ &  0.16 & 0.18 & 0.15  & 0.16 & 0.12 &  0.11 & 0.11  & 0.11  \\
		\hline
	$x_{3\to1}$ & 0.16  & 0.16 & 0.15  & 0.14  & 0.13 &  0.12 &  0.13 & 0.13   \\
		\hline
$x_{4\to1}$ & 0.16  & 0.16 &  0.18 & 0.17 & 0.19  &  0.17 & 0.19 & 0.18 \\
		\hline
$x_{5\to1}$ &  0.16 & 0.17 & 0.17  & 0.15  &  0.15  & 0.13 & 0.14  & 0.13  \\
		\hline
$x_{6\to1}$ & 0.17  & 0.17 & 0.18  & 0.18  & 0.18  & 0.17  & 0.18  & 0.18 \\
		\hline
$x_{7\to1}$ &  0.13 & 0.15 & 0.15  & 0.15  &  0.14  &  0.12 &  0.12 & 0.13  \\
		\hline
$x_{8\to1}$ & 0.16 & 0.17 & 0.17  & 0.16  &  0.14  & 0.12 &  0.12 & 0.12 \\
		\hline
$x_{9\to1}$ & 0.14  & 0.16 & 0.17  & 0.17 &  0.17 & 0.16  & 0.16  & 0.16  \\
		\hline
	\end{tabular}
	\caption{Evaluation of pixel prediction FNN over adversarial samples from different classes. We show the inference MSE for each situation. The network is trained with good samples from class 1.}\label{Tab:MaskRowsComprehensiveClass1}
\end{table*}

\begin{table*}[htb!]
	\centering
	\begin{tabular}{|l|l|l|l|l|l|l|l|l|}
		\hline
		Sampling percentage& 30\% & 40\% & 50\% & 60\% & 70\% & 80\% & 90\% & 95\% \\
		\hline
		$x_{0\to2}$ & 0.18 & 0.17 & 0.16  & 0.18 & 0.19 & 0.19  & 0.15  &  0.17 \\
		\hline
	$x_{1\to2}$ &0.14  &{\color{red}{0.11}} &  {\color{red}{0.09}} & {\color{red}{0.07}}  & {\color{red}{0.08}}  & 0.07  & {\color{red}{0.05}}  & {\color{red}{0.03}}  \\
		\hline
	$x_{2}$ & {\color{red}{0.14}} & 0.12 & 0.11 & 0.09 & 0.09 & {\color{red}{0.07}} & 0.06 & 0.04 \\
		\hline
$x_{3\to2}$ & 0.16 & 0.12 &  0.13 & 0.11  & 0.12 & 0.13  &  0.13 &  0.11  \\
		\hline
$x_{4\to2}$ & 0.17 & 0.14 & 0.14  & 0.13 & 0.15  &  0.14 & 0.15 & 0.13 \\
		\hline
$x_{5\to2}$ & 0.17 & 0.14 & 0.14  & 0.13  & 0.14  & 0.15 & 0.14  & 0.13  \\
		\hline
$x_{6\to2}$& 0.17 & 0.13 & 0.13  & 0.14  & 0.15 & 0.16  & 0.15  & 0.15 \\
		\hline
	$x_{7\to2}$ &0.18  & 0.12 &  0.13 & 0.11  &  0.10  & 0.09  & 0.06  & 0.05  \\
		\hline
	$x_{8\to2}$ &0.15 & 0.13 &  0.13 &  0.12 & 0.12   & 0.12 & 0.12  & 0.09 \\
		\hline
	$x_{9\to2}$ & 0.16  & 0.13 &  0.13 & 0.13 & 0.13  & 0.14  & 0.12  &  0.12 \\
		\hline
	\end{tabular}
	\caption{Evaluation of pixel prediction FNN over adversarial samples from different classes. We show the inference MSE for each situation. The network is trained with good samples from class 2.}\label{Tab:MaskRowsComprehensiveClass2}
\end{table*}

\begin{table*}[htb!]
	\centering
	\begin{tabular}{|l|l|l|l|l|l|l|l|l|}
		\hline
		Sampling percentage& 30\% & 40\% & 50\% & 60\% & 70\% & 80\% & 90\% & 95\% \\
		\hline
	$x_{0\to3}$ & 0.15 & 0.17 & 0.18  & 0.17  & 0.17 & 0.20  &  0.23 &  0.22  \\
		\hline
	$x_{1\to3}$ & 0.11 & 0.10 & {\color{red}{0.10}}  &  {\color{red}{0.06}} & 0.08  & {\color{red}{0.06}}  &  {\color{red}{0.06}} & 0.05  \\
		\hline
$x_{2\to3}$ & 0.14 & 0.15 &  0.15 & 0.14 & 0.12  & 0.12  &  0.11 & 0.09  \\
		\hline
$x_{3}$ & {\color{red}{0.11}} & {\color{red}{0.10}} & 0.12  & {{0.08}} & {\color{red}{0.08}} & 0.07 & 0.07 & {\color{red}{0.05}} \\
		\hline
$x_{4\to3}$ & 0.15 & 0.14 & 0.16  & 0.13 &  0.15 &  0.16 & 0.16 & 0.14 \\
		\hline
	$x_{5\to3}$ & 0.13 & 0.12 & 0.14  & 0.11  & 0.12  & 0.11 & 0.12  & 0.09  \\
		\hline
$x_{6\to3}$ & 0.15 & 0.16 & 0.17  & 0.14  & 0.15 & 0.16  & 0.16  & 0.14 \\
		\hline
	$x_{7\to3}$ & 0.13 & 0.13 & 0.14   &  0.12 & 0.11  & 0.11  & 0.12  & 0.09  \\
		\hline
$x_{8\to3}$ & 0.13 & 0.13 & 0.14  & 0.11  & 0.12   & 0.13 & 0.09  & 0.07 \\
		\hline
$x_{9\to3}$ & 0.14 & 0.13 &  0.14  & 0.12 & 0.13  & 0.14  & 0.14  &  0.12 \\
		\hline
	\end{tabular}
	\caption{Evaluation of pixel prediction FNN over adversarial samples from different classes. We show the inference MSE for each situation. The network is trained with good samples from class 3.}\label{Tab:MaskRowsComprehensiveClass3}
\end{table*}

\begin{table*}[htb!]
	\centering
	\begin{tabular}{|l|l|l|l|l|l|l|l|l|}
		\hline
		Sampling percentage& 30\% & 40\% & 50\% & 60\% & 70\% & 80\% & 90\% & 95\% \\
		\hline
	$x_{0\to4}$ & 0.18 & 0.20& 0.20  & 0.21  & 0.22  & 0.22  & 0.20  &  0.19  \\
		\hline
	$x_{1\to4}$ & 0.12 & 0.10 &  0.10 & 0.12  &  0.13 &  0.12 &  0.12 &  0.12 \\
		\hline
$x_{2\to4}$ & 0.15 & 0.15 & 0.16  & 0.15 & 0.14  &  0.13 & 0.11  &  0.11 \\
		\hline
$x_{3\to4}$ & 0.15 & 0.15 & 0.15  & 0.16 & 0.16 & 0.15 & 0.14 & 0.13 \\
		\hline
$x_{4}$ &{\color{red}{0.10}} & {\color{red}{0.08}} & {\color{red}{0.08}}& {\color{red}{0.09}} & {\color{red}{0.09}}  & {\color{red}{0.08}}  & {\color{red}{0.08}}& {\color{red}{0.08}} \\
		\hline
$x_{5\to4}$ & 0.14 & 0.14 &  0.14 &  0.15 &  0.16 &  0.15 &  0.14 &  0.14 \\
		\hline
$x_{6\to4}$ & 0.14 & 0.14 & 0.15  & 0.15  & 0.15  &  0.14 & 0.12  & 0.11 \\
		\hline
$x_{7\to4}$ & 0.11 & 0.10  &  0.12  & 0.13  &  0.15 &  0.14 & 0.15  & 0.15  \\
		\hline
$x_{8\to4}$ & 0.14 & 0.13 & 0.14  & 0.14  &  0.14 &  0.14 &  0.11 & 0.11 \\
		\hline
$x_{9\to4}$ & 0.10 & 0.09 &  0.10  & 0.11 & 0.11  &  0.09 &  0.07 &  0.07 \\
		\hline
	\end{tabular}
	\caption{Evaluation of pixel prediction FNN over adversarial samples from different classes. We show the inference MSE for each situation. The network is trained with good samples from class 4.}\label{Tab:MaskRowsComprehensiveClass4}
\end{table*}

\begin{table*}[htb!]
	\centering
	\begin{tabular}{|l|l|l|l|l|l|l|l|l|}
		\hline
		Sampling percentage& 30\% & 40\% & 50\% & 60\% & 70\% & 80\% & 90\% & 95\% \\
		\hline
	$x_{0\to5}$ & 0.16 & 0.15 & 0.15  & 0.16  & 0.16  &  0.16 & 0.21  & 0.20   \\
		\hline
$x_{1\to5}$& 0.12 & 0.11 & 0.10  & 0.10  & 0.08  &  0.08 &  0.08 & 0.09  \\
		\hline
$x_{2\to5}$ & 0.16 & 0.16 & 0.15  & 0.15 & 0.13   & 0.12  & 0.12  & 0.13  \\
\hline
$x_{3\to5}$ & 0.13 & 0.12 & 0.11  & 0.12 & 0.10  & 0.10 & 0.08 & 0.08 \\
		\hline
$x_{4\to5}$ & 0.13 & 0.13 & 0.13  & 0.14 & 0.14  &  0.15 & 0.14 & 0.14 \\
		\hline
$x_{5}$ & {\color{red}{0.12}} & {\color{red}{0.11}} &{\color{red}{0.10}}  &  {\color{red}{0.10}} &{\color{red}{0.08}}  & {\color{red}{0.08}} & {\color{red}{0.06}}  & {\color{red}{0.05}}  \\
		\hline
$x_{6\to5}$ & 0.14 & 0.15 &  0.15 & 0.14  & 0.14  & 0.15  &  0.15 & 0.14 \\
		\hline
$x_{7\to5}$ & 0.12 & 0.12 &  0.11  & 0.12 &  0.12 &  0.11 &  0.14 &  0.14 \\
		\hline
$x_{8\to5}$ & 0.13 & 0.13 & 0.12  & 0.13  &  0.11 & 0.12 & 0.11  & 0.11\\
		\hline
$x_{9\to5}$ & 0.12 & 0.12 & 0.11   & 0.12 & 0.12  & 0.13  & 0.13  & 0.14   \\
		\hline
	\end{tabular}
	\caption{Evaluation of pixel prediction FNN over adversarial samples from different classes. We show the inference MSE for each situation. The network is trained with good samples from class 5.}\label{Tab:MaskRowsComprehensiveClass5}
\end{table*}

\begin{table*}[htb!]
	\centering
	\begin{tabular}{|l|l|l|l|l|l|l|l|l|}
		\hline
		Sampling percentage& 30\% & 40\% & 50\% & 60\% & 70\% & 80\% & 90\% & 95\% \\
		\hline
$x_{0\to6}$ & 0.16 & 0.16 & 0.16  &  0.16 & 0.15 &  0.15 & 0.12  & 0.11   \\
		\hline
$x_{1\to6}$ & 0.11 & 0.11 &  0.10 &  0.10 &  0.10&  0.10 &  0.10 &  0.07 \\
		\hline
$x_{2\to6}$ & 0.14 & 0.14 &  0.15 &  0.14 &  0.13&  0.13 &  0.13 &  0.12 \\
		\hline
$x_{3\to6}$ & 0.15 & 0.15 &  0.16 &  0.14 &  0.12& 0.13  &  0.13 & 0.11\\
		\hline
$x_{4\to6}$ & 0.12 & 0.13 & 0.14  &  0.14 &  0.14& 0.15  & 0.14  & 0.12 \\
		\hline
$x_{5\to6}$ & 0.14 & 0.13 &  0.13 &   0.13&  0.12&  0.12 &  0.12 &  0.11 \\
		\hline
$x_{6}$ & {\color{red}{0.09}} & {\color{red}{0.08}} &  {\color{red}{0.09}} &   {\color{red}{0.09}}& {\color{red}{0.09}}&  {\color{red}{0.09}} &  {\color{red}{0.06}} & {\color{red}{0.05}} \\
		\hline
$x_{7\to6}$ & 0.14 & 0.13 &  0.14 &  0.14 & 0.13 & 0.13  & 0.13  &  0.10 \\
		\hline
$x_{8\to6}$ & 0.14 & 0.13 &  0.14 &   0.13&  0.13& 0.13  &  0.14 &0.11 \\
		\hline
$x_{9\to6}$ & 0.13 & 0.13 &   0.14&  0.14 &  0.14&  0.14 &  0.14 &  0.12 \\
		\hline
	\end{tabular}
	\caption{Evaluation of pixel prediction FNN over adversarial samples from different classes. We show the inference MSE for each situation. The network is trained with good samples from class 6.}\label{Tab:MaskRowsComprehensiveClass6}
\end{table*}

\begin{table*}[htb!]
	\centering
	\begin{tabular}{|l|l|l|l|l|l|l|l|l|}
		\hline
		Sampling percentage& 30\% & 40\% & 50\% & 60\% & 70\% & 80\% & 90\% & 95\% \\
		\hline
	$x_{0\to7}$ & 0.19 & 0.21& 0.21  &  0.22 & 0.18 & 0.18  & 0.13  & 0.17   \\
		\hline
$x_{1\to7}$ & 0.12 & 0.10&  0.09 &  0.11 & 0.08 &  0.08 &  0.06 & 0.07  \\
		\hline
$x_{2\to7}$ & 0.17 & 0.16& 0.15  &  0.17 &  0.12&  0.11 & 0.07  &  0.07 \\
		\hline
$x_{3\to7}$ &  0.16& 0.16&  0.15 & 0.15  & 0.13 & 0.14  & 0.15  & 0.15\\
		\hline
	$x_{4\to7}$ & 0.13 & 0.12& 0.13  & 0.15  &  0.13& 0.14  & 0.13  & 0.14 \\
		\hline
$x_{5\to7}$& 0.15 & 0.15& 0.15  &  0.16 &  0.14& 0.15  &  0.15 &  0.16 \\
		\hline
$x_{6\to7}$ & 0.16 & 0.16&  0.17 &  0.19 &  0.17&  0.17 &  0.16 & 0.16 \\
		\hline
	$x_{7}$ & {\color{red}{0.09}} & {\color{red}{0.07}}& {\color{red}{0.07}} & {\color{red}{0.09}} &  {\color{red}{0.06}}& {\color{red}{0.06}}  & {\color{red}{0.03}}  & {\color{red}{0.04}}  \\
		\hline
$x_{8\to7}$ &0.15  & 0.15&  0.15 &  0.16 & 0.14 & 0.14  & 0.13  &0.13 \\
		\hline
	$x_{9\to7}$ & 0.11 & 0.10&   0.11&  0.12 & 0.11 &  0.12 &  0.12 &  0.13 \\
		\hline
	\end{tabular}
	\caption{Evaluation of pixel prediction FNN over adversarial samples from different classes. We show the inference MSE for each situation. The network is trained with good samples from class 7.}\label{Tab:MaskRowsComprehensiveClass7}
\end{table*}

\begin{table*}
	\centering
	\begin{tabular}{|l|l|l|l|l|l|l|l|l|}
		\hline
		Sampling percentage& 30\% & 40\% & 50\% & 60\% & 70\% & 80\% & 90\% & 95\% \\
		\hline
	$x_{0\to8}$ & 0.16 &0.16  & 0.18  & 0.18  & 0.21  & 0.23  & 0.27  &  0.28  \\
		\hline
		$x_{1\to8}$ & 0.09 & {\color{red}{0.08}} &  {\color{red}{0.08}} & {\color{red}{0.08}}  & {\color{red}{0.07}}  &  {\color{red}{0.05}} & {\color{red}{0.03}}  &  {\color{red}{0.05}} \\
		\hline
		$x_{2\to8}$& 0.13 & 0.12 &0.13   & 0.12  &  0.11 &  0.10 & 0.09  &  0.12 \\
		\hline
	$x_{3\to8}$ & 0.12 & 0.11 & 0.12  & 0.12  &  0.10 & 0.10  & 0.09 & 0.10\\
		\hline
	$x_{4\to8}$ & 0.11 & 0.12 &  0.14 &  0.13 & 0.13  & 0.14  & 0.14  & 0.16 \\
		\hline
	$x_{5\to8}$ &  0.12&  0.12&  0.12 &   0.12&  0.11 & 0.11  & 0.11  &  0.11 \\
		\hline
	$x_{6\to8}$ &  0.13& 0.13 &  0.14 &  0.14 & 0.14  &  0.14 &  0.15 & 0.17 \\
		\hline
	$x_{7\to8}$ &  0.11&  0.11&  0.12 &   0.12&  0.11 & 0.12  &  0.12 &  0.14 \\
		\hline
	$x_{8}$ & {\color{red}{0.09}} & 0.10 &  0.10 & 0.10  &  0.08 & 0.07  & 0.05  & 0.06\\
		\hline
	$x_{9\to8}$ & 0.10 & 0.11 &   0.11& 0.12  &  0.11 &  0.13 &  0.11 &  0.14 \\
		\hline
	\end{tabular}
	\caption{Evaluation of pixel prediction FNN over adversarial samples from different classes. We show the inference MSE for each situation. The network is trained with good samples from class 8.}\label{Tab:MaskRowsComprehensiveClass8}
\end{table*}

\begin{table*}
	\centering
	\begin{tabular}{|l|l|l|l|l|l|l|l|l|}
		\hline
Sampling percentage& 30\% & 40\% & 50\% & 60\% & 70\% & 80\% & 90\% & 95\% \\
\hline
	$x_{0\to9}$ & 0.19 & 0.19 &  0.18 & 0.18  & 0.17  &  0.19 &  0.21 &  0.17  \\
\hline
$x_{1\to9}$ & 0.10 & 0.11 &  0.11 &  0.11 &  0.10 &  0.10 &  0.10 &  0.09 \\
\hline
$x_{2\to9}$ & 0.16 &  0.16&  0.15 & 0.15  &  0.11 &  0.11 & 0.12  & 0.11  \\
\hline
$x_{3\to9}$ & 0.15 & 0.14 &  0.13 &0.12   & 0.11  & 0.11  &  0.12 & 0.11\\
\hline
$x_{4\to9}$ & 0.10 & 0.11 & 0.11  & 0.11  & 0.09  & 0.09  & 0.10  & 0.08 \\
\hline
$x_{5\to9}$ & 0.14 & 0.14 &  0.13 &  0.13 &  0.14 & 0.13  &  0.13 &   0.13\\
\hline
$x_{6to9}$ & 0.15 & 0.16 &  0.15 &  0.16 & 0.12  & 0.14  &  0.14 & 0.11 \\
\hline
$x_{7\to9}$& 0.09 & 0.10 &  0.10 &  0.10 &  0.09 &  0.11 &  0.12 &  0.11 \\
\hline
$x_{8\to9}$ & 0.14 & 0.13 & 0.13 & 0.12 & 0.12 & 0.12 & 0.12 & 0.10\\
\hline
$x_{9}$ &{\color{red}{0.08}} & {\color{red}{0.09}} &  {\color{red}{0.08}}&  {\color{red}{0.08}} & {\color{red}{0.07}} & {\color{red}{0.07}} & {\color{red}{0.07}} & {\color{red}{0.05}} \\
		\hline
	\end{tabular}
	\caption{Evaluation of pixel prediction FNN over adversarial samples from different classes. We show the inference MSE for each situation. The network is trained with good samples from class 9.}\label{Tab:MaskRowsComprehensiveClass9}
\end{table*}

\begin{figure}
	\begin{subfigure}[b]{0.48\linewidth}
		\includegraphics[width=0.95\linewidth]{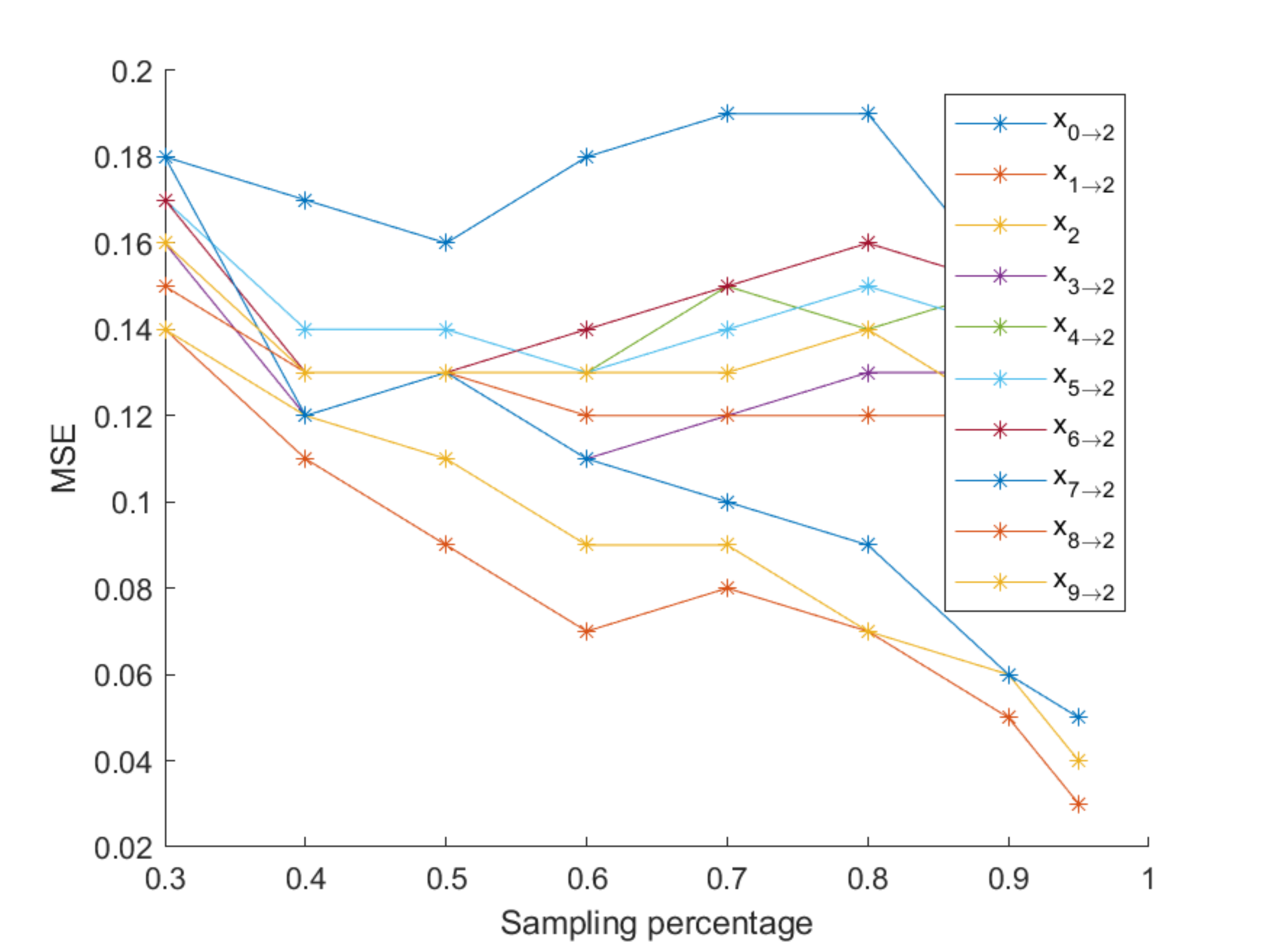}
		\caption{$FNN_2$ is used.}\label{Fig:CrTableVisualization2}
	\end{subfigure}
	\begin{subfigure}[b]{0.48\linewidth}
		\includegraphics[width=0.95\linewidth]{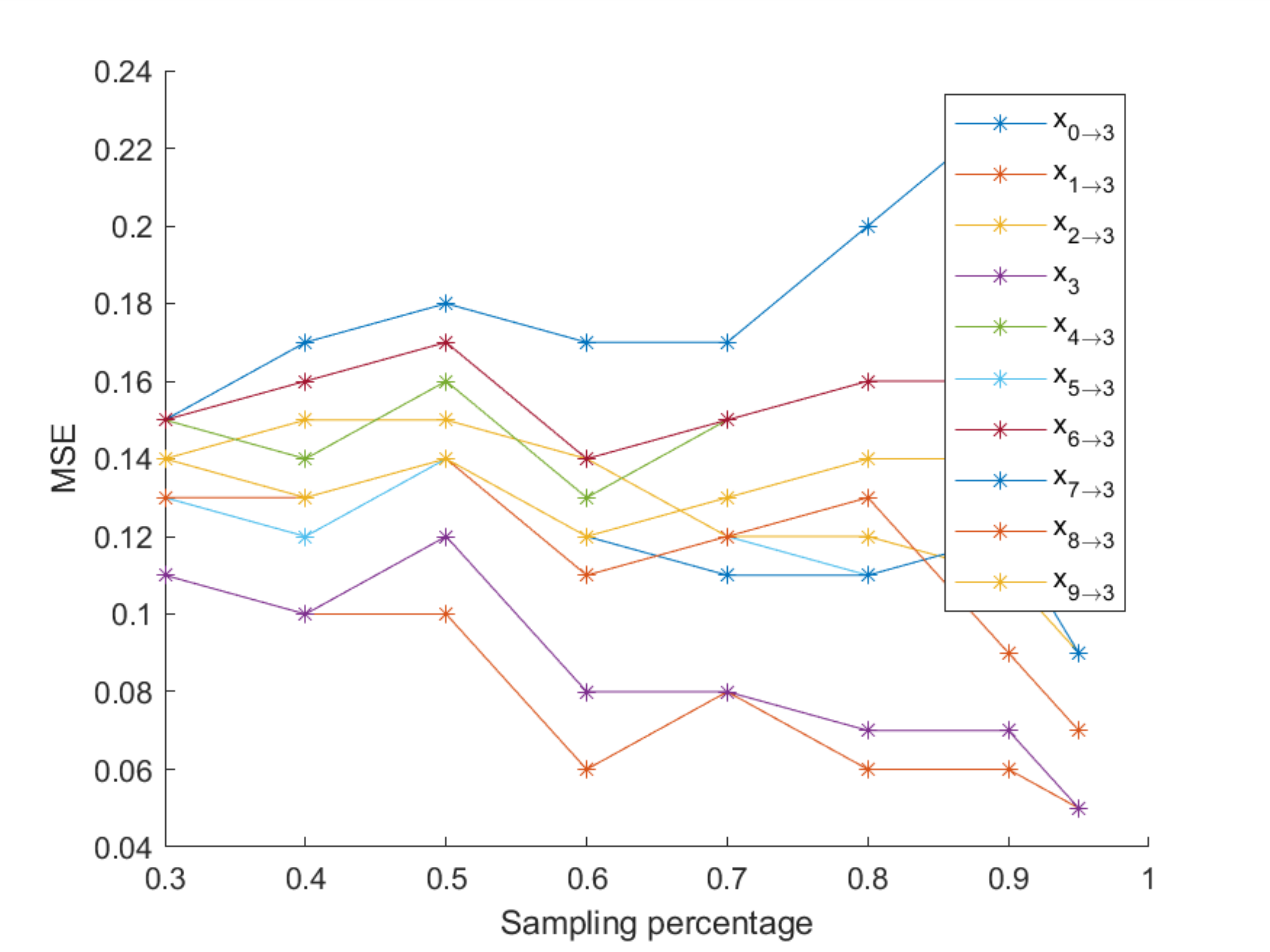}
		\caption{$FNN_3$ is used.}\label{Fig:CrTableVisualization3}
	\end{subfigure}
	\caption{Effects of sampling percentages.}
\end{figure}

\begin{figure}
	\begin{subfigure}[b]{0.48\linewidth}
		\includegraphics[width=0.95\linewidth]{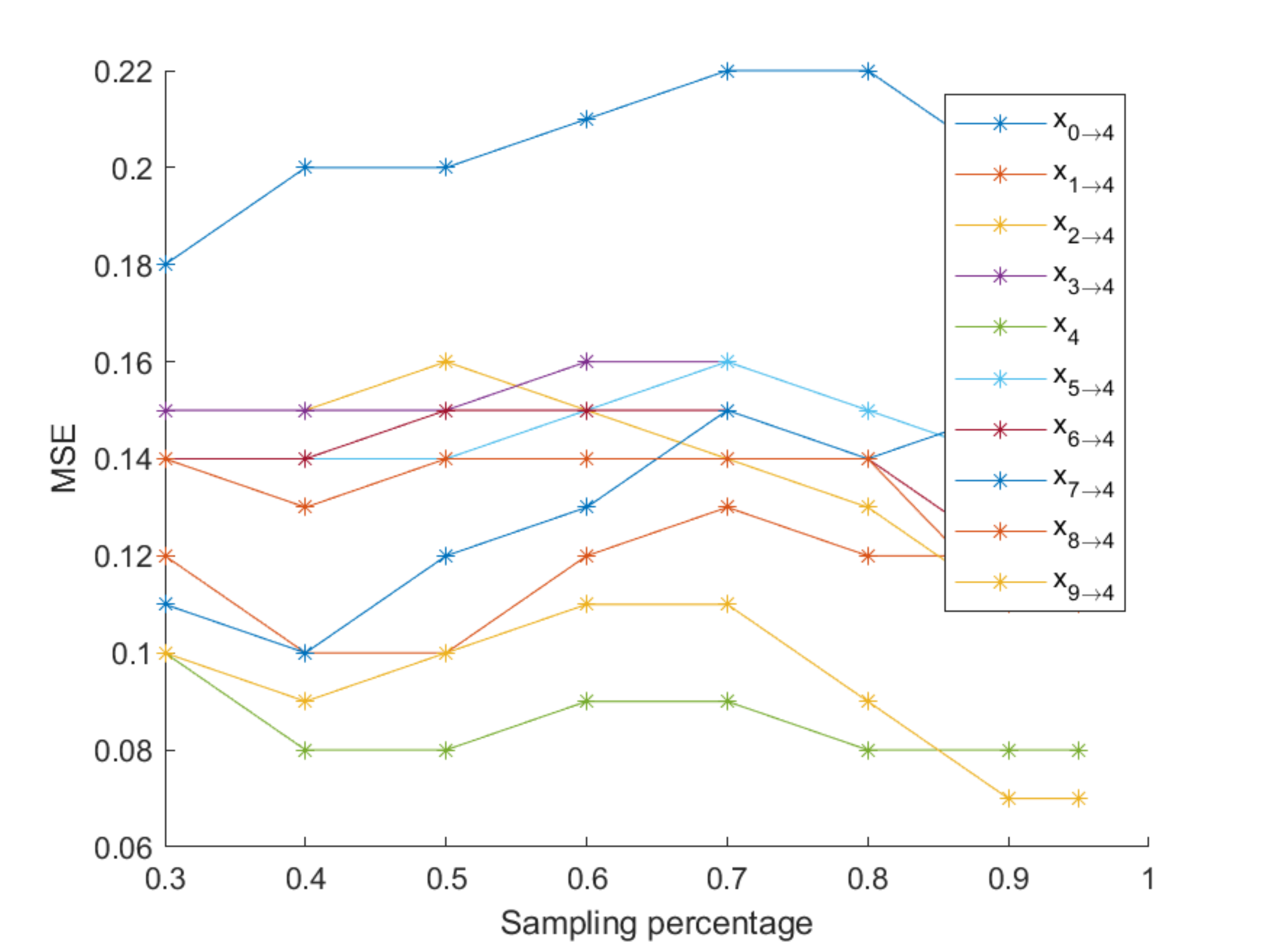}
		\caption{$FNN_4$ is used.}\label{Fig:CrTableVisualization4}
	\end{subfigure}
	\begin{subfigure}[b]{0.48\linewidth}
		\includegraphics[width=0.95\linewidth]{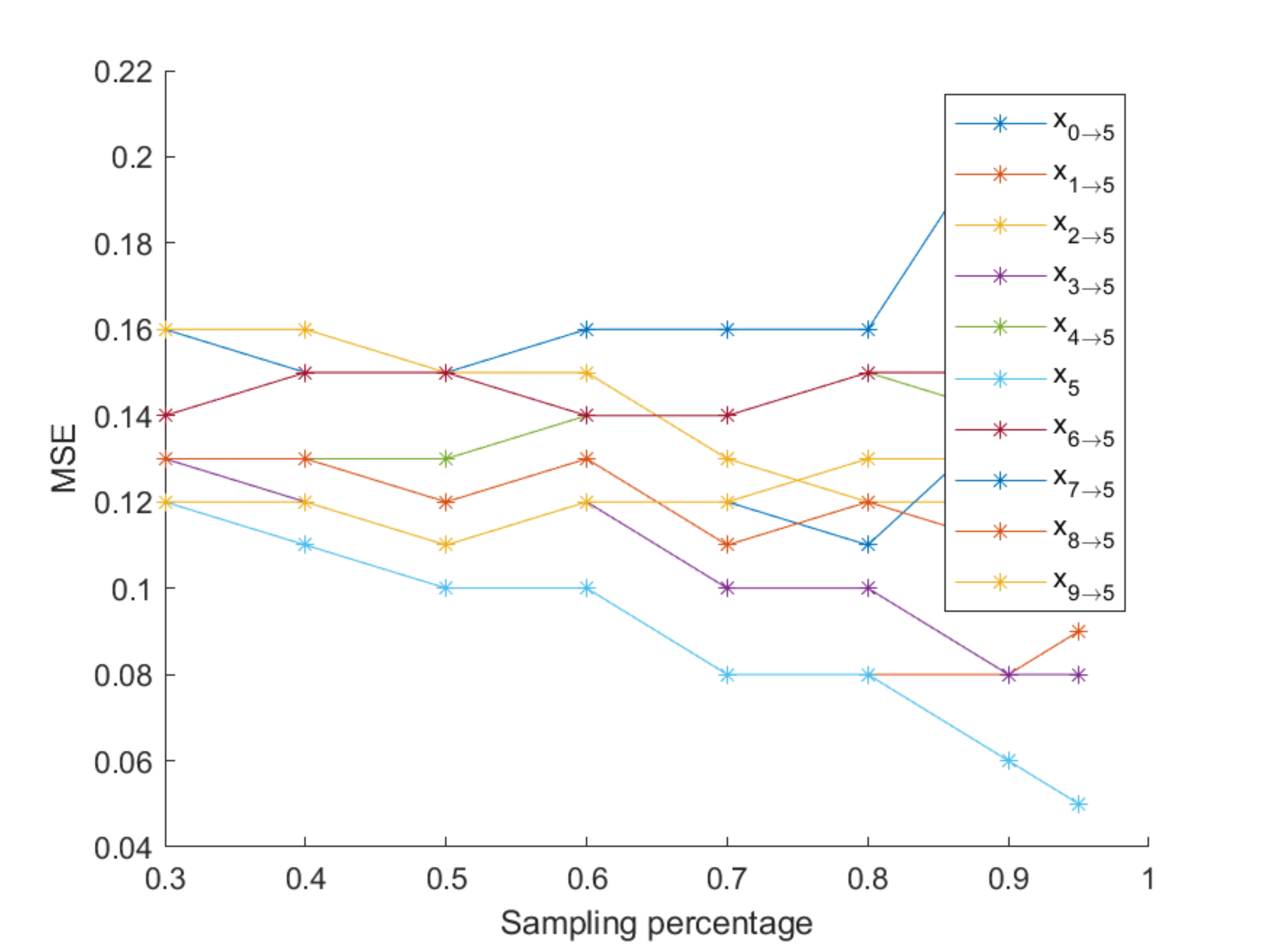}
		\caption{$FNN_5$ is used.}\label{Fig:CrTableVisualization5}
	\end{subfigure}
	\caption{Effects of sampling percentages.}
\end{figure}

\begin{figure}
	\begin{subfigure}[b]{0.48\linewidth}
		\includegraphics[width=0.95\linewidth]{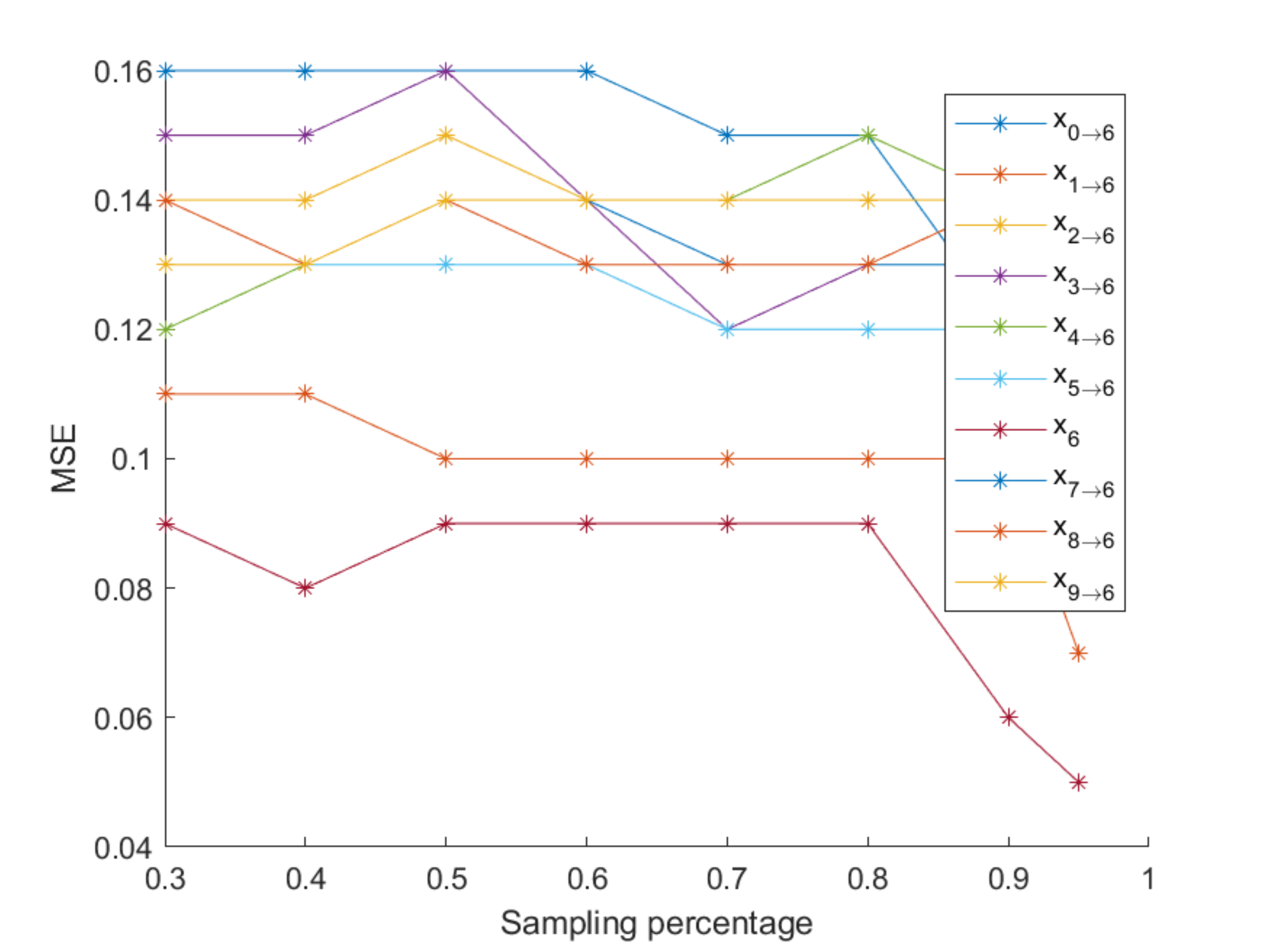}
		\caption{$FNN_6$ is used.}\label{Fig:CrTableVisualization6}
	\end{subfigure}
	\begin{subfigure}[b]{0.48\linewidth}
		\includegraphics[width=0.95\linewidth]{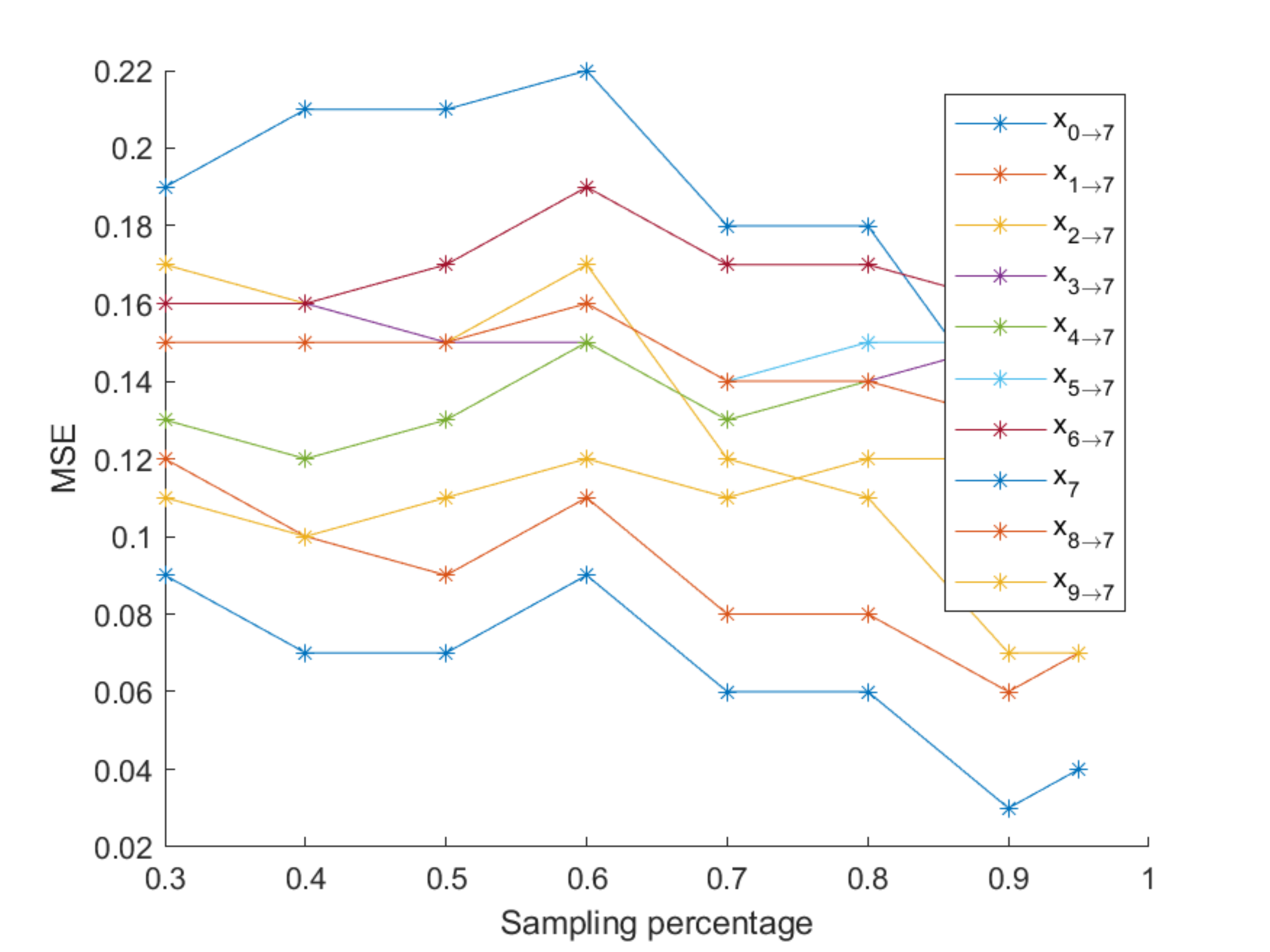}
		\caption{$FNN_7$ is used.}\label{Fig:CrTableVisualization7}
	\end{subfigure}
	\caption{Effects of sampling percentages.}
\end{figure}

\begin{figure}
	\begin{subfigure}[b]{0.48\linewidth}
		\includegraphics[width=0.95\linewidth]{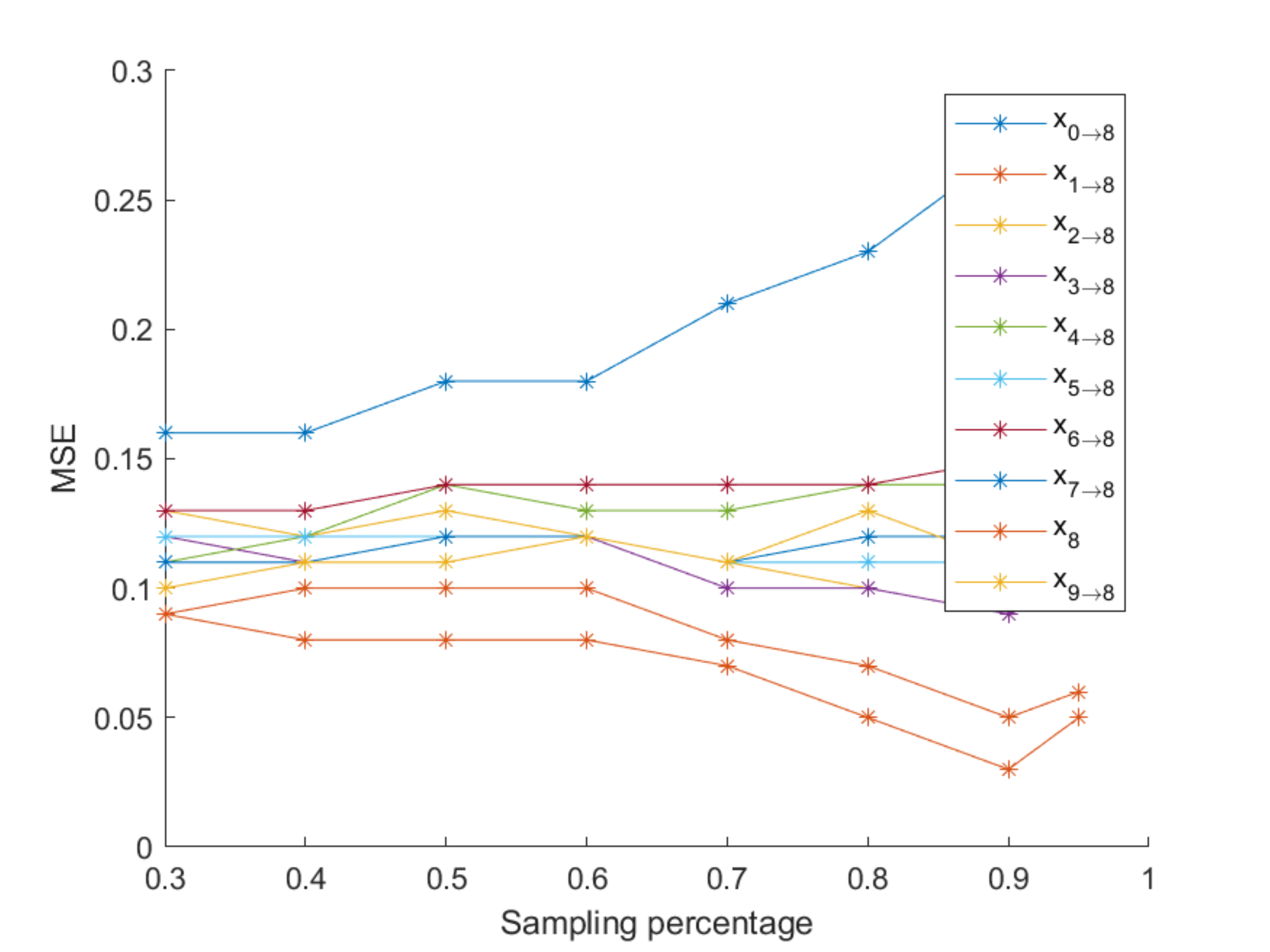}
		\caption{$FNN_8$ is used.}\label{Fig:CrTableVisualization8}
	\end{subfigure}
	\begin{subfigure}[b]{0.48\linewidth}
		\includegraphics[width=0.95\linewidth]{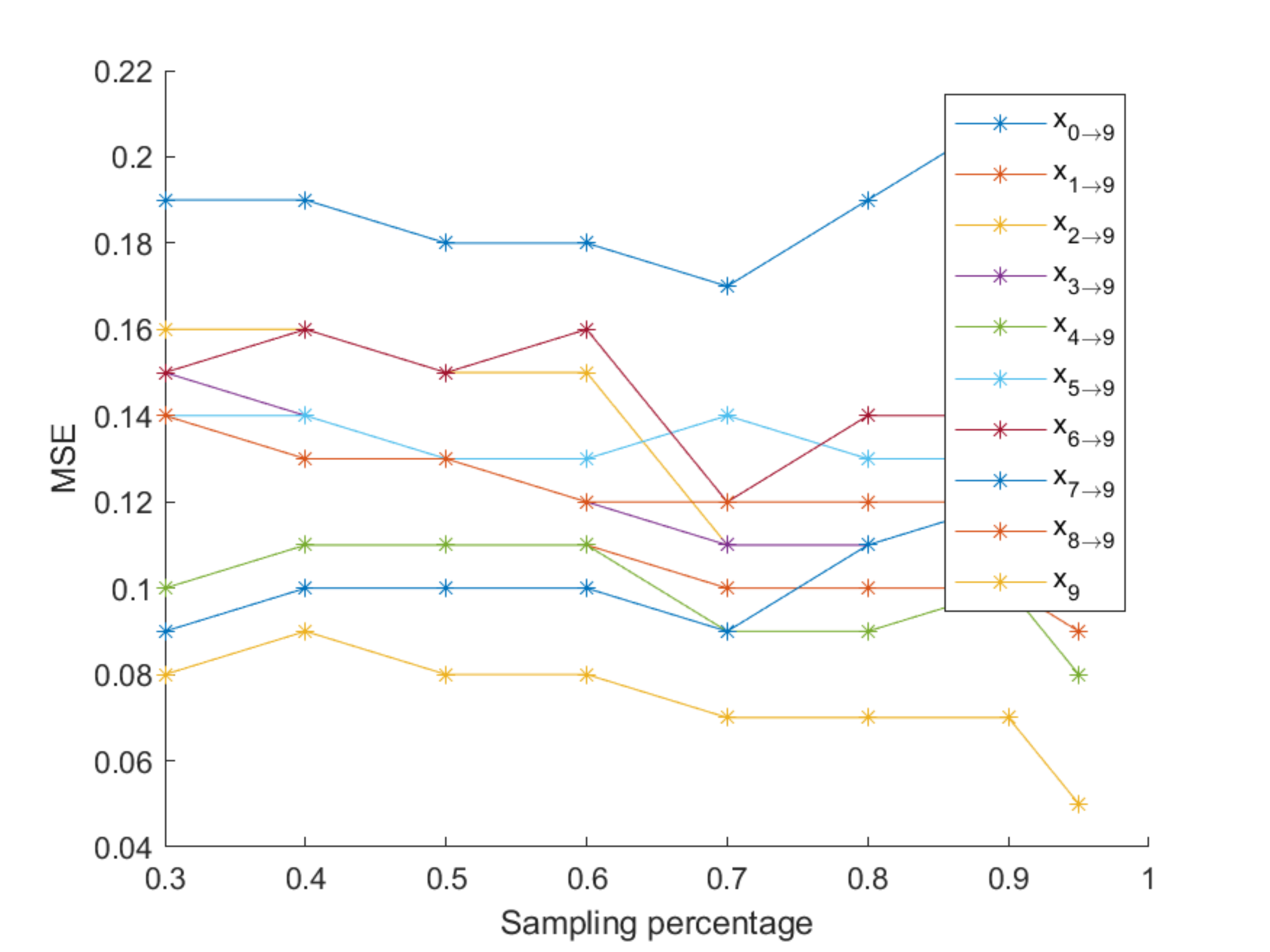}
		\caption{$FNN_9$ is used.}\label{Fig:CrTableVisualization9}
	\end{subfigure}
	\caption{Effects of sampling percentages.}
\end{figure}


\clearpage

\section{Distributions of Mean Square Errors in Pixel Prediction Method}\label{AppSec:MseDistributionPixelPrediction}

\subsection{Central Square Sampling Pattern}\label{AppSec:MseDistributionCentralSquare}

\begin{figure}[htb!]
	\centering
	\begin{subfigure}[b]{0.45\linewidth}
		\includegraphics[width=0.95\linewidth]{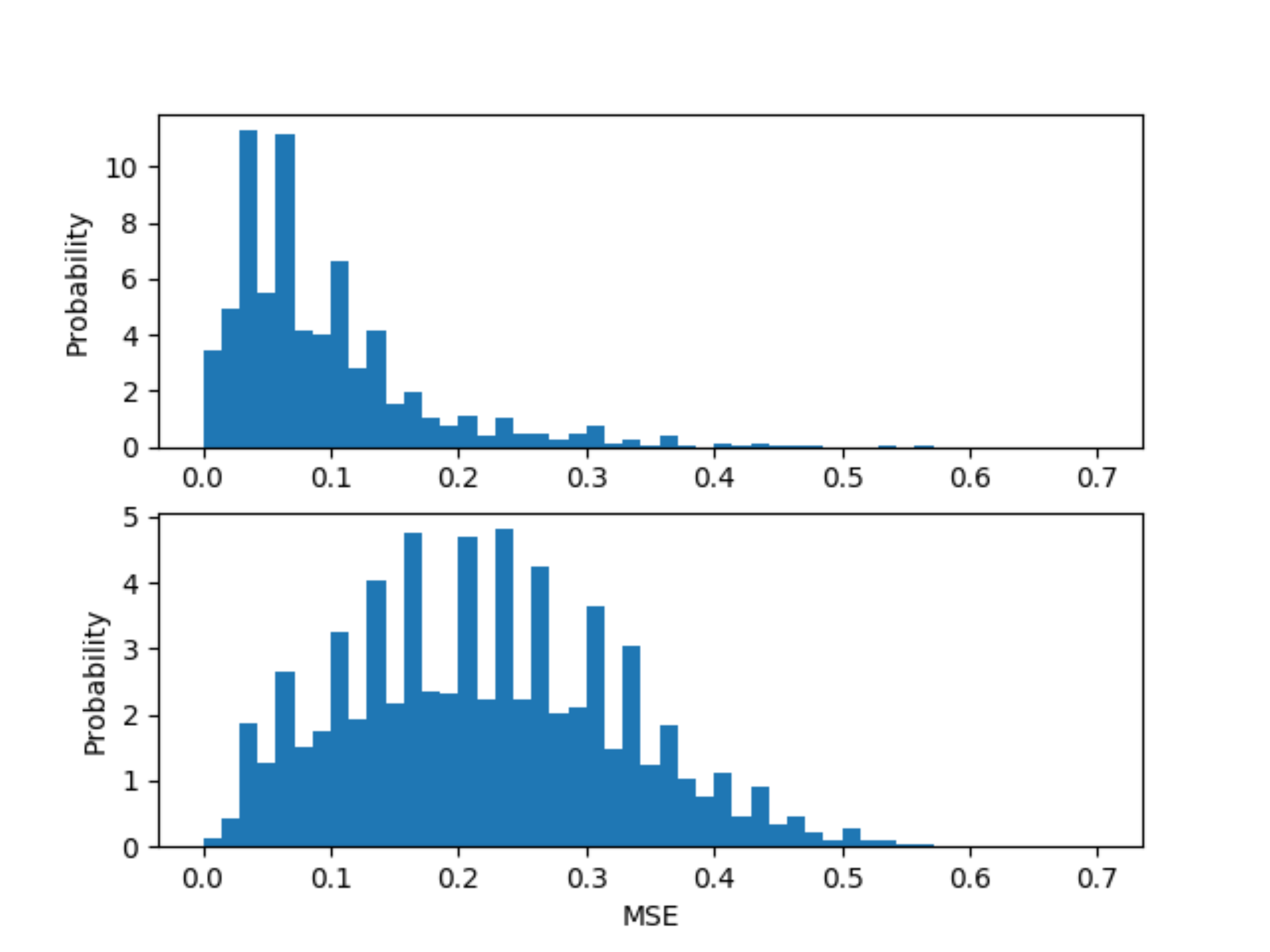}
		\caption{MSE distributions when $FNN_2$ is used.}\label{Fig:CS_MseDistributionPixelPredictor2}
	\end{subfigure}
	\begin{subfigure}[b]{0.45\linewidth}
		\includegraphics[width=0.95\linewidth]{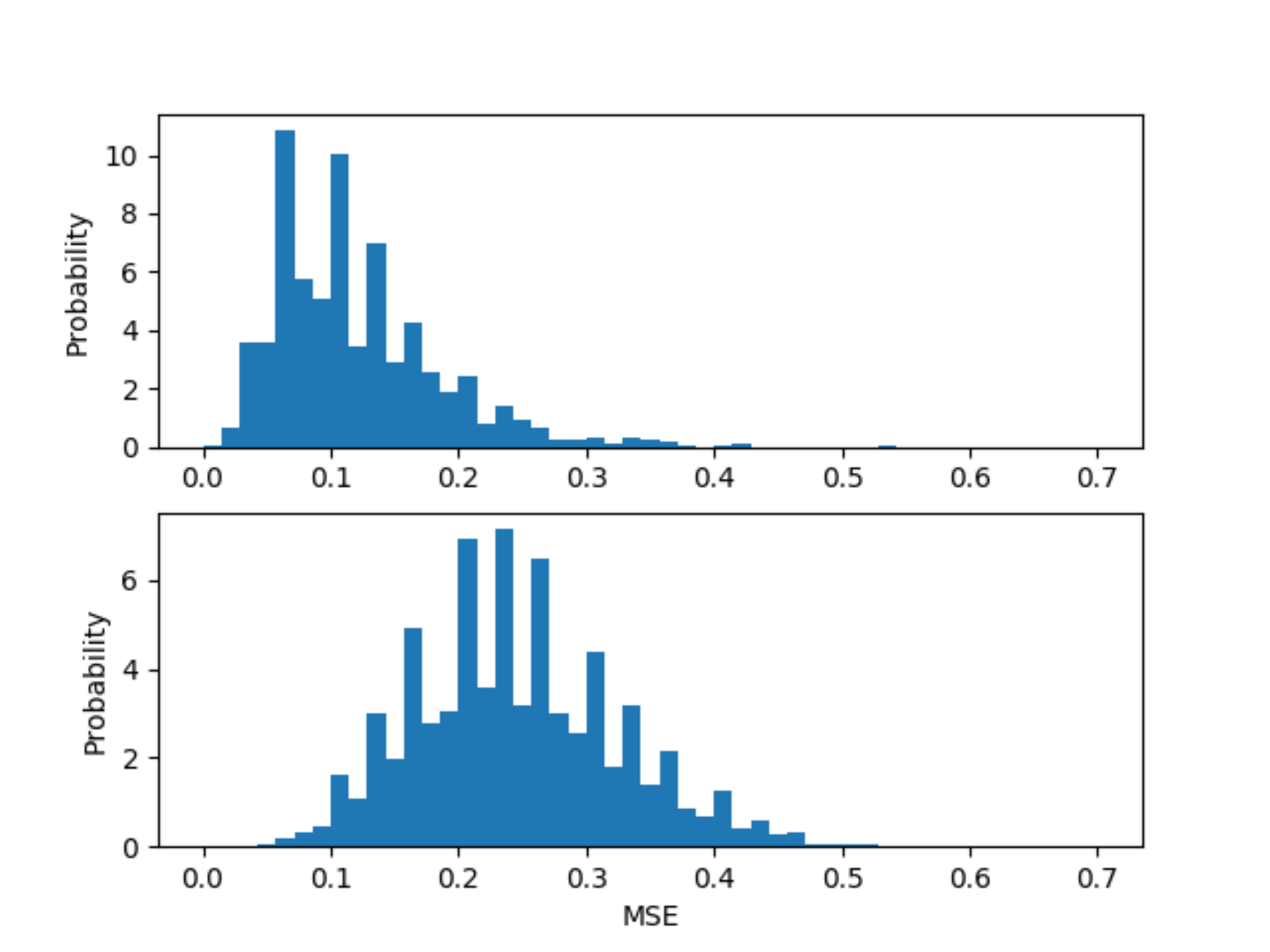}
		\caption{MSE distributions when $FNN_3$ is used.}\label{Fig:CS_MseDistributionPixelPredictor3}
	\end{subfigure}
	\caption{ Top row: benign samples. Bottom row: adversarial samples.}
\end{figure}

\begin{figure}[htb!]
	\centering
	\begin{subfigure}[b]{0.45\linewidth}
		\includegraphics[width=0.95\linewidth]{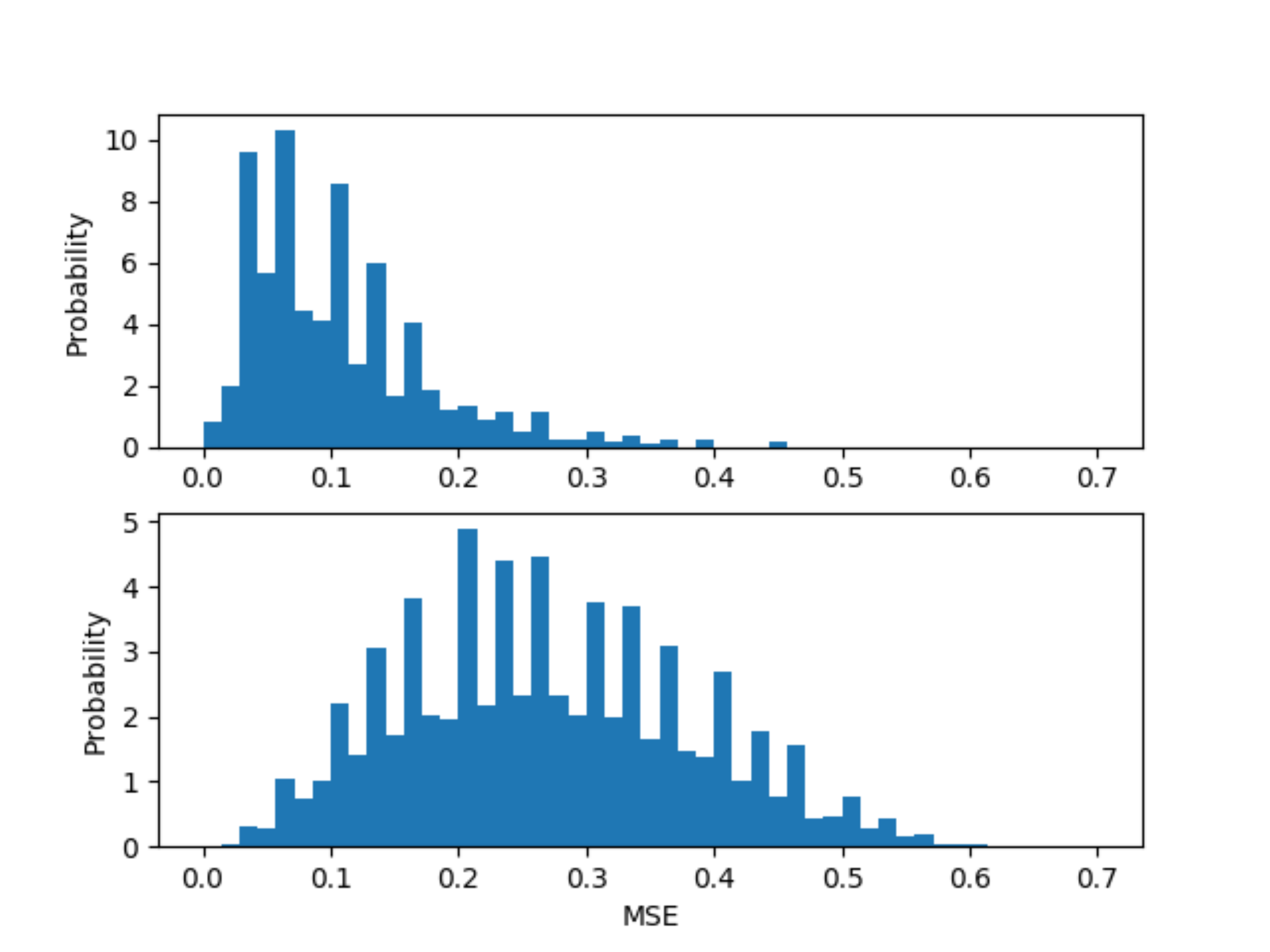}
		\caption{MSE distributions when $FNN_4$ is used.}\label{Fig:CS_MseDistributionPixelPredictor4}
	\end{subfigure}
	\begin{subfigure}[b]{0.45\linewidth}
		\includegraphics[width=0.95\linewidth]{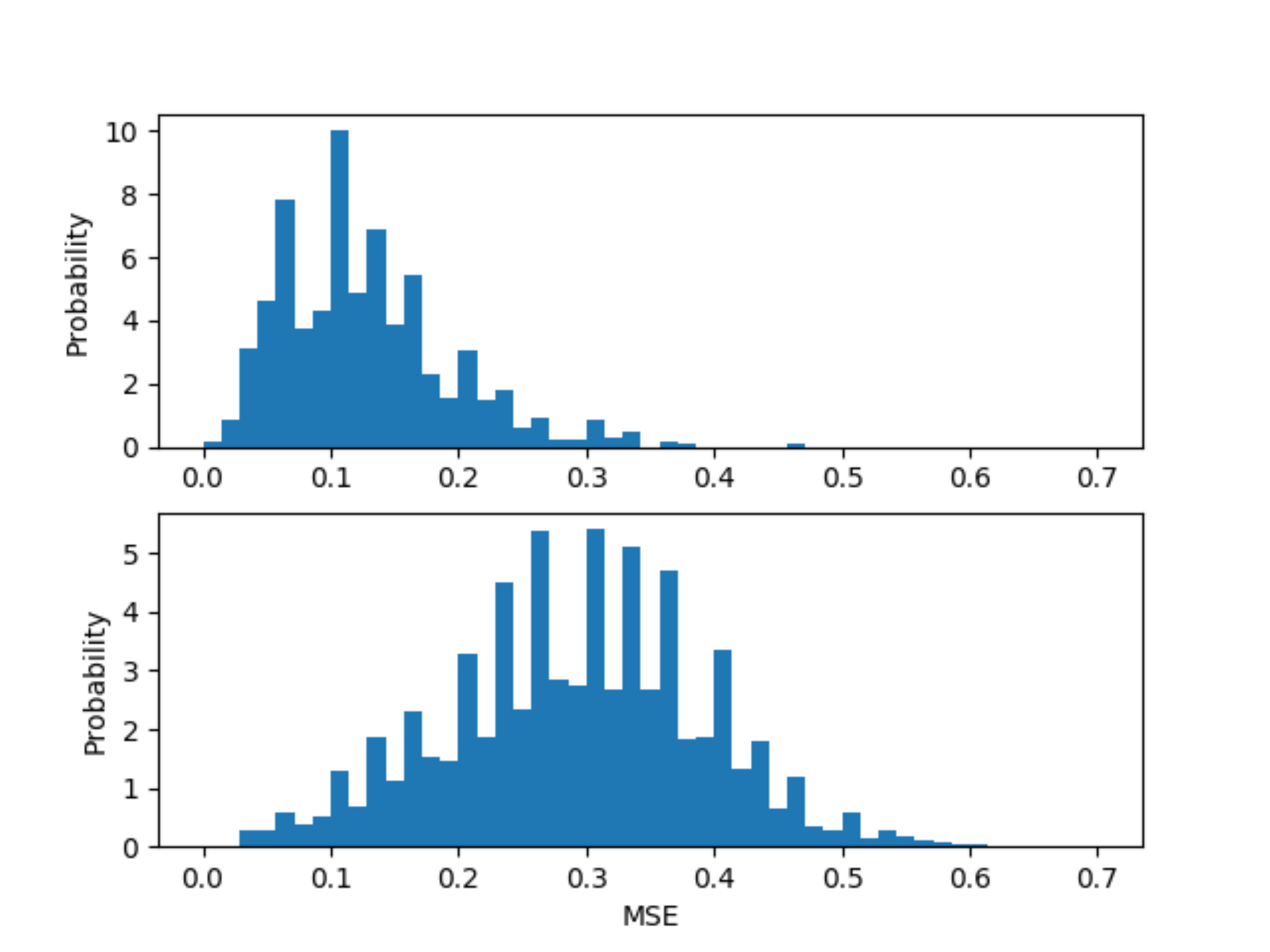}
		\caption{MSE distributions when $FNN_5$ is used.}\label{Fig:CS_MseDistributionPixelPredictor5}
	\end{subfigure}
	\caption{ Top row: benign samples. Bottom row: adversarial samples.}
\end{figure}

\begin{figure}[htb!]
	\centering
	\begin{subfigure}[b]{0.45\linewidth}
		\includegraphics[width=0.95\linewidth]{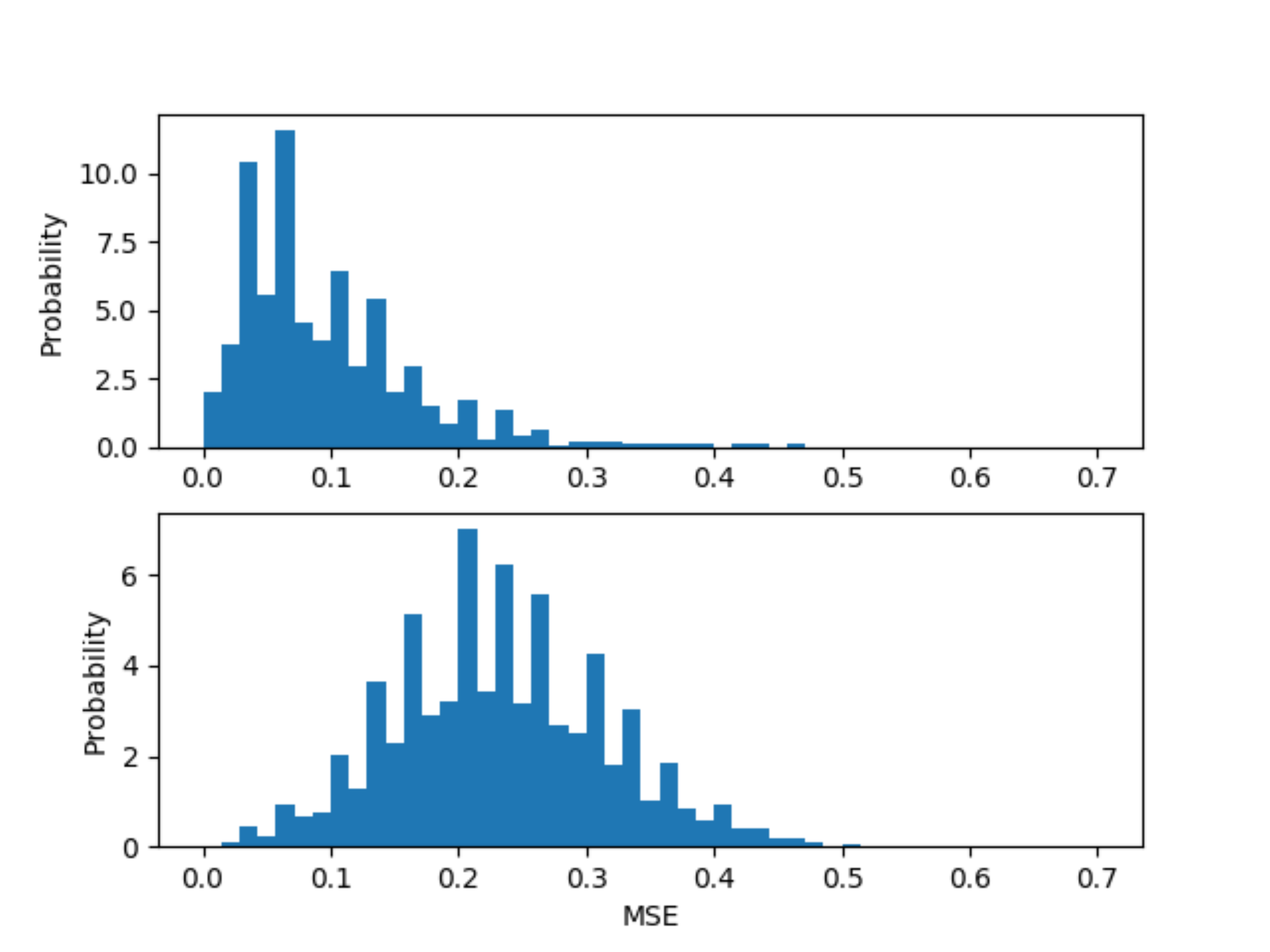}
		\caption{MSE distributions when $FNN_6$ is used.}\label{Fig:CS_MseDistributionPixelPredictor6}
	\end{subfigure}
	\begin{subfigure}[b]{0.45\linewidth}
		\includegraphics[width=0.95\linewidth]{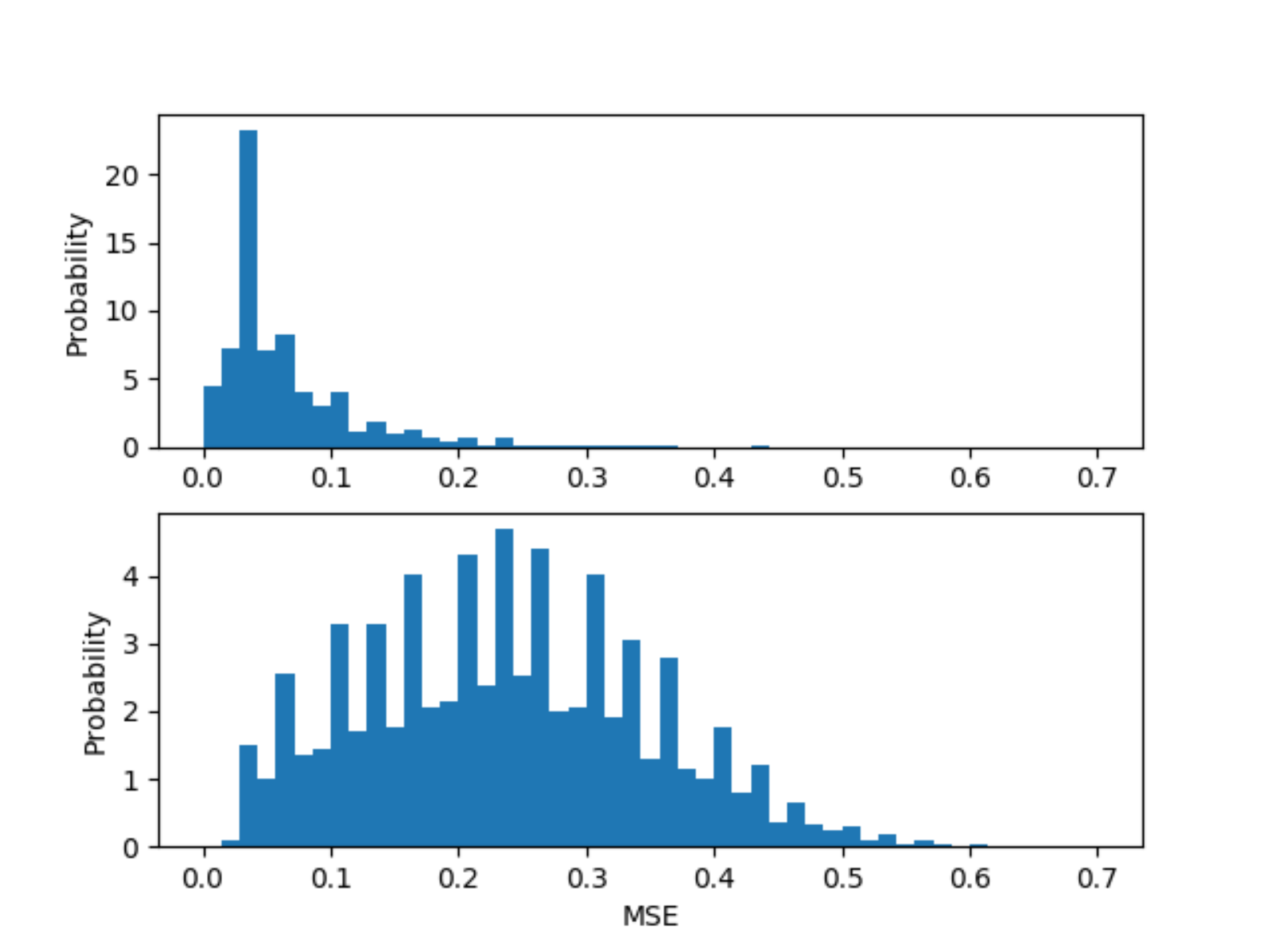}
		\caption{MSE distributions when $FNN_7$ is used.}\label{Fig:CS_MseDistributionPixelPredictor7}
	\end{subfigure}
	\caption{ Top row: benign samples. Bottom row: adversarial samples.}
\end{figure}

\begin{figure}[htb!]
	\centering
	\begin{subfigure}[b]{0.45\linewidth}
		\includegraphics[width=0.95\linewidth]{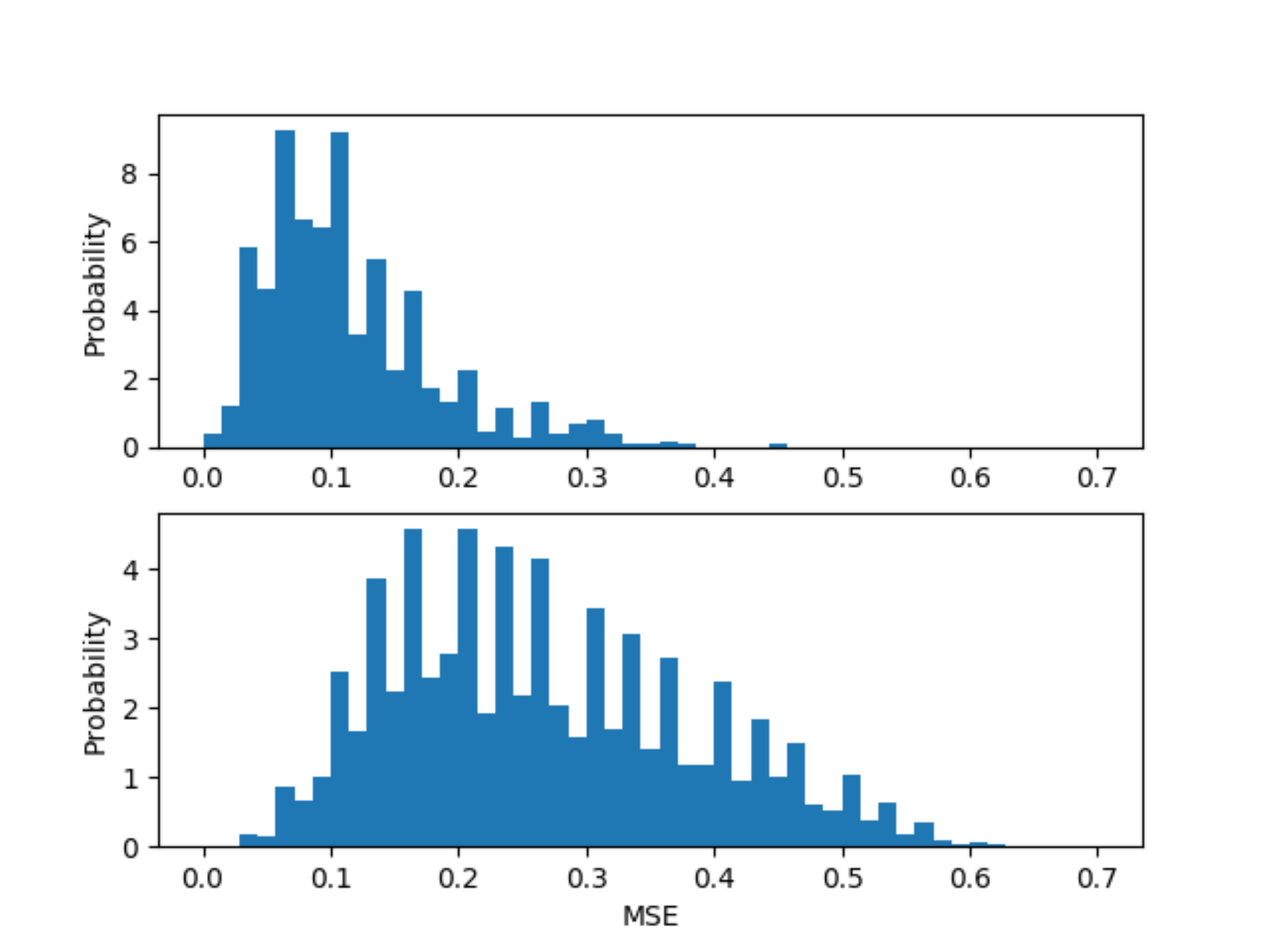}
		\caption{MSE distributions when $FNN_8$ is used.}\label{Fig:CS_MseDistributionPixelPredictor8}
	\end{subfigure}
	\begin{subfigure}[b]{0.45\linewidth}
		\includegraphics[width=0.95\linewidth]{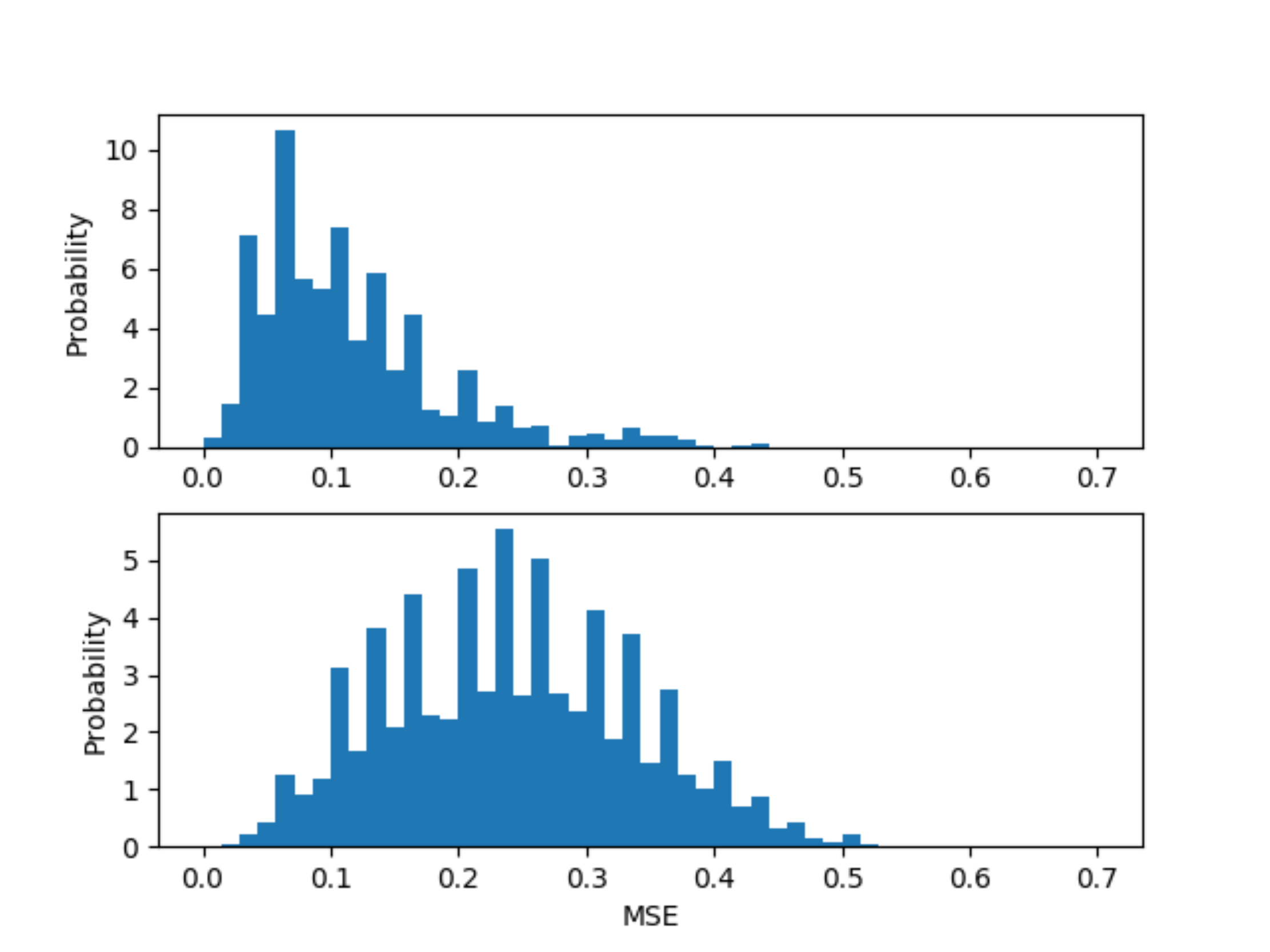}
		\caption{MSE distributions when $FNN_9$ is used.}\label{Fig:CS_MseDistributionPixelPredictor9}
	\end{subfigure}
	\caption{ Top row: benign samples. Bottom row: adversarial samples.}
\end{figure}

\clearpage
\subsection{Central Row Sampling Pattern}\label{AppSec:MseDistributionCentralRows}

\begin{figure}[htb!]
	\centering
	\begin{subfigure}[b]{0.45\linewidth}
		\includegraphics[width=0.95\linewidth]{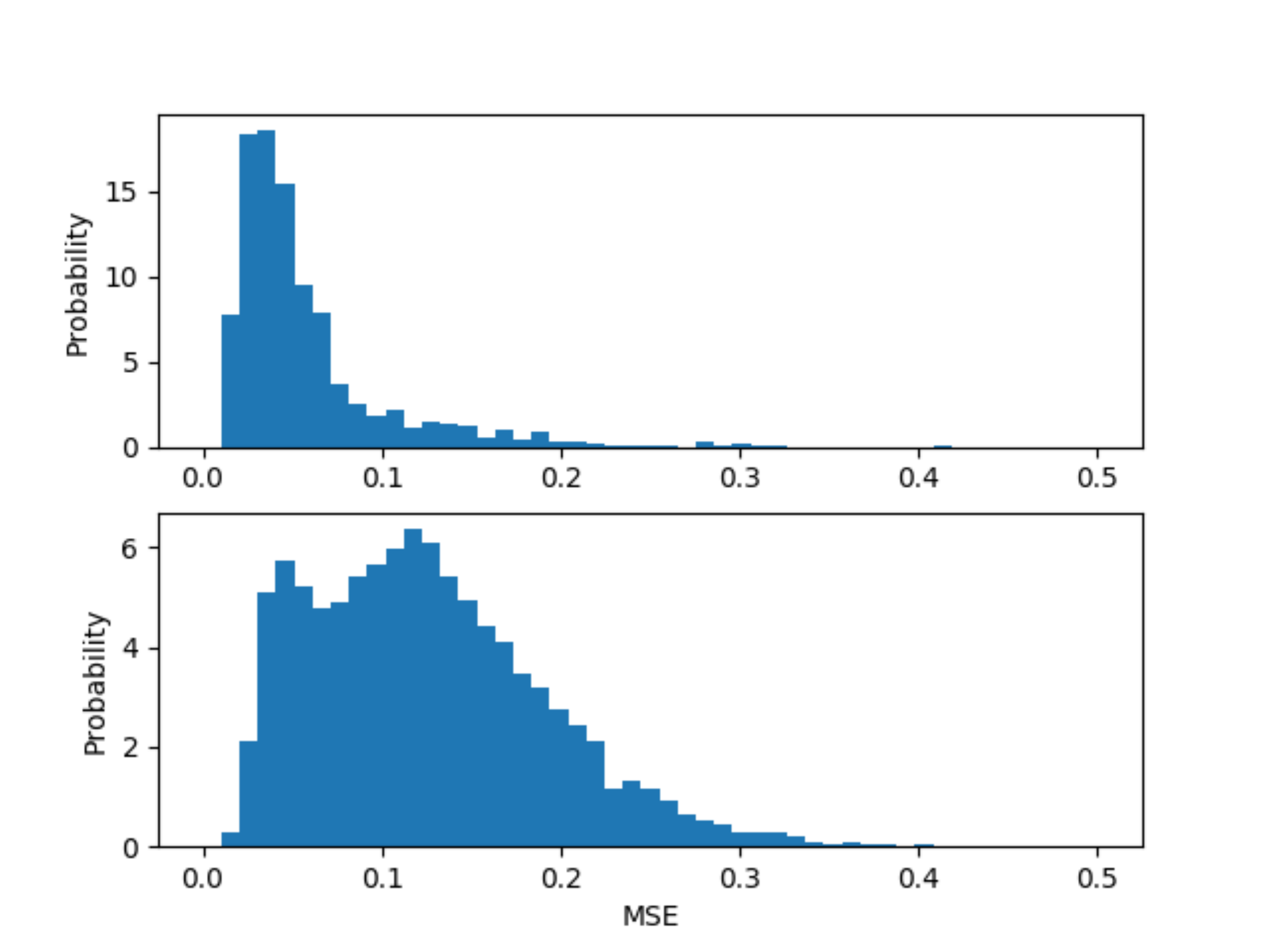}
		\caption{MSE distributions when $FNN_2$ is used.}\label{Fig:CR_MseDistributionPixelPredictor2}
	\end{subfigure}
	\begin{subfigure}[b]{0.45\linewidth}
		\includegraphics[width=0.95\linewidth]{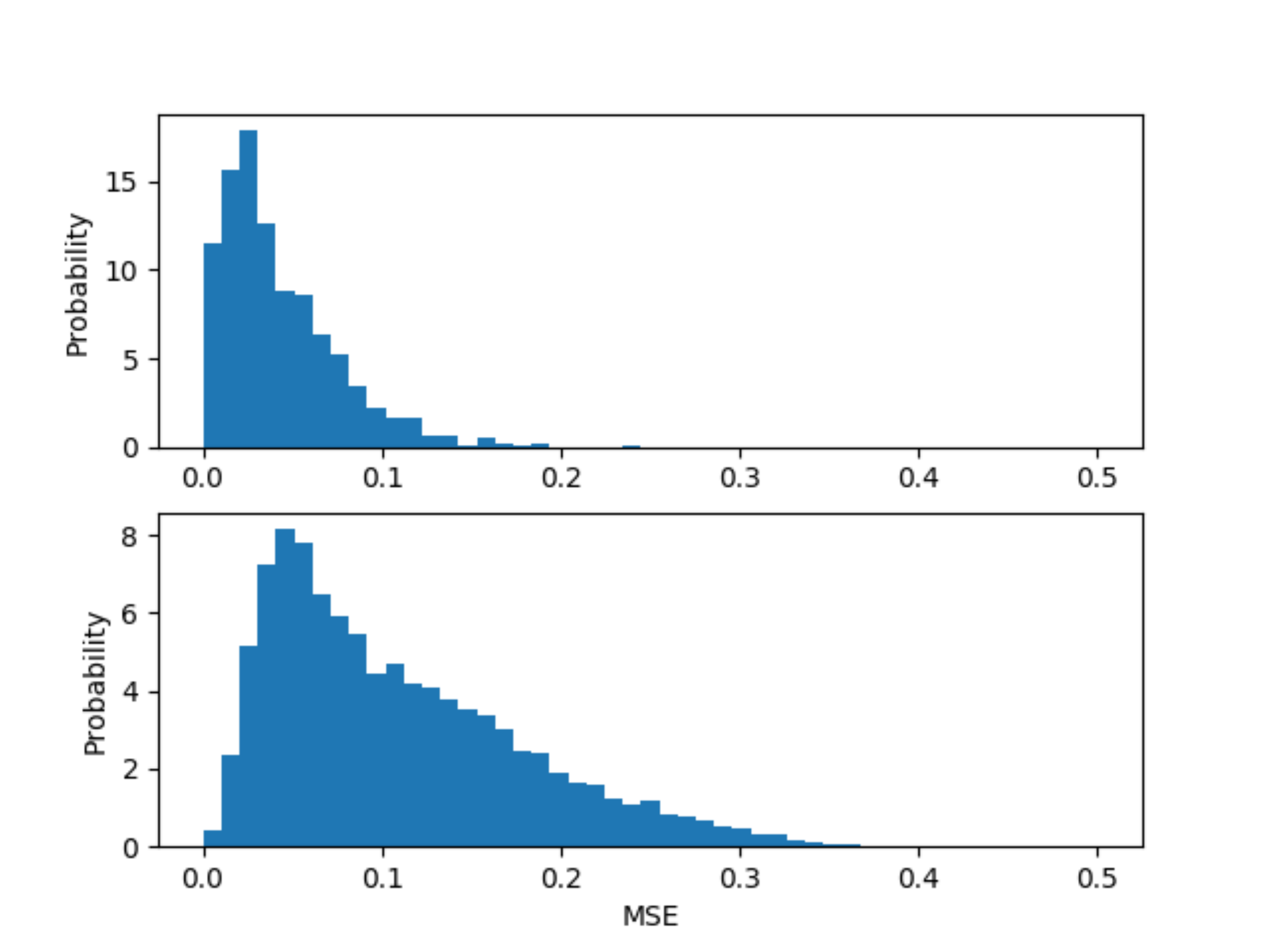}
		\caption{MSE distributions when $FNN_3$ is used.}\label{Fig:CR_MseDistributionPixelPredictor3}
	\end{subfigure}
	\caption{Top row: benign samples. Bottom row: adversarial samples.}
\end{figure}

\begin{figure}[htb!]
	\centering
	\begin{subfigure}[b]{0.45\linewidth}
		\includegraphics[width=0.95\linewidth]{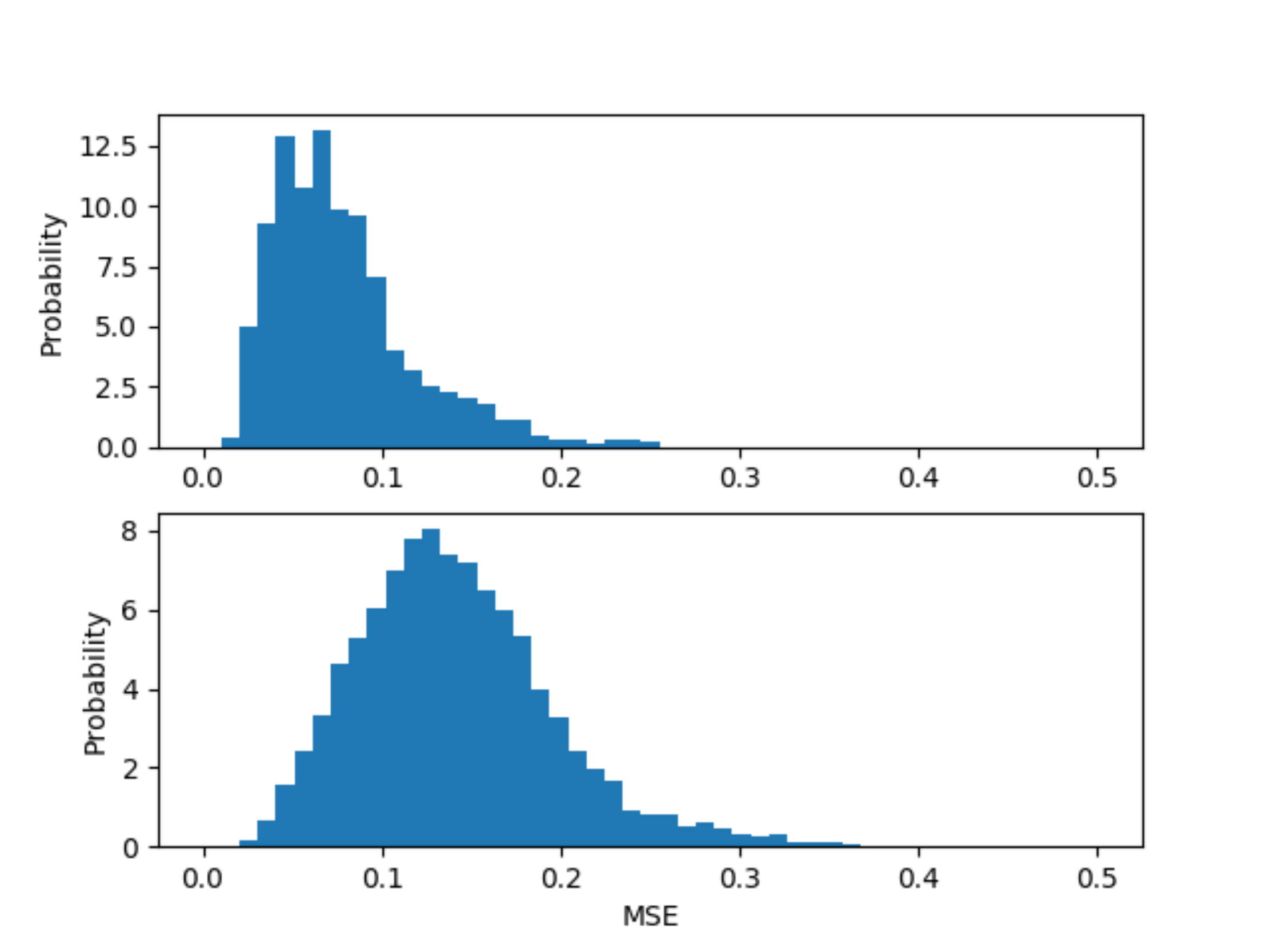}
		\caption{MSE distributions when $FNN_4$ is used.}\label{Fig:CR_MseDistributionPixelPredictor4}
	\end{subfigure}
	\begin{subfigure}[b]{0.45\linewidth}
		\includegraphics[width=0.95\linewidth]{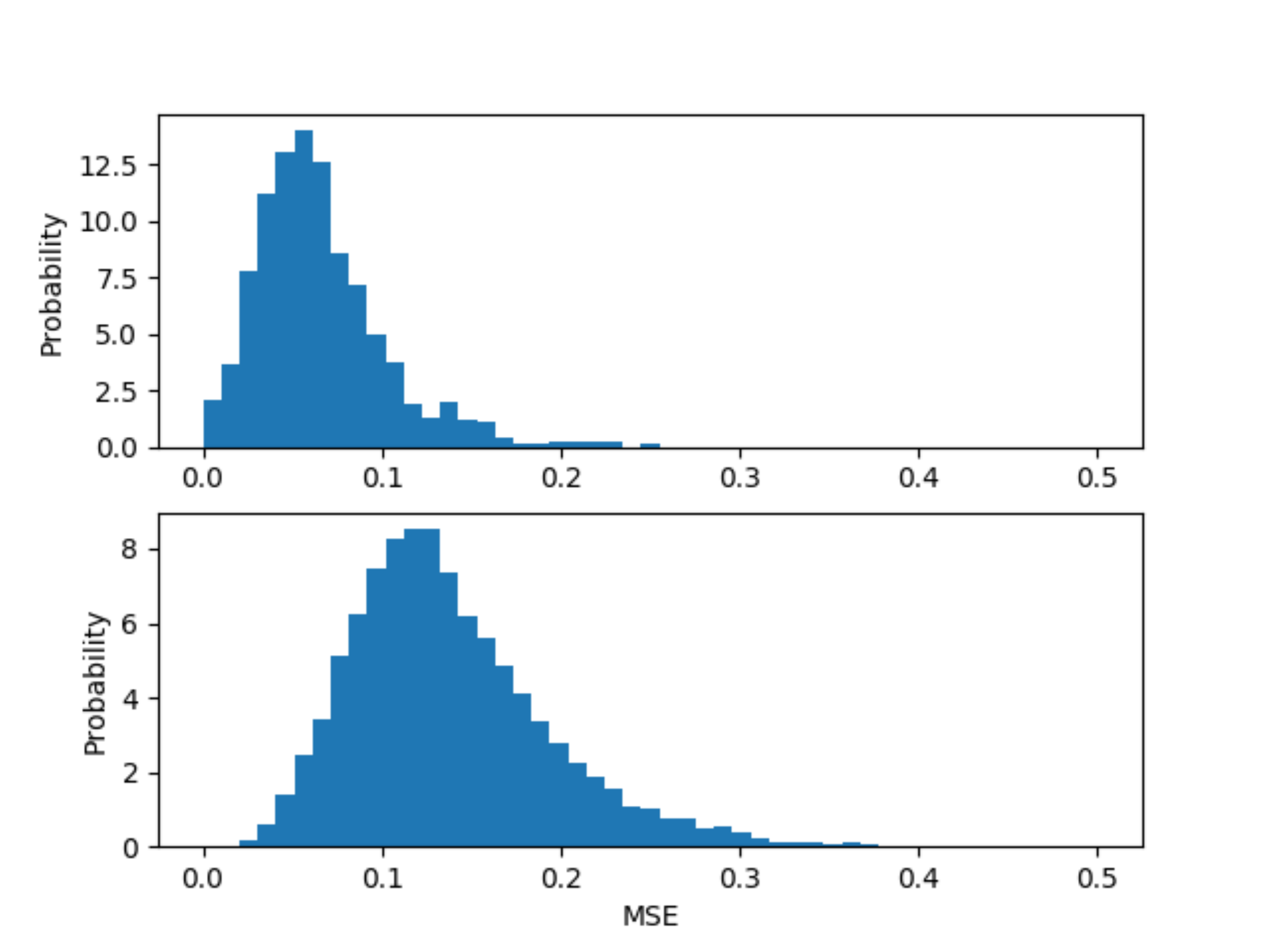}
		\caption{MSE distributions when $FNN_5$ is used.}\label{Fig:CR_MseDistributionPixelPredictor5}
	\end{subfigure}
	\caption{Top row: benign samples. Bottom row: adversarial samples.}
\end{figure}

\begin{figure}[htb!]
	\centering
	\begin{subfigure}[b]{0.45\linewidth}
		\includegraphics[width=0.95\linewidth]{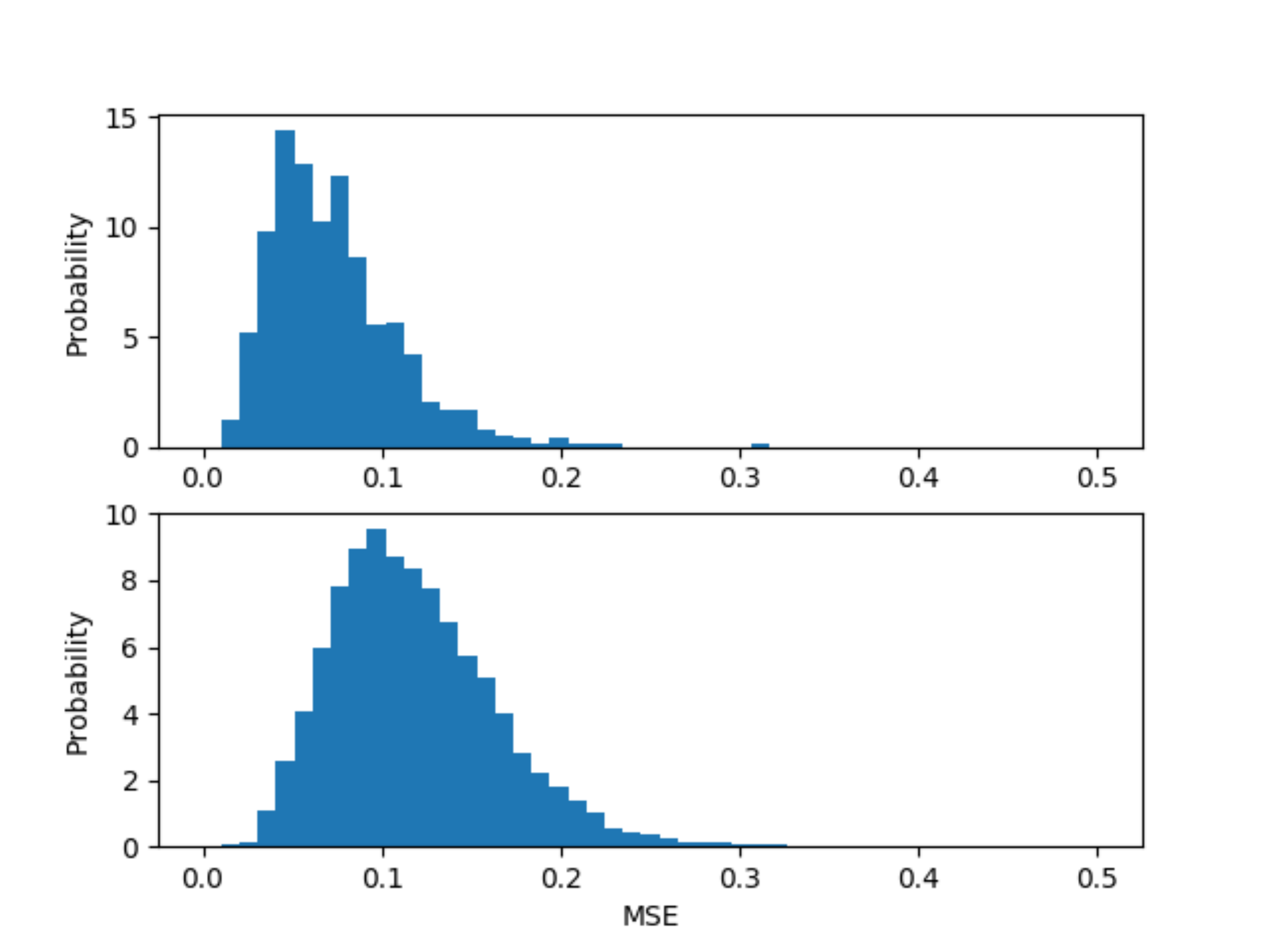}
		\caption{MSE distributions when $FNN_6$ is used.}\label{Fig:CR_MseDistributionPixelPredictor6}
	\end{subfigure}
	\begin{subfigure}[b]{0.45\linewidth}
		\includegraphics[width=0.95\linewidth]{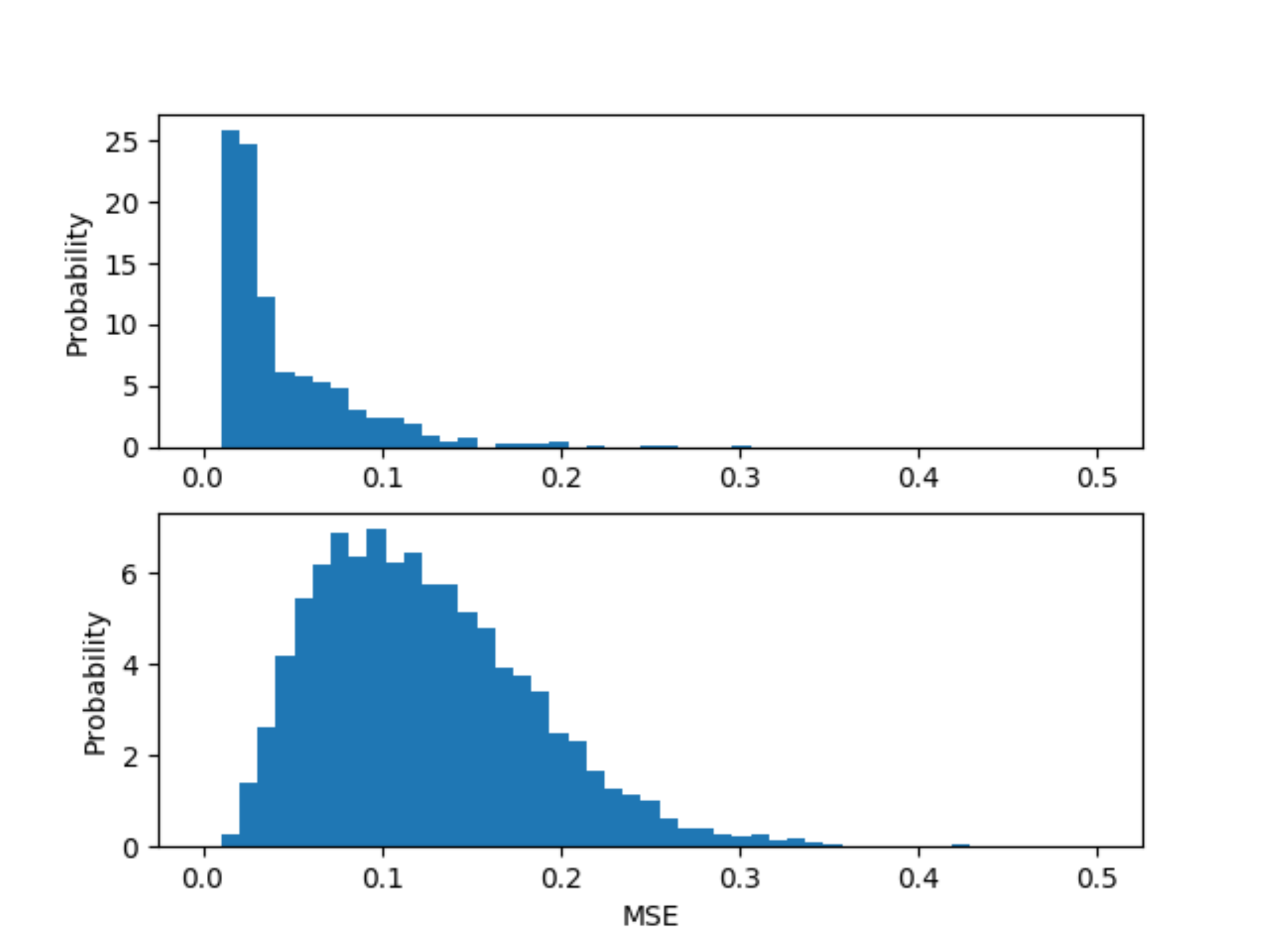}
		\caption{MSE distributions when $FNN_7$ is used.}\label{Fig:CR_MseDistributionPixelPredictor7}
	\end{subfigure}
	\caption{Top row: benign samples. Bottom row: adversarial samples.}
\end{figure}

\begin{figure}[htb!]
	\centering
	\begin{subfigure}[b]{0.45\linewidth}
		\includegraphics[width=0.95\linewidth]{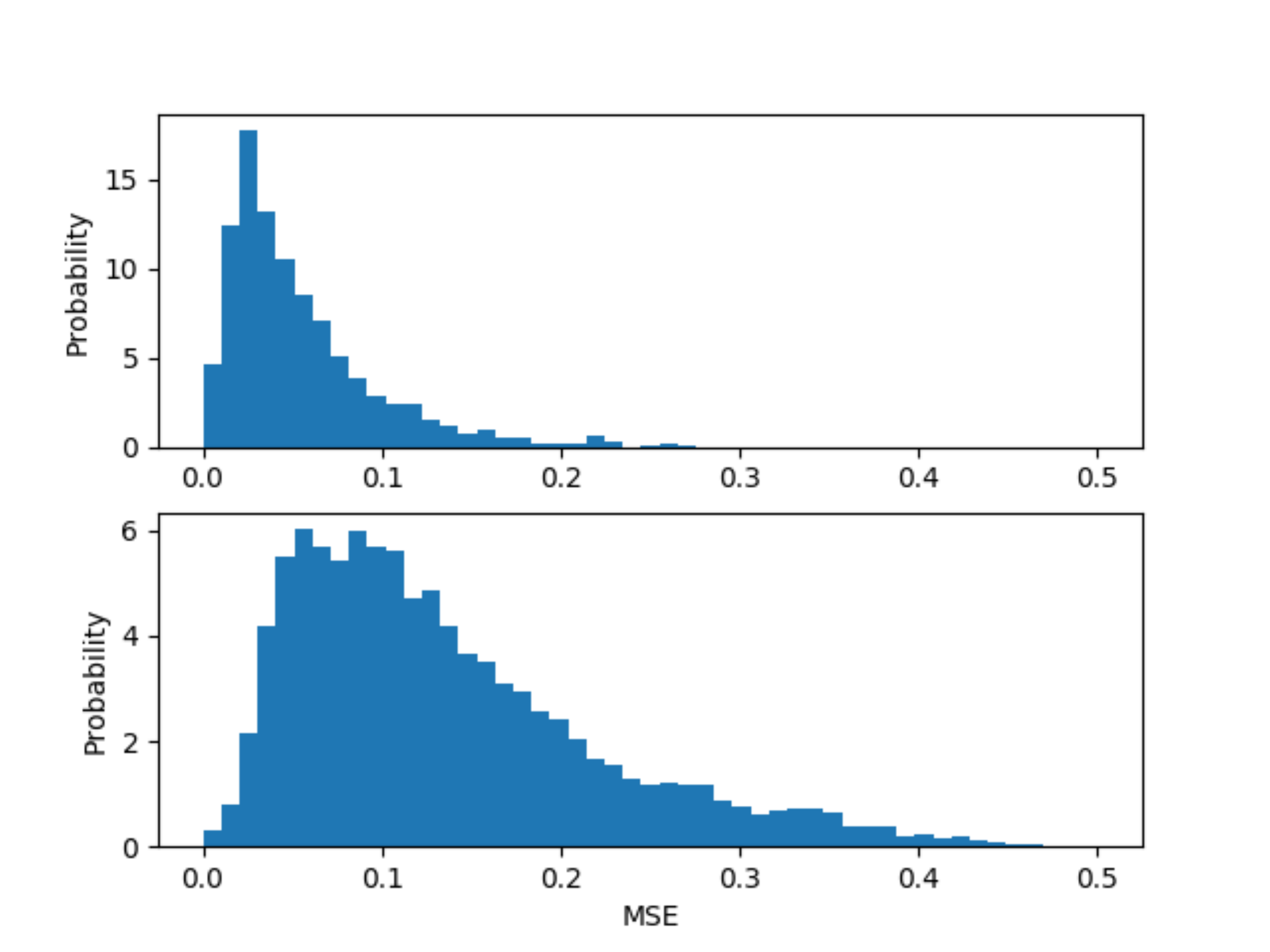}
		\caption{MSE distributions when $FNN_8$ is used.}\label{Fig:CR_MseDistributionPixelPredictor8}
	\end{subfigure}
	\begin{subfigure}[b]{0.45\linewidth}
		\includegraphics[width=0.95\linewidth]{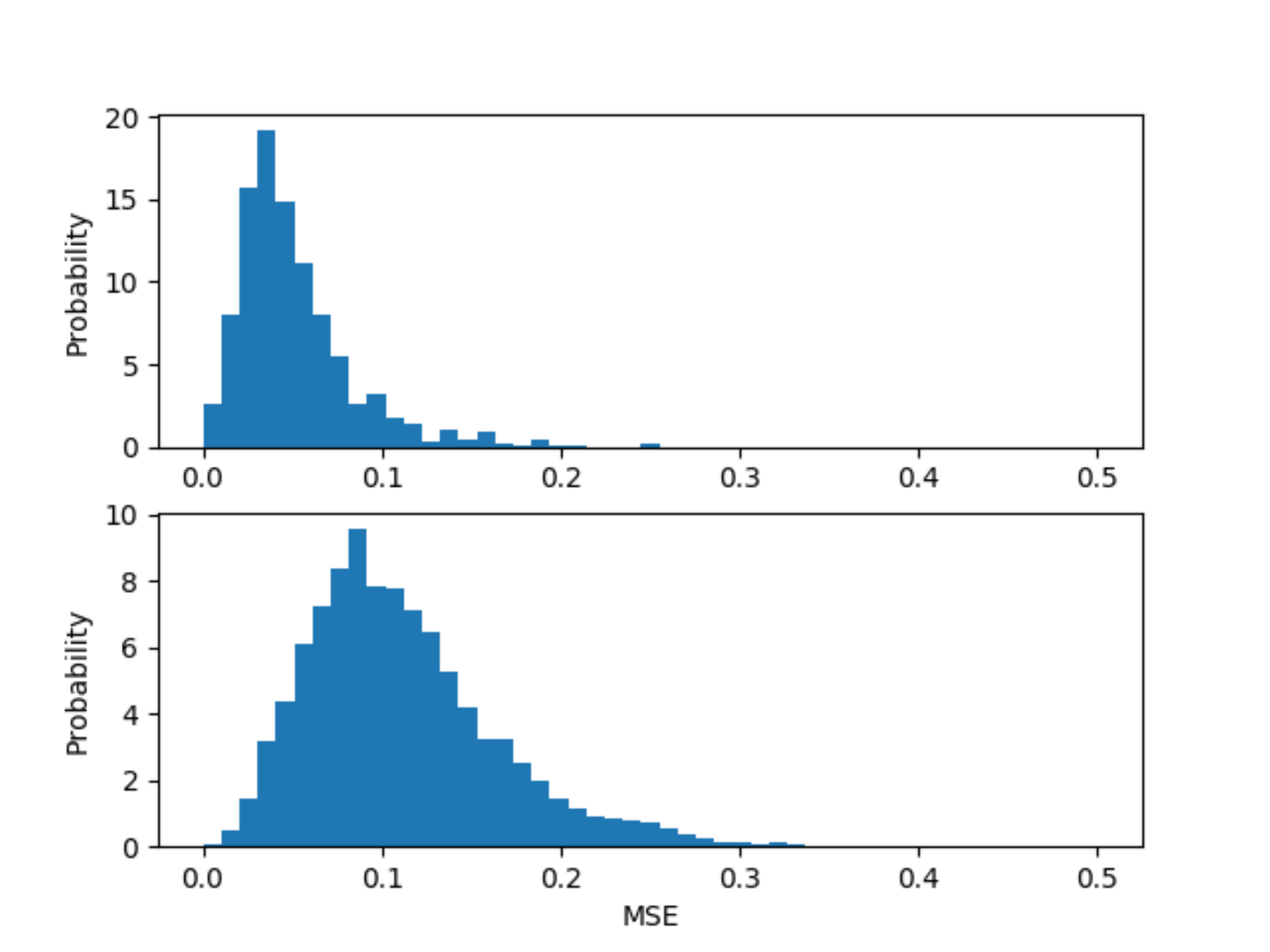}
		\caption{MSE distributions when $FNN_9$ is used.}\label{Fig:CR_MseDistributionPixelPredictor9}
	\end{subfigure}
	\caption{Top row: benign samples. Bottom row: adversarial samples.}
\end{figure}

\clearpage

\section{Single Sample Detection using Generative Model}

\begin{table*}[htb!]
	\centering
	\begin{tabular}{|l|l|l|l|l|l|l|l|l|l|l|}
		\hline
		MSE & $x_{0\to8}$ & $x_{1\to8}$ & $x_{2\to8}$ & $x_{3\to8}$ & $x_{4\to8}$ & $x_{5\to8}$ & $x_{6\to8}$ & $x_{7\to8}$ & $x_{8}$ & $x_{9\to8}$ \\
		\hline
		\# 1 & 0.94 & 1.00 & 0.92 & 0.91 & 0.91 & 0.97 & 0.99 & 0.90 & {\color{red}{0.09}} & 0.78 \\
		\hline
	\end{tabular}
	\caption{Enlarged MSE gap for $DCGAN_8$. Single image sample is used.}\label{Tab:EnlargedMSE8}
\end{table*}

\begin{table*}[htb!]
	\centering
	\begin{tabular}{|l|l|l|l|l|l|l|l|l|l|l|}
		\hline
		MSE &$x_{0\to7}$ & $x_{1\to7}$ & $x_{2\to7}$ & $x_{3\to7}$ & $x_{4\to7}$ & $x_{5\to7}$ & $x_{6\to7}$ & $x_{7}$ & $x_{8\to7}$ & $x_{9\to7}$ \\
		\hline
		\# 1 & 1.00 & 1.00 & 0.98 & 1.00 & 0.90 & 1.00 & 1.00 & {\color{red}{0.05}} & 1.00 & 1.00 \\
		\hline
	\end{tabular}
	\caption{Enlarged MSE gap for $DCGAN_7$. Single image sample is used.}\label{Tab:EnlargedMSE7}
\end{table*}

\begin{table*}[htb!]
	\centering
	\begin{tabular}{|l|l|l|l|l|l|l|l|l|l|l|}
		\hline
		MSE &$x_{0\to6}$ & $x_{1\to6}$ & $x_{2\to6}$ & $x_{3\to6}$ & $x_{4\to6}$ & $x_{5\to6}$ & $x_{6\to6}$ & $x_{7\to6}$ & $x_{8\to6}$ & $x_{9\to6}$ \\
		\hline
		\# 1 & 0.84 & 1.00 & 1.00 & 0.91 & 1.00 & 1.00 & {\color{red}{0.09}} & 1.00 & 0.97 & 1.00 \\
		\hline
	\end{tabular}
	\caption{Enlarged MSE gap for $DCGAN_6$. Single image sample is used.}\label{Tab:EnlargedMSE6}
\end{table*}

\begin{table*}[htb!]
	\centering
	\begin{tabular}{|l|l|l|l|l|l|l|l|l|l|l|}
		\hline
		MSE &$x_{0\to5}$ & $x_{1\to5}$ & $x_{2\to5}$ & $x_{3\to5}$ & $x_{4\to5}$ & $x_{5}$ & $x_{6\to5}$ & $x_{7\to5}$ & $x_{8\to5}$ & $x_{9\to5}$ \\
		\hline
		\# 1 & 1.00 & 1.00 & 1.00 & 0.93 & 1.00 & {\color{red}{0.12}} & 1.00 & 1.00 & 0.98 & 0.92 \\
		\hline
	\end{tabular}
	\caption{Enlarged MSE gap for $DCGAN_5$. Single image sample is used.}\label{Tab:EnlargedMSE5}
\end{table*}

\begin{table*}[htb!]
	\centering
	\begin{tabular}{|l|l|l|l|l|l|l|l|l|l|l|}
		\hline
		MSE &$x_{0\to4}$ & $x_{1\to4}$ & $x_{2\to4}$ & $x_{3\to4}$ & $x_{4}$ & $x_{5\to4}$ & $x_{6\to4}$ & $x_{7\to4}$ & $x_{8\to4}$ & $x_{9\to4}$ \\
		\hline
		\# 1 & 1.00 & 0.95 & 1.00 & 1.00 & {\color{red}{0.13}} & 1.00 & 0.92 & 0.95 & 0.98 & 0.91 \\
		\hline
	\end{tabular}
	\caption{Enlarged MSE gap for $DCGAN_4$. Single image sample is used.}\label{Tab:EnlargedMSE4}
\end{table*}

\begin{table*}[htb!]
	\centering
	\begin{tabular}{|l|l|l|l|l|l|l|l|l|l|l|}
		\hline
		MSE &$x_{0\to3}$ & $x_{1\to3}$ & $x_{2\to3}$ & $x_{3}$ & $x_{4\to3}$ & $x_{5\to3}$ & $x_{6\to3}$ & $x_{7\to3}$ & $x_{8\to3}$ & $x_{9\to3}$ \\
		\hline
		\# 1 & 0.85 & 0.96 & 0.86 & {\color{red}{0.08}} & 0.97 & 0.83 & 1.00 & 0.90 & 0.81 & 0.82 \\
		\hline
	\end{tabular}
	\caption{Enlarged MSE gap for $DCGAN_3$. Single image sample is used.}\label{Tab:EnlargedMSE3}
\end{table*}

\begin{table*}[htb!]
	\centering
	\begin{tabular}{|l|l|l|l|l|l|l|l|l|l|l|}
		\hline
		MSE &$x_{0\to2}$ & $x_{1\to2}$ & $x_{2}$ & $x_{3\to2}$ & $x_{4\to2}$ & $x_{5\to2}$ & $x_{6\to2}$ & $x_{7\to2}$ & $x_{8\to2}$ & $x_{9\to2}$ \\
		\hline
		\# 1 & 0.84 & 1.00 & {\color{red}{0.10}} & 1.00 & 1.00 & 0.99 & 1.00 & 0.95 & 0.87 & 0.89 \\
		\hline
	\end{tabular}
	\caption{Enlarged MSE gap for $DCGAN_2$. Single image sample is used.}\label{Tab:EnlargedMSE2}
\end{table*}

\begin{table*}[htb!]
	\centering
	\begin{tabular}{|l|l|l|l|l|l|l|l|l|l|l|}
		\hline
		MSE &$x_{0\to1}$ & $x_{1}$ & $x_{2\to1}$ & $x_{3\to1}$ & $x_{4\to1}$ & $x_{5\to1}$ & $x_{6\to1}$ & $x_{7\to1}$ & $x_{8\to1}$ & $x_{9\to1}$ \\
		\hline
		\# 1 & 1.00 & {\color{red}{0.01}} & 1.00 & 1.00 & 1.00 & 1.00 & 1.00 & 1.00 & 1.00 & 1.00 \\
		\hline
	\end{tabular}
	\caption{Enlarged MSE gap for $DCGAN_1$. Single image sample is used.}\label{Tab:EnlargedMSE1}
\end{table*}

\begin{table*}[htb!]
	\centering
	\begin{tabular}{|l|l|l|l|l|l|l|l|l|l|l|}
		\hline
		MSE &$x_{0}$ & $x_{1\to0}$ & $x_{2\to0}$ & $x_{3\to0}$ & $x_{4\to0}$ & $x_{5\to0}$ & $x_{6\to0}$ & $x_{7\to0}$ & $x_{8\to0}$ & $x_{9\to0}$ \\
		\hline
		\# 1 & {\color{red}{0.17}} & 0.85 & 0.55 & 0.66 & 0.81 & 0.69 & 0.68 & 0.69 & 0.67 & 0.69 \\
		\hline
	\end{tabular}
	\caption{Enlarged MSE gap for $DCGAN_0$. Single image sample is used.}\label{Tab:EnlargedMSE0}
\end{table*}

\clearpage

\section{Detection Performance of Generative Model}

\begin{figure}[htb!]
	\centering
	\begin{subfigure}[b]{0.45\linewidth}
		\includegraphics[width=0.95\linewidth]{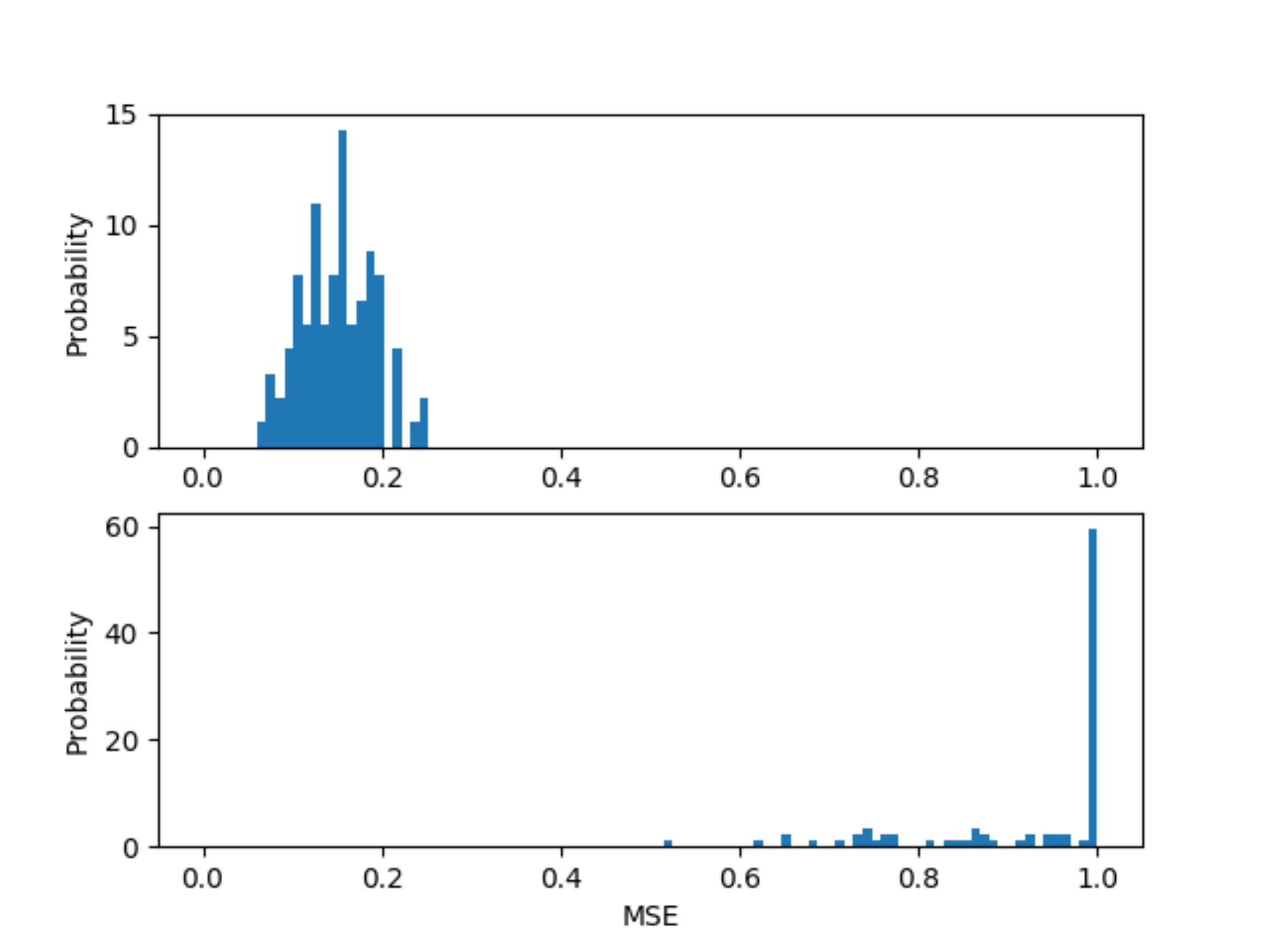}
		\caption{MSE distributions of samples when processed by $DCGAN_2$.}\label{Fig:MseDistributionDcgan2}
	\end{subfigure}
	\begin{subfigure}[b]{0.45\linewidth}
		\includegraphics[width=0.95\linewidth]{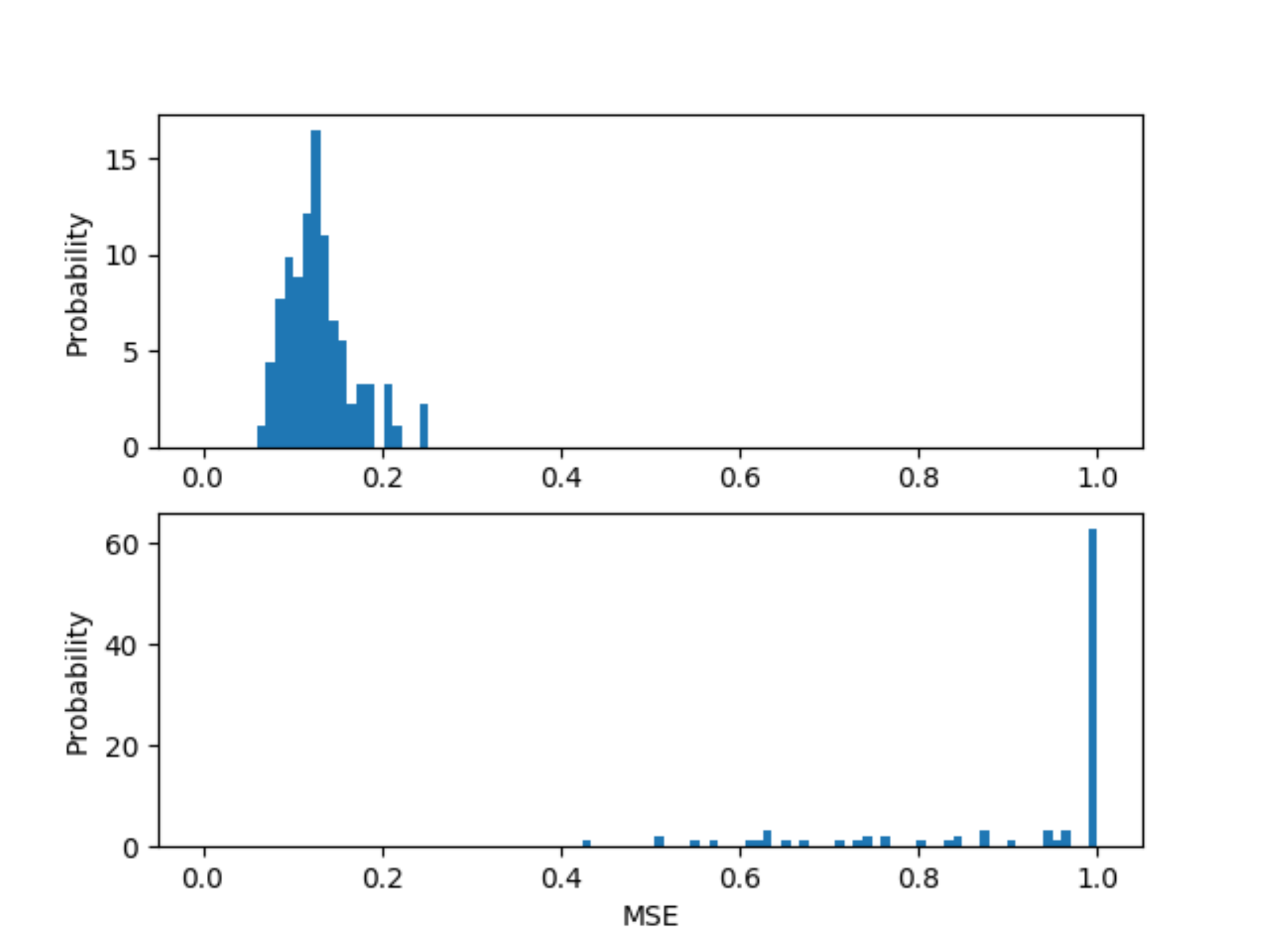}
		\caption{MSE distributions of samples when processed by $DCGAN_3$.}\label{Fig:MseDistributionDcgan3}
	\end{subfigure}
	\caption{Top row: MSE distribution of benign samples. Bottom row: MSE distribution of adversarial samples.}
\end{figure}

\begin{figure}[htb!]
	\centering
	\begin{subfigure}[b]{0.45\linewidth}
		\includegraphics[width=0.95\linewidth]{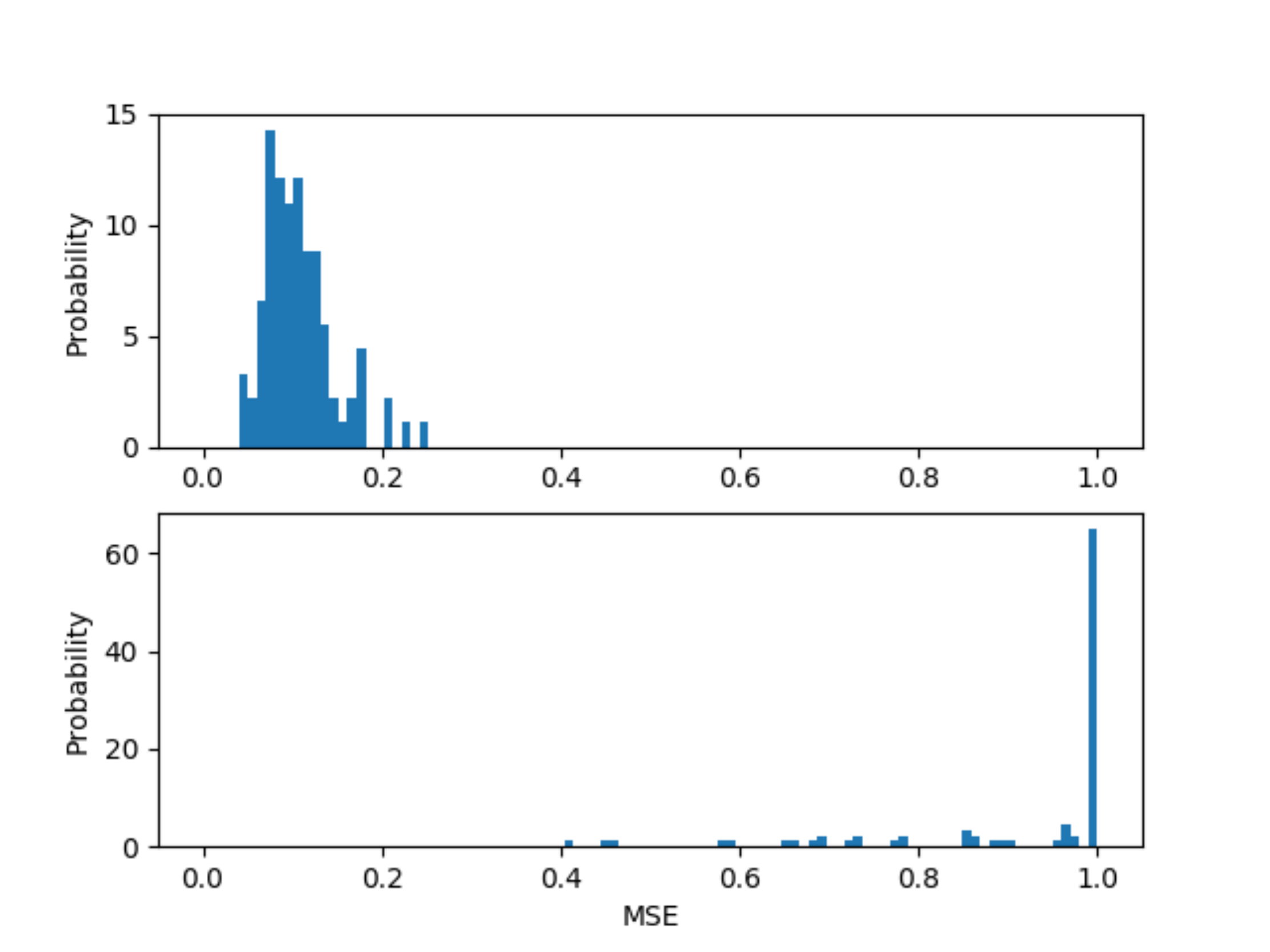}
		\caption{MSE distributions of samples when processed by $DCGAN_4$.}\label{Fig:MseDistributionDcgan4}
	\end{subfigure}
	\begin{subfigure}[b]{0.45\linewidth}
		\includegraphics[width=0.95\linewidth]{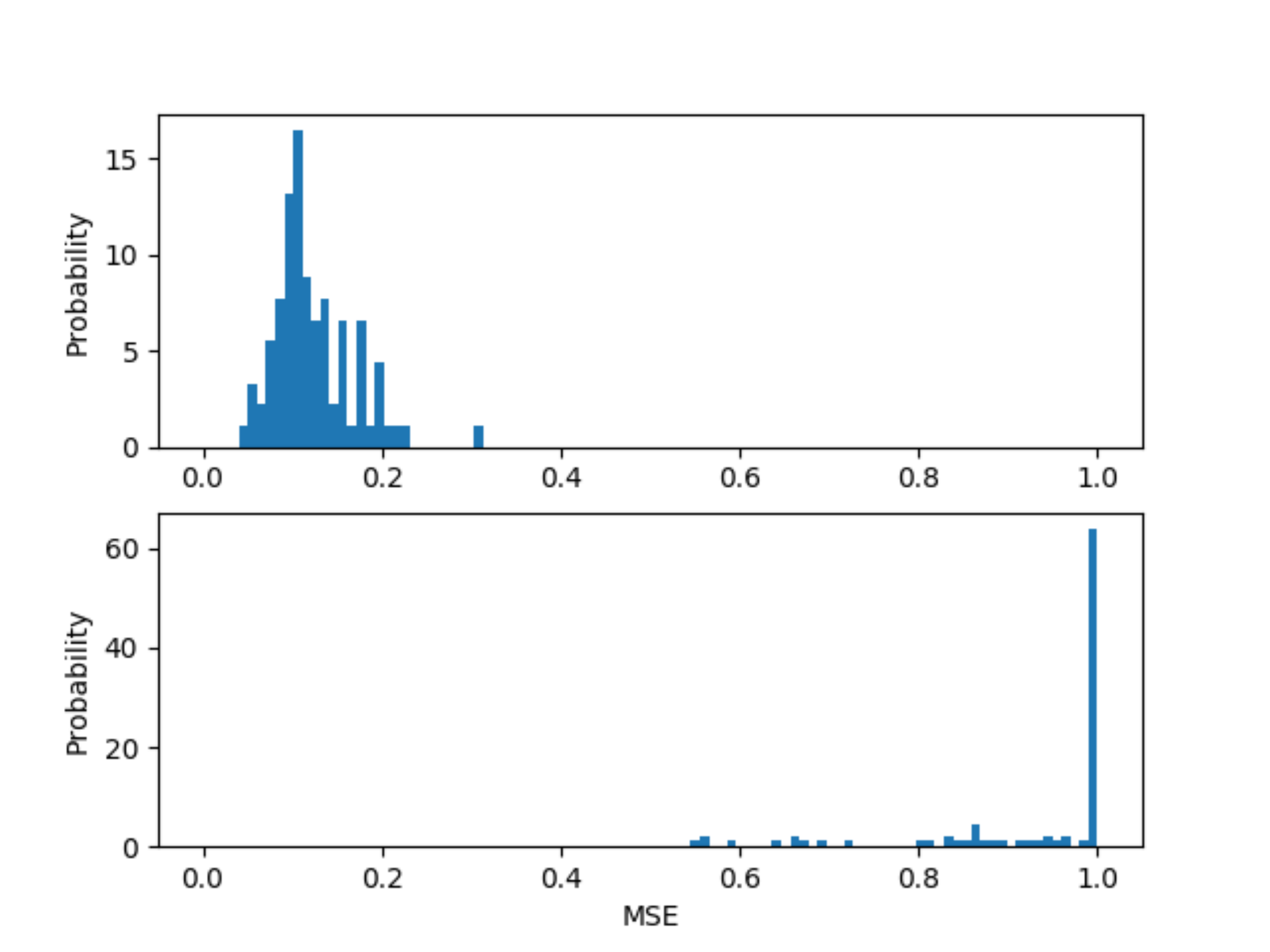}
		\caption{MSE distributions of samples when processed by $DCGAN_5$.}\label{Fig:MseDistributionDcgan5}
	\end{subfigure}
	\caption{Top row: MSE distribution of benign samples. Bottom row: MSE distribution of adversarial samples.}
\end{figure}

\begin{figure}[htb!]
	\centering
	\begin{subfigure}[b]{0.45\linewidth}
		\includegraphics[width=0.95\linewidth]{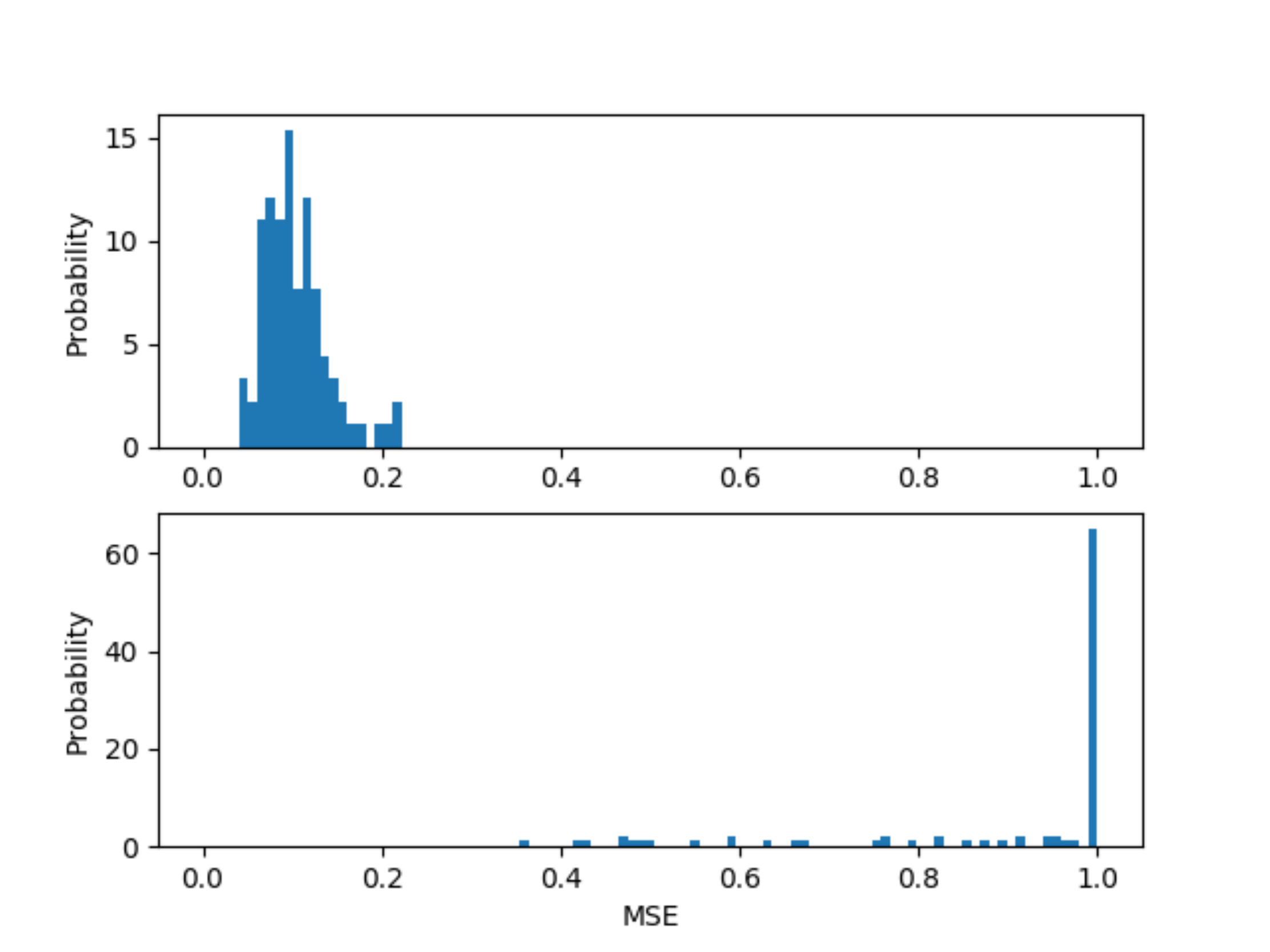}
		\caption{MSE distributions of samples when processed by $DCGAN_6$.}\label{Fig:MseDistributionDcgan6}
	\end{subfigure}
	\begin{subfigure}[b]{0.45\linewidth}
		\includegraphics[width=0.95\linewidth]{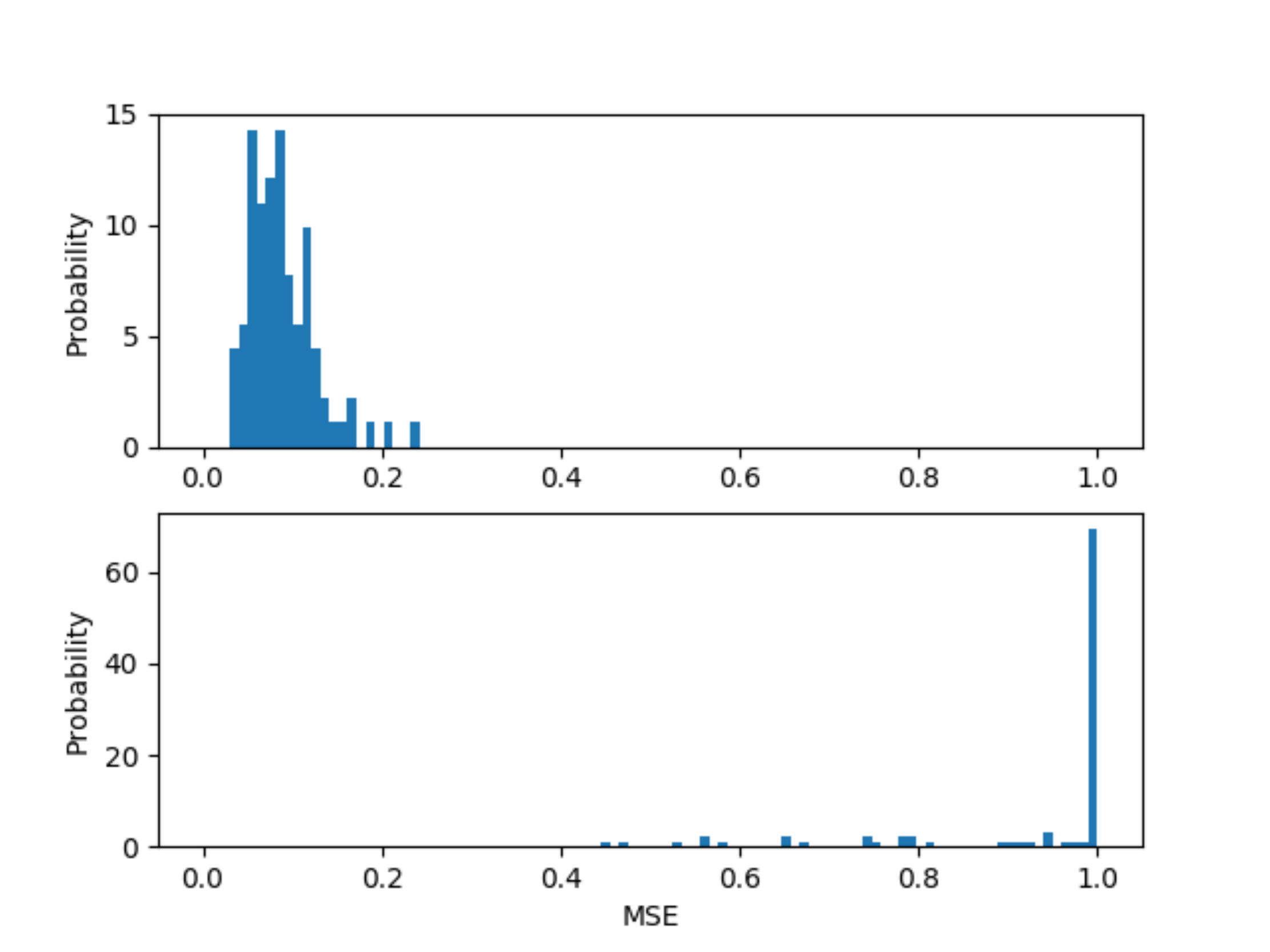}
		\caption{MSE distributions of samples when processed by $DCGAN_7$.}\label{Fig:MseDistributionDcgan7}
	\end{subfigure}
	\caption{Top row: MSE distribution of benign samples. Bottom row: MSE distribution of adversarial samples.}
\end{figure}

\begin{figure}[htb!]
	\centering
		\begin{subfigure}[b]{0.45\linewidth}
		\includegraphics[width=0.95\linewidth]{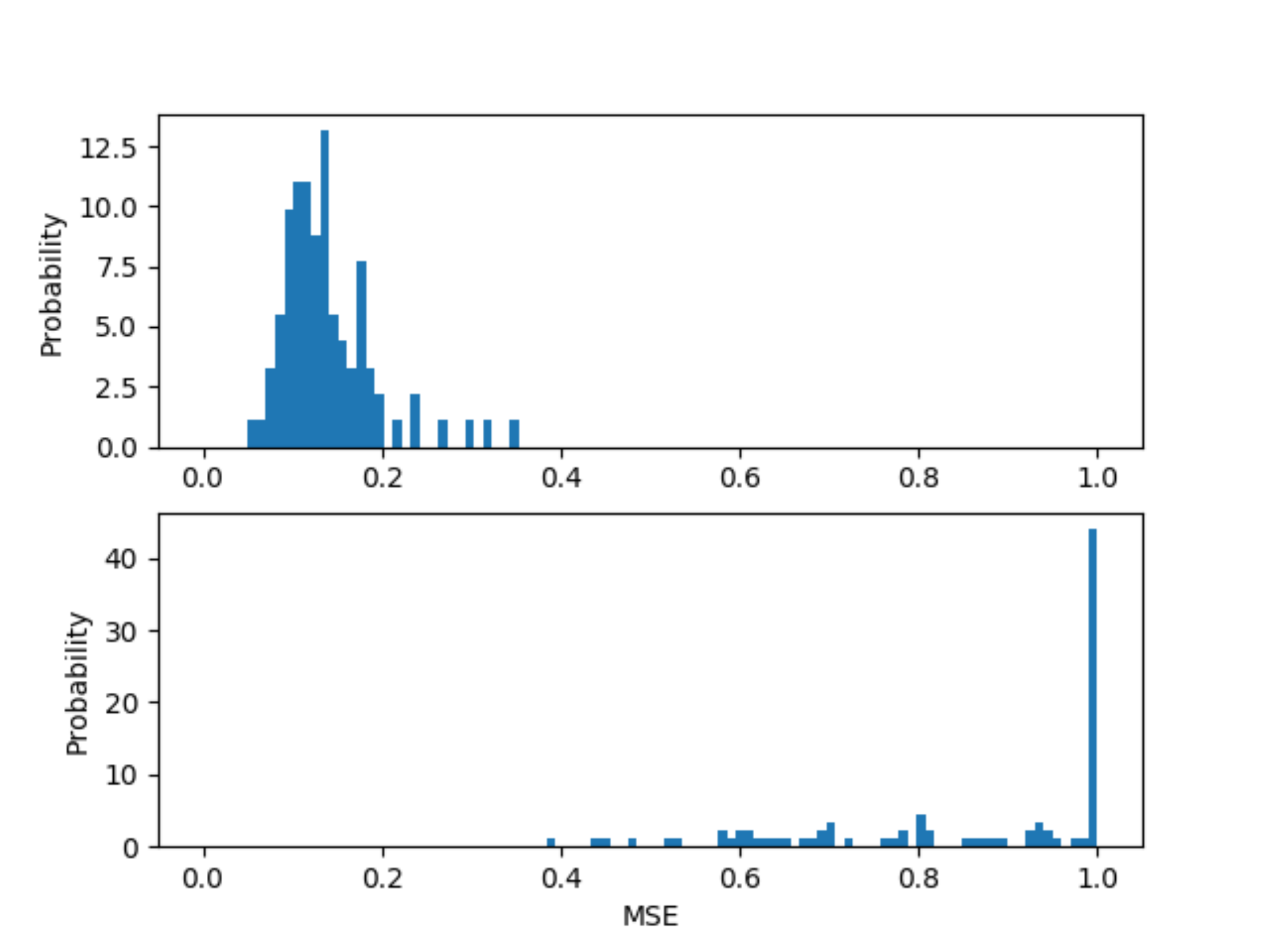}
		\caption{MSE distributions of samples when processed by $DCGAN_8$.}\label{Fig:MseDistributionDcgan8}
	\end{subfigure}
	\begin{subfigure}[b]{0.45\linewidth}
		\includegraphics[width=0.95\linewidth]{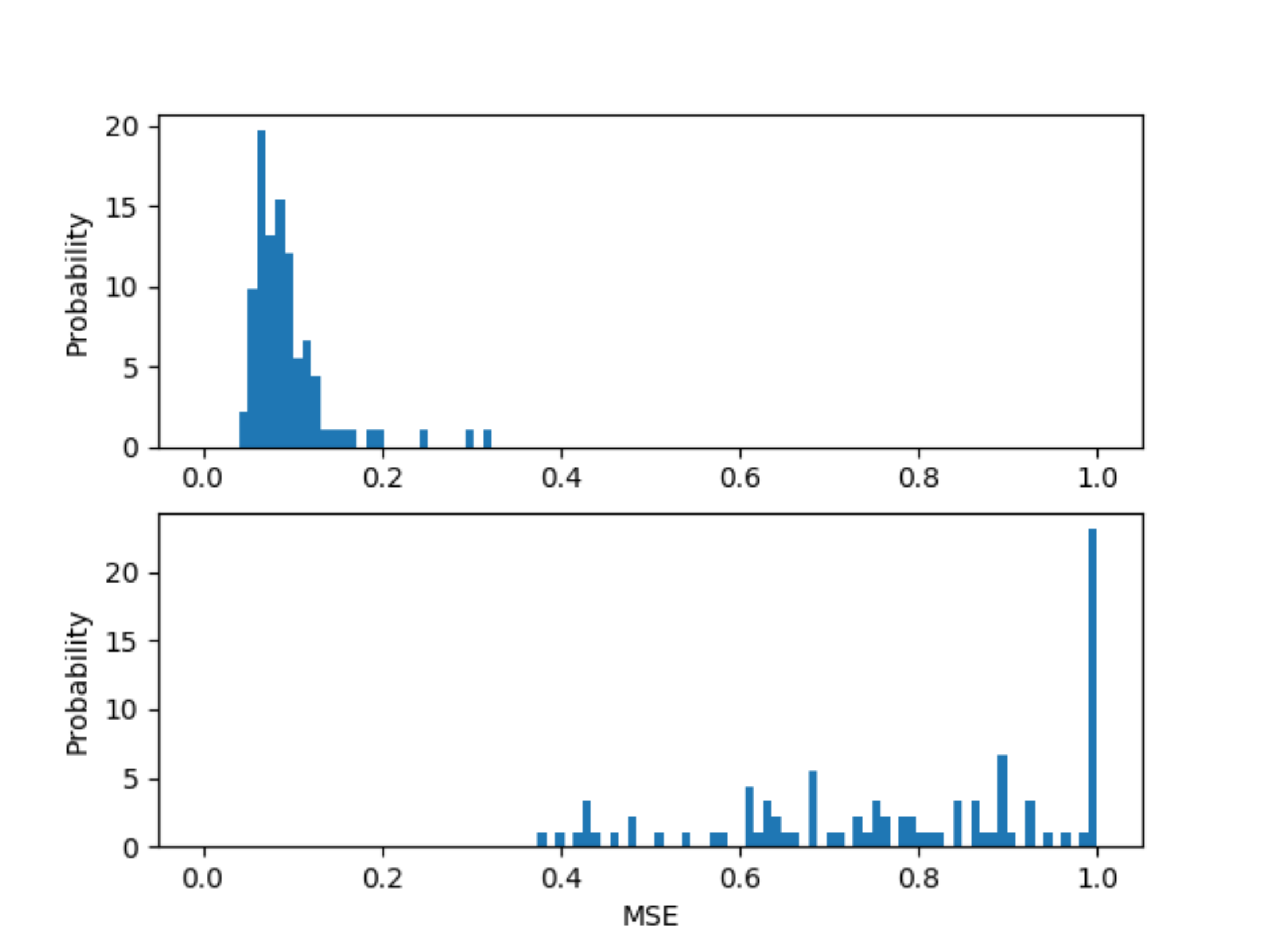}
		\caption{MSE distributions of samples when processed by $DCGAN_9$.}\label{Fig:MseDistributionDcgan9}
	\end{subfigure}
	\caption{Top row: MSE distribution of benign samples. Bottom row: MSE distribution of adversarial samples.}
\end{figure}






\end{document}